\setlist[itemize]{leftmargin=10pt}
\renewcommand*{\backref}[1]{}
\renewcommand*{\backrefalt}[4]{%
    \ifcase #1 No citations.%
    \or Cited on page~#2.%
    \else Cited on pages~#2.%
    \fi%
}
\DeclareMathSymbol{\lsb@l}{\mathalpha}{letters}{`l}
\newtheorem{lemma}{Lemma}
\newtheorem{definition}{Definition}
\newtheorem{remark}{Remark}
\newtheorem{corollary}{Corollary}
\newtheorem{proposition}{Proposition}
\newtheorem{theorem}{Theorem}
\newtheorem*{theorem*}{Theorem}
\newtheorem{problem}{Problem}
\newtheorem{conjecture}{Conjecture}
\numberwithin{lemma}{section}
\numberwithin{definition}{section}
\numberwithin{remark}{section}
\numberwithin{corollary}{section}
\numberwithin{proposition}{section}
\numberwithin{theorem}{section}
\numberwithin{example}{section}
\Crefname{proposition}{Proposition}{Propositions}
\Crefname{alg-line}{Line}{Lines}
\crefname{alg}{Algorithm}{Algorithms}
\newcommand{\conj}[1]{\overline{#1}}
\newcommand{\wt}[1]{\widetilde{#1}}
\newcommand{\Bf}[1]{\textbf{#1}}
\DeclareMathOperator{\EE}{\mathbb{E}}
\DeclareMathOperator{\poly}{poly}
\renewcommand{\vec}[1]{\overrightarrow{#1}}
\renewcommand{\epsilon}{\varepsilon}
\renewcommand{\arraystretch}{1.5}
\newcommand{\C}{\mathcal{C}}
\newcolumntype{x}[1]{>{\centering\hspace{0pt}\arraybackslash}m{#1}}
\title{Bosonic Quantum Computational Complexity}
\author{Ulysse Chabaud} 
\affiliation{DIENS, ENS, PSL University, CNRS, INRIA, Paris, France} \email{ulysse.chabaud@inria.fr}
\author{Michael Joseph}
\affiliation{Tufts CS, Medford MA, USA}
\email{michael.joseph@tufts.edu}
\author{Saeed Mehraban}
\affiliation{Tufts CS, Medford MA, USA}
\email{saeed.mehraban@tufts.edu}
\author{Arsalan Motamedi}
\affiliation{IQC, Waterloo ON, Canada} \email{arsalan.motamedi@uwaterloo.ca}
\begin{document}

\maketitle

\begin{abstract}

In recent years, quantum computing involving physical systems with continuous degrees of freedom, such as the bosonic quantum states of light, has attracted significant interest. However, a well-defined quantum complexity theory for these bosonic computations over infinite-dimensional Hilbert spaces is missing. In this work, we lay foundations for such a research program. We introduce natural complexity classes and problems based on bosonic generalizations of \textbf{BQP}, the local Hamiltonian problem, and \textbf{QMA}. We uncover several relationships and subtle differences between standard Boolean classical and discrete-variable quantum complexity classes, 
and identify outstanding open problems.
Our main contributions include the following:
\begin{enumerate}
    \item\textbf{Bosonic computations.} We show that the power of Gaussian computations up to logspace reductions is equivalent to bounded-error quantum logspace (\textbf{BQL}, characterized by the problem of inverting well-conditioned matrices). More generally, we define classes of continuous-variable quantum polynomial time computations with a bounded probability of error (\textbf{CVBQP}) based on gates generated by polynomial bosonic Hamiltonians and particle-number measurements. Due to the infinite-dimensional Hilbert space, it is not a priori clear whether a decidable upper bound can be obtained for these classes. We identify complete problems for these classes, and we demonstrate a \textbf{BQP} lower bound and an \textbf{EXPSPACE} upper bound by proving bounds on the average energy throughout the computation. We further show that the problem of computing expectation values of polynomial bosonic observables at the output of bosonic quantum circuits using Gaussian and cubic phase gates is in \textbf{PSPACE}. 
    \item\textbf{Bosonic ground energy problems.}  We prove that the problem of deciding whether the spectrum of a bosonic Hamiltonian is bounded from below is undecidable in general, \mbox{\textbf{co-NP}-hard} for degree $4$ Hamiltonians and decidable in polynomial time for quadratic Hamiltonians. Furthermore, we show that the problem of finding the minimum energy of a bosonic Hamiltonian critically depends on the non-Gaussian stellar rank of the family of energy-constrained states one optimizes over: for zero stellar rank, i.e., optimizing over Gaussian states, it is \textbf{NP}-complete; for polynomially-bounded stellar rank, it is in \textbf{QMA}; for unbounded stellar rank, it is \textbf{RE}-hard, i.e., undecidable. 
\end{enumerate}
    
\end{abstract}

{
\hypersetup{linkcolor=black}
\tableofcontents
}

\hypersetup{linkcolor=teal}

\section{Introduction}

Many quantum mechanical systems, such as spin systems, are effectively described using discrete variables and are captured using qubits or qudits. The standard model of quantum computation is formulated based on such discrete degrees of freedom \cite{nielsen2001quantum}. On the other hand, quantum variables such as position, momentum, or amplitudes of electromagnetic fields are continuous. These degrees of freedom are described mathematically using infinite-dimensional Hilbert spaces. Continuous-variable systems are also present in many fundamental problems in theoretical physics, such as the solution to energy levels of a molecular system, quantum field theories, or exotic quantum states of matter such as Bose--Einstein condensates. 
 
Continuous-variable quantum architectures have recently been used in practical quantum computing implementations, with leading quantum error-correction schemes being fundamentally infinite-dimensional \cite{gottesman2001encoding,grimsmo2021quantum}. Several important physical frameworks for quantum computing, such as quantum photonic processors, are based on continuous-variable degrees of freedom. These architectures were recently demonstrated at scales sufficient to solve computational sampling problems that are believed to exceed the power of classical computations \cite{zhong2020quantum, madsen2022quantum}. 

What are the fundamental computational limitations and power of quantum degrees of freedom? This question is the subject of quantum complexity theory \cite{watrous2008quantum}. The standard quantum computing model and quantum complexity theory are formulated using discrete-variable quantum degrees of freedom over finite-dimensional Hilbert spaces. Several complexity classes have been defined to capture the computational power associated with these quantum degrees of freedom. One can define various natural models of quantum computations, such as adiabatic computation, quantum circuits, topological quantum computation, or measurement-based quantum computing, and all of these variants have been proven to have power equal to the computational complexity class \textbf{BQP}, which captures the power of polynomial-size quantum circuits in deciding logical statements with a small probability of error. Another important complexity class that captures the complexity of estimating the energy levels of a quantum physical system is \textbf{QMA}, which can be viewed as a quantum generalization of \textbf{NP} and, in particular, \textbf{MA} as the randomized generalization of \textbf{NP}. Due to a seminal result of Kitaev \cite{kempe2006complexity} (and many follow-up works, e.g., \cite{kempe20033}), the problem of estimating the ground state energy of a physical system up to inverse-polynomial additive error (known as the local Hamiltonian problem) is complete for \textbf{QMA}. As a matter of fact, the local Hamiltonian problem can be viewed as the quantum generalization of the canonical constraint satisfaction problem over the Boolean hypercube, which is itself complete for \textbf{NP}. 
In recent years, the relationship between these complexity classes and standard complexity classes such as polynomial time \textbf{P}, polynomial space \textbf{PSPACE}, nondeterministic polynomial time \textbf{NP}, $\#$\textbf{P} (corresponding to counting the number of solutions to constraint satisfaction problems), \textbf{PP} (the class of problems that are solvable on a probabilistic Turing machine with probability of error $< 1/2$) have been extensively studied. All of these complexity classes are contained in the class of problems solvable in exponential space \textbf{EXPSPACE}, itself included in the set of recursively enumerable languages \textbf{RE}. The known relationships between these complexity classes are illustrated in \cref{fig:complexity-classes}.

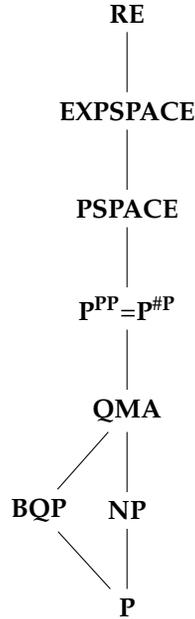
\begin{figure}
\centering

\begin{tikzpicture}[node distance=0.8cm and 0.3cm]

\tikzstyle{complexityclass} = [draw=none]

\node[complexityclass] (RE) {\textbf{RE}};
\node[complexityclass, below=of RE] (EXPSPACE) {\textbf{EXPSPACE}};
\node[complexityclass, below=of EXPSPACE] (PSPACE) {\textbf{PSPACE}};
\node[complexityclass, below=of PSPACE] (PP) {\textbf{P}$^{\textbf{PP}}$=\textbf{P}$^{\# \textbf{P}}$};
\node[complexityclass, below=of PP] (QMA) {\textbf{QMA}};
\node[complexityclass, below=of QMA] (NP) {\textbf{NP}};
\node[complexityclass, below=of NP] (P) {\textbf{P}};

\node[complexityclass, left=of NP] (BQP) {\textbf{BQP}};

\draw (RE) -- (EXPSPACE);
\draw (EXPSPACE) -- (PSPACE);
\draw (PSPACE) -- (PP);
\draw (PP) -- (QMA);
\draw (QMA) -- (NP);
\draw (NP) -- (P);

\draw (BQP) -- (QMA);
\draw (BQP) -- (P);

\end{tikzpicture}

\caption{Known relations between complexity classes used in this work. If a line connects complexity class $\mathbf C_1$ to $\mathbf C_2$, with $\mathbf C_1$ being on the right of $\mathbf C_2$, it is implied that $\Bf C_2 \subseteq \Bf C_1$.}
\label{fig:complexity-classes}
\end{figure}

What are the basic relationships between the power of computations using bosonic continuous-variables over infinite-dimensional Hilbert spaces and these standard discrete-variable complexity classes over finite-dimensional ones? As it turns out, one can robustly encode discrete-variable computations into infinite-dimensional Hilbert spaces. For instance, Knill, Laflamme and Milburn showed that one can simulate arbitrary discrete-variable quantum computations using linear optics, particle-number measurements and feed-forward \cite{knill2001scheme}, i.e. adapting the computation based on the result of intermediate measurements, while Gottesman, Kitaev and Preskill gave a protocol to encode a qubit into the continuous degrees of freedom of a quantum harmonic oscillator in a fault-tolerant way \cite{gottesman2001encoding}; the latter is among the leading proposals for reaching fault-tolerance in continuous-variable quantum architectures. As a consequence of these results, \textbf{BQP} can be robustly simulated using continuous quantum degrees of freedom. However, a formal complexity theory of bosonic computations is still missing.

In the continuous-variable setting, a model of quantum computation was first proposed by Lloyd and Braunstein \cite{lloyd1999quantum}. In this model, the state of a quantum system evolves according to Hamiltonians which are finite-degree polynomials in canonical continuous-variable (unbounded) operators such as the position operator $X$ and the momentum operator $P$. These operators have fundamentally different mathematical properties from their discrete-variable counterparts. For instance, for any pair of discrete-variable operators acting on a finite-dimensional Hilbert space, $A$ and $B$, $\mathrm{Tr}([A,B]) = 0$, it is well-known that $[X,P] = i I$, which is not trace-class (i.e., does not have finite trace). The operators $X$ and $P$ have unbounded spectra with respective formal eigenbases $\ket{x}, x \in \mathbb {R}$ such that $X \ket{x} = x \ket{x}$, and $\ket{p}, p \in \mathbb {R}$ such that $P \ket{p} = p \ket{p}$, that are related to each other by a Fourier transform $\ket{x} \propto \int e^{i p x} \ket{p} dp$, with their inner product giving Dirac delta functions $\braket{x}{x'} = \delta(x-x')$, to be interpreted in the sense of distributions. Another striking example of the peculiarities of infinite dimensions comes from quantum states such as $(\sqrt{6}/\pi)\sum_{n \geq 1} \frac{1}{n} \ket{n}$, which are normalized but have infinite energy with respect to the harmonic oscillator Hamiltonian $X^2 + P^2$. 
Yet another famous example, known as Tsirelson's problem, is that the tensor product and the commuting operator formulations of quantum mechanics are equivalent in finite dimensions but not in infinite dimensions, because of complexity-theoretic reasons in particular \cite{ji2021mip}.

Due to such nontrivial features, standard quantum complexity results in the discrete-variable setting, such as the Solovay--Kitaev theorem \cite{kitaev1997quantum}, do not have a clear counterpart in the continuous case. In particular, if one defines continuous-variable analogs of standard discrete-variable quantum computational complexity classes \textbf{BQP} and \textbf{QMA}, does it lead to relationships similar to the discrete-variable case?

Beyond its fundamental relevance, developing such a complexity-theoretic foundation for continuous quantum degrees of freedom would widen our horizon of knowledge in several ways. 
Firstly, it would help us to understand the computational power of continuous quantum degrees of freedom for real-world applications. Many physical architectures that are promising candidates for realising a full-fledged universal discrete-variable quantum computer, such as photonic systems, superconducting qubits, and acoustic modes like phonons, are in fact inherently continuous. Developing a fundamentally continuous language for studying these systems could be essential to fully utilize the potential of such quantum computational devices, e.g., for devising fast compilation strategies \cite{sefi2011decompose} or performing Hamiltonian simulation with optimal constants \cite{peng2023quantum}.

Secondly, it touches on fundamental questions in theoretical physics. For instance, a standard approach to understanding continuous systems is to discretize them to a certain precision. We face an immediate question: can we reduce the study of these systems by approximating them with discrete variables, or are continuous-variable systems fundamentally distinct and require their own separate formulation? Theoretical physics provides some evidence for the latter. For instance, the renormalization group provides the insight that the fundamental behavior of a physical system can vary significantly at different scales of precision. Or, in quantum field theory, one needs to impose energy cut-offs to avoid divergences in the computation of scattering amplitudes. Computational complexity studies the fundamental limits of finding reductions between two models, and we believe that tools from this framework would play a fundamental role in studying reductions between discrete and continuous degrees of freedom.

In this paper, we lay the groundwork for a computational complexity theory for quantum continuous variables. We formulate several Boolean quantum computational complexity classes based on continuous-variable bosonic generalizations of \textbf{BQP} and \textbf{QMA}. We study the relationship between these and their discrete-variable counterparts, highlight their fundamental similarities and differences, 
and identify outstanding open problems.

\subsection{Main models and results}\label{sec:models-and-main}

A bosonic quantum state can be described using a vector in the infinite-dimensional Hilbert space of square-summable complex sequences $\ell^2(\mathbb{C})$ which we refer to as a bosonic mode, or a qumode. Let $\mathcal{H}_n\cong \ell^2 (\mathbb{C}^{\otimes n})$ be the tensor-product space of $n$ bosonic modes. We consider bosonic computations based on Hamiltonians $H$ acting on $\mathcal H_n$ that are polynomials of degree $d$ in the canonical operators $X$ and $P$. Commonly known as position and momentum operators, they satisfy the canonical commutation relations
\begin{equation} 
[X_a, P_b] = i \delta_{a,b} I, \quad [X_a, X_b] = 0, \quad [P_a, P_b] = 0.
\end{equation}
Polynomial Hamiltonians are ubiquitous in quantum physics and allow us to describe most existing models of bosonic quantum computations. The spectrum of such Hamiltonians is a subset of the reals, which can be discrete (e.g., for $X^2+P^2$), continuous (e.g., for $X$) or both.
In our presentation, we always distinguish between the cases of Gaussian Hamiltonians (polynomials of degree $d\le2$), which are typically less powerful, and non-Gaussian Hamiltonians (polynomials of degree $d>2$).

We will often make use of the particle number operator (or simply, number operator) defined in terms of the position and momentum operators by,
\begin{equation}
    N:= \frac{X^2+P^2-I}{2} 
\end{equation}
where $N\ket{n} = n\ket{n}$ for the Fock basis states $\ket{n}$. The number operator plays a key role in our analysis as it quantifies the quanta of excitation, or 'particles,' in a bosonic mode. In terms of the energy of a bosonic state, we frequently refer to the average particle number, which corresponds to the expectation value $\langle N \rangle$.

\subsubsection{Structure and summary of results}

\begin{figure}[t]
    \centering
    \begin{tikzpicture}
\node at (0,0) (exp-space) {$\mathbf{EXPSPACE}$};
\node at (4,-2) (cv-qma) {$\mathbf{CVQMA}$};
\node at (4,-3) (cv-bqp) {$\mathbf{CVBQP}$};
\node at (2,-5) (cv-bqp3) {$\mathbf{CVBQP}[X^3]$};
\node at (4, -7.5) (bqp) {$\mathbf{BQP}$};
\node at (6, -5) (cvbqp-bar) {$\mathbf{CVBQP}[\overline{\sigma_x}, \overline{\sigma_z}\otimes\overline{\sigma_z}]$};
\node at (-1, -2) (pspace) {$\mathbf{PSPACE}$};
\node (box) [draw, minimum size=1cm, align=center] at (-2,-5) {Observable estimation\\ for $\mathbf{CVBQP}[X^3]$};

\draw (pspace) --node[above, sloped] {\cref{th:expvalinPSPACE}} (box);

\draw (exp-space) -- node [above, sloped]{\cref{th:CVBQPinEXPSPACE}} (cv-bqp3);

\draw (cv-qma) -- (cv-bqp);

\draw (cv-bqp) -- (cvbqp-bar);

\draw (cv-bqp) -- (cv-bqp3);


\draw (cvbqp-bar) -- node [sloped, above] {\!\!\!\!\cref{th:CVBQPcontainsBQP}} (bqp);

\draw (pspace) -- (exp-space);

\draw [line] (-4.5,0) -- (-4.5,-7);

\node (gdc) at (-6.5,-3) {$\mathbf{GDC} = \mathbf{BQL}$};

\node (citation) at (-6.5,-3.5)
{\cref{thm:GDC=BQL}};

\node at (-6.7, -6.5)
{\textit{Gaussian computations}};

\node at (-1.5, -6.5) {\textit{Non-Gaussian computations}};
\end{tikzpicture}
    \caption{Relations between complexity classes and computational problems proven in this paper and described below, up to logspace reductions on the left and polynomial-time reductions on the right. If a line connects two complexity classes, it is implied that the one below in the diagram is included in the one above. If a line connects a problem to a complexity class, it is implied that the problem is in that class if it is below, and hard for that class if it above. Square brackets indicate specific choices of non-Gaussian gates, where $X^3$ and $\bar X$ are polynomial Hamiltonians of degree $3$, while $\bar Z\otimes\bar Z$ is a polynomial Hamiltonian of degree $4$. The other inclusions are given by the corresponding theorems or trivial from the definitions of the classes.}
    \label{fig:CVBQP}
\end{figure}
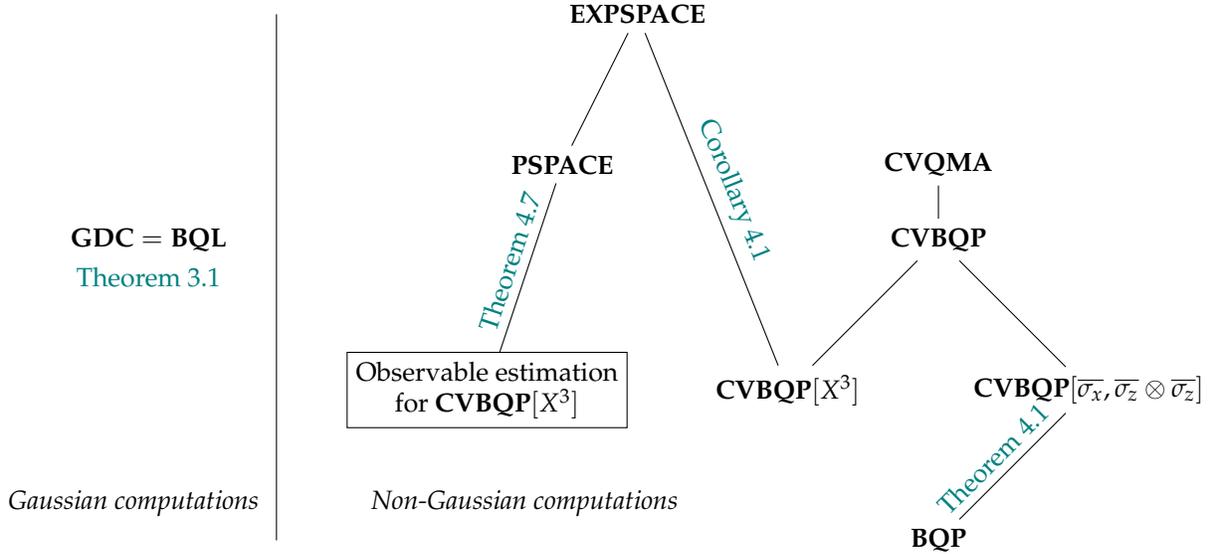

The structure of the paper is as follows.
After preliminary background in \cref{sec:preliminaries}, we first consider Boolean complexity classes based on bounded-error continuous-variable quantum computations in \cref{sec:GDC} (Gaussian) and \cref{sec:CVBQP} (non-Gaussian). These contributions are summarised in \cref{fig:CVBQP}:

\begin{itemize}
    \item We define a complexity class corresponding to the power of Gaussian dynamical computations (\textbf{GDC}) with logspace preprocessing. We prove that the power of this model is equivalent to the complexity class \textbf{BQL} corresponding to quantum logspace (See \cref{thm:GDC=BQL}). A complete problem for this class is inverting well-conditioned matrices \cite{fefferman2021eliminating, ta2013inverting}. This can be viewed as a continuous-variable version of a result by Aaronson and Gottesman \cite{aaronson2004improved}, which places the problem of simulating Clifford circuits in the complexity class $\oplus$\textbf{L} (a complete problem for this class is computing the determinant over the finite field $\mathbb{Z}_2$ \cite{damm1990problems}).
    \item We then define bounded-error quantum computations using gates generated by polynomials in the bosonic position and momentum operators (\textbf{CVBQP}). In this model, we sample the output of the quantum computation in the particle number basis (eigenstates of $X^2+P^2$). We show that a specific instance of this class contains \textbf{BQP} (\cref{th:CVBQPcontainsBQP}) and that continuous-variable quantum computations using only Gaussian and cubic phase gates (generated by $X^3$) can be strongly simulated in \textbf{EXPSPACE} when no energy upper bound is assumed (\cref{thm:schrod-in-expspace}). We explain why proving a stronger upper bound of \textbf{PSPACE} or even \textbf{PP} as in the discrete-variable case (if true) would require nontrivial ideas. Finally, we give a polynomial-time parallel algorithm for computing output continuous-variable observable expectation values of such computations (\cref{th:expvalinPSPACE}), which is in \textbf{PSPACE} by standard results \cite{trahan1992multiplication}.
    We explain in \cref{sec:schrod-vs-heis} that the $\mathbf{EXPSPACE}$ and $\mathbf{PSPACE}$ results for particle number measurements v.s.\ continuous-variable observable measurements are due to fundamental computational differences in Schr\"odinger v.s.\ Heisenberg dynamical evolution in continuous-variable systems. We conjecture that a strong simulation of the former model is strictly more difficult in terms of computational complexity.
\end{itemize}

\noindent Next, we consider ground state energy problems and Boolean non-deterministic quantum complexity classes in \cref{sec:gsp}. These contributions are summarised in \cref{tab:CVQMA}:

\begin{itemize}
\item We formulate the problem of deciding whether a polynomial Hamiltonian of degree $d$ is bounded from below ($\mathsf{HBound}^d$); we show that this problem is solvable in \textbf P for Gaussian Hamiltonians $d =2$ (\cref{th:gspGaussian}) and \textbf{co-NP}-hard for polynomial Hamiltonians of degree $4$ (\cref{th:boundedness}); we furthermore show that the problem is undecidable in general for Hamiltonians of degree $\geq 10$. Furthermore, we give an efficient classical algorithm for verifying the boundedness of the ground state energy of a subclass of Hamiltonians of degree $d = 4$ via a reduction to a classical sum-of-squares method (\cref{prop:soswitness}). 
\item We study the continuous-variable local Hamiltonian problem ($\mathsf{CVLH}^d_{\mathcal S}$) of estimating the lowest energy of a bosonic Hamiltonian of degree $d$ in the position and momentum operators over a set of states $\mathcal S$. For Gaussian Hamiltonians, we show that this problem is in \textbf P whenever the set $\mathcal S$ contains the set of Gaussian states (\cref{th:gspGaussian}). For non-Gaussian Hamiltonians, we prove that the complexity of this problem critically depends on the stellar rank $r$, a measure of the non-Gaussian character of a continuous-variable quantum state \cite{chabaud2020stellar,chabaud2022holomorphic}. We parametrize the relevant family of states one optimizes over using  simple constraints on the energy (average particle number):
    \begin{itemize}
    \item For $r=0$ (corresponding to optimization over Gaussian states) with at most $\mathsf{exp}(n) := e^{n^{O(1)}}$ energy (average particle number), we prove that the problem is \textbf{NP}-complete using a reduction from deciding when a matrix is not copositive (\cref{thm:gaussian-optimization}).
    
    \item When energy and stellar rank are both at most $\mathsf{poly}(n) := n^{O(1)}$, we prove that the problem is in \textbf{QMA} (\cref{thm:bounded-squeezing-and-displacement}). The same proof technique also shows that for arbitrary $r$ and at most $\mathsf{exp}(n)$ energy (average particle number), the problem is in \textbf{NTIME} ($n^{O(r)}$), where \textbf{NTIME} $(t)$ is the class of problems that are solvable by a nondeterministic Turing machine that runs in time $O(t)$ on each branch.

    \item For $r = \infty$ with no energy bound, using an observation of \cite{kieu2003quantum}, we encode the solution to Hilbert's tenth problem in the ground state of a Hamiltonian. As a consequence, the problem is \textbf{RE}-hard (undecidable) when we assume no bound on the stellar rank (\cref{thm:undecidability}).
    \end{itemize}
\item Finally, we introduce a continuous-variable version of \textbf{QMA}, based on a \textbf{CVBQP} verifier (\textbf{CVQMA}). We give preliminary ideas for relating the complexity of the continuous-variable local Hamiltonian problem to this class, based on a continuous-variable analog of Kitaev's history state construction \cite{kempe20033,kempe2006complexity}.
\end{itemize}

\begin{table}[t]
\centering
\setlength{\tabcolsep}{4pt}
\renewcommand{\arraystretch}{1.15}
\small
    \begin{tabular}{|c||c|c|}
    \hline
        Ground state problems & Gaussian Hamiltonians & Non-Gaussian Hamiltonians \\
    \hline
    $\mathsf{HBound}_d$ & $\in\mathbf P$ (\cref{th:gspGaussian}) & \textbf{co}-\textbf{RE}-hard for $d\geq 10$ (\cref{corol:boundedness}); \\
        &&\textbf{co-NP}-hard for $d = 4$ (\cref{th:boundedness}) \\
    \hline
        $\mathsf{CVLH}$ over ${\mathcal S^{\mathsf{exp}}_0}$ &  $\in\mathbf P$ (\cref{th:gspGaussian}) & \textbf{NP}-complete (\cref{thm:gaussian-optimization})\\
    \hline
        $\mathsf{CVLH}$ over ${\mathcal S^{\mathsf{poly}}_{\mathsf{poly}}}$ &  $\in\mathbf P$ (\cref{th:gspGaussian}) &  $\in\mathbf{QMA}$ (\cref{thm:bounded-squeezing-and-displacement}) \\
    \hline
        $\mathsf{CVLH}$ over ${\mathcal S^{\mathsf{exp}}_r}$ &  $\in\mathbf P$ (\cref{th:gspGaussian}) &  $\in\mathbf{NTIME}(n^{O(r)})$ (\cref{thm:bounded-squeezing-and-displacementNP}) \\
    \hline
        $\mathsf{CVLH}$ &  $\in\mathbf P$ (\cref{th:gspGaussian})&  $\mathbf{RE}$-hard (\cref{thm:undecidability})\\
    \hline
    \end{tabular}
    \caption{Computational complexity of the Hamiltonian boundedness problem and the continuous-variable local Hamiltonian problem for Gaussian and non-Gaussian polynomial bosonic Hamiltonians over $n$ modes. $\mathcal S^{\mathsf{exp}}_{O(1)}$ denotes the set of states with constant stellar rank with at most exponential energy (average particle number) and $\mathcal S^{\mathsf{poly}}_{\mathsf{poly}}$ the set of states with polynomially bounded stellar rank and energy. For all non-Gaussian results, assuming a degree $d=4$ is sufficient except for the undecidability result where we show that $d=8$ is sufficient (for $d=4$, the problem is $\mathbf{QMA}$-hard as a consequence of \cite{childs2014bose}).}
    \label{tab:CVQMA}
\end{table}

\subsubsection{Details of the contributions}

In what follows, we detail our contributions and provide some intuition. All definitions and results are stated informally and we refer to the subsequent sections for formal statements and proofs. 

\paragraph{Gaussian computations.} Let us first consider the power of bounded-error quantum computations using polynomial Hamiltonians of degree at most $2$. The unitary gates generated by such Hamiltonians are known as Gaussian gates. It is well-known that Gaussian gates are efficiently simulatable in polynomial time \textbf{P} when acting on Gaussian states \cite{bartlett2002efficient}, i.e., states that may be obtained from the vacuum using Gaussian gates \cite{ferraro2005gaussian}, together with Gaussian measurements, i.e., projection onto Gaussian states. We define a model of Gaussian computations as follows (see \cref{def:Gaussian-circuit} for a formal statement):

\begin{definition}[Gaussian dynamical computations, informal]
Gaussian dynamical computation ($\mathbf{GDC}$) is the class of problems that can be solved by evolving input Gaussian states via logspace uniform quadratic Hamiltonians for polynomial time, followed by measuring a single mode in the position basis (see \cref{sec:phase-space-formalism} for more details about the formalism). The computation accepts if the measured outcome has a value greater than a fixed constant $b$ and rejects if it is below a fixed constant $a$.
\end{definition}

$\mathbf{GDC}$ allows evolving a quantum state according to different Gaussian Hamiltonians one after the other, so long as the total time and number of Hamiltonians do not exceed a polynomial bound. Gaussian computations are known to be continuous-variable analogs of the so-called Clifford computations in the discrete-variable case. Clifford computations are also known to be classically simulatable in polynomial time, by the Gottesmann--Knill theorem \cite{gottesman1998heisenberg}. In \cite{aaronson2004improved}, Aaronson and Gottesman showed that one can actually simulate Clifford computations (i.e., sample from one qubit) in the complexity class $\oplus$\textbf{L}, which is believed to be strictly contained in \textbf{P}. Performing linear algebra (such as computing the determinant) over $\mathbb{Z}_2$ is a complete problem for this class. It is natural to ask whether a continuous-variable analog of this result holds. Our first result resolves this question in the affirmative (see \cref{thm:GDC=BQL} for a formal statement): 

\begin{theorem} [The computational power of Gaussian dynamics, informal]
    The power of Gaussian computations up to logspace reductions is captured by bounded-error quantum logspace (\textbf{BQL}) and, equivalently, the problem of inverting a well-conditioned matrix. 
\end{theorem}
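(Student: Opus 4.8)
The plan is to establish the two inclusions $\mathbf{GDC}\subseteq\mathbf{BQL}$ and $\mathbf{BQL}\subseteq\mathbf{GDC}$, using as a pivot the fact from \cite{ta2013inverting,fefferman2021eliminating} that inverting a well-conditioned matrix is complete for $\mathbf{BQL}$. The key conceptual point is that, in the phase-space picture, a Gaussian computation is exactly real linear algebra on the first and second moments of the state, so both classes meet at the same linear-algebraic core, and the apparent ``quantumness'' of $\mathbf{BQL}$ is compiled into the matrix-inversion problem just as Gaussian moment dynamics compiles into it.

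For $\mathbf{GDC}\subseteq\mathbf{BQL}$, I would first recall that an $n$-mode Gaussian state is determined by a displacement vector $\vec d\in\mathbb R^{2n}$ and a covariance matrix $\Sigma\in\mathbb R^{2n\times 2n}$, and that evolution under a quadratic Hamiltonian $H=\tfrac12\vec r^{\mathsf T}G\vec r+\vec r^{\mathsf T}\vec c$ acts affinely as $\vec d\mapsto S\vec d+\vec t$ and $\Sigma\mapsto S\Sigma S^{\mathsf T}$, with $S=e^{t\Omega G}$ symplectic. A full $\mathbf{GDC}$ instance is therefore a product of polynomially many matrix exponentials together with an accumulated displacement, each factor having a logspace-uniform description. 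The acceptance probability is the value of a univariate Gaussian cumulative distribution function at the thresholds $a$ and $b$, a smooth function of the mean and variance of the measured quadrature, which are single entries of $S\vec d+\vec t$ and $S\Sigma S^{\mathsf T}$. I would then reduce the computation of these entries to well-conditioned matrix inversion: each exponential is the solution of a linear ODE, approximable by the implicit-Euler inverse $(I-\tfrac{t}{m}\Omega G)^{-m}$, and an iterated product $S_k\cdots S_1$ reduces to reading a block off the inverse of the bidiagonal block matrix with identities on the diagonal and the $-S_i$ below it. By the completeness result, this places the estimation in $\mathbf{BQL}$.

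For $\mathbf{BQL}\subseteq\mathbf{GDC}$, it suffices by completeness to reduce, in logspace, an instance of well-conditioned inversion $A\vec x=\vec b$ to a $\mathbf{GDC}$ instance that reveals a coordinate of $\vec x=A^{-1}\vec b$. A standard dilation and shift reduce the problem to a real symmetric positive-definite $A$ with polynomially bounded condition number. I would then write down the logspace-uniform quadratic Hamiltonian of a network of coupled oscillators with potential $\tfrac12\vec x^{\mathsf T}A\vec x-\vec x^{\mathsf T}\vec b$, whose first-moment equilibrium is exactly $A^{-1}\vec b$. Since purely symplectic flow preserves phase-space volume and cannot relax to this fixed point, I would couple these oscillators to a bank of ancillary modes prepared in the vacuum, realizing an effective damped Gaussian dynamics whose relaxation time is polynomial precisely because $A$ is well-conditioned. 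Evolving for polynomial time drives the target mode's position expectation to $(A^{-1}\vec b)_i$, and a thresholded position measurement then decides its sign or magnitude with bounded error.

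The main obstacle, in both directions, is the quantitative control of precision and conditioning rather than the algebraic correspondence itself. In the first inclusion, symplectic generators include squeezing, so $e^{t\Omega G}$ can have entries as large as $e^{n^{O(1)}}$; I must verify that the quantities actually feeding the decision---the marginal mean and variance of a single quadrature, suitably normalized---stay in the well-conditioned, inverse-polynomial-precision regime that $\mathbf{BQL}$ handles, and that this precision suffices to separate the $a$ and $b$ thresholds. In the second inclusion, the delicate step is engineering genuine relaxation to the fixed point using only Hamiltonian evolution on system-plus-ancilla: I must exhibit an explicit, logspace-uniform dilation whose effective dynamics converges in polynomial time while keeping the covariance of the measured mode bounded, so that the thresholded measurement is reliable. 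Proving these bounds is where the real work lies.
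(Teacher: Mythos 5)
Your first inclusion ($\mathbf{GDC}\subseteq\mathbf{BQL}$) follows essentially the paper's route: the paper also encodes the phase-space ODE for the first moments as one large well-conditioned linear system (it uses a truncated-Taylor encoding, following Krovi, rather than your implicit-Euler/bidiagonal-block product, but these are interchangeable) and then invokes $\mathbf{BQL}$-completeness of poly-conditioned matrix inversion. Two remarks. First, the conditioning worry you flag is not something to be "verified" downstream; it is resolved by the energy promise built into the definition of $\mathbf{GDC}$. Since $\tr(\boldsymbol\Sigma_i)$ is at most the energy bound $E^\ast$ and $\boldsymbol\Sigma_i=\frac12\mathbf S_i\mathbf S_i^T$ with $\mathbf S_i$ symplectic (so $\|\mathbf S_i^{-1}\|=\|\mathbf S_i\|$), every factor satisfies $\|\exp(\boldsymbol\Omega\mathbf M_it_i)\|\le 2E^\ast$, and this is what caps the condition number of the linear system at $\mathsf{poly}$. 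Without invoking the energy promise your inclusion does not go through, because unbounded squeezing really does destroy conditioning. Second, you do not need the covariance/CDF computation at all: for a Gaussian output distribution the mean equals the median, and the beam-splitter sample-mean gadget amplifies the promise gap, so tracking means alone (the problem $\mathsf{GausSim}$) suffices.

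The second inclusion is where your proposal has a genuine gap. You want the first moments to relax to the equilibrium $A^{-1}\vec b$ of the potential $\frac12\vec x^{T}A\vec x-\vec b^{T}\vec x$, and you correctly observe that symplectic flow cannot relax, so you posit an ancilla bank producing ``effective damped Gaussian dynamics'' with polynomial relaxation time. That dissipative dilation is exactly the unproven step, and it is not routine: $\mathbf{GDC}$ allows only logspace-uniform, polynomial-time, energy-bounded \emph{unitary} quadratic evolution, and engineering an effective bath out of finitely many modes that (i) relaxes in time $\mathsf{poly}(\kappa)$, (ii) keeps the measured mode's covariance bounded, and (iii) admits a logspace-uniform description, is precisely the content you would still have to supply. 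The paper sidesteps relaxation entirely. After a standard antisymmetrization so that $A$ may be taken anti-Hermitian, it uses the closed form $\boldsymbol\mu(t)=\exp(At)\boldsymbol\mu(0)+A^{-1}\left(\exp(At)-\mathbb I\right)\boldsymbol\Omega\mathbf d$: the target vector $-A^{-1}\boldsymbol\Omega\mathbf d$ is the non-oscillating part, and the oscillating part is killed by \emph{time averaging} rather than damping, since for anti-Hermitian $A$ with singular values in $[s_{\min},s_{\max}]$ one has $\left\|\frac1N\sum_{k=0}^{N-1}e^{\epsilon Ak}\right\|\le\delta$ for $\epsilon\le s_{\max}^{-1}$ and $N=O(\kappa(A)/\delta)$. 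Running $N$ copies of the evolution for times $k\epsilon$ and averaging their outputs with the $50{:}50$ beam-splitter tree (itself Gaussian and energy-preserving) produces a mode whose position mean is an entry of $A^{-1}$, read out by choosing the drive $\mathbf d=(0,-\mathbf e_j)$. This phase-cancellation idea is the missing ingredient that makes the reduction work inside purely unitary Gaussian dynamics.
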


\noindent The proof is based on the symplectic formulation of Gaussian operators \cite{ferraro2005gaussian}. 

Recently and independently, in \cite{barthe2024gate}, it was shown that simulating a particular class of gate-based Gaussian computations over exponentially many modes is $\mathbf{BQP}$-complete. Recall that approximate matrix inversion is a $\mathbf{BQP}$-complete problem when the matrix under consideration is sparse and well-conditioned \cite{dervovic2018quantum}. Although our models and results are technically different, both reveal a strong connection between Gaussian computations and linear algebra.

\paragraph{Bounded-error continuous-variable quantum polynomial time (CVBQP).} 
Arguably, there are many ways one can define a continuous-variable version of \textbf{BQP} based on how computation is being performed. We consider bosonic computations using gates generated by polynomials of constant degree in the position and momentum bosonic operators, which are ubiquitous in quantum physics, with particle-number measurements. We define gate set-dependent classes (see e.g., \cref{def:informalCVBQP} below) and investigate the computational power of these models for different choices of gate sets. In particular, we show that bosonic computations based on specific gates generated by degree-4 and $2$-local Hamiltonians can perform universal (discrete-variable) quantum computation on an \textit{input vacuum state and without requiring feed-forward of measurement outcomes} (see \cref{th:CVBQPcontainsBQP}). Interestingly, the equivalence between these different continuous-variable classes is unknown, in part due to the delicate features of unbounded operators such as those pointed out in the introduction. Some of these relationships, such as the fast compiling of polynomial-degree Hamiltonians into Gaussian and cubic phase gates, are outstanding open questions; see, e.g., \cite{sefi2011decompose,kalajdzievski2021exact}.

We then consider bosonic computations based on a family of circuits generated by cubic phase gates $e^{i X^3}$ and Gaussian gates $e^{i H}$, where $H$ is a quadratic Hamiltonian in $X$ and $P$. Notably, this gate set is believed/conjectured to be universal for the set of unitaries generated by arbitrary polynomials over an arbitrary number of modes \cite{sefi2011decompose, kalajdzievski2021exact}, a claim that was proven in a controllability sense in \cite{wu2006smooth}. Due to \cite{gottesman2001encoding}, this computational model, when equipped with the ability of performing Gaussian measurements and feed-forward, is capable of performing universal (discrete-variable) quantum computation, when complex input states known as Gottesman--Kitaev--Preskill states are available.

Gaussian gates are closed under multiplication, which is one way we can understand the classical simulation of Gaussians. The single-mode Gaussian dynamics can furthermore be understood via specific integrable classical equations of motion (known as Calogero--Moser dynamics) \cite{chabaud2022holomorphic}. However, once we add higher-degree gates to the gate set, the operators generated by the resulting gates generate a vastly larger set of operators, and the single-mode dynamics becomes chaotic \cite{lloyd1999quantum,wu2006smooth}. Define $\mathbf{CVBQP}[X^3]$ as the class of decision problems that are solvable using quantum circuits based on Gaussian and cubic phase gate sets (see \cref{def:CVBQPfixednG} for a formal definition):

\begin{definition}[$\mathbf{CVBQP}{[}X^3{]}$, informal]\label{def:informalCVBQP}
$\mathbf{CVBQP}[X^3]$ is the class of decision problems that are solvable with a bounded probability of error by applying a polynomial-time uniform sequence of Gaussian and cubic phase gates to the vacuum state and measuring the number of particles at the end, with the promise that the energy (average particle number) at the output is polynomially bounded.
\end{definition}

\noindent We show that, in between the computation, the energy of states prepared from the vacuum by polynomial-time sequences of Gaussian and cubic phase gates is upper bounded by a doubly-exponential function of the number of gates and modes (see \cref{prop:energyboundcubic}). The reason for such drastic energy growth in the system is that consecutive application of cubic and Gaussian gates can lead to repeated squaring of basic observables. This, in turn, leads to an upper bound of \textbf{EEXP} on the strong simulation (computing output amplitudes up to exponential precision) of $\mathbf{CVBQP}[X^3]$ when the final measurement is made in the computational basis. We then apply standard depth reduction techniques to bring the complexity down to $\mathbf{EXPSPACE}$ (see \cref{thm:schrod-in-expspace} for a formal statement):

\begin{theorem}[Upper bound on the computational power of Gaussian and cubic phase gates, informal]\label{th:informal_CVBQPinEEXP}
    Bosonic computations consisting of Gaussian and cubic phase gates on input vacuum and measurement in the Fock basis can be strongly simulated in $\mathbf{EXPSPACE}$. 
\end{theorem}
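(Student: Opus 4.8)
The plan is to use the energy bound to replace the infinite-dimensional evolution by a finite matrix computation and then to evaluate the resulting iterated matrix product space-efficiently. Write the circuit as $U=U_T\cdots U_1$ with $T=\poly(n)$ gates, each $U_k$ Gaussian or cubic phase, and let the goal be to approximate a fixed output amplitude $\bra{m}U\ket{0^n}$ (equivalently its squared modulus) to additive error $\epsilon=2^{-\poly(n)}$. First I would invoke \cref{prop:energyboundcubic}, which bounds the average total particle number after each gate by $E_{\max}=2^{2^{\poly(n)}}$; Markov's inequality then bounds the weight on Fock states with total particle number above a cutoff $\Lambda$ by $E_{\max}/\Lambda$ at every step. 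Choosing $\Lambda=2^{2^{\poly(n)}}$ so that $T E_{\max}/\Lambda\le\epsilon/2$ and propagating the truncation error across the $T$ gates — using that the energy stays bounded throughout, so that the truncated and true evolutions agree on the relevant low-energy subspace — would show that $\bra{m}U\ket{0^n}$ is approximated to within $\epsilon$ by the same amplitude computed with each gate replaced by its restriction $\widetilde U_k$ to the $D$-dimensional truncated Fock space, where $D=2^{2^{\poly(n)}}$.

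\emph{Gate matrices.} Next I would compute the entries of each $D\times D$ matrix $\widetilde U_k$ to $2^{\poly(n)}$ bits of precision in space $2^{\poly(n)}$. For Gaussian gates this uses closed-form Fock-basis matrix-element formulas from the symplectic (Hermite-polynomial) representation. For $e^{iX^3}$, substituting $X=(a+a^\dagger)/\sqrt2$ turns the truncation of $X^3$ into an explicit banded matrix with efficiently computable entries, whose exponential I would obtain to the required precision by scaling-and-squaring (Taylor-expanding $e^{iX^3/2^k}$ for $k=2^{\poly(n)}$ and then squaring $k$ times), each step being an iterated product of $D\times D$ matrices performable in exponential space.

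\emph{Space-efficient evaluation.} The target amplitude is the $(m,0^n)$ entry of $\widetilde U_T\cdots\widetilde U_1$. A naive product costs time $\poly(D)=2^{2^{\poly(n)}}$, which only places the strong simulation in $\mathbf{EEXP}$; instead I would expand the entry as a sum over paths $\sum_{i_1,\dots,i_{T-1}}\prod_k(\widetilde U_k)_{i_k,i_{k-1}}$ and enumerate the index tuples one at a time while maintaining a running accumulator, or equivalently apply the parallel matrix-multiplication / depth-reduction technique of \cite{trahan1992multiplication}. This evaluates the product in space $O(T\log D)$ plus the space needed to compute one gate entry and to store the accumulator to $2^{\poly(n)}$ bits, all of which is $2^{\poly(n)}$, yielding the claimed $\mathbf{EXPSPACE}$ bound.

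\emph{Main obstacle.} I expect the crux to be the truncation-error analysis: bounding only the output energy does not suffice, and one must show that replacing each infinite-dimensional unitary $U_k$ by the non-unitary $\widetilde U_k$ does not amplify error through the $T$ gates. This requires the energy bound of \cref{prop:energyboundcubic} to hold at every intermediate step, together with a careful argument that the state stays supported (up to $\epsilon$) on the low-energy subspace where $\widetilde U_k\approx U_k$. A secondary subtlety is verifying that demanding precision $\epsilon=2^{-\poly(n)}$ keeps $\log D$ merely exponential: this holds because $1/\epsilon$ is exponential while $E_{\max}$ is already doubly-exponential, so their product leaves $\log D=2^{\poly(n)}$.
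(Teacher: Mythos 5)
Your first stage — invoking \cref{prop:energyboundcubic} for the doubly exponential energy bound and then truncating each gate to a Fock subspace of dimension $D=2^{2^{\poly(n)}}$, with a Markov-type tail bound propagated through the $T$ gates — is exactly the paper's route (this is \cref{prop:error} combined with \cref{prop:energyboundcubic}), and your identification of the error-propagation issue as the crux of that stage is the right instinct. The two later stages, however, each contain a genuine gap.

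First, for the cubic phase gate you propose to exponentiate the \emph{truncated generator} $\Pi_D X^3\Pi_D$ by scaling-and-squaring, but what \cref{prop:error} licenses is replacing $U$ by the \emph{truncated unitary} $\Pi_D e^{isX^3/3}\Pi_D$, and these are different operators. A Duhamel estimate for their difference picks up the leakage $(I-\Pi_D)X^3\phi$ near the cutoff, whose operator norm there scales like $D^{3/2}=2^{2^{\poly(n)}}$; the Markov tail bound $E^*/D$ is not strong enough to beat this factor, so the agreement of the two evolutions on low-energy states does not follow from the tools you have set up. The paper avoids this entirely by computing matrix elements of the true unitary via the closed-form Airy-function/Hermite expressions of \cite{miatto2020fast}. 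Second, and more fundamentally, your ``enumerate paths with a running accumulator'' step assumes each entry $(\widetilde U_k)_{i,j}$ can be produced to $2^{\poly(n)}$ bits in exponential space, but every known route to these entries (the closed-form sums over $k\le m$, $l\le n$ with $m,n\le D$, involving $m!$, $\binom{n}{k}$, Hermite polynomials of index up to $D$, and Airy derivatives of order up to $D$ — or your own $D\times D$ matrix squarings) passes through intermediate quantities with \emph{doubly} exponentially many digits, which cannot be written down in exponential space. This is precisely the obstacle the paper's machinery is built to overcome: it bounds the \emph{degree} of the amplitude as a polynomial in the circuit parameters (\cref{thm:degree-bound}), applies the depth reduction of \cite{valiant1981fast} to get an arithmetic circuit of exponential depth and doubly exponential size, and only then invokes \cref{thm:par-pspace} from \cite{trahan1992multiplication}, whose content is that unit-cost arithmetic of exponential parallel time can be simulated in exponential space despite the doubly exponential bit-lengths. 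You cite that result as an ``equivalent'' alternative, but it is doing essential work that your streaming accumulator does not replicate.
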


\noindent This theorem assumes no energy upper bound for each of the computational steps. When introducing an energy (average particle number) upper bound of $E^*$ in the above result, we note that the space complexity of the classical simulation in the single-mode case scales as $O(\textsf{log}(E^*))$.

Next, we focus on the problem of computing expectation values at the output of $\mathbf{CVBQP}[X^3]$ circuits for a low-degree observable $O$ (see \cref{def:ExpValfixednGprob} and \cref{th:expvalinPSPACE} for formal statements):

\begin{theorem}[The computational complexity of bosonic expectation values, informal]
The problem of computing expectation values of low-degree observables for states prepared by applying Gaussian and cubic phase gates to the vacuum can be solved in $\mathbf{PSPACE}$.
\end{theorem}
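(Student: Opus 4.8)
The plan is to work in the Heisenberg picture, so that the (doubly-exponential) energy blow-up of the state $U\ket{0}$ never has to be represented explicitly. Writing $U = U_T\cdots U_1$ for the sequence of Gaussian and cubic phase gates, the target quantity is
\[
\bra{\psi}O\ket{\psi} = \bra{0} U^\dagger O U\ket{0} = \bra{0}\, \mathcal C_1\mathcal C_2\cdots\mathcal C_T(O)\,\ket{0},
\]
where $\mathcal C_t(A) := U_t^\dagger A U_t$ is conjugation by the $t$-th gate. The first step is to record the action of each $\mathcal C_t$ on the canonical operators. A Gaussian gate acts by an affine symplectic substitution $X_a\mapsto\sum_b(s_{ab}X_b+s_{ab}'P_b)+d_a$ (degree preserving), while a cubic phase gate $e^{icX_a^3}$ acts by $X_a\mapsto X_a$ and $P_a\mapsto P_a+3cX_a^2$, as follows from $[X_a^3,P_a]=3iX_a^2$. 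Since conjugation is an algebra homomorphism, $\mathcal C_t$ is determined by these images, and the evolved observable is obtained by composing $T$ such substitutions.

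I would then confront the central difficulty: the substitution $P_a\mapsto P_a+3cX_a^2$ squares the degree in $P_a$, so a sequence of cubic gates drives the degree of the Heisenberg-evolved observable up by repeated squaring, reaching $2^{\Theta(T)}\cdot\deg O$ (mirroring the energy growth of \cref{prop:energyboundcubic}). Hence the evolved observable has exponentially high degree and cannot be written out term by term. To avoid this, the plan is to keep the evolution \emph{implicit and linear} by passing to the phase-space (Weyl) representation, where operators are encoded by their Weyl symbols — ordinary polynomials in commuting variables $(x,p)$ indexed by exponent vectors, each of which fits in polynomially many bits even when the degree is exponential. In this representation each $\mathcal C_t$ is an explicit linear map on symbols: a Gaussian gate acts by the classical symplectic substitution with no quantum correction, whereas a cubic gate acts by the shift $p_a\mapsto p_a+3cx_a^2$ together with a finite Moyal/Airy correction generated by $\partial_{p_a}^3$ (the Moyal bracket with a cubic Hamiltonian truncates after one correction term). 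Every entry of the corresponding exponentially large matrices is a simple combinatorial coefficient computable in polynomial space.

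With this encoding the expectation value becomes an integral of the evolved symbol against the vacuum Wigner function, which is Gaussian; integrating a polynomial symbol against a Gaussian is a weighted sum of explicitly computable Gaussian moments. The whole computation is therefore an iterated product of $T$ exponential-dimensional but polynomial-space-describable matrices applied to the (polynomially sized) initial symbol of $O$, followed by contraction with the Gaussian-moment functional. I would organize this iterated product by a balanced binary tree, so that it runs in polynomial parallel time (depth $O(\log T)$ times the $\mathrm{poly}$ depth of one structured multiplication of exponential-dimensional matrices). By the parallel computation thesis and standard parallel iterated-multiplication results \cite{trahan1992multiplication}, polynomial parallel time is contained in $\mathbf{PSPACE}$, which would give \cref{th:expvalinPSPACE}.

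The main obstacle is exactly the exponential degree growth produced by the cubic gate's repeated squaring: it rules out any explicit polynomial or truncated Fock-space representation and forces the implicit, linear-algebraic formulation above. The accompanying technical burden is twofold: first, to verify that all matrix entries, the Moyal correction terms, and the final Gaussian moments are uniformly computable in polynomial space despite exponential dimensions; and second, to treat operator ordering exactly, since the noncommutativity of $X$ and $P$ (equivalently the Moyal corrections for the cubic gate) must be tracked faithfully through the composition rather than dropped. Establishing that these structured arithmetic tasks collapse to polynomial parallel depth — and thus to $\mathbf{PSPACE}$ — is where the bulk of the work lies, and it is precisely this Heisenberg-picture reduction to parallel arithmetic that separates the present $\mathbf{PSPACE}$ bound from the $\mathbf{EXPSPACE}$ bound for Schr\"odinger-picture particle-number simulation in \cref{thm:schrod-in-expspace}.
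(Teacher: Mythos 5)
Your proposal is correct and shares the paper's overall skeleton --- Heisenberg-picture evolution of the observable, a polynomial-time parallel algorithm over exponentially many monomial coefficients, and the appeal to the PRAM characterization of $\mathbf{PSPACE}$ (\cref{thm:PRAMtimeEqualsPSPACE}, \cite{trahan1992multiplication}) --- but it differs genuinely in the bookkeeping used to tame noncommutativity. The paper stays inside the operator algebra: it keeps the evolved observable in \emph{normal form} (\cref{def:normalForm}), re-normal-orders after each gate via the noncommutative binomial theorem of Wyss and the Zassenhaus formula (\cref{prop:normalOrdering}, \cref{prop:normalOrdering2}), and contracts at the end with closed-form Fock moments $\bra{0}a^m(a^\dagger)^n\ket{0}=\sqrt{m!n!}\,\delta_{m,n}$. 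You instead pass to commutative Weyl symbols, absorbing operator ordering into exact Moyal corrections --- the key facts being that the Moyal flow is purely classical for quadratic Hamiltonians, while for $X^3$ the Moyal bracket truncates to $3x^2\partial_p-\tfrac14\partial_p^3$, whose two terms commute so the flow factorizes into the classical substitution times a finite Airy-type correction on polynomials --- and you contract at the end by integrating the symbol against the Gaussian vacuum Wigner function. Your route trades the paper's normal-ordering lemmas for standard phase-space facts, which is arguably cleaner; the paper's route avoids having to justify the Wigner trace formula and Moyal exactness. Two small points to tighten: (i) you correctly note that exponent vectors fit in polynomially many bits, but the \emph{coefficients} of the evolved symbol (and matrix entries such as $(3c)^{2^T}$) are doubly exponentially large and hence need exponentially many bits --- this is exactly what the PRAM$[*]$ model with unit-cost arithmetic on unbounded integers (or, equivalently, exponentially many processors handling exponentially long integers in polynomial depth) is invoked to absorb, so it should be stated rather than folded into ``computable in polynomial space''; (ii) the balanced binary tree over the $T$ conjugation maps is unnecessary, since $T$ is polynomial and sequential application of the maps to the symbol vector already gives polynomial parallel depth.
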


\begin{proof}[Proof sketch]
Let $W_i = U_i \cdots U_1$ and suppose that $W_i X_1 W_i^\dagger$ takes the form $O_i = \sum_{\mu, \nu} \alpha_{\mu, \nu} X^{\mu} P^\nu$. Here we used the multi-index notations $\mu = (\mu_1, \cdots, \mu_n) \in \mathbb{Z}_+^n$ and $X^{\mu} = X_1^{\mu_1} \cdots X^{\mu_n}_n$. Let $O_{i+1} = U_{i+1}O_i U_{i+1}^\dagger$ be the value after the application of the next gate. We can write $O_{i+1} = \sum_{\mu, \nu} \alpha_{\mu, \nu} (U_iXU_i^\dagger)^{\mu} (U_i P U^\dagger_i)^\nu$. If $U_{i+1}$ is a Gaussian gate, then the degree of $X$ and $P$ in $O_i$ will be the same as that of $O_{i+1}$. However, if $U_{i+1}$ is a cubic phase gate, then the degree of $O_{i+1}$ is at most twice the degree of $O_i$. That is because of the unitary evolution due to the cubic-phase gate $U_i P U_i^\dagger = P + c \cdot X^2$, where $c$ is a constant. As a result, the degree of $O_T$ is at most $2^T$. We note that coefficients involving the expansion of $O_T$ into a normal form may be doubly exponentially large. Hence, the naive brute-force approach runs in exponential space. To bring the complexity upper bound down to \textbf{PSPACE}, we give a polynomial-time \textit{parallel algorithm} using exponentially many processors. Standard results in computational complexity imply a \textbf{PSPACE} upper bound (see, for instance, \cite{trahan1992multiplication}). We further show that the upper bound remains true for the multimode case, using the fact that the only multimode operators we need to add are the so-called two-mode $SUM$ gates. 
\end{proof}

We emphasize that the standard relationships outlined in \cref{fig:complexity-classes} indicate a \textbf{PP} upper bound on \textbf{BQP}. Proving a $\mathbf{EXPSPACE}$ upper bound on $\mathbf{CVBQP}[X^3]$ (outlined in \cref{th:informal_CVBQPinEEXP}) already utilizes nontrivial tools, and we do not know if a stronger upper bound such as \textbf{PP} would hold in this case based on current techniques. Indeed, from the previous proof sketch, a remarkable feature of the cubic phase gate becomes apparent: starting with a single position operator $X$ and applying $T$ cubic phase gates interleaved with suitable Gaussian gates (the Fourier gate $e^{i\frac\pi4(X^2+P^2)}$, mapping $X$ to $P$ and $P$ to $-X$), one can perform repeated squaring, i.e., obtain an observable of the form $\propto X^{2^T}+\cdots$ after $T$ rounds \footnote{We thank Francesco Arzani for pointing out this fact.}. This implies that doubly exponentially large numbers may naturally arise after polynomially many gates in the continuous-variable setting, in stark contrast with the discrete-variable setting.

\paragraph{The boundedness problem.} Next, we aim to study the complexity of estimating the ground state energy of a bosonic Hamiltonian. Due to the infinite-dimensional setting, however, the spectrum may be unbounded. Hence, we first formulate and study the problem of deciding whether a bosonic Hamiltonian has a bounded ground state energy (see \cref{def:boundednessprob} for a formal definition). Note that this problem is equivalent to deciding whether a bosonic Hamiltonian has a bounded spectrum, by checking boundedness of the ground energy for $H$ and $-H$. The spectrum of any Hamiltonian that is a polynomial of odd degree (such as $X$ or $X^3$ or $X^2 P + P X^2$, etc.) is not bounded, as can be seen by computing the expectation value for an arbitrary coherent state (eigenstates of the operator $X+iP$), so we focus on polynomial Hamiltonians of even degree. 

It turns out that the quadratic (Gaussian) case is solvable in polynomial time via a reduction to the problem of deciding whether a polynomial-size matrix, which may be computed efficiently from the coefficients of the Hamiltonian, is positive semi-definite (see \cref{th:gspGaussian}). 

In the non-Gaussian case, we prove that the problem is significantly harder, even for degree-4 Hamiltonians (see \cref{th:boundedness} for a formal statement):

\begin{theorem}[Complexity of the boundedness problem, informal]\label{th:informal_boundedness}
The problem of deciding whether the spectrum of a bosonic Hamiltonian with degree $4$ is bounded is $\mathbf{co}$-$\mathbf{NP}$-hard. Furthermore, for degree $10$ or higher, the problem becomes undecidable.
\end{theorem}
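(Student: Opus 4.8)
The plan is to give a polynomial-time (indeed logspace) reduction from deciding \emph{copositivity} of a symmetric matrix, which is \textbf{co-NP}-complete (Murty--Kabadi), to the boundedness problem, so that a boundedness oracle would decide copositivity. Recall that a symmetric matrix $M \in \mathbb{R}^{n \times n}$ is copositive if $z^\top M z \ge 0$ for every $z$ in the nonnegative orthant $\mathbb{R}^n_{\ge 0}$. Given such an $M$, I would construct the degree-$4$ Hamiltonian
\[
H_M := \sum_{i,j=1}^n M_{ij}\, X_i^2 X_j^2,
\]
a polynomial of degree $4$ in the position operators alone. The crucial first observation is that the $X_a$ mutually commute, so there is no operator-ordering ambiguity, and $H_M$ is simply the multiplication operator by the polynomial $V_M(x) := \sum_{i,j} M_{ij}\, x_i^2 x_j^2$ in the position representation on $L^2(\mathbb{R}^n)$.

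The heart of the argument is the spectral identity $\inf \operatorname{spec}(H_M) = \inf_{x \in \mathbb{R}^n} V_M(x)$. The inequality $\ge$ is the easy variational bound: for any normalized $\psi$ in the domain, $\langle \psi | H_M | \psi \rangle = \int V_M(x)\,|\psi(x)|^2\,dx \ge \inf_x V_M(x)$. For the reverse direction one builds, for each point $x^\star$ and each $\epsilon$, a normalized bump function concentrated in a small neighborhood of $x^\star$ whose energy is within $\epsilon$ of $V_M(x^\star)$ (an approximate-eigenstate/Weyl sequence); when $V_M$ is unbounded below this yields states of arbitrarily negative energy. Equivalently, since $H_M$ is a function of the commuting self-adjoint operators $X_1,\dots,X_n$ whose joint spectrum is all of $\mathbb{R}^n$, the functional calculus gives $\operatorname{spec}(H_M) = \overline{V_M(\mathbb{R}^n)}$ directly.

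Finally I would establish the dichotomy via the substitution $z_i = x_i^2$: as $x$ ranges over $\mathbb{R}^n$, the vector $(x_1^2,\dots,x_n^2)$ ranges over the whole nonnegative orthant, so $V_M(\mathbb{R}^n) = \{\, z^\top M z : z \ge 0 \,\}$. Hence if $M$ is copositive then $V_M \ge 0$ with $V_M(0) = 0$, so $\inf V_M = 0$ and $H_M$ is bounded below; if $M$ is not copositive there is some $z^\star \ge 0$, hence $x^\star$, with $V_M(x^\star) < 0$, and homogeneity gives $V_M(t x^\star) = t^4 V_M(x^\star) \to -\infty$, so $H_M$ is unbounded below. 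Thus $H_M$ has spectrum bounded from below if and only if $M$ is copositive, and the construction is clearly logspace-computable, completing the reduction.

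I expect the main obstacle to be the rigorous justification of the spectral identity $\inf\operatorname{spec}(H_M) = \inf_x V_M(x)$ for an unbounded multiplication operator --- specifying the domain on which $H_M$ is (essentially) self-adjoint and producing the localized approximate eigenstates that witness the $\le$ direction --- rather than the combinatorial reduction, which is immediate. I would also emphasize that the use of copositivity (rather than general polynomial nonnegativity) is precisely what keeps the construction at degree $4$: the substitution $z_i = x_i^2$ converts a quadratic copositivity question into a quartic boundedness question in the $X_i$.
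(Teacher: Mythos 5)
Your reduction is correct, and it reduces from the same \textbf{co}-\textbf{NP}-complete problem (matrix copositivity, Murty--Kabadi) as the paper, but via a genuinely different Hamiltonian. The paper uses $H=\sum_{i,j}M_{ij}N_iN_j$ with $N_i$ the number operators: this operator is diagonal in the Fock basis, its eigenvalues are the values $\mathbf m^\top M\mathbf m$ for $\mathbf m\in\mathbb Z_+^n$, and to witness unboundedness at a non-integer point $\mathbf x\ge 0$ with $\mathbf x^\top M\mathbf x<0$ the paper builds product states $\bigotimes_i\bigl(\sqrt{1-x_i/\lceil x_i\rceil}\ket0+\sqrt{x_i/\lceil x_i\rceil}\ket{\lceil x_i\rceil}\bigr)$ whose mean particle numbers are $x_i$, then scales $\mathbf x$. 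You instead take $H_M=\sum_{i,j}M_{ij}X_i^2X_j^2$, a multiplication operator by $V_M(x)=\sum_{ij}M_{ij}x_i^2x_j^2$ in the position representation, and read the spectrum off as $\overline{V_M(\mathbb R^n)}=\overline{\{\,z^\top Mz:z\ge 0\,\}}$ via the substitution $z_i=x_i^2$. Your route buys a cleaner and exact spectral identity --- there is no operator-ordering issue, no need for approximate product states, and the quartic homogeneity gives divergence directly --- at the cost of having to invoke (standard) multiplication-operator spectral theory on Schwartz space, which the paper's framework already supplies. The paper's route, by contrast, immediately yields the stronger statement that the problem remains hard when the optimization is restricted to Fock states, which it records as a remark; your construction would instead give hardness witnessed by position-localized states. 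Both constructions are logspace-computable degree-$4$ Hamiltonians, so either one establishes the theorem.
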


\begin{proof}[Proof sketch] The $\mathbf{NP}$-hardness proof proceeds via reduction from matrix copositivity, which is the problem of deciding, given $M \in \mathbb{R}^{n \times n}$, whether $\bra{x}M\ket{x}\geq 0$ for all $\ket{x} \in \mathbb{R}_{\geq 0}^n$ with non-negative entries. This problem is known to be \textbf{co}-\textbf{NP}-complete \cite{murty1987some}.

The undecidability for degree $10$ or higher, is a corollary of ground-energy undecidability (discussed later in \cref{thm:informal-gs-re-hard}). We first show that deciding whether the ground energy of a Hamiltonian $H$ is less than $a$ or larger than $b$ (for $b-a\geq 1$) is undecidable. This holds if $H$ is of degree $8$ or higher. We then construct another Hamiltonain $H'=N\otimes (H-c)$ with $c = \frac{b-a}{2}$. We note that if $\lambda_{\min}(H) \geq b$, then $H'\geq 0$ is bounded from below. However, if $\lambda_{\min}(H)\leq a$ then $H'$ is not bounded from below. As a result, we have a reduction from the ground energy (an $\mathbf{RE}$-hard problem) to the boundedness problem.
\end{proof}

How hard is it to find a witness for the boundedness of the ground state energy? From the above, we cannot find it in polynomial time in general. Even for degree $4$ Hamiltonians, finding the witness in polynomial time would at least collapse \textbf{co}-\textbf{NP} to \textbf{P}, which is highly unlikely. However, we may find a procedure to achieve this goal at least in some instances. This is what we do next, based on a sum-of-squares technique. 

Consider a polynomial Hamiltonian in the form $H = \sum_{\mu, \nu \in \mathbb{Z}^n_+} \frac12 h_{\boldsymbol\mu, \boldsymbol\nu} \{X^{\boldsymbol\mu}, P^{\boldsymbol\nu}\}$, where we used multi-index notation (in \cref{prop:SOS} we show that any polynomial Hamiltonian can be brought to this form with real coefficients $h_{\boldsymbol{\mu},\boldsymbol{\nu}}\in\mathbb R$). The classical polynomial $p_H : \mathbb{R}^{2n} \rightarrow \mathbb{R}$ corresponding to $H$ is defined as 
$$
p_H (x_1, \cdots, x_n, p_1, \cdots, p_n) = \sum_{\boldsymbol\mu, \boldsymbol\nu \in \mathbb{Z}^n_+} h_{{\boldsymbol\mu}, {\boldsymbol\nu}} x^{\boldsymbol\mu} p^{\boldsymbol\nu}.
$$
We prove the following (see \cref{prop:soswitness} for a formal statement):

\begin{proposition}[Checking boundedness for degree-4 Hamiltonians, informal]
    Let $H$ be a bosonic Hamiltonian of degree $4$. If $p_H$ is a sum-of-squares polynomial, then the spectrum of $H$ is bounded from below by an efficiently computable constant. 
\end{proposition}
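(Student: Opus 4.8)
The plan is to use the classical SOS certificate for $p_H$ to build a manifestly positive-semidefinite quantum operator that coincides with $H$ up to an explicit additive scalar. Since $p_H$ has degree at most $4$ and is a sum of squares, I would first compute real polynomials $q_1,\dots,q_m$ of degree at most $2$ with $p_H=\sum_k q_k^2$; such a decomposition is obtained by solving a semidefinite feasibility program over the Gram matrix indexed by the $O(n^2)$ monomials of degree $\le 2$ in the $2n$ variables $(x,p)$, which is solvable in polynomial time. For each $k$ let $\hat q_k$ be the Weyl (symmetric) quantization of $q_k$; because $q_k$ is real, $\hat q_k$ is Hermitian on the Schwartz domain, and therefore $\sum_k \hat q_k^2\succeq 0$ there.

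The first key step is to control the quantization ordering via the Moyal product, with $\hbar=1$ as dictated by $[X,P]=iI$. For a real degree-$\le 2$ symbol $q_k$ the star-square is $q_k\star q_k=q_k^2+\tfrac12\left(\tfrac{i}{2}\right)^2 q_k(\overleftarrow\partial\,\omega\,\overrightarrow\partial)^2 q_k$, where $\omega$ is the canonical symplectic form: the order-$1$ term is the self–Poisson bracket $\{q_k,q_k\}=0$, and all terms of order $\ge 3$ involve third derivatives of $q_k$, which vanish. As $q_k$ is quadratic, the surviving second-order term is a constant $c_k$, so $\hat q_k^2=\mathrm{Weyl}(q_k^2)+c_k I$ with $c_k$ an explicit function of the coefficients of $q_k$. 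Summing, $\sum_k \hat q_k^2=\mathrm{Weyl}(p_H)+C\,I$ with $C=\sum_k c_k$, and hence $\mathrm{Weyl}(p_H)\succeq -C\,I$.

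The second, and main, step is to show that $H$ and $\mathrm{Weyl}(p_H)$ differ only by a scalar when $\deg H\le 4$. Computing the Weyl symbol of the anticommutator ordering term by term, the odd Moyal orders cancel between $X^{\boldsymbol\mu}P^{\boldsymbol\nu}$ and $P^{\boldsymbol\nu}X^{\boldsymbol\mu}$, leaving $\sigma\!\left(\tfrac12\{X^{\boldsymbol\mu},P^{\boldsymbol\nu}\}\right)=x^{\boldsymbol\mu}p^{\boldsymbol\nu}+\sum_{j\ge 1}(\text{order-}2j\text{ term})$, where the order-$2j$ term lowers the total degree by $2j$. For $|\boldsymbol\mu|+|\boldsymbol\nu|\le 4$ only $j=1$ can contribute, and it produces a term of degree $|\boldsymbol\mu|+|\boldsymbol\nu|-4\le 0$, i.e.\ a constant. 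Hence $\sigma(H)=p_H+c_0$ for an explicit constant $c_0$, that is $H=\mathrm{Weyl}(p_H)+c_0 I$. Combining with the previous step gives $H=\sum_k \hat q_k^2+(c_0-C)I\succeq (c_0-C)I$, so the spectrum of $H$ is bounded below by the explicitly and efficiently computable constant $c_0-C$.

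I expect the main obstacle to be the degree bookkeeping in the second step, specifically verifying that the degree-$2$ (nonconstant) ordering corrections genuinely cancel; this is exactly where the hypothesis $\deg H=4$ is essential, since for a degree-$6$ monomial such as $x^3p^3$ the order-$2$ term is $\propto xp$, a nonconstant quadratic, and the argument would only yield $H\succeq \mathrm{Weyl}(p_H)+(\text{quadratic})$, which need not be bounded below. A secondary technical point is the usual care with unbounded operators: I would phrase all identities as quadratic-form identities on the dense Schwartz domain (equivalently, on finite–particle-number states), where $\hat q_k$ is symmetric and $\langle\psi|\hat q_k^2|\psi\rangle=\|\hat q_k\psi\|^2\ge 0$, which suffices to bound the bottom of the energy form, and hence the spectrum, from below.
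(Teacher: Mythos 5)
Your proof is correct, and at the structural level it follows the same strategy as the paper's proof of \cref{prop:soswitness}: take the degree-$\le 2$ factors $q_k$ of the SOS certificate, promote them to Hermitian operators by symmetric ordering (for quadratic polynomials your Weyl quantization $\hat q_k$ coincides with the paper's anticommutator-ordered operators $G_i$), and show that $\sum_k \hat q_k^2$ equals $H$ up to an explicitly computable scalar, which then bounds the spectrum from below as a quadratic form on Schwartz space. The genuine difference is the machinery used for the ordering corrections. The paper proves the key step by a direct operator identity --- $\{m_1,m_2\}=2\{X^{s_1+s_2},P^{r_1+r_2}\}+c\,\mathbb I$ for anticommutator-ordered monomials $m_i=\{X^{s_i},P^{r_i}\}$ --- checked case by case, the only nontrivial cases being the ones involving $\{X,P\}$. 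You route the same computation through the Weyl symbol and the Moyal star product: odd star-product orders cancel in anticommutators, and since each Moyal order $m$ lowers the total degree by $2m$, every surviving correction at total degree $\le 4$ is a scalar. The two computations are equivalent (the star-product expansion is essentially the generating function for the paper's case analysis), but your formalism buys something the paper leaves implicit: a uniform explanation of why degree $4$ is exactly the threshold, including your correct observation that for a degree-$6$ term such as $x^3p^3$ the first correction is a nonconstant quadratic, so the method would only give boundedness up to a quadratic remainder. The paper's route, in exchange, is more elementary and self-contained, avoiding deformation-quantization language. One small slip in your prose: the even Moyal term you index by $2j$ lowers the total degree by $4j$, not $2j$; your subsequent computation (degree $|\boldsymbol\mu|+|\boldsymbol\nu|-4$ for $j=1$) uses the correct count, so nothing in the argument breaks.
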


As a result, a sum-of-squares approach provides a sound algorithm for deciding boundedness of the ground state energy of polynomial Hamiltonians in the $d = 4$ case, meaning that if we find a valid sum-of-squares decomposition for $p_H$, then $H$ is bounded, but if $p_H$ is not a sum-of-squares that does not imply unboundedness for $H$. Since the degree of $p_H$ is constant, one can look for a sum-of-squares decomposition by running a polynomial-time semi-definite program \cite{parrilo2003semidefinite}. Note that if $p_H$ is not a sum of squares and we conjugate $H$ by a Gaussian unitary $U$ (which does not change the degree of $H$) and try again, we may find a different valid witness for boundedness.

\paragraph{The continuous-variable local Hamiltonian problem.} We consider the case of general polynomial Hamiltonians and define the continuous-variable local Hamiltonian problem as follows (see \cref{def:cvlh} for a formal definition):

\begin{definition}[The continuous-variable local Hamiltonian problem, informal]
    Let $\mathcal S\subset\mathcal H$ be a subset of continuous-variable quantum states. The continuous-variable local Hamiltonian problem $\mathsf{CVLH}^d_{\mathcal S}$ is the problem of estimating the lowest energy of a poynomial Hamiltonian $H$ of degree $d$ over the set $\mathcal S$.
\end{definition}

\noindent Note that the name \textit{local} for this problem comes here from the fact that any polynomial Hamiltonian of degree $d$ is at most $d$-local by definition. 

For Gaussian Hamiltonians, our solution to the boundedness problem also provides a polynomial-time algorithm to estimate the ground state energy, thus placing the continuous-variable local Hamiltonian problem $\mathsf{CVLH}^2_{\mathcal H}$ for Gaussian Hamiltonians in $\mathbf P$ (see \cref{th:gspGaussian}).

For non-Gaussian Hamiltonians, in the case of states with bounded particle number, a result of \cite{childs2014bose} proves that the problem of estimating the ground state energy of the Bose--Hubbard model at finite (polynomial) number of bosons is \textbf{QMA}-complete. This Hamiltonian is of the form $H_{\mathrm{bh}} = t \cdot \sum_{i,j} A_{i,j} \, a_i^\dagger a_j + J \cdot \sum_{i} N_i (N_i-1)$, where $N=(X^2+P^2-1)/2$ and $a=(X+iP)/\sqrt2$, and where $A_{i,j}$ is the $(i,j)$-th entry of the adjacency matrix of an undirected graph. Note that the Hamiltonian $H_{\mathrm{bh}}$ conserves the number of particles, and hence, this operator may be thought of as a finite-dimensional Hamiltonian. In particular, denoting by $\mathcal H_n$ the set of states with less than $n$ particles, this shows that $\textsf{CVLH}^d_{\mathcal H_n}$ is \textbf{QMA}-hard already for $d=4$ (and thus $\textsf{CVLH}^4_{\mathcal H}$ as well).

In general, the complexity of the problem for non-Gaussian Hamiltonians depends significantly on the complexity of the set of states one optimizes.
Following \cite{chabaud2020stellar, chabaud2022holomorphic}, we consider the stellar rank of a quantum state as a parameter for specifying this set of states (see \cref{sec:cv-review} for a brief review of the stellar rank). In short, to any continuous-variable quantum state $\ket{\psi}$ over $n$ modes, one can associate a holomorphic function $F^\star_\psi: \mathbb{C}^n \rightarrow \mathbb{C}$. When this holomorphic function can be decomposed as a product of a polynomial $P$ and a Gaussian $G$, i.e., $F^\star_\psi (z_1, \cdots, z_n) = P (z_1, \cdots, z_n) G(z_1, \cdots, z_n)$, the degree of the polynomial $P$ defines the stellar rank $r$ of $\ket{\psi}$. Otherwise, the stellar rank is infinite. When $r = 0$, we obtain the set of all Gaussian states, which can be produced by Gaussian gates applied to the vacuum state. The stellar rank can be finite (e.g., the stellar rank of $n$ indistinguishable particles is $n$). It can also be infinite, e.g., $\alpha_1\ket{\psi_1} + \alpha_2\ket{\psi_2}$ for Gaussian states $\ket{\psi_1}, \ket{\psi_2}$ ($\alpha_1,\alpha_2\in\mathbb{C}$). We further constrain the energy of these states with respect to the number operator $N$ (see \cref{def:e-cons-stellar} for a formal definition).

For zero stellar rank, we show (see \cref{thm:gaussian-optimization} for a formal statement):

\begin{theorem}[Lowest energy over Gaussian states, informal]
    The problem of estimating the lowest energy of an $n$-mode polynomial bosonic Hamiltonian of constant degree over the set ${\mathcal S^{\mathsf{exp}}_0}$ of states of stellar rank $0$ (Gaussian states) with energy (average particle number) at most $\mathsf{exp}(n)$ is \textbf{NP}-complete. 
\end{theorem}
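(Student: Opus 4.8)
I would prove membership in \textbf{NP} and \textbf{NP}-hardness separately, and in both directions exploit that the energy of a constant-degree polynomial Hamiltonian on a Gaussian state is, by Wick's theorem, an explicit constant-degree polynomial in the $O(n^2)$ entries of the displacement vector $\mathbf d\in\mathbb R^{2n}$ and the covariance matrix $\Sigma\in\mathbb R^{2n\times 2n}$. Mirroring the boundedness result (\cref{th:informal_boundedness}), the hardness comes from matrix copositivity; here I would reduce from its complement, deciding whether a matrix is \emph{not} copositive, which is \textbf{NP}-complete \cite{murty1987some}.

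\textbf{Hardness.} Given an instance $M\in\mathbb R^{n\times n}$ of not-copositivity (is there $y\ge 0$ with $y^\top My<0$?), I would build the degree-$4$ Hamiltonian $H=\sum_{i,j}M_{ij}X_i^2X_j^2$. Since the $X_i$ commute, $H=p_H(X)$ acts as multiplication by $p_H(x)=\sum_{i,j}M_{ij}x_i^2x_j^2$, whose values over $x\in\mathbb R^n$ are exactly the values of $z^\top Mz$ over the nonnegative orthant $z\ge 0$. If $M$ is copositive then $p_H\ge 0$ pointwise, so $\langle\psi|H|\psi\rangle=\int p_H(x)\,|\psi(x)|^2\,dx\ge 0$ for \emph{every} state, in particular every Gaussian state, whence the minimum energy is $\ge 0$. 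If $M$ is not copositive, I would take a polynomial-bit witness $y\ge 0$ with $y^\top My=-c<0$ and use the position-displaced coherent state with $d_i=\sqrt{t\,y_i}$ and vacuum covariance; then $\langle N\rangle=\tfrac12 t\|y\|_1$ while Wick's theorem gives $\langle H\rangle=-c\,t^2+O(t)+O(1)$. Increasing $t$ inside the budget $\langle N\rangle\le\mathsf{exp}(n)$ drives the energy below any fixed constant, so with thresholds $a=-1<b=0$ the two cases separate and $\mathsf{CVLH}^4_{\mathcal S^{\mathsf{exp}}_0}$ is \textbf{NP}-hard.

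\textbf{Membership.} For the upper bound the witness is a description of $(\mathbf d,\Sigma)$ to $\poly(n)$ bits. The verifier checks (i) physicality $\Sigma+i\Omega\succeq 0$, (ii) the energy bound $\langle N\rangle=\tfrac12(\operatorname{tr}\Sigma+|\mathbf d|^2-n)\le\mathsf{exp}(n)$, and (iii) that the Wick expansion of $\langle H\rangle$ — a sum of $\poly(n)$ products of $O(d)$ entries — is at most $(a+b)/2$; each check runs in polynomial time in exact rational arithmetic, because the energy budget caps every entry at $\mathsf{exp}(n)=2^{n^{O(1)}}$, i.e.\ $\poly(n)$ bits. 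For completeness I would use that the energy constraint makes the feasible set compact, so the infimum is attained, and then round an optimizer to $\poly(n)$ bits and add a tiny multiple of the identity to $\Sigma$ to restore $\Sigma+i\Omega\succeq 0$.

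\textbf{Main obstacle.} The delicate point is the membership side. Over the feasible region the energy has Lipschitz constant as large as $\mathsf{exp}(n)$ (a constant-degree polynomial in entries of size $\mathsf{exp}(n)$), so to guarantee that rounding changes $\langle H\rangle$ by less than the promise gap I must round to precision $1/\mathsf{exp}(n)$; the nontrivial observation is that this is still only $\poly(n)$ bits, since these numbers are exponentially large but have polynomial-length binary representations. I then need the thermal correction to both restore physicality and cost less than the remaining gap in energy and in particle number, which forces the quantitative bounds of the Lipschitz-plus-rounding argument to line up. A secondary, easier point in the hardness direction is to fix the exponent of the $\mathsf{exp}(n)$ budget large enough, relative to the fixed polynomial bit-size of the copositivity witness, that the negative-energy coherent state indeed lies in $\mathcal S^{\mathsf{exp}}_0$.
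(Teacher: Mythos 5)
Your proposal is correct and follows essentially the same route as the paper: \textbf{NP}-hardness by reduction from matrix non-copositivity using a degree-4 Hamiltonian built from the entries of $M$ and evaluated on exponentially large displaced Gaussian states (the paper uses $H=\sum_{i,j}M_{ij}N_iN_j$ with coherent states $z_i=\sqrt{Kx_i}$ where you use $H=\sum_{i,j}M_{ij}X_i^2X_j^2$ with position-displaced vacuum, an inessential difference), and \textbf{NP} membership by having the prover send a polynomial-bit classical description of the minimizing Gaussian state whose energy the verifier evaluates in polynomial time from the constant-degree moment formulas. Your handling of the precision, rounding, and gap-amplification details is somewhat more explicit than the paper's terse treatment, but the underlying argument is the same.
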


\begin{proof}[Proof sketch]
    \textbf{NP}-hardness comes from a reduction from matrix non-copositivity (rather than matrix copositivity for \cref{th:informal_boundedness}). To place the problem in \textbf{NP}, note that the expectation value of a constant-degree polynomial Hamiltonian over a Gaussian state with energy at most $\mathsf{exp}(n)$ may be computed efficiently, so the \textbf{NP} witness is a description of the Gaussian state of lowest energy.
\end{proof}

For logarithmically and polynomially-bounded stellar ranks, we show (see \cref{thm:bounded-squeezing-and-displacement} for a formal statement):

\begin{theorem}[Lowest energy over bounded stellar rank, informal]
    The problem of estimating the lowest energy of an $n$-mode polynomial bosonic Hamiltonian of constant degree over the set ${\mathcal S^{\mathsf{poly}}_{\mathsf{poly}}}$ of  states of stellar rank $r=\mathsf{poly}(n)$ with energy (average particle number) at most $\mathsf{poly}(n)$ is in $\mathbf{QMA}$.
\end{theorem}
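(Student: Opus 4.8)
The plan is to design a $\mathbf{QMA}$ verification protocol in which Merlin's witness encodes a purported minimizer $\ket{\psi}\in\mathcal S^{\mathsf{poly}}_{\mathsf{poly}}$ and Arthur estimates its energy while certifying membership in the constrained set. The starting point is the structural description of stellar rank: any state of stellar rank $\le r$ can be written as $\ket{\psi}=U_G\ket{\phi}$, where $U_G$ is a Gaussian unitary and the core state $\ket{\phi}$ is supported on the sector of total particle number $\le r$, since stellar rank is invariant under Gaussian unitaries and a state with Fock support $\le r$ has a stellar function that is a polynomial of degree $\le r$. Invoking \cref{conj:stellar-bound}, for $r,E\le\mathsf{poly}(n)$ we may further assume the squeezing and displacement parameters of $U_G$ are bounded by $\mathsf{poly}(n)$; consequently $\ket{\psi}$ is, up to inverse-polynomial error, supported on Fock states of total particle number at most some cutoff $\Lambda=\mathsf{poly}(n)$. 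Truncating each of the $n$ modes at $\Lambda$ excitations embeds the relevant states into a Hilbert space of dimension $\exp(\mathsf{poly}(n))$, i.e.\ into $\mathsf{poly}(n)$ qubits, which is exactly the budget allowed for a $\mathbf{QMA}$ witness.

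Concretely, Merlin sends (i) a classical description of the symplectic matrix and displacement vector defining $U_G$, and (ii) the quantum core state $\ket{\phi}$ on the truncated space. Arthur first rejects unless the classical parameters of $U_G$ lie within the $\mathsf{poly}(n)$ bounds guaranteed by \cref{conj:stellar-bound}. Rather than applying $U_G$ to the state, Arthur conjugates the problem operators: since Gaussian conjugation acts linearly on $X$ and $P$ and preserves polynomial degree, $H'=U_G^{\dagger}HU_G$ and $N'=U_G^{\dagger}NU_G$ are again degree-$d$ polynomials whose coefficients Arthur computes classically in polynomial time. On the truncated Fock space both $H'$ and $N'$ become sparse Hermitian matrices with $\mathsf{poly}(n)$-bounded entries (each $X_i,P_i$ is tridiagonal in the Fock basis and $H'$ is a product of constantly many such factors), so Arthur can implement $e^{-iH't}$ and run phase estimation to estimate $\bra{\phi}H'\ket{\phi}=\bra{\psi}H\ket{\psi}$ and $\bra{\phi}N'\ket{\phi}=\bra{\psi}N\ket{\psi}$ to inverse-polynomial precision in $\mathbf{BQP}$.

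Arthur then performs three checks, using standard randomized-test or witness-reuse techniques so as not to destroy the state needed downstream: he measures the total number operator on $\ket{\phi}$ and accepts this branch only if the outcome is $\le r$ (certifying stellar rank $\le r$ for $\ket{\psi}=U_G\ket\phi$); he verifies the energy bound $\bra{\psi}N\ket{\psi}\le E$; and he accepts iff the estimate of $\bra{\psi}H\ket{\psi}$ falls below the decision threshold. For completeness, a genuine minimizer $\ket{\psi^*}$ of energy $\le a$ has, by the conjecture, a core decomposition within the required parameter bounds, so the honest witness passes all checks. For soundness, whenever the particle-number and energy checks pass, the implied state $U_G\ket{\phi}$ genuinely lies (up to controlled truncation error) in $\mathcal S^{\mathsf{poly}}_{\mathsf{poly}}$; in a \textsc{no} instance every such state has energy $\ge b$, so the energy estimate forces rejection. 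The promise gap $b-a\ge 1/\mathsf{poly}(n)$ of $\mathsf{CVLH}$ survives because all truncation and estimation errors are driven below it. The identical protocol with a classical witness — the $n^{O(r)}$ explicit coefficients of $\ket{\phi}$ in the $\le r$ sector together with $U_G$, and a classical energy evaluation — yields the companion $\mathbf{NTIME}(n^{O(r)})$ bound.

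The step I expect to be the main obstacle is the truncation analysis underlying soundness: I must show that a fixed cutoff $\Lambda=\mathsf{poly}(n)$ faithfully represents the energy of $\ket{\psi}=U_G\ket{\phi}$ even for an adversarial Merlin, so that the truncated estimate cannot undercut the true infinite-dimensional expectation $\bra{\psi}H\ket{\psi}$ by more than the promise gap. This is precisely where \cref{conj:stellar-bound} is load-bearing: it converts the abstract stellar-rank and energy constraints into explicit $\mathsf{poly}(n)$ bounds on the Gaussian parameters, without which $U_G$ could inject unbounded squeezing and push weight past any fixed cutoff, simultaneously breaking the sparse simulation of $e^{-iH't}$ and the fidelity of the energy estimate.
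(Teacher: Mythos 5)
Your proposal is correct and follows essentially the same route as the paper's proof of \cref{thm:bounded-squeezing-and-displacement}: decompose the witness as a Gaussian unitary acting on a core state, have Merlin supply the Gaussian classically and the core state quantumly, conjugate $H$ classically (degree is preserved), and run sparse Hamiltonian simulation/phase estimation on the resulting $n^{O(r)}$-dimensional, polynomially sparse operator. One point of precision: what the argument actually needs (and what \cref{prop:boundG} via \cref{conj:stellar-bound} delivers) is a \emph{logarithmic} bound $|\xi_i|\le O(\log(rE))$ on the squeezing parameters, equivalently $\mathsf{poly}(n)$ bounds on the symplectic matrix entries $e^{|\xi_i|}$ — a merely polynomial bound on $\xi_i$ itself would let the conjugated coefficients, which involve $\|\mathbf T\|^{2k}$ up to $k=d$, blow up exponentially and break the polynomial-norm requirement for the \textbf{QMA} energy estimation. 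Also, your worry about a Fock cutoff $\Lambda$ in the original basis is unnecessary in the core frame: the identity $E_r(H)=\inf_G\lambda_{\min}(\Pi_r G^\dagger HG\Pi_r)$ is exact because the core state lives exactly in the $\le r$-particle sector, so the only quantitative issue is the norm and sparsity of $\Pi_r G^\dagger H G\Pi_r$, not truncation error on the state.
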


\begin{proof}[Proof sketch]
    To place the problem in \textbf{QMA}, we use the fact that any state of finite stellar rank is related by a Gaussian unitary $G$ to a state of bounded particle number $\ket c$ \cite{chabaud2022holomorphic}. This allows us to efficiently rewrite the optimisation in terms of a (sparse) finite-dimensional Hamiltonian, once the Gaussian unitary corresponding to the lowest energy state is known. The \textbf{QMA} witness is then given by a classical description of that Gaussian unitary $G$ provided in the computational basis, together with a finite-dimensional state $\ket c$ that is the ground state of the finite-dimensional Hamiltonian, such that $G\ket c$ is the lowest energy state of the original Hamiltonian. 
    
    When the stellar rank is logarithmically bounded instead, $\ket c$ has an efficient classical description and the witness can be made fully classical.
\end{proof} 

Finally, we consider the general case of unbounded stellar rank, with no restrictions on the set of states over which the optimisation takes place, and we show (see \cref{thm:undecidability} for a formal statement):

\begin{theorem}[The complexity of the continuous-variable local Hamiltonian problem, informal]\label{thm:informal-gs-re-hard}
    The problem of estimating the ground energy of a polynomial bosonic Hamiltonian of constant degree is undecidable. This problem is already undecidable for $d=8$.
\end{theorem}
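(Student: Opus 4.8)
The plan is to prove RE-hardness by reducing from Hilbert's tenth problem, which asks whether a given polynomial $D(x_1,\dots,x_k)$ with integer coefficients has a solution in nonnegative integers; by the Matiyasevich--Robinson--Davis--Putnam theorem this set is RE-complete, hence undecidable. The bridge, following the observation of \cite{kieu2003quantum}, is that the number operators $N_1,\dots,N_k$ on $k$ modes mutually commute and act diagonally on the Fock basis as $N_i\ket{\mathbf n}=n_i\ket{\mathbf n}$ with $n_i\in\mathbb Z_{\ge0}$. Thus any real-coefficient polynomial in the $N_i$ realizes, on the lattice of Fock states, an integer-valued polynomial evaluated at nonnegative-integer points, and its value vanishes on $\ket{\mathbf n}$ exactly when the corresponding Diophantine expression vanishes at $\mathbf n$.

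To pin the degree down to $d=8$, I would first preprocess $D$ classically into a system of quadratic equations. Introducing one fresh variable per multiplication, each high-degree monomial of $D$ is computed by a chain of degree-two constraints of the form $y_\alpha-y_\beta y_\gamma=0$, after which the original equation becomes linear in the enlarged variable set. This yields a finite system $Q_1=\dots=Q_m=0$ with $\deg Q_j\le 2$ that is solvable over $\mathbb Z_{\ge0}$ if and only if $D$ is. I would then define the $k'$-mode Hamiltonian
\begin{equation}
H \;=\; \sum_{j=1}^m Q_j\big(N_1,\dots,N_{k'}\big)^2 ,
\end{equation}
where each $N_i=(X_i^2+P_i^2-I)/2$ has degree $2$ in the canonical operators. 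Since $\deg Q_j\le 2$, each $Q_j(\mathbf N)$ has degree at most $4$ and its square degree at most $8$, so $H$ is a Hermitian polynomial bosonic Hamiltonian of degree $d\le 8$.

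The spectral analysis is then immediate: $H$ is a nonnegative operator diagonal in the Fock basis, with $H\ket{\mathbf n}=\big(\sum_j Q_j(\mathbf n)^2\big)\ket{\mathbf n}$, and because the $Q_j$ take integer values at integer points, each eigenvalue is a nonnegative integer. Consequently the infimum of $\langle\psi|H|\psi\rangle$ over normalized states equals $\min_{\mathbf n\in\mathbb Z_{\ge0}^{k'}}\sum_j Q_j(\mathbf n)^2$, which is $0$ when the system has a solution and at least $1$ otherwise. This constant spectral gap means that estimating the ground energy to any fixed precision below $1/2$ decides whether $D$ has a nonnegative-integer solution, so $\mathsf{CVLH}$ over the full Hilbert space $\mathcal H$ (no stellar-rank or energy constraint) is RE-hard and hence undecidable. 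The restriction to unbounded stellar rank is essential and consistent with \cref{thm:bounded-squeezing-and-displacementNP}: a minimizing Fock state $\ket{\mathbf n}$ has stellar rank equal to its total particle number $\textstyle\sum_i n_i$, and the very content of Hilbert's tenth problem is that there is no computable bound on the least solution, so neither the energy nor the stellar rank of the optimal state can be bounded a priori.

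The step I expect to require the most care is not the reduction but the passage from eigenvalues to the variational ground energy in infinite dimensions: I would confirm that $H$ is essentially self-adjoint and bounded below on the dense domain of finite-particle states, that its quadratic form attains an infimum equal to the least Fock-basis eigenvalue, and that the formal definition of $\mathsf{CVLH}$ as an infimum over $\mathcal S=\mathcal H$ genuinely captures the $0$-versus-$\ge 1$ dichotomy through the promise gap. This is clean here precisely because $H$ is diagonal in the Fock basis, so the Fock states furnish a complete orthonormal eigenbasis; once that observation is recorded, the degree bookkeeping giving $d=8$ and the RE-hardness both follow directly.
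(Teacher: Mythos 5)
Your proof is correct, and its engine is the same as the paper's: following \cite{kieu2003quantum}, a real polynomial in the commuting number operators is diagonal in the Fock basis with spectrum equal to its image over $\mathbb{Z}_{\ge 0}^{k'}$, so integrality yields a spectral gap of at least $1$, and estimating the ground energy to any precision below $1/2$ decides Hilbert's tenth problem, which is $\mathbf{RE}$-complete by Matiyasevich--Robinson--Davis--Putnam. Where you genuinely differ is in how the bound $d=8$ is secured, and your route is the more careful one. The paper takes an arbitrary integer polynomial $P$, sets $Q=P^2$, and then invokes \cite{jones58three} for undecidable equations of degree $4$ over the non-negative integers; read against \cref{def:cvlh}, where degree is counted in the operators $X$ and $P$, this bookkeeping is delicate, because squaring a degree-$4$ polynomial in the $N_i$ would naively produce degree $16$ in the canonical operators. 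You instead quadratize the Diophantine equation explicitly (one fresh variable per multiplication), obtaining constraints $Q_j$ of degree at most $2$ in the integer variables, and set $H=\sum_j Q_j(\mathbf{N})^2$, so that each $Q_j(\mathbf{N})$ has degree at most $4$ in $X,P$ and $H$ has degree at most $8$, exactly as claimed. This is, in effect, an unpacking of how the degree-$4$ results of \cite{jones58three} are themselves obtained (as sums of squares of quadratic constraints), so the two arguments coincide at the level of ideas, but your construction makes the degree-$8$ claim self-contained and avoids the accounting slip. Two smaller differences: you reduce directly from solvability over non-negative integers, the native form of the MRDP theorem, so you do not need the paper's enumeration over sign patterns to pass from $\mathbb{Z}$ to $\mathbb{Z}_{\ge 0}$; and your closing remarks on essential self-adjointness and on the variational infimum over $\mathcal{H}$ coinciding with the least Fock-basis eigenvalue make explicit a functional-analytic point the paper leaves implicit.
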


\begin{proof}[Proof sketch]
The proof of this result proceeds via a reduction from Hilbert's tenth problem (a N\"ullstellensatz problem over the integers) due to \cite{kieu2003quantum}, combined with the fact that there exist explicit undecidable polynomial equations over non-negative integers of degree $4$ and a constant number of unknowns \cite{jones58three}.
\end{proof}

\paragraph{Continous-Variable Quantum Merlin Arthur games.} Finally, we introduce a continuous-variable analog of \textbf{QMA} (see \cref{def:CVQMA} for a formal definition):

\begin{definition} [Continuous-variable quantum Merlin-Arthur, informal]
$\mathbf{CVQMA}$ is the class of decision problems, for which a solution encoded in a continuous-variable quantum state can be verified efficiently by a $\mathbf{CVBQP}$ machine.  
\end{definition}

\noindent Motivated by the relationship between the local Hamiltonian problem and \textbf{QMA} in the discrete-variable case based on Kitaev's history state construction \cite{kempe2006complexity}, we aim to connect the class \textbf{CVQMA} to the continuous-variable local Hamiltonian problem. We give the basis for a continuous-variable history state construction providing such a connection (see \cref{sec:history}), and identify the challenges associated with such a construction.





\subsection{Outlook}

The study of continuous-variable quantum computations may have various interactions with the foundations of computational complexity, computability, and quantum mechanics, which we discuss in the next section. After that, in \cref{sec:open}, we list several open questions.

\subsubsection{Discussion}
\label{sec:discussionE}

\paragraph{Energy as a fundamental resource in continuous-variable computations.} 
One of the key insights of this work is that by alternating Gaussian and cubic phase gates, the average energy of the system—even for a single mode—can grow to be doubly exponential in the number of cubic phase gates used in the circuit. This observation suggests that strong simulation (calculating each amplitude individually) of continuous-variable quantum systems could be significantly more challenging than for discrete-variable ones, potentially being hard for complexity classes like \textbf{PSPACE} or even \textbf{EXPSPACE}\footnote{A follow-up work \cite{upreti2025bounding} has since improved the uupperbound on \textbf{CVBQP}$[X^3]$ to \textbf{PSPACE}, which still leaves open the possibility for a gap.}.
This suggests that on top of time and space complexity, energy plays a significant role in the computational power of bosonic systems.

In practical physical experiments, the energy must be supplied by the experimentalist, requiring a physical definition of computational cost that accounts for time, space, and energy. Specifically, if one is willing to expend up to \( E^* \) units of energy, our results show that a single mode can be simulated within $\mathbf{SPACE}(O(\log E^*))$. This highlights a trade-off between time, space, and energy, which we believe deserves a thorough examination. Understanding time, energy, and space tradeoffs for multimode systems equipped with multi-mode non-Gaussian gates is an interesting open question.

In our definition of \textbf{CVBQP}, we enforce the ``promise'' that the quantum state in the beginning and the end has limited energy, as measured by the average particle number, but it may take any value in between. A natural question to ask is: is the power of this model the same as the one where we impose restrictions on average energy at any point during the computation? See \cref{fig:energy-promise} for a visual depiction. We can ask a similar question about classical continuous (time and/or space) models of computation where we are promised that the system's state is effectively discrete (e.g., a two-level system) at the beginning and the end. Still, the system can utilize its continuous degrees of freedom in between to an arbitrary precision. Is the power of this hypothetical model the same as digital computation? We first note that without such promise in the beginning, it is not clear how one can program such a system, and without the promise on the energy in the end, no physically viable device would be able to measure the quantum state. Even if we assume the existence of such a hypothetical device, it is not difficult to approximately predict the output of such a hypothetical device (we output random numbers because, at high energy, no concentration of measure is expected).

We may face the objection that assuming no mechanism to bound energy (e.g., dissipation), this promise is unreasonable because we cannot verify it. We first note that it is not difficult to come up with trivial examples that satisfy such a promise. Consider the unitary $U U^\dagger$ where $U\ket{0}$ has high energy. Clearly, $U U^\dagger$ has bounded energy at the beginning and the end and very high energy in between. Second, promises that are possibly difficult to verify are common in quantum complexity theory, e.g., promise on the gap of a Hamiltonian or promise that a $\textbf{BQP}$ computation either accepts with probability $\geq 2/3$ or $\leq 1/3$. We have a similar scenario for the energy promise. 

As a thought experiment, assume we have a fragile quantum processing device that breaks if it holds more than $N_0$ particles. Now, we design a quantum experiment such that the system (on average) has $\ll N_0$ particles at the beginning and the end but may (on average) have $\gg N_0$ particles in a way that many of the computational paths involve $\ll N_0$ particles and many involve $\gg N_0$. In the end, we measure the device's output and measure $\ll N_0$ particles. Do we measure the device to be broken or unbroken? This is similar to the Schr\"odinger's cat (or Wigner's) paradox. In the mentioned thought experiment, the device played the cat's role. In other words, if the device's condition (i.e., broken or unbroken) is determined only at the time of measurement, then \textbf{CVBQP} with mild (or no) energy restriction in between might be a plausible model. Otherwise, if broken paths are forbidden, then \textbf{CVBQP} with energy bound on the entire computation path is a more reasonable model from a physical standpoint. We note that in an actual experiment, energy is pumped from an outside source, and the closedness of the experiment is only an approximation.

\paragraph{Connections with the extended Church--Turing thesis.} Our result leaves open the possibility that \textbf{CVBQP} $\nsubseteq$ \textbf{BQP}; due to doubly exponential growth of energy in between computations, such a separation is plausible. What would that imply about the nature of computation in the physical world?
If \textbf{CVBQP} is a plausible model for computation in the physical world, then \textbf{CVBQP} $\nsubseteq$ \textbf{BQP} seems to imply a contradiction to the \textit{extended Church Turing Thesis}. But how realistic is \textbf{CVBQP} as a model of computation? In particular, how should one determine the energy cap in between the computations? Consider scattering amplitude for quantum field theories, where the specific amplitude sets the number of particles at the beginning and the end, and still, fluctuations in the vacuum may lead to many particle creations, and annihilations can occur in between. Are bounds on the fluctuations of the vacuum (see, for instance, \cite{ford2010negative}) such that the computational complexity of scattering amplitudes do not exceed that of \textbf{BQP}? Are there computational phase transitions depending on $\gamma_1$ and $\gamma_2$?

What can we say about noisy systems? Suppose we define $\mathbf{DissCVBQP}(\gamma_1, \gamma_2)$ as a dissipated model where we have a mechanism that pumps bosons into the system with rate $\gamma_1$ and another mechanism that bosons are emitted to the environment with rate $\gamma_2$. How does the power computational complexity of $\mathbf{DissCVBQP}(\gamma_1, \gamma_2)$ depend on $\gamma_1$ and $\gamma_2$? Can we show for physically relevant parameters the power of this model is \textbf{BQP}-complete?



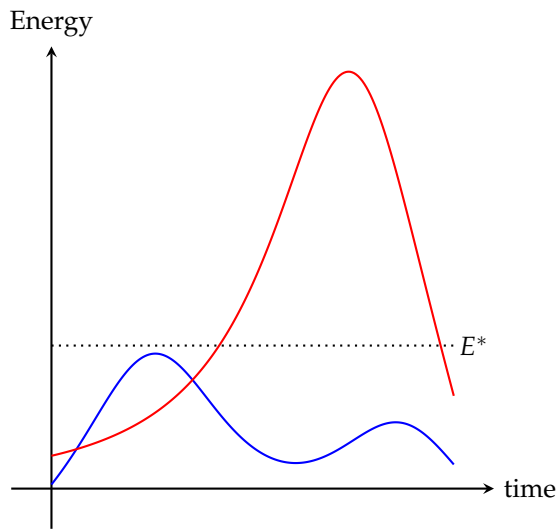
\begin{figure}
\centering
\begin{tikzpicture}
    \begin{axis}[
        axis lines=middle,
        xlabel={time},
        ylabel={Energy},
        grid=both,
        samples=100,
        domain=0:5,
        ymin=0, ymax=2.5,
        width=8cm,
        height=8cm,
        thick,
        axis line style={thick},
        xtick=\empty, 
        ytick=\empty, 
        enlargelimits,
        xlabel style={right},
        ylabel style={above}
    ]
    \addplot[domain=0:4, smooth, thick, blue] {1.5/(1+(x-1)^2) + 1/(1+(x-3.5)^2)-1.6/2};

    \addplot[domain=0:4, smooth, thick, red] {3/(1+(x-3)^2) - 3/(1+(x-5.5)^2)};
    
    \draw[dotted, thick] (axis cs:0,1.77898/2) -- (axis cs:4,1.77898/2);
    
    \node at (axis cs:4.2,1.77898/2) {$E^\ast$};
    
    \end{axis}
\end{tikzpicture}
\caption{Energy bound conditions for quantum computations. The red curve depicts a system that is promised to have low energy at the beginning and the end of the computation, but may have arbitrary high energy in between. The blue curve depicts a model where we impose energy restrictions at any point during the computation.}
\label{fig:energy-promise}
\end{figure}

\paragraph{Logarithmic number of cubic phase gates.} Standard results in the discrete-variable model of quantum computing (e.g., \cite{bravyi2016trading}) imply that starting with a computational basis, discrete-variable quantum circuits with a logarithmic number of $T$ gates and polynomially many Clifford gates can be efficiently strongly simulated on classical computers. Can we prove a similar result for continuous-variable systems, i.e., can we show that starting from the vacuum state, a polynomial-size quantum circuit with a logarithmic number of cubic phase gates can be simulated on a classical computer? Based on our results, the best upper bound we could prove for this problem was \textbf{PSPACE}. We furthermore note that, due to repeated squaring of energy, the "effective dimension" that the quantum system explores is exponential even if the circuit has, at most, a logarithmic number of cubic phase gates. Based on this intuition, we conjecture that (noiseless) continuous variable quantum circuits with a logarithmic number of cubic phase gates cannot be simulated efficiently on a classical computer. 

\paragraph{Noiseless model as a foundation to study noisy systems.}
Even though a realistic model for the experimental implementation of bosonic computations is a dissipative (or noisy) model, noise models are various, and by understanding the computability limits of the noiseless system, one can gain a reliable foundation to study computational complexity after imposing different restrictions on the model. A similar approach has been pursued for computational complexity in the discrete-variable domain where we define \textbf{BQP} as an idealized noiseless quantum computing model and pin down its computational complexity. We then use this foundation to study the model's different variants and restrictions.  For instance, we know that computing the amplitudes of a noiseless quantum circuit is $\#\textbf{P}$-complete. Obviously, this result is not directly relevant to understanding the cost of simulating practical problems. However, this observation has been utilized via tools in computational complexity to lay the foundations for demonstrating quantum speedup in noisy or restricted models such as Boson Sampling \cite{aaronson2011computational}. Moreover, we describe a way to encode BQP computations in bosonic subspaces (see \cref{th:CVBQPcontainsBQP}) relying on the Solovay--Kitaev theorem \cite{dawson2005solovay}, but this result does not appear to be robust to noise, because the noise may induce leakage to unbounded regions of state space where the theorem no longer holds. This motivates further study of robust quantum computations in continuous-variable systems.

\paragraph{Truncating CV systems using the stellar rank.} Standard approaches to describe continuous-variable quantum systems using a finite number of variables include (i) truncating their infinite-dimensional Hilbert space, which amounts to restricting to bounded particle-number supports, or (ii) keeping track only of the covariance matrix and displacement vector of the state. These have immediate shortcomings: (i) is not stable under Gaussian evolutions, which may be thought of as computationally easy \cite{bartlett2002efficient}, and truncating a Gaussian state usually makes it non-Gaussian; (ii) only faithfully describes Gaussian computations. 

Our results suggest that the stellar rank may be a meaningful approach for describing continuous-variable quantum systems, in particular for studying ground state problems for continuous-variable systems \cite{stornati2023variational}. Informally, the stellar rank combines both approaches (i) and (ii), as states of finite stellar rank can always be expressed as (a mixture of) Gaussian unitary operators acting on (core) states of finite support.

\paragraph{Comparison with classical models of computation over continuous variables:} Several classical models of computation based on continuous-variable degrees of freedom have been studied, and our work is partly motivated by developing a quantum generalization of these frameworks. A prominent example is the classical Blum–Shub–Smale (BSS) model of computation \cite{BSS89}. In this model, machines operate over arbitrary real numbers: each memory cell may store a real value, and basic arithmetic operations (such as addition and multiplication) are performed at unit cost. The original motivation for introducing this model was to provide a rigorous computational foundation for areas of continuous mathematics, including topology and algebraic geometry.

A natural $\mathsf{NP}$-complete problem in the BSS model is the following: given a system of multivariate polynomials, decide whether they admit a common real zero. In this sense, the BSS model captures the complexity of fundamental decision problems in real algebraic geometry. The bosonic quantum computational model without energy restrictions may be viewed as a quantum analogue of the full BSS model. We conjecture that the problem of deciding if given a bosonic Hamiltonian, the lowest energy Gaussian states has energy less than $0$ or above an inverse polynomial threshold is complete for $\mathsf{NP}$ for the BSS model; right now we can prove $\mathsf{NP}$-completeness for standard Turing machines  for the polynomially bounded energy version of this problem. 

In addition to the standard BSS framework, weaker variants have also been proposed. For example, Koiran~\cite{koiran1997weak} introduced a modified model in which arithmetic operations incur a cost that depends on parameters such as the size or algebraic complexity of the numbers stored during the computation. This weak model was shown to be equivalent to $\mathsf{P}/\mathrm{poly}$.
The bosonic quantum model with an energy constraint can be interpreted as a quantum generalization of this weaker BSS framework. Indeed, subsequent work~\cite{chabaud2025energy} shows that the energy-restricted bosonic model is contained in $\mathsf{BQP}/\mathrm{poly}$.

For a more comprehensive review of analog classical models of computations see~\cite{bournez2021survey}.

\paragraph{Comparison between Gaussian computation and dynamics of coupled oscillators:} We also note recent work on the computational complexity of networks of classical and quantum oscillators \cite{babbush2023exponential, barthe2024gate}. In particular, \cite{barthe2024gate} shows that simulating exponentially many Gaussian oscillators is $\mathsf{BQP}$-complete. Our results are complementary to this line of work. Specifically, in Theorem 3.1 we prove that simulating polynomially many Gaussian oscillators, each with at most polynomially bounded energy, is $\mathsf{BQL}$-complete.

\paragraph{Comparison with Hybrid Oscillator-Qubit Devices:} Recently, there has been a growing line of work exploring the power of hybrid Oscillator-qubit, aka Hybrid CV-DV, systems \cite{crane2024hybrid, liu2026hybrid, andersen2015hybrid, brenner2025factoring}. In such systems, qubit-controlled continuous Hamiltonians such as $\ket{1}\bra{1}\otimes \hat X$, or $\hat \sigma_z \otimes \hat N$ have been used to achieve non-Gaussianity in the bosonic modes. A major advantage of these devices is that qubit-controlled continuous operations enable more stable application of non-Gaussian operations. These models furthermore inherit capabilities of both discrete and continuous-variable modes. They were furthermore shown to be advantageous for applications such as simulating continuous systems, including lattice gauge theories \cite{liu2026hybrid}, compared with discrete-variable devices, and were used as variational ansatzes to find the ground energy of these systems. Exploring computational complexity in this family of devices is an important and interesting line of work. Recently, \cite{brenner2025factoring} showed that one can solve the factoring problem with only three oscillators and a single qubit. Indeed, their algorithm requires an exponential amount of energy, demonstrating that one can trade off system energy for other resources, such as the number of modes. More recently, this line of work was extended to characterize trade-offs between energy and the number of modes in hybrid oscillator-qubit devices \cite{brenner2025trading}.

An important future direction for this work is to compare the computational power of Hybrid Oscillator-Qubit devices with that of pure oscillator devices, such as those considered here. A critical difference between these two models is that the energy growth in a pure continuous-variable device with oscillator-oscillator interactions, such as the one we consider in this work, can increase dramatically faster than that of oscillator-qubit devices. Rapid energy growth can be both an opportunity and a curse. For instance, in designing continuous-variable fault-tolerance schemes, one needs to encode logical information in low-energy subspaces and dissipate excess non-useful energy from the system~\cite{matsuura2024continuous}. An important question to study is whether oscillator-qubit interactions can be reliably embedded in oscillator-oscillator interactions. Theorem~\ref{th:CVBQPcontainsBQP} gives a recipe to embed discrete variable operations inside oscillator modes (see also~\cite{arzani2025effective} for a recent development). However, these embeddings may involve high-degree oscillator-oscillator interactions. So, embedding oscillator-qubit interactions into oscillator-oscillator interactions is, in principle, possible but may be complex. Further work is required to study the practicality of these embeddings and their robustness to noise. An interesting line of work would be to study the resource-theoretic foundations of a model that combines non-Gaussian oscillator-oscillator interactions with the hybrid (qubit-controlled gate) set studied in~\cite{liu2026hybrid, brenner2025factoring}.

Another interesting future direction is studying Hamiltonian complexity for systems with qubit-oscillator interactions, such as the Jaynes-Cummings model. As described in \cref{sec:history}, a major obstacle in defining circuit-to-Hamiltonian reductions for a purely bosonic Hamiltonian is that the standard construction in discrete variables involves projectors onto the computational basis (such as qubit $\ket{0}\bra{0}$ in the initialization Hamiltonian term). However, basic projectors onto Fock or coherent states are not polynomials in the standard quadrature basis. Even if we use the embedding of discrete variable construction inside the oscillator mode, we obtain very high-degree Hamiltonian terms. By adding qubit-controlled oscillator operations, we can get around this issue and use qubit projectors. As a result, an interesting intermediate direction is understanding Hamiltonian complexity for Hybrid systems such as the Jaynes-Cummings model. 

\subsubsection{Open questions}
\label{sec:open}

Since the aim of the present work is to lay the foundations for a theory of bosonic quantum complexity, it naturally leads to many open questions, some of which we list in the following:

\begin{enumerate}
    \item The most immediate open question from this work is whether we can bring the \textbf{EXPSPACE} upper bound on \textbf{CVBQP} to smaller complexity classes such as \textbf{PSPACE} or even \textbf{PP}. What about lower bounds? Due to the doubly exponentially large dimensionality of the effective Hilbert space, it is natural to conjecture that strong simulation of \textbf{CVBQP} is hard for a complexity class such as \textbf{PSPACE} (or even \textbf{EXPSPACE}) which is believed to be strictly larger than \textbf{PP}. Such a result would be important evidence that \textbf{CVBQP} may surpass the power of \textbf{BQP}.

    \item A related question is: what energy bound on \textbf{CVBQP} makes it equal to \textbf{BQP}? In an actual physical computation (such as a bosonic system subject to dissipation), one would expect that the energy stays polynomially bounded. Is it the case that $\textbf{CVBQP}=\textbf{BQP}$ under the promise that the energy stays polynomially bounded throughout the computation? Can we prove that a variant of $\textbf{CVBQP}$ subject to dissipation is equivalent to \textbf{BQP}? 
    
    \item It is usually assumed in the continuous-variable quantum information literature that a single non-Gaussian gate together with all Gaussian ones is sufficient to perform ``universal'' quantum computations. However, this notion of universality is somewhat restricted, as it relates to the ability to approximate evolutions generated by polynomial Hamiltonians \cite{lloyd1999quantum}. In particular, is the cubic phase gate (for instance) and Gaussian gate set \textit{universal}, in the sense that \textit{any state} can be reached to arbitrary precision from the vacuum state using unitary gates from this set? See \cite{wu2006smooth} for a formal statement of this open problem, and \cite{mendes2011problem} for an example of a continuous-variable gate set satisfying this property.
    
    \item In the discrete-variable setting, the computational power of quantum circuits is essentially independent of the choice of universal gate set. Is it also the case in the continuous-variable setting? If so, can a continuous-variable Solovay--Kitaev theorem \cite{kitaev1997quantum,dawson2005solovay} be derived for these gates? See \cite{becker2021energy} for such a result in the case of Gaussian gates. This also relates to the existence of fast compilation algorithms for bosonic gates \cite{sefi2011decompose,kalajdzievski2019exact,kalajdzievski2021exact}.
    
    \item What is the precise complexity of the Hamiltonian spectrum boundedness problem? We prove \textbf{co-NP}-hardness for $ d=4$ and conjecture hardness for \textbf{co-QMA}. Can we show \textbf{co-QMA} is an upper bound? We describe a sound algorithm for verifying boundedness in the $d = 4$ case. Do similar results exist for $d>4$? We know that the problem becomes undecidable for $d\geq 10$. 
    
    \item After possible conjugations with arbitrary unitary matrices, can we write a Hamiltonian as a sum-of-squares of other Hamiltonians (c.f.\ \cite{hastings2022perturbation})? We have obtained partial results in this direction for degree-4 polynomial bosonic Hamiltonians.
    

    \item We can define families of Hamiltonians with a ground state of a given exact or approximate stellar rank $r$. Is there a procedure to decide the opposite? I.e., given the description of the Hamiltonian and an approximation parameter, decide whether it has an approximate ground state of stellar rank $<r$.


    \item Our results indicate if we minimize the energy of a continuous-variable Hamiltonian over an ensemble with polynomially bounded energy (particle number) and stellar rank $r$, then for $r = \mathsf{poly}(n)$ it is contained in \textbf{QMA}. Can we show that these containments are tight? What about $r = \mathsf{exp(n)}$ and higher particle numbers? Proving such results would involve continuous-variable versions of Kitaev's history state \cite{kempe20033,kempe2006complexity} in the continuous-variable setting. 
    This construction would involve nontrivial ideas, which we leave for future work.

    \item What is the complexity of the continuous-variable local Hamiltonian problem for other natural families of continuous-variable quantum states? For instance, the family of quantum states that are superpositions of $\mathsf{poly}(n)$ Gaussians? This class of states has recently been considered in the context of classical simulation of bosonic computations, leading to the introduction of the continuous-variable notion of \textit{Gaussian rank} \cite{dias2024classical,hahn2024classical}, akin to the discrete-variable notion of \textit{stabilizer rank} \cite{bravyi2016trading}. Note that the stellar rank for such states is typically infinite.

    \item What is the complexity of deciding whether the spectrum of a polynomial bosonic Hamiltonian has a continuous part or if is it discrete? When it is discrete, is \textsf{CVLH} \textbf{RE}-complete?
\end{enumerate}

Bosonic quantum computations also provide a natural setting for generalizing classical complexity theory over the reals (e.g., in the Blum--Shub--Smale model \cite{blum1998complexity}) to the quantum case, given that bosonic Hamiltonians may have a continuous spectrum. We expect that most of our results should have counterparts over the reals.

\subsection{Acknowledgements}

S. M.\ and M. J.\ are grateful to the National Science Foundation (NSF CCF-2013062) for supporting this project. U.C.\ acknowledges support from the Plan France 2030 project NISQ2LSQ (ANR-22-PETQ-0006). U.C.\ and S.M.\ are grateful to have been supported by the Institute for Quantum Information and Matter, an NSF Physics Frontiers Center (NSF Grant PHY- 1733907), where preliminary ideas were discussed. U.C.\ acknowledges inspiring discussions with F.\ Arzani, R.\ I.\ Booth, F.\ Grosshans, D.\ Markham, O.\ Bournez, S.\ Lloyd, T.\ Vidick, W.\ Slofstra, L.\ Lami, and A.\ Winter. S.M.\ thanks A.\ Natarajan, P.\ Love, V.\ Podolskii, V.V.\ Albert, M.\ Hafezi, A.\ Gorshkov, D.\ Jacobs, and J.\ Jeang for inspiring discussions. A.M.\ acknowledges discussions with P.\ Sinha, Z.\ Mann, and R.\ Cleve. M. J.\ thanks V.\ Podolskii, D.\ Jacobs, W.\ Khangtragool and A.\ Saha for helpful exchanges. We also thank S. Gharibian and Dorian Rudolph for insightful conversations. The authors are grateful to S.\ Aaronson for helpful discussions and to T.\ Vidick and F.\ Grosshans for comments on earlier versions of this manuscript.


\section{Preliminaries}
\label{sec:preliminaries}

We refer the reader to standard textbooks (e.g., \cite{nielsen2001quantum} for quantum information theory, \cite{ferraro2005gaussian,weedbrook2012gaussian} for continuous-variable quantum information theory, \cite{vourdas2006analytic} for analytic representations in quantum mechanics, \cite{schmudgen2020invitation} for unbounded operator algebras, \cite{arora2009computational} for classical Boolean complexity theory, 
and \cite{watrous2008quantum} for quantum complexity theory) and provide a brief overview of the necessary material hereafter.

\subsection{Computational complexity theory}

We assume familiarity with basic complexity classes such as polynomial time (\textbf{P}), nondeterministic polynomial time (\textbf{NP}), bounded-error polynomial time (\textbf{BPP}), bounded-error quantum polynomial time (\textbf{BQP}), logarithmic space (\textbf{L}), polynomial space (\textbf{PSPACE}), and bounded-error quantum logspace (\textbf{BQL}).
\subsubsection{Polynomial satisfaction problems}\label{sec:constraint-satisfaction}

Consider Hilbert's tenth problem: given a polynomial $P(X_1, \cdots, X_M)$ with integer coefficients, is it satisfiable? In other words,
\begin{equation}
\exists ? (n_1, \cdots, n_M)\in\mathbb Z^M: P(n_1, \cdots, n_M) = 0.
\end{equation}
David Hilbert asked in his tenth problem \cite{hilbert1902mathematical}  whether there exists an algorithm to make this decision. More generally, Hilbert was interested in whether there exists an algorithm that takes the statement of a theorem and tells us whether or not the statement is true based on a more fundamental axiomatic system. Developments at the beginning of the 20th century, specifically the works of G\"odel and Turing, proved the impossibility of this program by demonstrating that for any axiomatic system that is strong enough to reason about basic arithmetic operations, there always exists a statement that is ``true but not provable''. A specific instance of such a problem is \textsf{Halting}, which, in modern language, is the problem of deciding whether a computer program ever halts. It is a canonical \textbf{RE}-complete problem, as Turing showed that \textsf{Halting} is undecidable on a Turing machine. Around seventy years after Hilbert's original proposal, the pioneering works of Matijasevi\u c, Davis, Putnam, and Robinson (cf \cite{M71, D73, DMR76}), proved that Hilbert's tenth problem is equivalent to \textsf{Halting}. This was later refined to optimise the number of unknowns and the degree of the polynomial, showing in particular that degree-$4$ equations are sufficient for undecidability (see, e.g., \cite{jones58three}).

While polynomial satisfiability is unsolvable using Turing machines, the constraint satisfiability problem over finite fields, such as $\mathbb F_2$, is captured by the complexity class \textbf{NP}. Recall that \textbf{NP} is the class of problems for which a ``yes'' answer can be verified using a polynomial-time Turing machine. For instance, the problem of deciding whether a system of polynomials over $\mathbb F_2$ has a common zero is complete for \textbf{NP} \cite{smale1998mathematical}.

\subsubsection{Parallel computations}
\label{sec:parallelComplexity}

This section recalls a few details of classical parallel computation and highlights a key result that characterizes the complexity of parallel algorithms. The tools in this section will be used in \cref{sec:CVBQP}.

Recall that the \textbf{random access machine} (RAM) is a convenient model for a sequential classical computer, consisting of a single processor or computation unit that executes a user-defined program. It possesses a read-only input tape, a write-only output tape, and an unbounded number of local memory cells, each of which may store an integer of unbounded size. A program executed by a RAM consists of a labeled sequence of instructions, consisting of primitive operations like addition, subtraction, branching and comparisons.

The canonical model for classical parallel computation is a generalization of the above model called a \textbf{parallel random access machine} (PRAM), which consists of a number of RAMs that can access a shared memory in parallel. The machines can read and write to this shared memory in $O(1)$ time, and can communicate with each other in $O(1)$ time. Formally,

\begin{definition}[PRAM]
    A PRAM consists of a sequence of processors $\set{P_0,P_1,\cdots}$, an unbounded global memory labelled by countably-infinite sequences $(x_0,x_1,\cdots) \in \mathbb Z^{\infty}$, and a finite program. Each processor is a machine with unbounded local memory $(y_0,y_1,\cdots) \in \mathbb Z^\infty$, a program counter and a flag indicating whether the machine is running or not. 
    A program consists of a finite sequence of (labeled) unit cost instructions. (cf. \cite{trahan1992multiplication} for a full specification)
    \label{def:pram}
\end{definition}

The fork operation above is what enables the parallelism in the PRAM model. If a processor $P_i$ forks on $m$, it can activate an unutilized machine $P_j$ and initialize its accumulator with its own, such that $P_j$ starts its computation at label $m$ of the program. In each time step every processor can read from memory, perform a basic instruction as above and write to memory. The running-time of a PRAM program is the maximum number of steps taken by any processor before it halts.

For convenience, we assume that the processors can concurrently read from, and write to (abbreviated as, CRCW) the shared memory. If multiple processors attempt to write to the same location, the value written is the one from the processor with the lowest index. If processors are simultaneously trying to read from and write to the same global memory location, the reads are performed before the write. Note that since other kinds of read/write configuration PRAMs can simulate CRCW PRAMs with at-most logarithmic overhead \cite{balcazar2012structural}, there is no loss of generality for a polynomial-length computation.  Note also that the above is a variation on the standard definition of the PRAM assuming unit-cost multiplication as a primitive operation, sometimes denoted PRAM[*]. One surprising result of the work of Trahan et al. \cite{trahan1992multiplication} is that a PRAM[*] can be simulated by an ordinary PRAM with at most polynomial overhead. Let PRAM[*]-PTIME be the class of languages recognized by a PRAM[*] in time polynomial in the length of the input. Then from the same work as the above,

\begin{theorem}[\cite{trahan1992multiplication}]\label{thm:PRAMtimeEqualsPSPACE}
    \(PRAM[*]-PTIME = PSPACE\)
    \label{thm:par-pspace}
\end{theorem}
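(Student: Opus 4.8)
The plan is to establish the two inclusions separately, with essentially all of the difficulty lying in the upper bound PRAM[*]-PTIME $\subseteq$ \textbf{PSPACE}; the matching lower bound does not even require unit-cost multiplication.

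For \textbf{PSPACE} $\subseteq$ PRAM[*]-PTIME, I would simulate a polynomial-space Turing machine $M$ by parallel reachability in its configuration graph. On input $x$ of length $n$, $M$ has at most $2^{\poly(n)}$ configurations, each encodable as a $\poly(n)$-bit integer. I would assign one processor to each ordered pair $(C,C')$ of configuration indices; in $\poly(n)$ time this processor decodes the indices and tests whether $C \vdash C'$ in one step of $M$, writing the resulting adjacency bit into shared memory. Acceptance is then equivalent to reachability of an accepting configuration from the start configuration, which I would obtain by $\poly(n)$ rounds of Boolean transitive closure (repeated ``squaring'' of the adjacency matrix), each round running in $\poly(n)$ parallel time via an OR-reduction tree over $2^{\poly(n)}$ processors. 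The total parallel time is $\poly(n)$, and the extra power of PRAM[*] over an ordinary PRAM is irrelevant here.

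The hard direction is to simulate, in polynomial space, a PRAM[*] that runs for $T(n) = \poly(n)$ steps. The central obstacle is that unit-cost multiplication can double the bit-length of a stored integer at each step, so a cell may hold a number with up to $2^{O(T)}$ bits, i.e.\ of doubly exponential magnitude, which cannot be written down in polynomial space. The idea is to never materialize these integers, and instead to decide a single output bit through a depth-first recursive predicate $\mathrm{BIT}(i,k,t)$ returning the $k$-th bit of the quantity held by cell or processor $i$ at time $t$. All relevant indices ($i$, $k$, processor labels, time) have $O(T)$ bits and hence are polynomial, and the processor count is at most $2^{O(T)}$ because forking at most doubles it per step. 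To evaluate a cell at time $t$ I would scan to find its last writer and the operation it applied (resolving concurrent writes by the lowest-index rule of the model), then recurse on the operands, all of which depend only on reads from times $\le t-1$; addition, comparison, branching, and indirect addressing are handled with only $O(T)$-bit carries and intermediate values.

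Multiplication is the crux. To extract the $k$-th bit of a product $a \times b$ of $N$-bit operands I would sweep the output positions from least to most significant, accumulating the partial-product sum $\sum_j a_j\, b_{k-j}$ together with a running carry that never exceeds $N = 2^{O(T)}$ and therefore fits in $O(T)$ bits; each operand bit $a_j$ or $b_{k-j}$ is supplied by a recursive $\mathrm{BIT}$ call at time $t-1$. The key point to verify carefully is that the space usage equals the maximum depth of the recursion stack, the running time (which may be doubly exponential) being irrelevant for \textbf{PSPACE}: the sweep over bit positions and the fan-in of the arithmetic inflate the running time but are executed sequentially with reused space, so they add only one live frame at a time, while each nesting strictly decreases $t$. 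Hence the stack depth is $O(T)$ and each frame costs $\poly(n)$ space, giving an overall polynomial-space simulation. Combining the two inclusions yields PRAM[*]-PTIME $=$ \textbf{PSPACE}; the conceptual content is precisely that on-demand bit extraction tames the doubly exponential numbers created by multiplication.
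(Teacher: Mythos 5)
The paper does not actually prove \cref{thm:par-pspace}: it imports it verbatim from \cite{trahan1992multiplication}, so there is no in-paper argument to compare against. Your sketch is the standard direct simulation, and most of it is sound: the \textbf{PSPACE} $\subseteq$ PRAM[*]-PTIME direction via parallel transitive closure of the configuration graph is correct, and for the converse your key observations — that values may have $2^{O(T)}$ bits, that a single output bit can be extracted by a recursion $\mathrm{BIT}(i,k,t)$ whose depth is bounded by $t$, and that the column-sweep for multiplication keeps the carry below the operand length $N=2^{O(T)}$ and hence within $O(T)$ bits — are exactly the right ingredients, and the ``space equals stack depth times frame size'' accounting is valid.

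There is, however, one concrete step that fails as written: the assertion that ``all relevant indices ($i$, $k$, processor labels, time) have $O(T)$ bits.'' In the PRAM[*] model of \cite{trahan1992multiplication} (and in \cref{def:pram}, which defers to it), memory is accessed by \emph{indirect addressing through computed values}, and a computed address is subject to the same repeated-squaring blow-up as any other register: it may have $2^{O(T)}$ bits, so the cell index $i$ cannot be passed explicitly to $\mathrm{BIT}$ in polynomial space, and ``scan to find the last writer of cell $i$'' cannot compare addresses by writing them down. The standard repair is to identify a cell not by its address but by a polynomial-size \emph{handle} — the (processor, time) pair of a write, or of the address-computing instruction — and to resolve a read by searching over all $2^{O(T)}$ candidate writers and testing address equality bit-by-bit through the same lazy $\mathrm{BIT}$ recursion (taking the latest write, ties broken by lowest processor index). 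This keeps every stack frame polynomial and preserves your depth bound, but it changes the signature of the recursion and is a genuinely necessary idea rather than a formality; without it the simulation only covers machines whose addresses stay polynomially long.
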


\subsection{Hamiltonian complexity theory over discrete variables}

The quantum analog of the constraint satisfaction problem is the problem of estimating the ground state of a local Hamiltonian. In quantum physics, Hamiltonians are self-adjoint operators which encode the energy levels of a physical system. The ground state energy is the lowest energy level. It is a significant quantity because, at low temperatures, systems are believed to be in the vicinity of the ground state. Formally, the local Hamiltonian problem is:

\begin{problem} [The local Hamiltonian problem]
    Let $n$ be an integer and $m = n^{O(1)}$. Given a Hamiltonian $H = \sum_{i=1}^m H_i$ over $n$ qudits, where $H_i$ acts non-trivially only on $\leq k$ qudits. Let
    \begin{equation} 
    E = \min_{\substack{X \in \mathbb C^n\\ \|X\|_2 = 1}} \langle X|H|X\rangle.
    \end{equation}
    Given two numbers $a - b > 1/\mathsf{poly}(n)$ decide whether $E \geq a$ or $E < b$, promised that one of these is the case. 
\end{problem}
One can view each term $H_i$ as a constraint. In particular, if each $H_i$ is a projector, and $a = 1/\mathsf{poly}(n)$ and $b = 0$, then we can interpret $H_i |X\rangle = 0$ as $H_i$ accepting $\ket X$. A Hamiltonian is called \textbf{frustration-free} if the ground state simultaneously satisfies all constraints. In other words, if $H_i$ is a projector, then the problem of deciding whether $H$ is frustration-free is the problem of deciding whether

\begin{equation} 
\exists ? X: H_1 \ket X = \cdots = H_m \ket X = 0.
\end{equation}

Kitaev showed that this problem is complete for the quantum complexity class \textbf{QMA}, where \textbf{QMA} is the class of problems that can be verified efficiently on a quantum computer:

\begin{definition}[\textbf{QMA}]
    \textbf{QMA} is the class of languages $L$ for which there exists a uniform family of polynomial-time quantum verifiers $V_n(x, \ket \psi)$ such that on input $x \in \{0,1\}^*$:
    
    \begin{enumerate}
        \item If $x \in L$ there exists $\ket \psi \in \mathbb C^{\mathsf{poly}(n)}$ such that $V_{|x|} (x, \ket \psi)$ accepts with probability $\geq 2/3$.
        
        \item If $x \notin L$ then for any $\ket \phi \in \mathbb C^{\mathsf{poly}(n)}$ $V_{|x|} (x, \ket \phi)$ rejects with probability $\geq 2/3$.
    \end{enumerate}
\end{definition}

\noindent 
In the definition of \textbf{QMA}, if the verification is performed by a polynomial-time classical computer, then we obtain \textbf{MA}, where \textbf{MA} is a generalization of \textbf{NP} by allowing probabilistic verification with bounded-error. In summary, the relationship between these classes can be captured by
\begin{equation}
    \textbf{NP}\subseteq\textbf{MA}\subseteq\textbf{QMA}.
\end{equation}

\subsection{Continuous-variable quantum information}
\label{sec:cv-review}

The (bosonic) quantum states of continuous-variable systems of a single mode may be identified with square-integrable real functions, the space of which ($L^2(\mathbb R)$) is isomorphic to the space of square-summable sequences $\ell^2 (\mathbb C)$. In simple words, continuous-variable quantum states have a discrete manifestation as a square-summable sequence of complex numbers. A pure state in $\ell^2 (\mathbb C)$ is a quantum state 
\begin{equation}
    \ket\psi=\alpha_0\ket0+\alpha_1\ket1+\cdots,
\end{equation}
such that $\sum_{n \geq 0} |\alpha_i|^2 < \infty$ and where $\{\ket n\}_{n\ge0}$ is an orthonormal basis known as the Fock basis in the context of bosonic systems, where $n$ captures the number of bosons per mode. The state $\ket 0$ is the state of zero particles, the so-called vacuum state. Fock basis vectors are eigenvectors of a quantum operator known as the number operator:
\begin{equation}
     N \ket m = m \ket m.
\end{equation}
An important family of bosonic states is known as the (Glauber) coherent states, which capture the states of important quantum systems such as the light coming out of a laser. For $\alpha \in \mathbb C$ we define
\begin{equation}
    \ket\alpha=e^{-\frac{|\alpha|^2}2}\sum_{n\geq 0}\frac{\alpha^n}{\sqrt{n!}}\ket n.
\end{equation}
Coherent states are eigenvalues of an operator known as the annihilation operator $ a$:
\begin{equation}
     a\ket\beta=\beta\ket\beta.
\end{equation}
The adjoint of this operator $ a^\dag$ is called the creation operator. These operators satisfy the canonical commutation relation $[ a, a^\dag]= I$, and $ n= a^\dag a$. Coherent states are not orthogonal. They satisfy instead
\begin{equation}
    |\langle\alpha|\beta\rangle| = e^{-\frac{|\alpha-\beta|^2}2}.
\end{equation}
These states actually form an \textit{overcomplete} basis. In particular, it is known that any convergent infinite subsequence of coherent states is overcomplete \cite{perelomov2002completeness}. As a matter of fact, for any point in the complex plane, coherent states with amplitudes corresponding to the complex numbers within an open ball around that point also span the whole Hilbert space. In what follows, we denote the continuous-variable Hilbert space by $\mathcal H$, with the number of modes depending on the context.

\subsubsection{Analytic representation and the stellar rank}
\label{sec:stellar_rank}

For any quantum state $\ket\psi=\sum_{n=0}^{\infty}\psi_n\ket n\in\ell^2(\mathbb C)$, we can define a function corresponding to its inner product with an unnormalized coherent state:
\begin{equation}
    F^\star_\psi(z) = e^{\frac{|z|^2}2} \langle \conj z|\psi\rangle=\sum_{n=0}^{\infty}\frac{\psi_n}{\sqrt{n!}}z^n,
\end{equation}
for all $z\in\mathbb C$, where $\conj z$ denotes the complex conjugate of $z$. $F^\star_\psi$ is known as the stellar function \cite{chabaud2020stellar}, or Bargmann function \cite{bargmann1961hilbert,segal1963mathematical}, of the quantum state $\ket\psi$. Because $\ket\psi$ is square-summable, $F^\star_\psi$ is square-integrable with respect to the Gaussian measure over the complex plane $\smash{d\mu(z):=\frac1\pi e^{-|z|^2}d^2z}$, where $d^2z=d\Re(z)d\Im(z)$. To see this, observe the following equivalence between discrete normalization and the Gaussian integration of stellar functions.

\begin{lemma}
    \begin{align}
        \sum_{n \geq 0} |\psi_n|^2 = \int_{z \in \C} d\mu(z) |F^\star_\psi(z)|^2\cdot
    \end{align}
\end{lemma}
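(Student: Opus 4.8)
The plan is to expand the modulus-squared of the stellar function into a double power series and integrate term by term against the Gaussian measure, thereby reducing the whole identity to a single moment computation. Writing $F^\star_\psi(z)=\sum_{n\ge0}\frac{\psi_n}{\sqrt{n!}}z^n$, I would form $|F^\star_\psi(z)|^2=\sum_{n,m\ge0}\frac{\psi_n\conj{\psi_m}}{\sqrt{n!\,m!}}\,z^n\conj z^{\,m}$ and aim to prove the orthogonality relation $\int_{\mathbb C} z^n\conj z^{\,m}\,d\mu(z)=n!\,\delta_{n,m}$. Substituting this back into the double sum collapses it to $\sum_n \frac{|\psi_n|^2}{n!}\cdot n!=\sum_n|\psi_n|^2$, which is exactly the claimed equality.

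To establish the moment identity, I would pass to polar coordinates $z=re^{i\theta}$, so that $z^n\conj z^{\,m}=r^{n+m}e^{i(n-m)\theta}$ and $d\mu(z)=\frac1\pi e^{-r^2}r\,dr\,d\theta$. The angular integral $\int_0^{2\pi}e^{i(n-m)\theta}\,d\theta=2\pi\,\delta_{n,m}$ annihilates every off-diagonal term, and for $n=m$ the substitution $u=r^2$ turns the radial integral into $\int_0^\infty u^n e^{-u}\,du=\Gamma(n+1)=n!$. This confirms $\int_{\mathbb C} z^n\conj z^{\,m}\,d\mu(z)=n!\,\delta_{n,m}$, which is the single nontrivial ingredient of the calculation.

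The one genuinely delicate point is justifying the exchange of the (double) summation with the integral, since the cross terms are not sign-definite and $F^\star_\psi$ is only the $L^2$-limit of its partial sums. The cleanest route is to observe that the functions $e_n(z):=z^n/\sqrt{n!}$ are, by the moment identity above, orthonormal in $L^2(\mathbb C,d\mu)$, forming an orthonormal system of the Bargmann--Fock space; the desired equality is then simply Parseval's identity applied to the expansion $F^\star_\psi=\sum_n \psi_n e_n$, whose coefficients are precisely the Fock amplitudes $\psi_n$. Alternatively, one can argue by truncation: the identity holds trivially for the finite partial sums $F^{(K)}_\psi=\sum_{n\le K}\frac{\psi_n}{\sqrt{n!}}z^n$, and one then lets $K\to\infty$, controlling the tail via the Cauchy--Schwarz inequality together with the summability $\sum_n|\psi_n|^2<\infty$ guaranteed by $\ket\psi\in\ell^2(\mathbb C)$. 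I expect the Parseval viewpoint to be the most economical, and that is the one I would present.
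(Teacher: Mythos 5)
Your proof is correct and follows essentially the same route as the paper's: expand $|F^\star_\psi(z)|^2$ into a double sum and reduce everything to the Gaussian moment identity $\int_{\mathbb C} z^n\conj z^{\,m}\,d\mu(z)=n!\,\delta_{m,n}$. You go further than the paper by actually deriving that moment identity in polar coordinates and by justifying the sum--integral interchange (via Parseval or truncation), both of which the paper simply asserts or leaves implicit.
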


\begin{proof}
We can expand:
\begin{equation}
    \int_{z \in \C} d\mu(z) |F^\star_\psi(z)|^2 = \sum_{m,n \geq 0} \conj\psi_{m} \psi_n \EE_{z \sim \mu}\left[\frac{\conj z^{m} z^n}{\sqrt {m! n!}}\right].
\end{equation}
The statement is implied by using the following moments of the complex Gaussian distribution:
\begin{equation}
    \EE_{z \sim \mu} [z^n \conj z^{m}] = \delta_{m,n} \cdot n!\;.
\end{equation}
\end{proof}

\noindent In summary, the stellar representation is a unitary representation over the space of entire functions (i.e., complex-valued functions that are converging series everywhere in the complex plane) which are square-integrable with respect to the Gaussian measure on the complex plane. This space is known as the Segal--Bargmann space. 

Due to the Hadamard--Weierstrass factorisation theorem, any stellar function $F^\star_\psi(z)$ of a single-mode pure state $\ket\psi$ admits the following canonical decomposition according to its zeros $\lambda_i$:
\begin{equation}
    F^\star_\psi(z) = z^k \prod_{n=1}^{r-k}\left(1-\frac z{\lambda_n}\right) e^{\frac z {\lambda_n} + \frac {z^2}{2 \lambda^2_n}} e^{a z^2 + b z + c},
\end{equation}
where $a,b,c\in\mathbb C$ and $k\in\mathbb N$ is the multiplicity of $z=0$ as a zero of $F^\star_\psi$.

The number $r\in\mathbb N\cup\{+\infty\}$ in the above description, known as the stellar rank \cite{chabaud2020stellar,chabaud2022holomorphic}, is the number of zeros of $F^\star_\psi$ counted with multiplicity. This representation is called the stellar representation because, up to a Gaussian function, $F^\star_\psi$ can  be captured by its complex zeros (stars), which can be thought of as the infinite-dimensional limit of the Majorana stellar representation for qudit states \cite{chabaud2022holomorphic}. When the stellar rank is finite, we consider the following canonical form for these states as
\begin{equation}
    F^\star_\psi(z) = P_r(z)G(z),
\end{equation}
where $P_r$ is a polynomial of degree $r$ and $G$ is a generalized Gaussian function. The stellar rank is extended to mixed states by a convex roof construction.

The representation of creation and annihilation operators acting on the Segal--Bargmann space are given by the derivative with respect to the complex variable and the multiplication by the complex variable, respectively:
\begin{equation}
     a^\dag\leftrightarrow\times z,\quad a\leftrightarrow\partial_z.
\end{equation}
The Segal--Bargmann representation of observables are hermitian functions of $\times z$ and $\partial z$. An observation of the observable $ O$ of the state $\psi$ gives the expectation value
\begin{equation}
    \langle\psi| O|\psi\rangle=\int_{z \in \C} \conj {F^\star_\psi(z)}O(z,\partial_z)F^\star_\psi(z)d\mu(z).
\end{equation}
\paragraph{Multiples modes:} All the above formulations can be extended to multiple modes. Suppose we have $m$ bosonic modes. The creation and annihilation operators acting on the mode $1\le i\le m$ are, respectively, denoted by $ a^\dag_i= \times z_i$ and $ a_i = \partial_{z_i}$. The stellar representation of a multimode system is given by a holomorphic Segal--Bargmann function $\bm z=(z_1,\dots,z_m,)\mapsto F^\star(z_1, \cdots, z_m)$ over $m$ complex variables. A multimode observable is an essentially self-adjoint function of $ a_1, a_1^\dag,\dots, a_m, a_m^\dag$. The multimode Gaussian measure is $\smash{d\mu(z_1,\dots,z_m)=\frac1{\pi^m}e^{-\|\bm{z}\|_2^2}}$. A stellar function $F^\star$ over $m$ modes is said to have finite stellar rank $r$, if 
\begin{equation}
    F^\star(z_1, \cdots, z_m) = P_r(z_1, \cdots, z_m) G(z_1, \cdots, z_n),
\end{equation}
with $P_r$ a multivariate polynomial of degree $r$, while $G$ is a Gaussian function which takes the most general form
\begin{equation}
    G(z_1, \cdots, z_m) = \frac 1 {\mathcal N} e^{- \frac 12\bm{z}^T A \bm z + B \bm z + C},
\end{equation}
where $B = U^T \bm b$, $U$ is an $m \times m$ unitary matrix, and $\bm b = (b_1, \cdots, b_m)^T$ is a vector of complex numbers, and
\begin{equation}
    A = U^T\mathrm{Diag}(t_1, \cdots, t_m) U,\;|t_i|<1.
\end{equation}
Otherwise, the stellar rank is defined to be $+\infty$.

A bosonic Hamiltonian $ H$ is an essentially self-adjoint operator (typically a function of the creation and annihilation operators), which encodes the energy levels of a continuous-variable quantum system. Its action in (a dense subspace of) the Segal--Bargmann space is a self-adjoint function of the multiplication and derivative operators and governs the dynamics of the system according to the Schr\"odinger equation $i \partial_t F^\star(\bm z,t) = H(\bm z,\partial_{\bm z})F^\star(\bm z,t)$. The solution to the Schr\"odinger equation is given by the unitary dynamics
\begin{equation}\label{eq:dynstellar}
    F^\star(\bm z,t)=e^{itH(\bm z,\partial_{\bm z})}F^\star_0(\bm z,t).
\end{equation}
Hereafter, a Hamiltonian $ H$ is of degree $d$ if it is a polynomial of degree $d$ in creation and annihilation operators. It is $k$-local if it is a sum of terms, each of which affects at most $k$ modes at a time. Note that any polynomial Hamiltonian of degree $d$ is at most $d$-local.

Besides the dynamical representation~(\ref{eq:dynstellar}), there is the usual circuit-like representation 
\begin{align}
    F^\star(\bm z,t) = U(\bm z,\partial_{\bm z})F^\star(\bm z,0) = U_t(\bm z,\partial_{\bm z})\cdots U_1(\bm z,\partial_{\bm z})F^\star(\bm z,0),
\end{align}
where each $U_i(\bm z,\partial_{\bm z})$ is a unitary operator acting on Segal--Bargmann space, i.e., $U_i(\bm z,\partial_{\bm z})=e^{i H_i(\bm z,\partial_{\bm z})}$ for some Hamiltonian $H_i$ affecting only a few qumodes.

\subsubsection{Continuous-variable quantum processes}\label{sec:CVprocesses}

A unitary operator $ G$ is called Gaussian if it a product of terms of the form $e^{i H}$ for quadratic Hamiltonians $ H$. 

\paragraph{Single mode:}
Consider the following special single-mode Gaussian processes:
\begin{align}\label{eq:simple-gaussian}
\begin{split}
 R(\phi) &= e^{i\phi  a^\dagger  a}\quad\quad\quad\text{(Rotation)}\\
 D(\delta) &= e^{\delta  a - \conj \delta  a^\dagger} \quad\quad\text{(Displacement)}\\
 S(\xi) &= e^{\xi  a^2 + \conj \xi { a^{\dagger2}} } \quad\;\text{(Squeezing)}.
\end{split}
\end{align}
They have the following action on the creation and annihilation operators:
\begin{align}
\begin{split}
 R(\phi)  a^\dagger  R(\phi)^\dagger &= e^{i\phi}  a^\dagger\\
 D(\delta)  a^\dagger  D^\dagger (\delta) &=  a^\dagger -\conj \delta I\\
 S(r e^{i \phi})  a^\dag  S^\dagger(r e^{i\phi}) &= (\cosh r)  a^\dagger - e^{- i \phi}(\sinh r)  a.
\end{split}
\end{align}
It turns out any single-mode Gaussian unitary operator can be described as a product $ S (\xi) D (\delta) R (\theta)$ with suitable parameters.

\paragraph{Position and momentum operators:} Another convenient operator basis on one mode is obtained using the position $X$ and momentum $P$ operators. Unlike the creation and annihilation operators, they are (essentially) self-adjoint. These operators are related according to (with the convention $\hbar=1$):
\begin{eqnarray}
     a &= \frac{1}{\sqrt{2}} (X + iP), \quad X = \frac{1}{\sqrt{2}} ( a +  a^\dagger)\\
     a^\dagger &= \frac{1}{\sqrt{2}} (X - iP), \quad P = \frac{1}{\sqrt{2i}} ( a -  a^\dagger).
\end{eqnarray}

It is insightful to view the evolution of these operators under Gaussian operators:
\begin{align}
\begin{split}
R(\theta) X R(\theta)^\dagger &= \cos (\theta) X + \sin(\theta) P,\\
R(\theta) P R(\theta)^\dagger &= -\sin (\theta) X + \cos(\theta) P,\\
D(\delta) P D^\dagger (\delta) &= P -\sqrt {2}Re (\delta),\\
D(\delta) X D^\dagger (\delta) &= X -\sqrt {2}\Im (\delta),\\
S(r e^{i \phi}) X S^\dagger(r e^{i \phi}) &= (\cosh r - \cos \phi \sinh r) X + \sinh r \sin \phi P,\\
S(r e^{i \phi}) P S^\dagger(r e^{i \phi}) &= -(\cosh r + \cos \phi \sinh r) P + \sinh r \sin \phi X.
\label{eq:affine}
\end{split}
\end{align}
From these equations, we can conclude that conjugation by a Gaussian operator is a (symplectic) affine map, implying that the degree remains unchanged by conjugating any polynomial in $X$ and $P$ by any Gaussian operator. 

\paragraph{Multiple modes:} A passive linear operator $ U$ is the following linear transformation of creation and annihilation operators
\begin{equation}
     U  a_i^\dagger  U^\dagger = \sum_j  U_{ij} a^\dagger_j.
\end{equation}
These operators do not change the total number of bosons in the system. An important example of a passive linear optical element is the $50:50$ beam splitter, where
\begin{align}\label{eq:beam-splitter}
U = \frac1{\sqrt2}\begin{pmatrix}
    1 & 1\\
    -1 & 1
\end{pmatrix}.
\end{align}

Gaussian unitary operators can be decomposed as follows:

\begin{theorem} [Euler--Bloch--Messiah decomposition]
\label{thm:EBMdecomp}
Any multimode Gaussian can be decomposed as
\begin{equation}
     G =  V \bigotimes_{i=1}^m  G_i  U,
\end{equation}
where $ U$ and $ V$ are passive linear operators and $ G_i$ is a single-mode Gaussian unitary operator acting mode $i$. 
\end{theorem}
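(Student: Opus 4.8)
The plan is to work through the standard correspondence between Gaussian unitaries and the real symplectic group, and then to reduce the statement to the Euler (Bloch--Messiah) decomposition of symplectic matrices. The conjugation formulas recorded above show that any Gaussian $G$ acts on the quadrature vector $\bm{R} = (X_1, P_1, \dots, X_m, P_m)^T$ by an affine symplectic map, $G \bm{R} G^\dagger = S\bm{R} + \bm{d}$, where $S \in \mathrm{Sp}(2m,\mathbb{R})$ preserves the form $\Omega = \bigoplus_{i=1}^m \left(\begin{smallmatrix} 0 & 1 \\ -1 & 0\end{smallmatrix}\right)$ and $\bm{d}$ is a displacement. Since a multimode displacement is already a tensor product $\bigotimes_i D_i$ of single-mode displacements, and since passive conjugation sends displacements to displacements, I would first peel off the displacement and reduce to the case $\bm d = \bm 0$; it will be reabsorbed into the single-mode factors at the end. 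It then remains to decompose the purely linear (symplectic) part.

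Next I would identify the two relevant subgroups on the symplectic side. Passive linear operators $U$ (those with $U a_i U^\dagger = \sum_j U_{ij} a_j$ for unitary $(U_{ij})$, e.g.\ the beam splitter of \eqref{eq:beam-splitter}) are exactly the Gaussians whose symplectic matrix lies in the maximal compact subgroup $K = \mathrm{Sp}(2m,\mathbb{R}) \cap \mathrm{O}(2m) \cong \mathrm{U}(m)$; conversely single-mode squeezers $S_i$ correspond to the diagonal ``squeezing'' matrices $\Sigma(\bm r) = \bigoplus_i \mathrm{diag}(e^{r_i}, e^{-r_i})$. The theorem will follow once I show every $S \in \mathrm{Sp}(2m,\mathbb{R})$ factors as $S = K_1 \Sigma(\bm r) K_2$ with $K_1, K_2 \in K$.

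The core step --- and the main obstacle --- is this Euler/Bloch--Messiah decomposition of symplectic matrices. I would obtain it from the polar decomposition $S = Q \, P$ with $Q \in \mathrm{O}(2m)$ orthogonal and $P = (S^T S)^{1/2}$ symmetric positive definite. Because $\mathrm{Sp}(2m,\mathbb{R})$ is closed under transposition and under taking the positive-definite square root, both $P$ and hence $Q = S P^{-1}$ are symplectic, so $Q \in K$. The delicate part is diagonalizing $P$: I would use that the eigenvalues of a symplectic matrix come in reciprocal pairs $(\lambda, 1/\lambda)$, and that the corresponding eigenspaces of the symmetric symplectic matrix $P$ are mutually $\Omega$-orthogonal, which lets me choose an orthonormal eigenbasis that is simultaneously symplectic. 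This produces an orthogonal symplectic $O \in K$ with $P = O\,\Sigma(\bm r)\,O^T$, whence $S = (QO)\,\Sigma(\bm r)\,O^T = K_1 \Sigma(\bm r) K_2$. Verifying that the diagonalizing basis can be taken symplectic (not merely orthogonal) is where the symplectic geometry genuinely enters and must be argued with care.

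Finally I would lift the matrix identity back to operators. The metaplectic representation $S \mapsto G_S$ is a homomorphism up to global phase, so $S = K_1 \Sigma(\bm r) K_2$ lifts to $G = V \big(\bigotimes_{i} S_i\big) U$ (up to an irrelevant phase), with $V, U$ passive and $S_i$ single-mode squeezers. Reinstating the displacement peeled off earlier and commuting it through $V$ --- using that $V^\dagger D V$ is again a tensor product of single-mode displacements --- lets me absorb each displacement into the corresponding single-mode factor, giving $G = V\big(\bigotimes_i G_i\big) U$ with each $G_i = D_i' S_i$ a single-mode Gaussian, as claimed.
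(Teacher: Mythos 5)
The paper does not prove this theorem: it is quoted in the preliminaries as a standard result of Gaussian quantum information (the Bloch--Messiah/Euler decomposition, cf.\ the references \cite{ferraro2005gaussian,weedbrook2012gaussian,houde2024matrix}), so there is no in-paper argument to compare against. Judged on its own, your outline is the correct and standard route: reduce to the affine symplectic action on quadratures, peel off the displacement, prove the matrix-level Euler decomposition $S=K_1\Sigma(\bm r)K_2$ via the polar decomposition, lift through the metaplectic representation, and reabsorb the displacement using the fact that passive conjugation preserves the displacement subgroup. Two points deserve the care you already flag. First, closure of $\mathrm{Sp}(2m,\mathbb R)$ under positive-definite square roots should be made explicit: for symmetric positive-definite symplectic $P$ one has $P^{-1}=\Omega^{T}P\,\Omega$, and since conjugation by the orthogonal matrix $\Omega$ commutes with the functional calculus, $P^{1/2}$ satisfies the same relation and is symplectic. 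Second, the simultaneous orthogonal-and-symplectic diagonalization of $P$ is the genuinely nontrivial step; the clean way is to note that $\lambda$- and $\lambda^{-1}$-eigenspaces of $P$ are paired by $\Omega$ and that distinct eigenvalue pairs are both Euclidean- and $\Omega$-orthogonal, so an orthonormal basis of one member of each pair extends canonically to a symplectic orthonormal basis of the whole space. The remaining steps (metaplectic lift up to global phase, absorbing that phase and the displacements into the single-mode factors $G_i$) are routine and correctly handled. The proof is sound.
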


Let $\ket\psi$ be a quantum state with finite stellar rank $r$. We say $\ket{c}$ is a core state of $\ket \psi$ if 
\begin{equation}
    \ket \psi= G\ket c,
\end{equation}
where $ G$ is a Gaussian operator of the form $ G =  V \bigotimes_{i=1}^m  G_i$, and,
\begin{equation}
    F_c^\star(z_1,\cdots, z_m) = P_r (z_1, \cdots, z_m),
\end{equation}
for some polynomial $P_r$ of degree $r$.

More generally, a \textit{core state} is defined as a quantum state with polynomial stellar function. These are the states with bounded support over the Fock basis.

\begin{theorem}[Core states \cite{chabaud2020stellar}]
The core state of a stellar function is unique up to a passive linear operator and has the same stellar rank as the quantum state. Conversely, if $\ket c$ is a core state of stellar rank $r$, for any Gaussian operator $ G$, $ G \ket c$ has rank $r$.
\label{thm:core}
\end{theorem}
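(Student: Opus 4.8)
The plan is to prove the converse (preservation) statement first, since the ``same stellar rank'' claim follows from it, and then obtain uniqueness by a vacuum-stabiliser argument. \emph{Step 1 (Gaussian operators do not increase the rank of a core state).} Let $\ket c$ be a core state of stellar rank $r$, so $F^\star_c=P_r$ is a polynomial of total degree $r$, equivalently $\ket c=P_r(a_1^\dagger,\dots,a_m^\dagger)\ket0$. For any Gaussian $G$ I would write
\begin{equation}
  G\ket c = G\,P_r(a_1^\dagger,\dots,a_m^\dagger)\,G^\dagger\,G\ket0 = P_r(L_1,\dots,L_m)\,G\ket0,\qquad L_i:=G a_i^\dagger G^\dagger .
\end{equation}
Since conjugation by a Gaussian is an affine symplectic map (cf.\ \cref{eq:affine}), each $L_i$ is affine in the $z_j$ and $\partial_{z_j}$, and the $L_i$ commute because $[a_i^\dagger,a_j^\dagger]=0$ is preserved under conjugation, so $P_r(L_1,\dots,L_m)$ is unambiguous. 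In the Segal--Bargmann picture $G\ket0$ has a Gaussian stellar function $\tilde G(z)=\mathcal N^{-1}e^{-\frac12 z^T A z+Bz+C}$, and each $L_i$ raises the polynomial degree of $(\text{polynomial})\times\tilde G$ by at most one. Hence $F^\star_{G\ket c}=R(z)\tilde G(z)$ with $\deg R\le r$, so $G\ket c$ has stellar rank at most $r$.

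\emph{Step 2 (exactness and the ``same rank'' claim).} To upgrade ``at most $r$'' to ``exactly $r$'' I would exploit that $G^\dagger$ is also Gaussian: applying the non-increase bound of Step 1 to $G^\dagger(G\ket c)=\ket c$ gives $r=\mathrm{rank}(\ket c)\le\mathrm{rank}(G\ket c)\le r$, forcing equality. (Alternatively, a direct computation tracks the top-degree homogeneous part, which equals $P_r^{\mathrm{top}}\!\big((\alpha-\beta A)z\big)$ for $L_i=\sum_j(\alpha_{ij}a_j^\dagger+\beta_{ij}a_j)+\gamma_i$; expanding $G a_i G^\dagger$ and using that it annihilates $\tilde G$ gives the matrix identity $\beta=\alpha\overline A$, whence $\alpha-\beta A=\alpha(I-\overline A A)$ is invertible because $\|A\|<1$, so the leading coefficient does not vanish.) This establishes the converse part of the theorem, and it immediately yields ``same stellar rank'': if $\ket\psi=G\ket c$ has stellar rank $r$, then the core state $\ket c$ must have stellar rank $r$ too.

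\emph{Step 3 (uniqueness up to a passive linear operator).} Suppose $\ket\psi=G_1\ket{c_1}=G_2\ket{c_2}$ are two core decompositions. Then $\ket{c_1}=G\ket{c_2}$ with $G=G_1^{-1}G_2$ Gaussian, and by Step 1 we have $F^\star_{c_1}=R(z)\,F^\star_{G\ket0}(z)$ with $R$ a polynomial. Since $F^\star_{c_1}$ and $R$ are both polynomials while $F^\star_{G\ket0}=e^{\text{quadratic}}$ grows super-polynomially unless it is constant, we must have $F^\star_{G\ket0}$ constant, i.e.\ $G\ket0\propto\ket0$. The stabiliser of the vacuum inside the Gaussian group is exactly the passive linear operators (displacement sends the vacuum to a coherent state and squeezing to a squeezed state, while only passive transformations fix it), so $G$ is passive linear and $\ket{c_1}=G\ket{c_2}$, as claimed.

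\emph{Main obstacle.} The crux is Step 2: showing the degree is preserved exactly rather than merely non-increasing. The invertibility trick ($G^\dagger$ Gaussian) circumvents it cleanly, but for the explicit route one must carry out the matrix bookkeeping relating the Bogoliubov coefficients $(\alpha,\beta)$ of $G$ to the covariance data $A$ of $G\ket0$ and invoke $\|A\|<1$ to guarantee $I-\overline A A$ is invertible. Tracking total degree in the multimode case and verifying that the $L_i$ genuinely commute are the points most prone to error.
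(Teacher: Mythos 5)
The paper does not actually prove \cref{thm:core}: it imports the statement from \cite{chabaud2020stellar}, so there is no in-paper argument to compare against line by line. Your reconstruction is correct and takes a different route from that reference, which works at the level of stellar functions via the Hadamard--Weierstrass factorization and the explicit transformation of $F^\star$ under single-mode Gaussian unitaries; you instead argue at the operator level, writing $\ket c=P_r(a^\dagger)\ket 0$ and conjugating the creation operators, which handles the multimode case uniformly, and your sandwich trick in Step 2 is the cleanest way to get exactness. Two points to tighten. First, Step 1 as written only bounds the rank of $G\ket c$ for $\ket c$ a core state, while the sandwich in Step 2 applies the non-increase bound to the non-core state $G\ket c$; this is repaired by noting that since multiplication by $z_i$ is exactly $a_i^\dagger$, any state with stellar function $Q\cdot\tilde G_0$ equals $Q(a^\dagger)$ applied to the Gaussian state with stellar function $\tilde G_0$ (indeed your own identity gives $G\ket c=R(a^\dagger)G\ket 0$, so $G^\dagger G\ket c=R(G^\dagger a^\dagger G)\ket 0$ has rank at most $\deg R$), but it should be said explicitly. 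Second, in the explicit alternative route you should record that $\alpha$ is invertible, which follows from the Bogoliubov relation $\alpha\alpha^\dagger-\beta\beta^\dagger=I$, before concluding that $\alpha(I-\bar AA)$ is invertible. Step 3 is sound: the super-polynomial growth of $\lvert e^{q(z)}\rvert$ along a suitable ray rules out a non-constant quadratic factor in a polynomial stellar function, and by the Euler--Bloch--Messiah decomposition a squeezed, displaced vacuum equals the vacuum only for trivial squeezing and displacement, so the vacuum stabiliser in the Gaussian group is exactly the passive linear operators (up to phase).
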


\subsubsection{Phase space formulation and symplectic representation}\label{sec:phase-space-formalism}
Corresponding to any Gaussian state, is a Gaussian Wigner function
\begin{align}
W_\psi(\mathbf q) = \frac{1}{\mathcal N} \exp(-\frac12 \mathbf q^T \boldsymbol\Sigma^{-1} \mathbf q + \boldsymbol\mu^T \mathbf q),
\end{align}
where $\mathbf q:=\begin{pmatrix}
\Bf x\\
\Bf p
\end{pmatrix}\in\mathbb R^{2m}$ contains our variables of the phase space, for an $m$-mode Gaussian state $\psi$. Here $\mathcal N$ is the normalisation factor ensuring $\int \mathrm d^{2m} \mathbf q \, W_\psi(\mathbf q) = 1$, and therefore, $W_\psi$ defines a classical Gaussian distribution on the phase space. We recall that the parameters $\boldsymbol\Sigma$ and $\boldsymbol\mu$ can be also understood via the following
\begin{align}
\boldsymbol\mu = \tr\left[ \ket\psi\bra\psi \mathbf q \right], \quad \boldsymbol\Sigma = \mathrm{tr}\left[ (\mathbf q - \boldsymbol\mu) \ket\psi\bra\psi (\mathbf q - \boldsymbol\mu)^T \right].
\end{align}
Note that $\boldsymbol\mu\in\mathbb R^{2m}$ is a real vector of size $2m$, and similarly, we have $\boldsymbol\Sigma\in\mathbb R^{2m\times 2m}$ is a $2m\times 2m$ real symmetric matrix, and we employ the notation $\boldsymbol\mu = \begin{pmatrix}
\boldsymbol\mu^{(x)}\\
\boldsymbol\mu^{(p)}
\end{pmatrix}$
and $\boldsymbol\Sigma = \begin{pmatrix}
\boldsymbol\Sigma^{(xx)} & \boldsymbol\Sigma^{(xp)}\\
\boldsymbol\Sigma^{(xp)} {}^T & \boldsymbol\Sigma^{(pp)}
\end{pmatrix}$, where $\boldsymbol\mu^{(x)}, \boldsymbol\mu^{(p)}$ are vectors of size $m$, and submatrices of $\boldsymbol\Sigma$ are each of size $m\times m$. Upon measuring a mode in the position basis, which is mathematically described by POVM elements $\ket{x}\bra{x}$ for $x\in\mathbb R$, we get samples from a Gaussian distribution with mean $\mu^{(x)}_i$, and variance $\Sigma^{(xx)}_{ii}$. As an example, the vacuum state has $\boldsymbol\Sigma_{\ket0} = \frac12 \mathbb I$ and $\mu_{\ket0} = 0$. We refer the reader to \cite{serafini2017quantum,martin2022quantum} for a detailed explanation of the features of the phase-space representation.

We can express quadratic Hamiltonians in a short form, by employing the following convention
\begin{align}
\widehat r := \begin{pmatrix}
X_1&
\cdots&
X_m&
P_1&
\cdots&
P_m
\end{pmatrix}^T,
\end{align}
and we also use $\mathbf d. \widehat r := \sum_{i=1}^{2m} d^{(x)}_{i} X_i + d^{(p)}_{i} P_i$ for $\mathbf d\in\mathbb R^{2m}$, and $\widehat r^T \mathbf M \widehat r:= \sum_{i,j} M_{ij}^{(xx)} X_i X_j + M_{ij}^{(pp)} P_i P_j + M_{ij}^{(xp)} X_iP_j$, for $\mathbf M\in\mathbb R^{2m\times 2m}$. Note that we can always pick $\mathbf M$ to be a real symmetric matrix.

Let $U = \exp(-iH_G t)$ for some Gaussian Hamiltonian $H_G$ with the quadratic part $\Bf M$ and linear part $\Bf d$. Applying this unitary on a Gaussian state determined by $(\boldsymbol\Sigma,\boldsymbol\mu)$ provides a Gaussian with parameters $(\boldsymbol\Sigma',
\boldsymbol\mu')$ such that
\begin{align}
\boldsymbol\Sigma' = \exp(\boldsymbol\Omega\Bf Mt) \boldsymbol\Sigma \exp(\boldsymbol\Omega\Bf Mt)^T,
\end{align}
where
\begin{align}
\boldsymbol\Omega = \begin{pmatrix}
0 & -\mathbb I_m\\
\mathbb I_m & 0
\end{pmatrix},
\end{align}
with $\mathbb I_m$ being the $m\times m$ identity matrix. Recall that symplectic matrices are defined as
\begin{align}
\mathrm{Sp}_{2n}(\mathbb R):= \{\mathbf S\in\mathbb R^{2n\times 2n}: \mathbf S^T \boldsymbol\Omega\mathbf S = \boldsymbol\Omega \}.
\end{align}
Note that $\exp(\boldsymbol\Omega \mathbf M) \in \mathrm{Sp}_{2n}(\mathbb R)$ for any symmetric $\mathbf M$ as the Lie algebra corresponding to $\mathrm{Sp}_{2n}(\mathbb R)$ can be readily shown to be the set of matrices $\mathbf A\in\mathbb R^{2n\times2n}$ such that $\mathbf A^T \boldsymbol\Omega + \boldsymbol\Omega \mathbf A = 0$. Note that $\mathbf A = \boldsymbol\Omega\mathbf M$ satisfies the latter relation as $\mathbf M$ is symmetric.

\subsubsection{Elementary calculations involving canonical operators}\label{sec:elementary_calculations}

We use the Heisenberg picture to understand the evolution of operators under the influence of unitary transformations. The mathematical tools for computing these evolutions are well-known results from the mathematical development of Lie algebras and Lie groups. We begin with the hallmark result of Baker--Campbell--Hausdorff-Dynkin (usually abbreviated as BCH). 
\begin{theorem}[Baker--Campbell--Hausdorff--Dynkin]\label{thm:BCH}
    Let $A$ and $B$ be (possibly non-commuting) elements of an associative algebra $\mathcal A$ over a field of characteristic $0$ with a multiplicative unit element $I$. Then the algebra element $H:=\log{e^Ae^B}$ is given by a Lie series in the operators $A$ and $B$ of the form,
    \begin{equation}
        H = A + B + \frac{1}{2}[A,B] + \frac{1}{12}([A,[A,B]] - [B,[A,B]]) + h.o.,
    \end{equation} 
    where $h.o.$ represents higher-order nested commutators in $A$ and $B$.
\end{theorem}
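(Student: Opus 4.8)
The plan is to prove the identity first as a statement about formal power series in the completed free associative algebra $\widehat{\mathcal A}\langle A,B\rangle$ on two noncommuting generators, and then transport it to the given algebra $\mathcal A$ by the universal property (it holds verbatim whenever $e^A$, $e^B$ converge, e.g.\ for bounded $A,B$ of small norm). To track degrees I would introduce a grading parameter and set $Z(s):=\log\!\big(e^{sA}e^{B}\big)$, so that $Z(0)=B$ and $Z(1)=\log(e^Ae^B)=H$; writing $Z=\sum_{k\ge1}Z_k$ with $Z_k$ homogeneous of total degree $k$ in $A,B$ then organizes the computation.

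The first step is to pin down the explicit low-order terms by brute expansion, which is unambiguous and fixes all sign conventions. I would set $W:=e^Ae^B-I=\sum_{(m,n)\neq(0,0)}\tfrac{1}{m!\,n!}A^mB^n$ and use $\log(I+W)=\sum_{j\ge1}\tfrac{(-1)^{j+1}}{j}W^{j}$, then collect the homogeneous components of degree $1,2,3$. Degree $1$ gives $A+B$; in degree $2$ the term $AB$ coming from $W$ combines with $-\tfrac12(A+B)^2$ coming from $-\tfrac12W^2$ to leave exactly $\tfrac12(AB-BA)=\tfrac12[A,B]$; degree $3$ is the same bookkeeping carried one order further and yields $\tfrac1{12}\big([A,[A,B]]-[B,[A,B]]\big)$. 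This is routine but confirms the stated truncation.

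The crux — and the main obstacle — is the structural claim that \emph{every} term $Z_k$ is a nested commutator, i.e.\ that $H$ lies in the free Lie algebra $L(A,B)$ and not merely in the free associative algebra. I would establish this by the differential-equation route: differentiating $e^{Z(s)}=e^{sA}e^{B}$ and using the right-trivialized derivative of the exponential, $\tfrac{d}{ds}e^{Z}=\big(\tfrac{e^{\mathrm{ad}_Z}-1}{\mathrm{ad}_Z}\dot Z\big)e^{Z}=Ae^{Z}$, yields the Dynkin ODE
\begin{equation}
\frac{dZ}{ds}=\frac{\mathrm{ad}_{Z}}{e^{\mathrm{ad}_{Z}}-1}\,A=f(\mathrm{ad}_Z)\,A,\qquad f(x)=\frac{x}{e^{x}-1},
\end{equation}
with initial condition $Z(0)=B$, where $f$ is the Bernoulli generating function, analytic at $0$ with $f(0)=1$. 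Since the right-hand side is thus a convergent series of iterated brackets $\mathrm{ad}_Z^{\,k}A=[Z,[\cdots,[Z,A]\cdots]]$, it is a Lie element whenever $Z$ is; as $Z(0)=B\in L(A,B)$ and the free Lie algebra is closed under the bracket, solving the ODE by Picard iteration keeps every iterate in $L(A,B)$, so $H=Z(1)\in L(A,B)$. Integrating the series term by term (equivalently, matching the degree-$k$ parts of the ODE) both reproduces the explicit coefficients above as a cross-check and produces Dynkin's closed form for the higher brackets. An equivalent, perhaps cleaner justification of the same structural fact uses the Hopf-algebra structure on $\widehat{\mathcal A}\langle A,B\rangle$ in which $A,B$ are primitive: $e^{sA}$ and $e^{B}$ are group-like, their product is group-like, and the logarithm of a group-like element is primitive, so by Friedrichs' criterion (primitive equals Lie) $H$ is a Lie series. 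I expect the handling of convergence of the nested $\mathrm{ad}_Z$ series and the bookkeeping of the higher brackets to be where most of the care is needed.
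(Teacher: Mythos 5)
The paper does not prove this statement: Theorem~\ref{thm:BCH} is quoted as classical background in Section~\ref{sec:elementary_calculations} (the paper only proves the derived Lemma~\ref{lem:complexBCH} and sketches the Zassenhaus formula), so there is no in-paper proof to compare against. Your proposal is a correct and standard proof of the result. The low-order bookkeeping is right (the degree-$2$ cancellation $AB-\tfrac12(A+B)^2\mapsto\tfrac12[A,B]$ and the degree-$3$ coefficient $\tfrac1{12}$ both check out), you correctly interpret the theorem as a formal identity in the completed free associative algebra that transfers to any setting where the exponentials converge, and the structural claim that $H$ is a Lie series is handled by either of two legitimate classical routes (the Dynkin ODE with Picard iteration preserving the free Lie algebra, or group-like/primitive duality via Friedrichs' criterion). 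One small refinement if you want the explicit higher-order Dynkin coefficients rather than just the structural statement: in the ODE $\dot Z=f(\mathrm{ad}_Z)A$ you should substitute $e^{\mathrm{ad}_Z}=\mathrm{Ad}_{e^{sA}e^{B}}=e^{s\,\mathrm{ad}_A}e^{\mathrm{ad}_B}$ so that the right-hand side is expressed in the known operators $\mathrm{ad}_A,\mathrm{ad}_B$ and can be integrated term by term in closed form; as written, $f(\mathrm{ad}_Z)$ still involves the unknown $Z$, which suffices for the Lie-element argument and for matching graded components but not for a one-pass explicit integration. This is a presentational point, not a gap.
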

Let $G$ be a matrix Lie group and $g$ its Lie algebra. Define the \textbf{adjoint map} as the linear operator for fixed $A \in g$ by $ad_A B = [A,B]=AB-BA$. The corresponding linear transformation $Ad_U$ for fixed $U \in G$ is given by $Ad_U B = UBU^{-1}$. Their relationship can be summarized by the following identity \cite{hall2015lie},
\begin{equation}
    Ad_{e^A} (B) = e^{ad_A} (B).
\end{equation}
Let $A$ and $B$ be operators as above, define the iterated commutator as,
\begin{equation} 
    [(A)^n,B] := 
    \begin{cases}
    [A,\cdots [A, B]] \text{ where $n$ is the depth of the nested commutator,} \\
    B \text{ when $n=0$.}
    \end{cases}
\end{equation} 
Then we have the following especially useful form of the BCH formula,
\begin{lemma}[Complex BCH formula]\label{lem:complexBCH}
    \begin{equation}
        e^{itA} B e^{-itA} = \sum_{n=0}^\infty \frac{(it)^n}{n!}[(A)^n, B].
    \end{equation}
\end{lemma}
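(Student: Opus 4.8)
The plan is to reduce the identity to a first-order linear differential equation for the operator-valued function $f(t) := e^{itA} B e^{-itA}$ and then read off its Maclaurin series. First I would record the boundary value $f(0) = B$ and differentiate once, using the product rule together with $\frac{d}{dt} e^{\pm itA} = \pm iA\, e^{\pm itA}$, to obtain
\begin{equation}
    f'(t) = iA\, e^{itA} B e^{-itA} - e^{itA} B e^{-itA}\, iA = i[A, f(t)] = i\, ad_A(f(t)).
\end{equation}
Thus $f$ solves the linear ODE $f' = i\, ad_A(f)$ driven by the fixed linear map $ad_A$ acting on the ambient algebra.

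Next I would establish by induction that $f^{(n)}(t) = i^n\, ad_A^n(f(t))$ for every $n \ge 0$: the base case is trivial, and differentiating the $n$-th relation once more and reusing the previous display gives $f^{(n+1)}(t) = i^n\, ad_A^n(f'(t)) = i^n\, ad_A^n(i\, ad_A(f(t))) = i^{n+1}\, ad_A^{n+1}(f(t))$, where I use that the constant linear map $ad_A$ commutes with differentiation in $t$. Evaluating at $t=0$ and using $ad_A^n(B) = [(A)^n, B]$ yields $f^{(n)}(0) = i^n [(A)^n, B]$, so assembling $f(t) = \sum_{n\ge 0} \frac{t^n}{n!} f^{(n)}(0)$ produces exactly the claimed series. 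Alternatively, the same formula drops out in one line from the adjoint identity $Ad_{e^A}(B) = e^{ad_A}(B)$ recorded above: substituting $A \mapsto itA$ and using linearity of $ad$ in its subscript ($ad_{itA} = it\, ad_A$) gives $e^{itA}Be^{-itA} = Ad_{e^{itA}}(B) = e^{it\, ad_A}(B)$, whose power-series expansion is the right-hand side.

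The main subtlety — and the step I expect to require the most care — is convergence and domain issues, since in the bosonic setting $A$ and $B$ are generically unbounded (e.g.\ polynomials in $X$ and $P$), so neither the operator exponentials nor the infinite commutator series converge in operator norm, and term-by-term differentiation of $f$ is not automatic. At the purely algebraic level of \cref{thm:BCH} (a formal identity in an associative algebra over a characteristic-zero field) all these manipulations are legitimate as formal power series and the argument above is complete. To promote the identity to an honest operator equation one would restrict to a common invariant dense domain of analytic vectors — for the Hamiltonians of bounded degree and the states of finite stellar rank used later, the finite-particle subspace serves this role — on which $f(t)\psi$ is a genuine analytic vector-valued function and the series converges; I would flag this domain caveat rather than belabor the functional-analytic estimates.
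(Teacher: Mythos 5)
Your proposal is correct and follows essentially the same route as the paper's proof: both define the conjugation function $f(t)$, compute its derivatives via the product rule to get iterated commutators, and read off the Taylor series (the paper works with $e^{tA}Be^{-tA}$ and substitutes $t=i$ at the end, which is a cosmetic difference). Your explicit induction for $f^{(n)}$, the $Ad_{e^A}=e^{ad_A}$ shortcut, and the remark on domains for unbounded operators are all sound refinements of what the paper leaves implicit.
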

\begin{proof}
    For $t \in \mathbb R$ define the operator-valued function $f(t):=e^{tA}Be^{-tA}$. Then $f(0)=B$ and taking derivatives with respect to the variable $t$ we get that,
    \begin{align*}
        f'(0) &= (Ae^{tA}Be^{-tA} - e^{tA}BAe^{-tA}) \vert_{t=0} 
        = [A,B], \\
        f''(0) &= (e^{tA}A^2Be^{-tA} - e^{tA}ABAe^{-tA} - e^{tA}ABAe^{-tA} + e^{tA}BA^2e^{-tA}) \vert_{t=0}
        = [A,[A,B]] = [(A)^2,B],
    \end{align*}
    and so on. The Taylor expansion of $f$ is then given by,
    \begin{equation}
        f(t) = \sum_{n=0}^\infty \frac{f^{(n)}(0) t^n}{n!}.
    \end{equation}
    Substituting $t=1$ gives us the (differential form of) the real BCH formula, while using $t=i$ gives us the the complex formula above.
\end{proof}
Another useful tool is the following dual form of the BCH formula, attributed to an unpublished calculation of Zassenhaus \cite{magnus1954exponential}.
\begin{lemma}[Zassenhaus formula]
Let $\mathcal L(A,B)$ be the free algebra generated by possibly non-commuting operators $A, B$. Then $e^{A+B}$ can be uniquely decomposed as,
\begin{equation}
e^{A+B} = e^A e^B \prod_{n=2}^\infty e^{C_n(A,B)},    
\end{equation}
where $C_n(A,B) \in \mathcal L(A,B)$ are homogeneous Lie polynomials in $A$ and $B$ of total degree $n$. Moreover for $t \in \mathbb R$ the first few terms are given by,
\begin{equation}
e^{t (A + B)} = e^{tA} e^{tB} e^{- t^2/2 [A,B]} e^{\frac{t^3}{6} (2 [B, [A,B]] + [A, [A,B]])} \times {\text{h.o.}},
\end{equation}
where h.o.\ are higher order commutator exponentials.  \label{lem:zassenhaus}
\end{lemma}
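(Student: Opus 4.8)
The plan is to treat $A$ and $B$ as free generators and to work throughout in the completed free associative algebra over them, graded by total degree, with a formal bookkeeping parameter $t$ attached so that each factor of $A$ or $B$ carries one power of $t$. In this setting $e^{t(A+B)}$, $e^{tA}$ and $e^{tB}$ are well-defined power series in $t$, and the claimed identity becomes $e^{t(A+B)} = e^{tA}e^{tB}\prod_{n\ge2}e^{t^nC_n}$, where I require each $C_n$ to be homogeneous of degree $n$. Setting $R(t):=e^{-tB}e^{-tA}e^{t(A+B)}$, a direct expansion to first order shows $R(t)=I+O(t^2)$, so the product $\prod_{n\ge 2}e^{t^nC_n}$ has a chance of matching. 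I would then establish existence and uniqueness of the exponents by a recursive peeling argument: assuming $C_2,\dots,C_{N-1}$ have been fixed so that $e^{-t^{N-1}C_{N-1}}\cdots e^{-t^2C_2}R(t)=I+t^ND_N+O(t^{N+1})$, the coefficient $D_N$ is forced, and setting $C_N:=D_N$ makes the product agree to order $t^N$. This determines every $C_n$ uniquely, and because all the building blocks $e^{\pm tA},e^{\pm tB},e^{t(A+B)}$ are series in $t$ whose $t^k$-coefficient is homogeneous of degree $k$, the extracted $C_n$ is automatically homogeneous of degree $n$.

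The substantive point -- and the main obstacle -- is to show that each $C_n$ is a \emph{Lie} polynomial, i.e.\ expressible through nested commutators of $A$ and $B$, rather than a generic associative-algebra element. Here I would invoke the Hopf-algebra structure of the tensor algebra, in which $A$ and $B$ are primitive. Since $A+B$, $A$ and $B$ are primitive, the exponentials $e^{t(A+B)},e^{tA},e^{tB}$ are grouplike, and grouplike elements form a group under multiplication; hence $R(t)$ is grouplike. By the Friedrichs (Milnor--Moore) primitivity criterion, the logarithm of a grouplike series is primitive, so $\log R(t)$ is a Lie series. Finally, peeling off the factors $e^{t^nC_n}$ one at a time and using that \cref{thm:BCH} combines Lie elements into Lie elements shows inductively that each $C_n$ lies in the free Lie algebra, as required.

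To pin down the explicit low-order terms I would convert the defining relation into a differential equation for $R$. Differentiating $e^{t(A+B)}=e^{tA}e^{tB}R(t)$ and simplifying gives
\begin{equation}
R'(t)=\Phi(t)R(t),\qquad \Phi(t)=\mathrm{Ad}_{e^{-tB}}\bigl(\mathrm{Ad}_{e^{-tA}}B-B\bigr),\qquad R(0)=I.
\end{equation}
Expanding $\mathrm{Ad}_{e^{-tA}}B-B=\sum_{n\ge1}\frac{(-t)^n}{n!}[(A)^n,B]$ and $\mathrm{Ad}_{e^{-tB}}=e^{-t\,\mathrm{ad}_B}$ by means of \cref{lem:complexBCH} yields
\begin{equation}
\Phi(t)=-t[A,B]+t^2\Bigl(\tfrac12[A,[A,B]]+[B,[A,B]]\Bigr)+O(t^3),
\end{equation}
and a Picard (Dyson) iteration of the differential equation gives $R(t)=I-\tfrac{t^2}{2}[A,B]+\tfrac{t^3}{6}[A,[A,B]]+\tfrac{t^3}{3}[B,[A,B]]+O(t^4)$. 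Matching this against $e^{t^2C_2}e^{t^3C_3}=I+t^2C_2+t^3C_3+O(t^4)$ recovers $C_2=-\tfrac12[A,B]$ and $C_3=\tfrac16\bigl([A,[A,B]]+2[B,[A,B]]\bigr)$, exactly the terms displayed in the statement.

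Beyond the Lie-ness argument, the only remaining care is formal: one must set up convergence in the completed free algebra and check that the recursion and the grouplike/primitive bookkeeping are compatible, both of which are standard once the grading is fixed. I expect the Friedrichs-criterion step to be the conceptual crux, while the order-by-order verification of the first few exponents is routine.
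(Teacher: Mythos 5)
Your proposal is correct, but it reaches the result by a different route than the paper. The paper's proof is a two-line sketch that leans entirely on \cref{thm:BCH}: it writes $e^{-A}e^{A+B}=e^{B+D}$ with $D$ a Lie series of degree $\ge 2$, then $e^{-B}e^{B+D}=e^{C_2+D'}$ with $D'$ of degree $\ge 3$, and recurses; the Lie-polynomial nature of each $C_n$ is inherited at every stage directly from BCH. You instead set $R(t)=e^{-tB}e^{-tA}e^{t(A+B)}$, obtain existence and uniqueness of the homogeneous exponents by an order-by-order peeling in the graded completed free algebra, and establish that each $C_n$ is a Lie element via the Hopf-algebraic route (grouplike elements have primitive logarithms, by the Friedrichs/Milnor--Moore criterion) rather than via BCH. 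For the explicit terms you derive the ODE $R'=\Phi R$ with $\Phi(t)=\mathrm{Ad}_{e^{-tB}}\bigl(\mathrm{Ad}_{e^{-tA}}B-B\bigr)$ and Picard-iterate; I checked this computation and it correctly recovers $C_2=-\tfrac12[A,B]$ and $C_3=\tfrac16\bigl([A,[A,B]]+2[B,[A,B]]\bigr)$, matching the displayed formula. What your approach buys is a self-contained and fully rigorous derivation that does not presuppose the convergence or Lie-series form of the BCH expansion, plus an algorithmic way (the ODE) to generate arbitrarily many terms; what the paper's sketch buys is brevity, since it recycles \cref{thm:BCH} as a black box. Both are legitimate; yours is closer to a complete proof than the paper's own sketch.
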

\begin{proof}(Sketch.)
     By the BCH formula above, it follows that $e^{-A}e^{A+B}=e^{B+D}$ where $D$ is a Lie polynomial of degree $>1$. Similarly, $e^{-B}e^{B+D} = e^{C_2 + D'}$ where $D'$ is of degree $>2$. Recursive applications of the BCH formula give, $e^{-tA}e^{t(A+B)} = \prod_n e^{t^n C_n}$.
\end{proof}

The evolutions induced on the variable $X$ by a single application of unitaries from the standard gate set $\set{e^{it_1X}, e^{it_2X^2}, e^{i\frac{\pi}{4}(X^2 + P^2)}, e^{it_3X^3}}$, where $t_1,t_2,t_3 \in \mathbb R$, can be seen to be very simple except for the Fourier transform $e^{i\frac{\pi}{4}(X^2 + P^2)}$. In short $X$ commutes with each of $X, X^2$ and $X^3$, so $[X,X]=[X^2,X]=[X^3,X]=0$ and similarly for all high-order commutators. Thus for $k\in\set{1,2,3}$,
\begin{equation}
    X \xmapsto{(t_k, X^k)} X.
\end{equation}
However for the Fourier transform, 
\begin{equation}
[X^2+P^2,X] = [X^2,X] + [P^2,X] = -2iP,
\end{equation}
where we have used the bilinearity and Leibniz rule\footnote{This is a generalized form of the product rule from calculus, 
\(D(xy)=D(x)y + xD(y)\). A linear map on an algebra of operators that satisfies the Leibniz rule is called an \textbf{algebraic derivation}. Here, the function that acts as a derivation is the adjoint map defined by $ad_X:=[X,\cdot]$ and $ad_P:=[P,\cdot]$.} for the commutator. Likewise,
\begin{equation}
[X^2+P^2,P] = 2iX,
\end{equation}
so looking at the Taylor expansion and simplifying,
\begin{align*}
    e^{i\frac{\pi}{4}(X^2+P^2)} X e^{-i\frac{\pi}{4}(X^2+P^2)} &= X \left[1 - \frac{1}{2!}\left(\frac{\pi}{2}\right)^2 + \frac{1}{4!}\left(\frac{\pi}{2}\right)^4 - \cdots \right] + P\left[ \left(\frac{\pi}{2}\right) - \frac{1}{3!}\left(\frac{\pi}{2}\right)^3 + \cdots \right] \\
    &= X\cos(\pi/2) +  P\sin(\pi/2) \\
    &= P.
\end{align*}
The overall transformation is given by
\begin{equation}
    X \xmapsto{(\pi/4, X^2 + P^2)} P.
\end{equation}
So we can conclude that even if we only wish to track the evolution of $X$ we must also track $P$ since their coefficients can effectively be swapped by the Fourier transform. The computation of the commutators of $P$ is straightforward and gives the following transformations, where in each case the pair of terms $(t,O)$ above the arrow indicate the parameter and unitary,
\begin{align*}
    P &\xmapsto{(t_1, X)} P - t_1 I,\\
    P &\xmapsto{(t_2, X^2)} P - 2t_2 X,\\
    P &\xmapsto{(\pi/4, X^2 + P^2)} -X,\\
    P &\xmapsto{(t_3, X^3)} P - 3t_3 X^2.
\end{align*}
As expected, the only non-linear transformation of the position and momentum variables occurs due to the cubic phase gate $X^3$.

From the above calculations note that we can implement a squaring operation in the conjugate variables using just two steps,
\begin{equation}
    X \xmapsto{(X^2+P^2)} P \xmapsto{(c, X^3)} P - 3c X^2,
\end{equation}    
where we set $t_3=c$. Applying the same operations again,
\begin{align}
    P - 3c X^2 \xmapsto{(X^2+P^2)} -X - 3c P^2 &\xmapsto{(c, X^3)} -X - 3c(P-3cX^2)^2  \\
    &= -X - 3cP^2 - (3c)^3X^4 - (3c)^2(PX^2 + X^2P)\\
    &= X (-1+6ci) - (3c)^3X^4 - 3cP^2 + 2(3c)^2 X^2P.
\end{align}

If this operation is applied polynomially many times, we can get terms of single-exponential degree in $X$ and $P$ and coefficients with doubly-exponential magnitude, which is possible when the quantity $c$ has exponential size. This can be achieved using polynomially many Gaussian gates followed by polynomially many `squaring' rounds.

There are a few more commutators of interest, since we want to know how the standard gate set affects higher-order terms. To begin with, we know that $X$ and $P$ act as derivations on the algebra generated by the $X$ and $P$ variables. Thus for fixed $k \in \mathbb N$,
\begin{align}
    [P,X^k] &= k X^{k-1} [P, X] = - ik X^{k-1}, \\
    [X,P^k] &= k P^{k-1} [X, P] = ik P^{k-1}.
\end{align}
We can generalize this to their action on arbitrary monomials of the form $X^mP^n$ by the Leibniz rule,
\begin{align}
    [P, X^mP^n] &= [P,X^m]P^n + X^m[P,P^n] = -imX^{m-1}P^n,\\
    [X, X^mP^n] &= [X,X^m]P^n + X^m[X,P^n] = imX^{m}P^{n-1}.
\end{align}

In the following, we give a simple proof that the action of the $X^n$ gate on a term $P^k$ of arbitrarily high degree, produces only $n$ terms, each of degree $\leq k$. We are interested in the cases $n\in\{2,3\}$. For a more general formula see \cref{app:commutation}. 
\begin{lemma}\label{lemma:CommutatorsInNormalForm} \hfill\break
    \begin{enumerate}[label=(\roman*)]
        \item For fixed $k\geq 2$:
        \begin{equation}
            [X^2, P^k] = i2k XP^{k-1} + k(k-1) P^{k-2}.
        \end{equation}
        \item For fixed $k\geq 3$:
        \begin{equation}
            [X^3, P^k] = i3k XP^{k-1} + k(k-1) P^{k-2}.
        \end{equation}
    \end{enumerate}
\end{lemma}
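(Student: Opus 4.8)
The plan is to prove both identities with one recipe: reduce the commutator of a power of $X$ with $P^k$ to the elementary relation $[X,P^k]=ikP^{k-1}$ proved just above, using the derivation identity $[AB,C]=A[B,C]+[A,C]B$, and then bring the result to normal form (every $X$ to the left of every $P$) by repeatedly invoking $[X,P^m]=imP^{m-1}$. Observe that both claimed expressions have the common shape $ink\,XP^{k-1}+k(k-1)P^{k-2}$, with $n=2$ in part (i) and $n=3$ in part (ii); so the goal in each case is to show that normal-ordering collapses the expansion to exactly these two monomials, with leading coefficient $ink$ and correction $k(k-1)P^{k-2}$.

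For part (i) I would write $[X^2,P^k]=X[X,P^k]+[X,P^k]X=ik\,(XP^{k-1}+P^{k-1}X)$. The single non-normal-ordered piece is $P^{k-1}X$, and since $[P^{k-1},X]=-i(k-1)P^{k-2}$ it equals $XP^{k-1}-i(k-1)P^{k-2}$; substituting and simplifying the powers of $i$ yields the claimed $i2k\,XP^{k-1}+k(k-1)P^{k-2}$. For part (ii) I would apply the same identity once more, $[X^3,P^k]=X[X^2,P^k]+[X,P^k]X^2$, insert the part-(i) formula for $[X^2,P^k]$ and $[X,P^k]=ikP^{k-1}$, and then normal-order the leftover factor $P^{k-1}X^2$ by two successive applications of $[X,P^m]=imP^{m-1}$. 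The aim is to reorganize all of the monomials produced in this step, track the factors of $i$, and collect them into the claimed $i3k\,XP^{k-1}+k(k-1)P^{k-2}$.

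The step I expect to be the main obstacle is this normal-ordering and term-collection in part (ii): each time an $X$ is moved past a power of $P$ it lowers the $P$-degree and spawns a new monomial, so one must account for every intermediate term and verify that, after simplification, the expansion collapses to exactly the two claimed monomials with exactly the stated coefficients, with no residual pieces surviving. To make this step mechanical and reliable rather than error-prone, I would carry the computation out once in the Schr\"odinger (position) representation, writing $X=x$ and $P=-i\,\partial_x$ so that $[X^n,P^k]=(-i)^k[x^n,\partial^k]$, expanding $\partial^k(x^n f)$ by the Leibniz rule (the $j=0$ term cancels against $x^n\partial^k$), and converting back with $\partial^m=i^mP^m$. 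This representation furnishes a clean cross-check on the coefficients and on the final form claimed in each part.
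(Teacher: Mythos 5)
Your plan for part (i) is fine and will go through: writing $[X^2,P^k]=X[X,P^k]+[X,P^k]X$ and normal-ordering the single offending factor via $P^{k-1}X=XP^{k-1}-i(k-1)P^{k-2}$ yields exactly $2ik\,XP^{k-1}+k(k-1)P^{k-2}$. This is equivalent to the paper's telescoping-sum computation, just organized through the derivation identity instead.

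Part (ii) is where your proposal contains a genuine gap, though not the one you anticipate: the final step, ``collect the monomials into the claimed $i3k\,XP^{k-1}+k(k-1)P^{k-2}$,'' cannot be carried out, because that identity is false as stated. Executing your own recipe faithfully gives
\begin{align*}
[X^3,P^k] &= X[X^2,P^k]+[X,P^k]X^2\\
&= X\bigl(2ik\,XP^{k-1}+k(k-1)P^{k-2}\bigr)+ik\,P^{k-1}X^2\\
&= 3ik\,X^2P^{k-1}+3k(k-1)\,XP^{k-2}-ik(k-1)(k-2)\,P^{k-3},
\end{align*}
where the last line uses $P^{k-1}X^2=X^2P^{k-1}-2i(k-1)XP^{k-2}-(k-1)(k-2)P^{k-3}$. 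This trinomial is the correct answer: it agrees with the paper's general commutation formula (\cref{lem:app-commutation}) at $\mu=3$, $\nu=k$, and it is exactly what your proposed Schr\"odinger-picture cross-check produces (the Leibniz expansion of $\partial^k(x^3 f)$ has four terms, so three survive the cancellation). It cannot be massaged into the stated two-term expression; indeed the stated right-hand side is not even degree-consistent, since each contraction $[X,P]=iI$ lowers the total degree by two, so the leading term of $[X^3,P^k]$ must have degree $k+1$ (namely $X^2P^{k-1}$), whereas $XP^{k-1}$ has degree $k$. So the step you flag as ``the main obstacle'' is in fact an impossibility, and what your computation would actually establish is that part (ii) of the lemma is misstated (the paper's own main-text derivation of it is likewise inconsistent in its final lines, and is contradicted by its own appendix); the intended statement should read $[X^3,P^k]=3ik\,X^2P^{k-1}+3k(k-1)\,XP^{k-2}-ik(k-1)(k-2)\,P^{k-3}$.
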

\begin{proof} \hfill\break
    \begin{enumerate}[label=(\roman*)]
        \item We begin by decomposing the commutator as a sum of two terms,
        \begin{align}
            [X^2,P^k] &= [X^2,P]P^{k-1} + P[X^2,P^{k-1}] \label{eqn:1.1} \\
            &= [X^2,P]P^{k-1} + P\left([X^2,P]P^{k-2} + P[X^2,P^{k-2}]\right) \label{eqn:1.2} \\
            &= 2i XP^{k-1} + \sum_{j=1}^{k-1} P^{j}[X^2,P] P^{k-j-1} \\
            &= 2i XP^{k-1} + 2i \sum_{j=1}^{k-1} P^{j}XP^{k-j-1} \\
            &= 2i XP^{k-1} + 2i\sum_{j=1}^{k-1} \left(XP^{k-1} + [P^j,X]P^{k-j-1}\right)  \label{eqn:1.4}\\
            &= 2ki XP^{k-1} + k(k-1) P^{k-2}.
        \end{align}
        The second term of Eq.~\eqref{eqn:1.1} is decomposed in the same way, and converted into a telescopic sum. In line Eq.~\eqref{eqn:1.4}, we use the commutator relation $P^jX=[P^j,X]+XP^j$. The result follows in the last line by computing elementary commutators and collecting terms.

        \item We could use an analogous method to part (i), but it is easier to use the previous result directly.
        \begin{align}
            [X^3,P^k] &= X^2[X,P^k] + [X^2,P^k]X \\
            &= X^2 \left(ikP^{k-1} \right) + \left(2ki XP^{k-1} + k(k-1) P^{k-2} \right) X \\
            &= (ki) X^2P^{k-1} + (2ki) XP^{k-1}X + k(k-1)P^{k-2}X \\
            &= (3ki) XP^{k-1} + (2ki) \left([P^{k-1},X] + XP^{k-1} \right).
        \end{align}
    \end{enumerate}
\end{proof}

\noindent One can generalize such commutation relations as provided in \cref{lem:app-commutation}.

In some of the calculations, notice that we have implicitly been putting putting operator polynomials in a standard form, which we will now treat more formally. In the literature of Quantum Field Theory, an operator $ O$ which is an arbitrary product of creation and annihilation operators, is said to be in its normal ordered form, commonly written as $: O:$ or $\mathcal N( O)$, when all creation operators appear to the left of all annihilation operators. Note that this notion of normal ordering only makes sense for products of operators, but it is not a linear functional on the space of operators, so it  doesn't make sense to talk about the normal ordering of a sum of operators. 

We can however define a similar, linearizable notion for polynomials in the operators $X$ and $P$.
\begin{definition}[Normal form]\label{def:normalForm}
    A Hamiltonian operator $H$ which is a polynomial of degree $d$ in the operators $\set{X,P}$ is said to be in \textbf{normal form} if it is written as a degree-ordered sum of monomials $X^mP^n$,
    \begin{equation}
    H=\sum_{m+n\leq d} h_{m,n} X^m P^n.
    \end{equation}
\end{definition}

It would be convenient if we could show that all the applicable transformations of the position and momentum operators under the standard gate set described before can be put in normal form. This would allow us to more easily track the evolution of the operators under the action of the gates. Recall that the binomial theorem gives the coefficients of any power of a sum for any formal variables $X$ and $Y$ that commute,
\begin{equation}
    (X+Y)^n = \sum_{k=0}^n \binom{n}{k} X^k Y^{n-k}.
\end{equation}
The difficulty is that the position and momentum operator algebra is inherently non-commutative with the defining relation $[X,P]=iI$. However it turns out that a generalization of the binomial theorem for non-commutative operators can be derived from either the recurrence relationships between repeated commutators \cite{wyss2017two}, or by comparing coefficients in the Zassenhaus formula \cite{zaimi2011binomial}. We describe Wyss' approach below.

Define the $n$-th binomial operator $B_n(X,Y)$ for $X,Y \in \mathcal A$ and $n \in \set{1,2,3,\cdots}$ by,
\begin{equation}
  B_n(X,Y) := \sum_{k=0}^n \binom{n}{k} X^k Y^{n-k},  
\end{equation}
where $\mathcal A$ is an associative algebra with unit $I$. Then for all choices of $X,Y \in \mathcal A$, possibly non-commuting, we can show that $B_n$ has the following properties,
\begin{lemma}[\cite{wyss2017two}] \hfill
    \begin{enumerate}
        \item \( B_0=I,\quad B_1=X+Y, \)
        \item \( B_{n+1} = B_1B_n - [Y, B_n], \)
        \item \( B_{1}^n = B_n + \sum_{k=0}^{n-2} B_1^k [Y, B_{n-1-k}].\)
    \end{enumerate}
\end{lemma}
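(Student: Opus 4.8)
The plan is to establish the three claims in sequence, treating (1) and (2) as direct computations and then bootstrapping (3) from (2) by induction. Property (1) is immediate from the definition of $B_n$: setting $n=0$ gives $B_0=\binom{0}{0}X^0Y^0=I$, and $n=1$ gives $B_1=\binom{1}{0}Y+\binom{1}{1}X=X+Y$.

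For property (2), I would expand $B_1B_n=(X+Y)B_n=XB_n+YB_n$ directly. The term $XB_n$ merely shifts the power of $X$ up by one, yielding $\sum_j\binom{n}{j-1}X^jY^{n+1-j}$ after reindexing. The commutator correction comes entirely from $YB_n$: pushing $Y$ past each power $X^k$ via $YX^k=X^kY+[Y,X^k]$ splits $YB_n$ into a ``clean'' piece $\sum_k\binom{n}{k}X^kY^{n+1-k}$ plus a remainder $\sum_k\binom{n}{k}[Y,X^k]Y^{n-k}$. Since $Y$ commutes with $Y^{n-k}$, one has $[Y,X^kY^{n-k}]=[Y,X^k]Y^{n-k}$, so the remainder is exactly $[Y,B_n]$; subtracting it removes this piece. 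Combining the two clean pieces and applying Pascal's rule $\binom{n}{j-1}+\binom{n}{j}=\binom{n+1}{j}$ (with the usual boundary conventions $\binom{n}{-1}=\binom{n}{n+1}=0$) then collapses the sum to $\sum_j\binom{n+1}{j}X^jY^{n+1-j}=B_{n+1}$.

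For property (3), I would induct on $n$, using (2) at each step. The base cases are direct: for $n=1$ the sum is empty and both sides equal $B_1$, and for $n=2$ the claim reduces to $B_1^2-B_2=[Y,B_1]$, which is checked by a one-line expansion. For the inductive step I would write $B_1^{n+1}=B_1\cdot B_1^n$, substitute the inductive hypothesis, and use (2) to replace $B_1B_n$ by $B_{n+1}+[Y,B_n]$. This gives $B_1^{n+1}=B_{n+1}+[Y,B_n]+\sum_{k=0}^{n-2}B_1^{k+1}[Y,B_{n-1-k}]$. Shifting the summation index $k\mapsto k+1$ turns the sum into $\sum_{j=1}^{n-1}B_1^{j}[Y,B_{n-j}]$, and the freshly produced term $[Y,B_n]=B_1^0[Y,B_{n-0}]$ supplies precisely the missing $j=0$ term, so the whole expression reassembles into $B_{n+1}+\sum_{j=0}^{n-1}B_1^{j}[Y,B_{(n+1)-1-j}]$, which is the claimed identity at level $n+1$.

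I expect the main obstacle to be purely bookkeeping rather than conceptual: the delicate point in both (2) and (3) is to ensure the reindexing aligns correctly, so that the commutator $[Y,B_n]$ generated by (2) fuses seamlessly with the shifted sum (matching $B_{(n+1)-1-j}=B_{n-j}$). No analytic subtleties of unbounded operators intervene here, since $X$ and $Y$ are only assumed to lie in an associative unital algebra and every manipulation is formal, relying solely on bilinearity of the commutator and on $Y$ commuting with its own powers.
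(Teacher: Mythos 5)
Your proof is correct. Note that the paper itself offers no proof of this lemma: it is stated as a citation to Wyss's note [wyss2017two], so there is no in-paper argument to compare against. Your derivation is a valid, self-contained verification: part (1) is immediate from the definition of $B_n$; part (2) correctly isolates the commutator correction by writing $YX^k = X^kY + [Y,X^k]$ and observing that $[Y,B_n]=\sum_k\binom{n}{k}[Y,X^k]Y^{n-k}$ because $Y$ commutes with its own powers, after which Pascal's rule reassembles $B_{n+1}$; and part (3) follows by a clean induction in which the $[Y,B_n]$ term produced by (2) supplies exactly the missing $j=0$ summand after reindexing. The base cases ($n=1$ empty sum, $n=2$ checked directly) are handled, and all manipulations are purely algebraic in a unital associative algebra, so no domain issues arise. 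This is essentially the argument one would expect from the cited source, and it could be included in the paper as a proof if self-containment were desired.
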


Using $B_1 = X+Y$ leads to:

\begin{theorem}[\cite{wyss2017two}]\label{thm:WyssBinomialThm}
    
    \begin{equation}
        (X+Y)^n = B_n(X,Y) + \sum_{k=0}^{n-2}(X+Y)^k ad_Y (B_{n-1-k}(X,Y)).
    \end{equation}
Here for an operator $C$ we defined $ad_Y(C) := [Y, C]$. 
\end{theorem}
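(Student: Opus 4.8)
The plan is to observe first that the asserted identity is nothing but part (3) of the preceding Lemma of Wyss read under the substitution $B_1 = X+Y$, together with the notational identification $ad_Y(C) = [Y,C]$; since $B_1^k = (X+Y)^k$, the two statements are literally the same. So if one is content to invoke the Lemma, the theorem is immediate. To make the derivation self-contained, however, I would instead prove part (3) directly, which is where the actual content lies.

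First I would establish the commutator-free recurrence $B_{n+1} = X B_n + B_n Y$. This follows by rearranging $B_1 B_n - [Y,B_n] = (X B_n + Y B_n) - (Y B_n - B_n Y) = X B_n + B_n Y$, and the key structural point is that $X$ always remains on the left and $Y$ on the right, so no commutation of $X$ past $Y$ is ever invoked. Plugging in $B_n = \sum_k \binom{n}{k} X^k Y^{n-k}$, the summand $X B_n$ raises the $X$-exponent by one and $B_n Y$ raises the $Y$-exponent by one; collecting the coefficient of $X^j Y^{n+1-j}$ and applying Pascal's rule $\binom{n}{j-1}+\binom{n}{j}=\binom{n+1}{j}$ (with $j=0$ and $j=n+1$ treated as boundary cases) reproduces $B_{n+1}$, which gives properties (1) and (2).

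Next I would prove property (3) by induction on $n$, with base cases $n=1$ (empty sum, $B_1=B_1$) and $n=2$ ($B_1^2 = B_2 + [Y,B_1]$) checked by hand. For the inductive step I would left-multiply the hypothesis by $B_1$ to get $B_1^{n+1} = B_1 B_n + \sum_{k=0}^{n-2} B_1^{k+1}[Y,B_{n-1-k}]$, then substitute $B_1 B_n = B_{n+1} + [Y,B_n]$ from (2). Re-indexing the shifted sum by $j=k+1$ and absorbing the stray term $[Y,B_n]$ as its $j=0$ entry collapses everything into $B_{n+1} + \sum_{j=0}^{n-1} B_1^j [Y,B_{n-j}]$, which is exactly the order-$(n+1)$ instance of (3), i.e.\ the theorem. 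I do not expect a genuine obstacle, as this is a clean induction; the only point needing care is to honour the non-commutativity of $X$ and $Y$ throughout, since the identity must hold in an arbitrary associative algebra and the whole content is precisely that the naive expansion $B_n$ is corrected by the telescoping commutator terms rather than by any algebraic relation between $X$ and $Y$.
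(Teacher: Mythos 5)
Your proposal is correct and takes essentially the same route as the paper: the paper likewise obtains the theorem by reading part (3) of the preceding lemma under the substitution $B_1 = X+Y$, citing Wyss for the lemma itself. Your additional self-contained induction proving part (3) (via the rewriting $B_1 B_n - [Y,B_n] = XB_n + B_nY$, Pascal's rule, and the telescoping re-indexing) is correct and simply fills in the step the paper delegates to the citation.
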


Let $C:=[Y,X]=[Y,B_1(X,Y)]$. If we also know that $[X,C]=[Y,C]=0$ then we can improve the above result as per \cite[Section 7]{wyss2017two} to the closed form,
\begin{corollary}[\cite{wyss2017two}]\label{cor:closedFormBinomialExpansion}
    \begin{equation}
        (X+Y)^n = \sum_{k=0}^{\lfloor \frac{n}{2} \rfloor} \frac{n!}{(n-2k)!k!2^k} B_{n-2k}(X,Y) [Y, X]^k,
    \end{equation}
    where $[X,C]=[Y,C]=0$.
\end{corollary}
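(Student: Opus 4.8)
The plan is to exploit the hypothesis that $C:=[Y,X]$ is central (it commutes with both $X$ and $Y$) to collapse the generic recurrences for the binomial operators $B_n$ into a single clean three-term recurrence, and then to invert that recurrence by induction. First I would compute $[Y,B_n]$ under centrality. Since $C$ commutes with $X$ one has $[Y,X^j]=jC\,X^{j-1}$, and because $[Y,Y^{n-j}]=0$ the commutator acts only on the $X$-part of each monomial of $B_n=\sum_{j=0}^n\binom nj X^jY^{n-j}$. Using the absorption identity $j\binom nj=n\binom{n-1}{j-1}$ and reindexing $j\mapsto j-1$, the sum telescopes to
\begin{equation}
[Y,B_n]=nC\,B_{n-1}.
\end{equation}
Substituting this into the recurrence $B_{n+1}=B_1B_n-[Y,B_n]$ from the preceding lemma yields the Hermite-type three-term recurrence
\begin{equation}
B_1B_n=B_{n+1}+nC\,B_{n-1},\qquad B_0=I,\quad B_1=X+Y.
\end{equation}

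Next I would recognize that expressing the power $B_1^{\,n}=(X+Y)^n$ in the basis $\{B_m\}$ is exactly the classical expansion of a monomial into (probabilist's) Hermite polynomials, with $C$ playing the role of the Hermite parameter, and prove the claimed identity by induction on $n$. The base cases $n=0,1$ are immediate from $B_0=I$ and $B_1=X+Y$. For the inductive step I would multiply the expansion for $B_1^{\,n}$ on the left by $B_1$ (using that $C$ is central, so $C^k$ commutes freely with $B_1$) and apply the three-term recurrence to each product $B_1B_{n-2k}$. This produces one family of terms $B_{(n+1)-2k}$ and a second family $B_{(n+1)-2(k+1)}$ carrying an extra factor of $C$; collecting the coefficient of $B_{(n+1)-2k}\,C^k$ reduces the claim to the single numerical identity
\begin{equation}
a_{n,k}+(n-2k+2)\,a_{n,k-1}=a_{n+1,k},\qquad a_{n,k}:=\frac{n!}{(n-2k)!\,k!\,2^k},
\end{equation}
which I would verify by clearing denominators to the common form $(n-2k+1)!\,k!$, after which the bracketed sum simplifies to $\tfrac{n+1}{(n-2k+1)!\,k!}$ and reproduces $a_{n+1,k}$ exactly.

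The step I expect to require the most care is not any single calculation but the bookkeeping at the boundaries of the summation: one must adopt the convention that $a_{n,k}=0$ whenever $k<0$ or $k>\lfloor n/2\rfloor$ (equivalently, that $1/(\text{negative integer})!=0$) and check that the top term of the expansion---namely $B_0C^{m}$ when $n=2m$ and $B_1C^{m}$ when $n=2m+1$---receives precisely the contributions the recurrence dictates, with the second family contributing nothing beyond its range. The only genuinely structural input is the centrality of $C$, which is exactly what upgrades the generic (non-closed) expansion of \cref{thm:WyssBinomialThm} into the stated closed form; in the Heisenberg setting of interest $C=[Y,X]$ is a scalar multiple of the identity, so the hypothesis $[X,C]=[Y,C]=0$ holds automatically and the corollary applies directly to the gate-set calculations.
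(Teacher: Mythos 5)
Your proof is correct. The paper itself gives no argument for this corollary---it simply imports it from \cite[Section 7]{wyss2017two} after stating the recurrences $B_{n+1}=B_1B_n-[Y,B_n]$---so there is no in-paper proof to compare against; your derivation is a valid self-contained reconstruction of the standard one. The two key steps both check out: centrality of $C$ gives $[Y,X^j]=jCX^{j-1}$ and hence $[Y,B_n]=nC\,B_{n-1}$, which turns the paper's recurrence into the Hermite-type relation $B_1B_n=B_{n+1}+nC\,B_{n-1}$, and the coefficient identity $a_{n,k}+(n-2k+2)a_{n,k-1}=a_{n+1,k}$ for $a_{n,k}=\tfrac{n!}{(n-2k)!\,k!\,2^k}$ verifies directly over the common denominator $(n-2k+1)!\,k!\,2^k$, where the numerators $(n-2k+1)+2k=n+1$ combine as needed. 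Your boundary conventions ($a_{n,k}=0$ outside $0\le k\le\lfloor n/2\rfloor$, and $B_1B_0=B_1$ from the $n=0$ case of the recurrence) are exactly what is needed to make the top term close correctly in both parities of $n$. The only cosmetic remark is that in the inductive step $B_1\cdot a_{n,k}B_{n-2k}C^k$ the factor $C^k$ already sits to the right, so centrality is not even needed at that particular point---it is needed only to establish $[Y,X^j]=jCX^{j-1}$ and to justify treating $C$ as a scalar throughout, which you state correctly.
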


Hereafter, the continuous variables we are concerned with will be those associated with the unbounded conjugate operators $X$ and $P$, defined in \cref{sec:CVprocesses}.
Throughout this work, we restrict to a dense subspace of the Hilbert space known as the Schwartz space, on which exponentials of many polynomials in the position and momentum operators are well-defined through the spectral theorem \cite{schwartz1947theorie,hall2013quantum,schmudgen2020invitation}. We still denote that space as $\mathcal H$ for simplicity, and we also denote by by $\mathcal P$ the set of essentially self-adjoint operators over Schwartz space that are polynomials in the position and momentum bosonic operators, by $\mathcal P_k$ its subset of $k$-local operators, and by $\mathcal P_{k,d}$ its subset of $k$-local operators of degree $d$.

With this preliminary material introduced, we first study in \cref{sec:GDC} the complexity of bosonic computations involving only quadratic Hamiltonians, i.e., Gaussian computations, and we turn to the case of general polynomial Hamiltonians in \cref{sec:CVBQP}. Finally, we study bosonic ground state problems from a complexity0-theoretic point of view in \cref{sec:gsp}.

\section{The power of Gaussian bosonic computations}
\label{sec:GDC}

It is well-know that simulating Gaussian bosonic dynamics can be performed in polynomial time \cite{bartlett2002efficient}, which is a continuous-variable analog of Gottesman--Knill's theorem for simulating Clifford circuits. 
Moreover, recent work shows that simulating a specific class of Gaussian computations, which involves highly non-local beam-splitter gates together with input coherent states, over exponentially many modes is \textbf{BQP}-complete \cite{barthe2024gate}.
However, it remains unclear what the exact power of general Gaussian computations over polynomially many modes is.

In this section, we define a complexity class of \textit{Gaussian dynamical computations} based on problems that can be solved using Gaussian states, Gaussian evolution, and Gaussian measurements, with appropriately bounded energy, and show that it is equal to $\mathbf{BQL}$ (\cref{thm:GDC=BQL}). Recall that $\mathbf{BQL}$ refers to bounded-error quantum logspace computations, which are problems decidable with a logspace uniform family of quantum circuits acting on logarithmically many qubits.

\subsection{Gaussian dynamical computations}
\label{sec:GDCdef}

In what follows, to express quadratic Hamiltonians in a short form, we employ the following convention
\begin{align}
\widehat r := \begin{pmatrix}
X_1&
\cdots&
X_m&
P_1&
\cdots&
P_m
\end{pmatrix}^T,
\end{align}
and we also use $\mathbf d. \widehat r := \sum_{i=1}^{2m} d^{(x)}_{i} X_i + d^{(p)}_{i} P_i$ for $\mathbf d\in\mathbb R^{2m}$, and $\widehat r^T \mathbf M \widehat r:= \sum_{i,j} M_{ij}^{(xx)} X_i X_j + M_{ij}^{(pp)} P_i P_j + M_{ij}^{(xp)} X_iP_j$, for $\mathbf M\in\mathbb R^{2m\times 2m}$. Note that we can always pick $\mathbf M$ to be a real symmetric matrix. We refer to $g=(\Bf M,\Bf d, t)$ as a `Gaussian gate,' which represents the Gaussian unitary $\exp(-it(\widehat{r}^T \Bf M\widehat{r}+\Bf d\cdot \widehat{r}))$. In what follows, we introduce a complexity class based on Gaussian bosonic evolutions, which we name \textit{Gaussian dynamical computation} ($\mathbf{GDC}$). First, in the following definition, we fix some language to clarify our definition of $\mathbf{GDC}$.

\begin{definition}\label{def:Gaussian-circuit}
A `Gaussian circuit' $\mathcal C$ is determined by a set of $m\in\mathbb N$ ordered triples $\mathcal C:=(\Bf M_i, \Bf d_i, t_i)_{i=1}^m$, where each $(\mathbf M_i,\Bf d_i)$ defines a Hamiltonian according to $\widehat{r}^T \Bf M_i \widehat{r}+\Bf d_i\cdot \widehat{r}$ over $n\in\mathbb N$ modes applied for time $t_i>0$. We say that the `total evolution time' is $\tau = \sum_{i=1}^n t_i$, and `the output distribution' is the distribution obtained by measuring the first mode in the position basis. Formally, we have that for the solution $\psi$ to the following
\begin{align}
\frac{\mathrm d}{\mathrm dt}\ket{\psi(t)} = -iH_j\ket{\psi(t)}, \quad \text{for } t\in\left[\sum_{k<j} t_k, \sum_{k\leq j} t_k\right], \quad \forall j\in[n]
\end{align}
initialized at vacuum, $\ket{\psi(0)} =\ket{0^m}$, the output distribution $\mathcal P$ is determined by the probability density function $\left|(\langle x|\otimes \mathbb I^{n-1})|\psi(\tau)\rangle\right|^2$ for $x\in\mathbb R$. We say $Z\sim\mathcal P$ is a `sample' from $\mathcal C$. Moreover, we say that the `energy bound' of the evolution is 
\begin{align}\label{eq:energybounddef}
E^\ast = \sup_{t\in[0,\tau]} \langle \psi(t) | \sum_{i=1}^n N_i | \psi(t)\rangle.
\end{align}
\end{definition}

A schematic of a Gaussian circuit is provided in \cref{fig:Gaussian-circuit}. Note that the output distribution $\mathcal P$ is always a Gaussian (this can be readily seen from the Wigner representation). Note that the choice to measure the first mode is arbitrary as the swap gate is a passive Gaussian unitary (i.e., one can change the definition to allow for measuring an arbitrary mode without changing the results we obtain in this section). We are now ready to define a complexity class based on Gaussian computations.

\begin{definition}\label{def:GBC}
A language $L\subseteq\{0,1\}^*$ is in $\mathbf{GDC}$ (Gaussian Dynamical Computation) if there exists a log-space Turing machine $M$ that on input $x$, prints out a description of a Gaussian circuit $(\Bf M_i, \Bf d_i, t_i)_{i=1}^m$ defined over $n=O(\mathsf{poly}(|x|))$ modes with a total number of $m=O(\mathsf{poly}(|x|))$ evolutions, where each entry of $\Bf M_i, \Bf d_i$ is bounded by $1$ and $\sum_{i=1}^n t_i = O(\mathsf{poly}(|x|))$, the Gaussian circuit has an energy bound $E^\ast = O( \mathsf{poly}(|x|))$, and considering $\mathcal P$ as the output distribution of the first mode of the circuit in position basis we have
\begin{itemize}
    \item (YES case): $x\in L$ if $\underset{Z\sim \mathcal P}{\mathbb P}[Z\geq b]\geq \frac23$,
    \item (NO case): $x\notin L$ if $\underset{Z\sim \mathcal P}{\mathbb P}[Z\leq a] \geq \frac23$,
\end{itemize}
for given $a,b$ with the promise $b-a\geq\Omega(\frac{1}{\mathsf{poly(|x|)}})$. 
\end{definition}

We also consider a canonical problem related to simulation of Gaussian dynamics defined below.

\begin{definition}[Gaussian simulation]\label{def:Gaussian-simulation}
Consider a Gaussian circuit $(\Bf M_i,\Bf d_i, t_i)_{i=1}^m$ over $n$ modes with the promise that entries of $\Bf M_i,\Bf d_i$ are upper bounded by $1$, total evolution time is $\tau = O(\mathsf{poly}(n,m))$, and $E^\ast = O( \mathsf{poly}(n,m))$. The problem of $\mathsf{GausSim}$ is to decide between the following cases
\begin{itemize}
    \item (YES case): $\underset{Z\sim \mathcal P}{\mathbb E}[Z] \geq b$
    \item (NO case): $\underset{Z\sim \mathcal P}{\mathbb E}[Z] \leq a$
\end{itemize}
given $a,b$ with the promise $b-a\geq\frac{1}{\mathsf{poly}(n,m)}$. 
\end{definition}

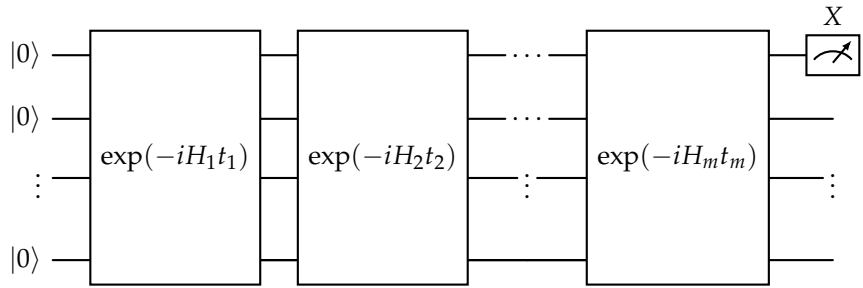
\begin{figure}[h]
\centering
\begin{quantikz}
    \lstick{$\ket{0}$} & \gate[wires=4]{\exp(-i H_1 t_1)} & \gate[wires=4]{\exp(-i H_2 t_2)} & \ \dots\ & \gate[wires=4]{\exp(-i H_m t_m)} & \meter{X}\\
    \lstick{$\ket{0}$} & & & \ \dots \ & &  \\
    \lstick{$\vdots$} &  &\ \vdots \ & \ \vdots\ & &\ \vdots \ \\
    \lstick{$\ket{0}$} & & \qw & \qw & \qw &
\end{quantikz}
\caption{A schematic of an instance of evolution for the problem of Gaussian evolution described in \cref{def:Gaussian-simulation}. We allow multiple Hamiltonians, and measurements in the end, starting from the vacuum state $\ket{0^m}$. Note that we require the energy of the computation to be bounded at all times by $E^\ast = O(\mathsf{poly}(m))$, and that the total evolution time is bounded by a polynomial, while Hamiltonian coefficients are held constant.}
\label{fig:Gaussian-circuit}
\end{figure}

Note that by Bloch-Messiah decomposition of pure states, a dynamics generated by a sequence of Hamiltonians $H_1, H_2, \cdots, H_n$ with $\ket{0^m}$ has an equivalent dynamics generated by a constant number of Hamiltonians. Nonetheless, we allow polynomially many Hamiltonians. 

Let us first introduce a gadget which will be very useful in our technical analysis. The approach shows how we can amplify the gap $b-a$ (both for problems in $\mathbf{GDC}$ and any instance of $\mathsf{GausSim}$) by any polynomial, at the cost of a polynomial overhead in the number of modes, gates, and energy bound.

\begin{proposition}\label{prop:sample-mean}
Let $(\mathcal C^{(k)})_{k=1}^{2^r}$ be $2^r$ Gaussian circuits. Let $Z_1,\cdots,Z_{2^r}$ be samples from $(C^{(k)})_{k=1}^{2^r}$. We can put together these circuits and combine them with $2^{r}-1$ many two-mode Gaussian gates that do not change the energy of the state, such that the combined Gaussian circuit computes the (scaled) sample mean
\begin{align}\label{eq:average-Gaussian}
\overline{Z} = \frac{1}{2^{r/2}}(Z_1+\cdots+Z_{2^r}).
\end{align}
In short, we can compute the sample mean of Gaussian circuits (up to an overall scaling) without increasing energy.
\end{proposition}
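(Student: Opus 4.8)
The plan is to run the $2^r$ circuits in parallel on disjoint sets of modes and then merge their designated output modes (the first mode of each circuit) using a balanced binary tree of $50{:}50$ beam splitters, reading the single surviving mode in the position basis. Recall from \cref{eq:beam-splitter} that a beam splitter is a passive linear operator; in the Heisenberg picture it acts on the position operators of the two modes $j,k$ it connects by $X_j\mapsto\tfrac{1}{\sqrt2}(X_j+X_k)$ and $X_k\mapsto\tfrac{1}{\sqrt2}(-X_j+X_k)$, with the momenta transforming identically. Thus measuring the additive output port samples from the distribution of $\tfrac{1}{\sqrt2}(X_j+X_k)$, and (up to phase rotations, which are themselves passive) we may always keep the port carrying the $+$ combination.

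Iterating this over the $r$ levels of the tree, the position operator of the final surviving mode becomes $\tfrac{1}{2^{r/2}}\sum_{k=1}^{2^r}X^{(k)}$, where $X^{(k)}$ denotes the position operator of the output mode of circuit $\mathcal C^{(k)}$: each level contributes one factor $\tfrac{1}{\sqrt2}$ and doubles the number of summed terms, so after $r$ levels all $2^r$ terms are collected with common coefficient $2^{-r/2}$. A balanced binary tree merging $2^r$ leaves into one root has exactly $2^r-1$ internal nodes, hence uses $2^r-1$ beam splitters, matching the claim. Because the circuits act on disjoint modes, the global state just before the tree is a product state across circuits, so the $X^{(k)}$ are distributed independently as the samples $Z_k$, and the final position measurement returns a sample of $\tfrac{1}{2^{r/2}}\sum_k X^{(k)}$, i.e.\ precisely $\overline Z=\tfrac{1}{2^{r/2}}(Z_1+\cdots+Z_{2^r})$.

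It then remains to verify the energy claim. Beam splitters, being passive linear operators, leave the total number operator $\sum_i N_i$ invariant (they act as a unitary on the vector of annihilation operators and preserve $\sum_i a_i^\dagger a_i$), so appending the tree after the parallel evolution does not change the total average particle number at any point of this final stage. Consequently the energy bound of the combined circuit is attained during the parallel evolution and equals the sum of the individual bounds, with the merging stage contributing nothing. This also explains conceptually why the output is the mean \emph{scaled} by $2^{-r/2}$ rather than the true mean $2^{-r}\sum_k Z_k$: a passive (energy-preserving) transformation acts as an isometry on the coefficient vector of the output position operator, and the all-ones vector of length $2^r$ has norm $2^{r/2}$; producing the unscaled mean would require a non-normalized combination realizable only with squeezing or loss, which would change the energy.

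The only content beyond bookkeeping is this passivity constraint: one must confirm that the merging gates realizing the uniform combination can be chosen inside the photon-number-preserving subgroup, which is exactly what the beam-splitter tree accomplishes, and one must track signs across the $r$ levels (correctable by inserting passive phase rotations) so that every $Z_k$ enters with a common sign. I expect these to be the only points requiring care, while the gate count $2^r-1$ and the scaling factor $2^{-r/2}$ follow immediately from the tree structure.
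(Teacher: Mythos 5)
Your proof is correct and follows essentially the same route as the paper's: a recursive binary tree of $50{:}50$ beam splitters applied to the output modes, with the Heisenberg-picture action $X_1\mapsto\frac{1}{\sqrt2}(X_1+X_2)$ yielding the scaled sum and passivity guaranteeing energy invariance. Your additional remarks on the $2^r-1$ gate count, sign bookkeeping, and why only the $2^{-r/2}$-scaled mean is achievable passively are consistent with (and slightly more detailed than) the paper's argument.
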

\begin{proof}
The proof relies on using a 50:50 beam splitter \eqref{eq:beam-splitter}. Recall that the beam splitter does the transformation $X_1 \mapsto \frac{1}{\sqrt 2} (X_1+X_2)$. Note that if we have two independent states (i.e., the reduced density over the two is a product state) at the input of the beam splitter, each producing samples $Z_1$ and $Z_2$, then, upon interacting through the beam splitter, we get a sample that has the same distribution as $\frac{1}{\sqrt 2}(Z_1+Z_2)$. We can apply this method recursively to achieve \eqref{eq:average-Gaussian}. This is simply achieved by grouping circuits in pairs and apply a $50:50$ beam splitter between each pair in recursion. Please refer to \cref{fig:Gaussian-average} for a diagramatic explanation of the approach.
\end{proof}

\begin{figure}
\begin{quantikz}
\lstick{$\ket{0}$} & \gate[wires=3]{\mathcal C^{(1)}} & & \ctrl{3} & & \ctrl{6} & \meter{X} & \setwiretype{c} \ Z = \frac{Z_1+\cdots +Z_4}{\sqrt{4}}\\
\lstick{$\vdots$} & &\ \vdots \\
\lstick{$\ket{0}$} & & \\
\lstick{$\ket{0}$} & \gate[wires=3]{\mathcal C^{(2)}} & & \targX & & \\
\lstick{$\vdots$} & &\ \vdots \\
    \lstick{$\ket{0}$} & &\\
\lstick{$\ket{0}$} & \gate[wires=3]{\mathcal C^{(3)}} & & \ctrl{3} & & \targX & &\\
\lstick{$\vdots$} & &\ \vdots \\
\lstick{$\ket{0}$} & & \\
\lstick{$\ket{0}$} & \gate[wires=3]{\mathcal C^{(4)}} & & \targX & & \\
    \lstick{$\vdots$} & &\ \vdots \ \\
    \lstick{$\ket{0}$} & &
\end{quantikz}
\begin{quantikz}
& \ctrl{1} &\\
& \targX & &
\end{quantikz}
= 
\text{50:50 beam splitter}
\caption{An example of the circuit for computing sample mean of many Gaussian circuits. Here we have shown an example where there are $4$ Gaussian circuits $\mathcal C^{(1)}, \cdots \mathcal C^{(4)}$, however, as explained in the proof of \cref{prop:sample-mean} this approach can be extended to many circuits. We highlight that in case of having $2^r$ initial circuits, we require $2^{r}-1$ many beam splitters. Note that the circuit composes of beam splitters only, and therefore, leaves the energy invariant. Moreover, the 50:50 beam splitter is defined as in \eqref{eq:beam-splitter}, where the transformation in the Heisenberg picture follows $X_1\mapsto \frac{1}{\sqrt 2}(X_1 + X_2)$ and $X_2 \mapsto \frac{1}{\sqrt 2}(-X_1 + X_2)$. Therefore, we note that the average (up to scaling) is computed on the top mode.}
\label{fig:Gaussian-average}
\end{figure}
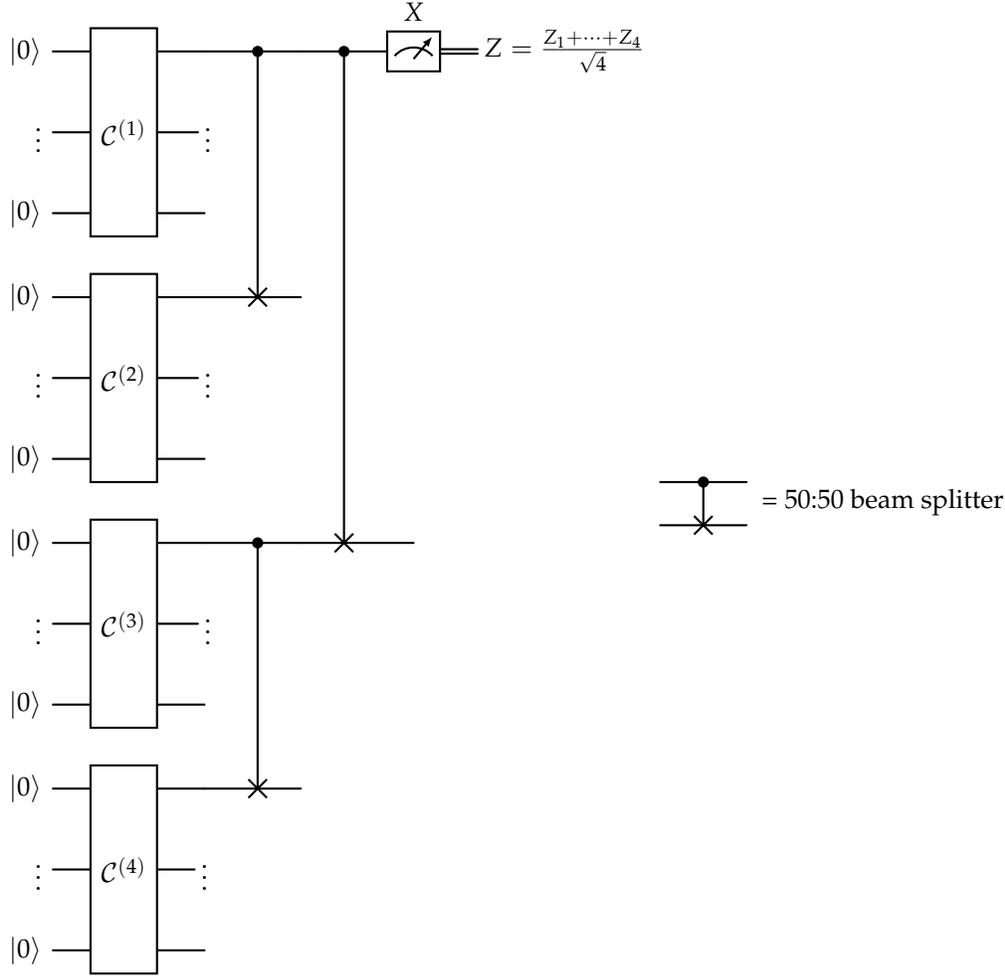

Since by \cref{prop:sample-mean} we can compute sample means in our $\mathbf{GDC}$ model, we can show an equivalence between $\mathbf{GDC}$ and the $\mathsf{GausSim}$ problem. This is formally established in the following proposition.

\begin{proposition}
$\mathsf{GausSim}$ is $\mathbf{GDC}$-complete.
\end{proposition}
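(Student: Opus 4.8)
The plan is to establish both halves of completeness with respect to logspace reductions: the membership $\mathsf{GausSim}\in\mathbf{GDC}$ and the $\mathbf{GDC}$-hardness of $\mathsf{GausSim}$. The single structural fact driving both directions is that, as already noted after \cref{def:Gaussian-circuit}, the output distribution $\mathcal P$ of any Gaussian circuit is a univariate Gaussian; write $\mu=\mathbb E_{Z\sim\mathcal P}[Z]$ for its mean and $\sigma^2$ for its variance. The energy promise $E^\ast=O(\mathsf{poly})$ controls both: from $\langle N_1\rangle=\tfrac12(\langle X_1^2\rangle+\langle P_1^2\rangle-1)\le E^\ast$ we get $\sigma^2\le\langle X_1^2\rangle\le 2E^\ast+1=O(\mathsf{poly})$, and likewise $|\mu|=O(\mathsf{poly})$.

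For hardness, the reduction is essentially a pass-through. Given $L\in\mathbf{GDC}$ with its logspace machine $M$, on input $x$ we run $M$ to obtain the Gaussian circuit $\mathcal C_x$ and output the $\mathsf{GausSim}$ instance $(\mathcal C_x,a,b)$ with the same thresholds; composing logspace maps stays logspace. The point is that the $2/3$-probability thresholds of $\mathbf{GDC}$ already pin down the mean of a Gaussian: since $\mathbb P[Z\ge b]\ge\tfrac23>\tfrac12$ forces $\mu>b$ (and $\mathbb P[Z\le a]\ge\tfrac23$ forces $\mu<a$), the $\mathbf{GDC}$ YES/NO cases map exactly onto $\mathbb E[Z]\ge b$ and $\mathbb E[Z]\le a$, so the $\mathsf{GausSim}$ answer agrees and the gap $b-a=\Omega(1/\mathsf{poly})$ is preserved.

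For membership we amplify the $1/\mathsf{poly}$ gap in the expectation into a bounded-error gap in the acceptance probability using \cref{prop:sample-mean}. Given a $\mathsf{GausSim}$ instance, take $2^r$ identical independent copies of $\mathcal C$ on disjoint modes and combine them through the beam-splitter tree of \cref{prop:sample-mean} to produce the scaled sample mean $\overline Z=2^{-r/2}(Z_1+\cdots+Z_{2^r})$. Crucially, the $2^{-r/2}$ scaling makes $\overline Z$ Gaussian with mean $2^{r/2}\mu$ and variance $\sigma^2$ unchanged, so the signal-to-noise ratio is boosted by the factor $2^{r/2}$ while the variance stays $O(\mathsf{poly})$. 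Setting $c=\tfrac{a+b}{2}$, $\eta=2^{r/2}\tfrac{b-a}{4}$ and using the $\mathbf{GDC}$ thresholds $b'=2^{r/2}c+\eta$, $a'=2^{r/2}c-\eta$, in the YES case one computes $\mathbb P[\overline Z\ge b']\ge\Phi\!\big(\tfrac{2^{r/2}(b-a)}{4\sigma}\big)$ and symmetrically $\mathbb P[\overline Z\le a']\ge\Phi\!\big(\tfrac{2^{r/2}(b-a)}{4\sigma}\big)$ in the NO case, where $\Phi$ is the standard normal CDF. Since $b-a\ge1/\mathsf{poly}$ and $\sigma\le\sqrt{2E^\ast+1}=O(\mathsf{poly})$, choosing $2^{r/2}\ge\mathsf{poly}$, i.e.\ $r=O(\log|x|)$, makes both probabilities at least $\tfrac23$; and the resulting gap $b'-a'=2\eta=2^{r/2}\tfrac{b-a}{2}\ge1/\mathsf{poly}$ is admissible. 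The resources stay polynomial: $2^r=\mathsf{poly}$ copies over $n$ modes give $\mathsf{poly}$ modes, there are $\mathsf{poly}$ gates plus $2^r-1$ beam splitters, the total time is that of one copy (applying block-diagonal Hamiltonians in parallel), and because the beam-splitter network is passive it conserves total particle number, so the energy never exceeds $2^r E^\ast=O(\mathsf{poly})$.

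The main obstacle is the bookkeeping inside the membership direction rather than any single deep step: one must verify that the scaled-sample-mean construction genuinely keeps the variance fixed (so that the gap amplification is real and not an artifact of rescaling noise too), that the energy bound in the combined circuit is not violated at any intermediate time, and that the logspace machine can output valid thresholds $a',b'$ \emph{without} knowing $\sigma$ exactly — which is why the choice $\eta=2^{r/2}\frac{b-a}{4}$, depending only on the given $a,b$ and on the crude \emph{a priori} bound $\sigma=O(\mathsf{poly})$ from the energy promise, is what makes the reduction logspace-computable. The hardness direction, by contrast, is essentially immediate from the Gaussianity of $\mathcal P$.
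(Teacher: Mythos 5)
Your proposal is correct and follows essentially the same route as the paper: the hardness direction is the same pass-through reduction relying on mean equalling median for Gaussians, and the membership direction uses the identical beam-splitter sample-mean gadget of \cref{prop:sample-mean} with the same variance bound from the energy promise. The only (cosmetic) difference is that you bound the acceptance probability via the exact Gaussian CDF with explicitly chosen thresholds $a',b'$, whereas the paper invokes a Hoeffding-type concentration bound; your bookkeeping of the thresholds and the energy of the combined circuit is, if anything, slightly more careful than the paper's.
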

\begin{proof}
Let us consider the two cases separately:
\begin{itemize}
    \item Solving $\mathsf{GausSim}$ in $\mathbf{GDC}$: Consider an instance of $\mathsf{GausSim}$ with parameters specified in \cref{def:Gaussian-simulation}. Note that we have
    \begin{align}
    \underset{Z\sim\mathcal P}{\mathrm{Var}}\left[Z\right] \leq \langle X_1^2 \rangle \leq \sum_{i=1}^n \langle X_i^2\rangle + \langle P_i^2\rangle = E^\ast.
    \end{align}
    Therefore, if we denote by $Z_1,\cdots,Z_{2^r}$ iid samples from $\mathcal C$ we have that we get samples from a distribution of mean $2^{r/2} \mu$ (with $\mu=\mathbb E[Z]$), but with the same variance i.e., at most $E^\ast$. Note that with our approach, we have amplified the gap $b-a$. Indeed, we can amplify it by any polynomial. Therefore, using a simple Hoeffding inequality, we obtain that collecting $N= 2^r = O(E^\ast \epsilon^{-2} \log(\delta^{-1}))$ many samples is sufficient for an $\epsilon$-accurate estimation of the mean with probability at least $1-\delta$. As explained in \cref{prop:sample-mean} this is can be simply done with a polynomial overhead (in terms of $n$ and $m$) in the number of modes and gates and the energy bound.
    \item Hardness of $\mathsf{GausSim}$ for $\mathbf{GDC}$: Note that if $Z$ follows a Gaussian distribution and $\mathbb P[Z\geq b]\geq \frac23$, then, we have $\mathbb E[Z] \geq b$ as median and mean coincide in Gaussian distributions. Therefore, for any problem $L$ in $\mathbf{GDC}$, the reduction is simply to consider its Gaussian circuit (with the corresponding paramters) generated by the logspace machine as an instance of $\mathsf{GausSim}$.
\end{itemize}
\end{proof}

The main result of this section is summarized in the following theorem.

\begin{theorem}\label{thm:GDC=BQL}
It is the case that
\begin{align}
\mathbf{GDC} = \mathbf{BQL}.
\end{align}
\end{theorem}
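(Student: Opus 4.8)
The plan is to prove the two inclusions separately, in both cases exploiting the symplectic phase-space description of Gaussian dynamics from \cref{sec:phase-space-formalism}, and routing the argument through the linear-algebraic complete problems on each side: $\mathsf{GausSim}$ for $\mathbf{GDC}$ (shown complete above) and inversion of a well-conditioned matrix for $\mathbf{BQL}$ \cite{ta2013inverting,fefferman2021eliminating}.

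For $\mathbf{GDC}\subseteq\mathbf{BQL}$, I would observe that a vacuum-initialized Gaussian circuit is fully determined by the evolution of its displacement vector $\boldsymbol\mu$ and covariance matrix $\boldsymbol\Sigma$: under a gate $(\mathbf M_j,\mathbf d_j,t_j)$ these update through the symplectic matrix $S_j=\exp(\boldsymbol\Omega\mathbf M_j t_j)$ as $\boldsymbol\Sigma\mapsto S_j\boldsymbol\Sigma S_j^T$ together with an affine shift of $\boldsymbol\mu$, and the output is the Gaussian $\mathcal N\!\big((\boldsymbol\mu_{\mathrm{fin}})_1,(\boldsymbol\Sigma_{\mathrm{fin}})_{11}\big)$. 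Deciding the $\mathsf{GausSim}$ threshold thus reduces to estimating these two scalars to additive $1/\mathrm{poly}$. The crucial leverage is the energy promise $E^\ast=\mathrm{poly}$: since $\mathrm{Tr}\,\boldsymbol\Sigma(t)+\|\boldsymbol\mu(t)\|_2^2=O(E^\ast)$ for all $t$, every partial evolution keeps the covariance and displacement polynomially bounded, so no squeezing direction is ever allowed to blow up. I would then subdivide each $t_j$ into $\mathrm{poly}$ time steps of size $\delta t$, approximating each single-step propagator $I+\boldsymbol\Omega\mathbf M_j\,\delta t+\cdots$ by a truncated, rapidly converging Taylor series (the generator has bounded norm). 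The whole simulation becomes an iterated product of $\mathrm{poly}$-many bounded-norm matrices acting on bounded data, which can be encoded as the inversion of a block-bidiagonal well-conditioned matrix and hence carried out in $\mathbf{BQL}$; a routine error analysis shows inverse-polynomial accuracy suffices given the promised gap, after which evaluating the Gaussian tail probability to compare against $a,b$ is elementary.

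For $\mathbf{BQL}\subseteq\mathbf{GDC}$, I would reduce from the $\mathbf{BQL}$-complete task of approximating a single entry $\langle j|A^{-1}|i\rangle$ of a Hermitian well-conditioned matrix, noting that such an entry is a real, amplitude-level quantity — exactly what a position measurement of a Gaussian state reports, since the mean of the measured quadrature is the real part of a coherent amplitude. The encoding would place the $D=\mathrm{poly}$-dimensional vector $|i\rangle$ as coherent amplitudes across $D$ modes (a one-hot, dual-rail style encoding), and realize linear maps on this amplitude vector through passive linear-optical networks of beam splitters and phase shifters, which are Gaussian and, decisively, photon-number preserving, so they respect the polynomial energy bound automatically. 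To build $A^{-1}$ I would use a truncated Neumann series $\sum_{k=0}^{K}(I-A)^k$ with $K=\mathrm{poly}(\kappa,\log\varepsilon^{-1})$: each contraction $I-A$ (operator norm $<1$ by well-conditioning) is dilated to a passive transformation acting on auxiliary modes, and its $k$-th power is obtained by chaining $k$ such blocks on fresh ancillas, so that $(I-A)^k|i\rangle$ remains in the principal modes while the discarded amplitude simply leaks into unmeasured modes. The sample-mean beam-splitter gadget of \cref{prop:sample-mean} then sums the $K+1$ contributions without increasing the energy, and a position measurement of the output mode reads off $\Re\langle j|A^{-1}|i\rangle$ up to the known $1/\sqrt{K}$ rescaling, with gap amplification provided by the same gadget; logspace uniformity of the construction is straightforward since every component is specified directly from the entries of $A$.

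The main obstacle is the hardness direction. Symplectic dynamics preserves phase-space volume, so the mean vector can only undergo volume-preserving maps, whereas $A^{-1}$ is genuinely contractive; the tempting shortcut of generating the contraction directly, e.g. evolving the quadratures by $e^{-At}$, expands the conjugate quadrature and drives the energy up exponentially, violating the $E^\ast=\mathrm{poly}$ promise. The correct resolution — implementing every contraction through photon-number-preserving passive networks so that the lost amplitude is \emph{discarded} into unmeasured modes rather than inflating the energy — is precisely what makes the polynomial energy bound both necessary and sufficient, and reconciling the dilation, the Neumann truncation, the beam-splitter summation, the final rescaling, the precision budget, and logspace uniformity all at once is the technical heart of the proof. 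By comparison, the $\mathbf{GDC}\subseteq\mathbf{BQL}$ direction is a controlled linear-algebra simulation whose only subtlety is invoking the energy bound to guarantee well-conditioning of the intermediate matrices.
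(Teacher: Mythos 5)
Your first inclusion, $\mathbf{GDC}\subseteq\mathbf{BQL}$, follows essentially the paper's route: track $(\boldsymbol\mu,\boldsymbol\Sigma)$ symplectically, discretize and Taylor-truncate the ODE for $\boldsymbol\mu$, encode the iteration as a block-bidiagonal linear system, and use the energy promise to bound the condition number (the paper instantiates this via the Berry/Krovi ODE-to-linear-system encoding and \cref{lem:norm-of-exponentiation-bound}). That direction is fine.

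The hardness direction is where you diverge from the paper, and it contains a genuine gap. The paper never synthesizes a target unitary: it feeds the entries of (the anti-Hermitianization of) $A$ \emph{directly} into the Hamiltonian via $\mathbf M^{(xp)}=A$, runs that fixed Hamiltonian for $N$ different durations, and extracts $A^{-1}\boldsymbol\Omega\mathbf d$ from the beam-splitter average of the means using the ergodic-type bound $\bigl\lVert\frac1N\sum_k e^{\epsilon Ak}\bigr\rVert\le\delta$ (\cref{lem:gaussain-bounding-lemma}) applied to the closed form $\boldsymbol\mu(t)=e^{\boldsymbol\Omega\mathbf Mt}\boldsymbol\mu_0+(\boldsymbol\Omega\mathbf M)^{-1}(e^{\boldsymbol\Omega\mathbf Mt}-\mathbb I)\boldsymbol\Omega\mathbf d$. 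Your Neumann-series construction instead requires the \emph{logspace preprocessor} to (i) compute a unitary dilation of the contraction $I-A$, whose off-diagonal block is $\sqrt{I-(I-A)(I-A)^\dagger}$, and (ii) compile that $\mathrm{poly}$-dimensional unitary into a Gaussian circuit description, i.e.\ either a Hamiltonian logarithm or a sequential Reck/Clements-type decomposition. Neither step is logspace-computable: matrix square roots and logarithms to polynomial precision are themselves linear-algebraic tasks at the $\mathbf{DET}/\mathbf{NC}^2$ level, at least as hard as the well-conditioned inversion you are reducing from, so the claim that ``logspace uniformity \ldots is straightforward since every component is specified directly from the entries of $A$'' does not hold --- the reduction as written is not a logspace reduction and is arguably circular. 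A secondary issue: $\lVert I-A\rVert<1$ requires $A$ to be (normalized) positive definite, not merely well-conditioned; for the indefinite Hermitian dilation the plain Neumann series diverges, and you would need to pass to $A(A^2)^{-1}$ or a Chebyshev-type approximant, which you do not address. To repair the argument you would either have to show that the dilation and its compilation can be produced by a logspace machine (unlikely without new ideas), or abandon the synthesized-unitary approach in favor of one where the Hamiltonian coefficients are read off directly from $A$, as in the paper.
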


\paragraph{Discussing the definition} Here we discuss the reasons behind some of our assumptions in the definition of $\mathbf{GDC}$. Firstly, regarding the bounded energy assumption, from a practical point of view, it seems reasonable to consider a bound on the energy of the state under consideration. We cannot simply achieve arbitrarily high energies in a lab. For example, reaching high squeezing levels for a single mode with no displacement faces practical limits \cite{vahlbruch2016detection, meylahn2022squeezed}. Therefore, in our computational model, we put physical restrictions on top of common theoretical assumptions of the field (e.g., our bounded space uniformity). Please refer to \cref{sec:discussionE} for a thorough discussion about energy promises. Note that when considering the non-Gaussian computations later on, we would like to explore the limits of computability, but here, as we already know any Gaussian computation is simulable in $\mathbf{P}$, we have restricted ourselves  to the realistic case and we study its power. Moreover, we note that a model of Gaussian computation with no energy bound has more capabilities. As an example, one can apply the squeezing Hamiltonian $H=\frac12(a^2-a^\dagger{}^2)$, for time $t$ on a state displaced by the amount $\alpha\in\mathbb R$ in the $X$ direction, to obtain a state with mean $\alpha^t$ for a measurement in the $X$ direction. Letting $t$ grow polynomially large, we can obtain a number with polynomially many bits in output.
Lastly, we are using a logspace pre-processing as the simulation of a Gaussian dynamics can be performed in $\mathbf P$ (with ideas similar to \cite{bartlett2002efficient, chabaud2021classical}).

We turn to the proof of \cref{thm:GDC=BQL}.
Let $(\boldsymbol\Sigma_i,\boldsymbol\mu_i)$ denote the covariance matrix and mean of the overall state of $m$ modes right after the $i$-th evolution. We can track covariance matrices as
\begin{align}
\boldsymbol\Sigma_i = \frac12 \Bf S_i \Bf S_i^T,
\end{align}
where $\mathbf S_i = \exp(\boldsymbol\Omega \mathbf M_i t_i) \cdots \exp(\boldsymbol\Omega \mathbf M_1 t_1).$ 
With a slight abuse of notation, we use $\boldsymbol\mu(t)$ to denote the vector of means at time $t\in[0,\sum_{i=1}^n t_i]$. Going into the Heisenberg picture, we get that for any $t$ such that $\sum_{j=1}^{i} t_j <t< \sum_{j=1}^{i+1} t_j$ we have
\begin{align}\label{eq:gaussian-ode}
\frac{\mathrm d}{\mathrm dt} \boldsymbol\mu(t) = \boldsymbol\Omega \Bf M_i \boldsymbol\mu(t) + \mathbf d_i
\end{align}
as the system is being derived by the $i$-th Hamiltonian. The reason behind this equation, is that in the Heisenberg evolution picture, we have
\begin{align}
\frac{\mathrm d}{\mathrm dt} \langle \mathcal O \rangle_t = -i \langle [H,\mathcal O] \rangle_t,
\end{align}
and one simply verify that $ -i[H,\widehat{\mathbf r}] = \boldsymbol\Omega\mathbf M \widehat{\mathbf r} + \boldsymbol\Omega\mathbf d \cdot \mathbb I$, if $H$ is a Gaussian Hamiltonian\footnote{We use both the terms \textit{Gaussian} and \textit{quadratic} to refer to Hamiltonians of degree two.} and has $\mathbf M$ representing its quadratic part and $\mathbf d$ representing its linear part.

Note that a complete problem for $\mathbf{BQL}$ is the approximate inversion of well-conditioned matrices \cite{fefferman2021eliminating}. Indeed, we have the following:

\begin{definition}[Approximate inversion of well condition matrices]\label{def:app-matrix-inversion}
Let the problem of approximate inversion of well condition matrices (poly-conditioned-$\mathsf{MatInv}$), be to decide between
\begin{itemize}
    \item (YES case): $A^{-1}_{ij} > b$,
    \item (NO case): $A^{-1}_{ij} < a$,
\end{itemize}
given indices $(i,j)$ of matrix $A$. Note that, we are given $\mathbf A\in\mathbb R^{n\times n}$, and we have that the condition number of $\mathbf A$, denoted by $\kappa(\Bf A) := \frac{s_{\mathrm{max}}(\mathbf A)}{s_{\mathrm{min}}(\mathbf A)}$, is polynomially bounded i.e. $\kappa(\Bf A) = O(\mathsf{poly}(n))$, and that required the precision also satisfies $b-a = \Omega(\frac{1}{\mathsf{poly}(n)})$.
\end{definition}

Note that we use $s_{\min}$ and $s_{\max}$ to denote the smallest and largest singular value. As the problem defined above is complete for $\mathbf{BQL}$, we first need to give a logspace reduction from Gaussian computation to this problem. Then, we show that any such problem can be reduced to Gaussian simulation for hardness.

\subsection{Simulating Gaussian computations in $\mathbf{BQL}$}\label{sec:gdc-is-in-bql}
We first show that any Gaussian simulation problem can be turned into a matrix inversion problem.
Notice that the formulation in \eqref{eq:gaussian-ode} allows us to express the problem as a set of ODE simulation problems, and such problems can be handled by matrix inversion. Here, we consider the approach of \cite{krovi2023improved} as it turns out that the results therein can be immediately applied to our problem for bounding the condition number and the error of the computation. To do so, we first note that \eqref{eq:gaussian-ode} has the following closed-form solution
\begin{align}\label{eq:solution-to-gaussian-ode}
\boldsymbol\mu(t) = \exp(\boldsymbol\Omega \mathbf M (t-t_0)) \boldsymbol\mu(t_0) + (\boldsymbol\Omega \mathbf M)^{-1} \left(\exp(\boldsymbol\Omega \mathbf M (t-t_0)) - \mathbb I \right) \boldsymbol\Omega \mathbf d
\end{align}
As previously studied in the literature \cite{berry2017quantum, krovi2023improved}, we can encode \eqref{eq:solution-to-gaussian-ode} as a linear system problem. The idea is to Taylor-expand the exponential function. Note that we have
\begin{align}\label{eq:expanded-ode}
\boldsymbol\mu(t+h) = \sum_{l=0}^\infty \frac{(\boldsymbol\Omega \Bf M h)^l}{l!} \boldsymbol\mu(t_0) + h \sum_{l=1}^\infty \frac{(\boldsymbol\Omega \Bf M h)^{l-1}}{l!} \boldsymbol\Omega\Bf d.
\end{align}
Therefore, an algorithm for solving \eqref{eq:gaussian-ode} would be to discretize with $h$ spacing in time and truncate \eqref{eq:expanded-ode} up to order $k$ to get the update rule i.e.,
\begin{align}
\boldsymbol\mu_{i}((j+1)h) = T_{k}(\boldsymbol\Omega\Bf M_i) \boldsymbol\mu_i(jh) + h S_k(\boldsymbol\Omega\Bf M) \boldsymbol\Omega \Bf d,
\end{align}
where $T_k(z):= \sum_{l=0}^k \frac{z^l}{l!}$ and $S_k(z):= \sum_{l=1}^k \frac{z^{l-1}}{l!}$.
As noted in \cite[Section 4]{krovi2023improved}, this update rule can be reformulated as a linear system:
\begin{align}\label{eq:gaussian-linear-system-inclusion}
\mathbf A_i \mathbf x = \mathbf b,
\end{align}
where
\begin{align}
\begin{split}
\mathbf A_i &= \mathbb I - \mathbf L_i,\\
\mathbf L_i &= \sum_{j=0}^m \mathbf e_{j+1} \mathbf e_j^T \otimes \mathbf N_{2}(\mathbb I - \mathbf N_{1,i})^{-1}\\
\mathbf N_{1,i} &= \sum_{l=1}^{k-1} \mathbf e_{l+1} \mathbf e_l^T \otimes \frac{\boldsymbol\Omega \Bf M_i h}{l+1}\\
\mathbf N_{2} &= \sum_{l=0}^k \Bf e_0 \Bf e_l^T \otimes \mathbb I
\end{split}
\end{align}
where indices $j\in\{0,1,\cdots,m\}$ count time steps in the evolution (with time discretization $h$), and $l\in\{0,1,\cdots,k\}$ accounts for terms in the Taylor expansion up to order $k$. Note that $\mathbf N_{1,i}$ has a simple form, allowing us to write $(\mathbb I - \mathbf N_{1,i})^{-1} = \sum_{l=0}^{k}\sum_{l'=0}^{k-l} \mathbf e_{l+l'} \mathbf e_l^T \otimes \frac{l!(\boldsymbol\Omega\Bf M_i)^{l'}}{(l+l')!}$. Later on, we remark on how to prepare an oracle for the term $\Bf N_2(\mathbb I-\Bf N_{1,i})^{-1}$. In what follows, we provide a statement that summarizes what we aim to achieve. Note that the linear system of \eqref{eq:gaussian-linear-system-inclusion} has size $\lceil\frac{T}{h}\rceil \times k \times m$. We now borrow a result, which simplifies our analysis.

\begin{lemma}[Adaption of Theorem 3 and 4 of \cite{krovi2023improved}]\label{lem:adaption-from-krovi}
Assuming $h\norm{\mathbf M_i}\leq 1$, $\norm{\exp(\boldsymbol\Omega\Bf M_i)} \leq C$, and $(k+1)!\geq \frac{Te^3}{h\epsilon}(\norm{\boldsymbol\mu_{i+1}}+Te^2\norm{\Bf d})$, we have that the linear system \eqref{eq:gaussian-linear-system-inclusion} gets an $\epsilon$-approximate solution to \eqref{eq:gaussian-ode}, and that the condition number of the linear system is bounded as $\kappa(\Bf A_i) \leq O(C\cdot\frac{T}{h})$.
\end{lemma}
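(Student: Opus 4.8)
The plan is to recast the per-step update derived from \eqref{eq:expanded-ode} as an instance of the Taylor-series linear-system encoding of \cite{krovi2023improved} and then verify that our hypotheses reproduce their error and condition-number guarantees (their Theorems 3 and 4) with the stated constants. The first step is a normalisation: since $\boldsymbol\Omega$ is orthogonal we have $\norm{\boldsymbol\Omega\mathbf M_i} = \norm{\mathbf M_i}$, so the hypothesis $h\norm{\mathbf M_i}\le 1$ is exactly the step-size condition under which the truncated propagator $T_k(\boldsymbol\Omega\mathbf M_i h)$ well-approximates $\exp(\boldsymbol\Omega\mathbf M_i h)$, while $\norm{\exp(\boldsymbol\Omega\mathbf M_i)}\le C$ supplies the propagator-norm bound that controls solution growth. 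I would also record that the block $\mathbf N_2(\mathbb I - \mathbf N_{1,i})^{-1}$ equals precisely $T_k(\boldsymbol\Omega\mathbf M_i h)$, so that $\mathbf L_i$ is the one-step-forward map and $\mathbf A_i = \mathbb I - \mathbf L_i$ encodes the full discretised recurrence \eqref{eq:gaussian-linear-system-inclusion}.

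For the accuracy claim I would bound the error in two stages. Per step, the order-$k$ Taylor remainder obeys $\norm{\exp(\boldsymbol\Omega\mathbf M_i h) - T_k(\boldsymbol\Omega\mathbf M_i h)} \le \frac{(h\norm{\mathbf M_i})^{k+1}}{(k+1)!}e^{h\norm{\mathbf M_i}} \le \frac{e}{(k+1)!}$, using $h\norm{\mathbf M_i}\le 1$. These local errors accumulate over the $\lceil T/h\rceil$ time steps, each amplified by the propagator (controlled through $C$ and factors of $e^{h\norm{\mathbf M_i}}\le e$), while the inhomogeneous term $\boldsymbol\Omega\mathbf d$ contributes a piece scaling with $T$ and $\norm{\mathbf d}$. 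Requiring the total deviation from the exact value $\boldsymbol\mu_{i+1}$ to stay below $\epsilon$, and using $\norm{\boldsymbol\mu_{i+1}} + Te^2\norm{\mathbf d}$ as the bound on the size of the solution (needed to pass from relative to absolute error), reproduces exactly the hypothesis $(k+1)!\ge \frac{Te^3}{h\epsilon}\big(\norm{\boldsymbol\mu_{i+1}} + Te^2\norm{\mathbf d}\big)$, where the constants $e^2,e^3$ are the accumulated per-step $e$-factors.

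For the condition number I would exploit the block structure of $\mathbf A_i = \mathbb I - \mathbf L_i$, which is block lower-bidiagonal in the time index with $T_k(\boldsymbol\Omega\mathbf M_i h)$ on the subdiagonal. Because the time-shift $\mathbf e_{j+1}\mathbf e_j^T$ is nilpotent, the Neumann series $\mathbf A_i^{-1} = \sum_{j\ge 0}\mathbf L_i^{\,j}$ terminates, and its $(j',j)$ block equals the $(j'-j)$-fold composition of single steps, which tracks the exact propagator $\exp(\boldsymbol\Omega\mathbf M_i(j'-j)h)$ of norm $O(C)$. Summing the $O(T/h)$ nonzero blocks per column then gives $\norm{\mathbf A_i^{-1}} = O(C\cdot T/h)$, while $\norm{\mathbf A_i}\le 1 + \norm{T_k(\boldsymbol\Omega\mathbf M_i h)} = O(1)$, so $\kappa(\mathbf A_i) = O(C\cdot T/h)$.

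I expect the genuine difficulty to lie in the condition-number step rather than the error step: one must control the norm of a product of up to $T/h$ \emph{truncated} propagators $T_k(\boldsymbol\Omega\mathbf M_i h)$ uniformly over all block pairs. The naive submultiplicative estimate only yields $e^{(T/h)\cdot h\norm{\mathbf M_i}} = e^{T\norm{\mathbf M_i}}$, which is far too large; the crux of the adaptation from \cite{krovi2023improved} is that the per-step truncation error $\sim 1/(k+1)!$ is so small that the composed truncated map never drifts by more than a constant factor from the exact propagator, whose norm is capped by $C$ through the stated hypothesis (and, in the $\mathbf{GDC}$ setting, is polynomially bounded by the energy constraint $E^\ast = O(\mathsf{poly})$). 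Once this uniform $O(C)$ block bound is secured, the remaining bookkeeping — assembling the oracle for $\mathbf N_2(\mathbb I - \mathbf N_{1,i})^{-1}$ and checking logspace uniformity of the reduction — is routine.
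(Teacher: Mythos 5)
Your proposal is correct and follows essentially the same route as the paper: the paper's own proof of this lemma is a one-line deferral ("just rephrasing Theorems 3 and 4 from \cite{krovi2023improved} in the settings of our problem"), and your sketch faithfully reconstructs the content of those theorems — the identification of $\mathbf N_2(\mathbb I-\mathbf N_{1,i})^{-1}$ with the truncated propagator $T_k(\boldsymbol\Omega\mathbf M_i h)$, the per-step Taylor remainder bound $e/(k+1)!$ accumulated over the $T/h$ steps to yield the stated factorial condition, and the condition-number bound via the nilpotent block-bidiagonal structure of $\mathbf A_i=\mathbb I-\mathbf L_i$. You also correctly flag the one genuinely delicate point (bounding products of up to $T/h$ truncated propagators by $O(C)$ uniformly, rather than submultiplicatively), which is exactly the technical core of the cited Theorem 4 and is legitimately outsourced to it here.
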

\begin{proof}
The statement is just rephrasing Theorems 3 and 4 from \cite{krovi2023improved} in the settings of our problem.
\end{proof}
Using the above result, we get the following.
\begin{lemma}\label{lem:parameter-choice-for-inclusion-in-BQL}
Choosing
\begin{align}
h = \Theta(\frac{1}m),\quad k = \Theta\left(\frac{\log(\frac{T}{\epsilon} (E^\ast + mT))}{\log \log (\frac{T}{\epsilon} (E^\ast + mT))} \right),
\end{align}
$\mathbf A_i$ has condition number 
\begin{align}
\kappa(\mathbf A_i) = O(E^\ast{} \cdot Tm),
\end{align}
and also, the linear system \eqref{eq:gaussian-linear-system-inclusion} simulates the ordinary differential equation \eqref{eq:gaussian-ode} with error $\epsilon$.
Note that the linear system has size $O(\frac{Tm}h \log(TE^\ast m/\epsilon))$.
\end{lemma}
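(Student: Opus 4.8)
The plan is to show that the stated $h$ and $k$ satisfy all three hypotheses of \cref{lem:adaption-from-krovi} and then simply read off its two conclusions. Concretely, I must verify: (a) $h\norm{\mathbf M_i}\le 1$; (b) a propagator bound $\norm{\exp(\boldsymbol\Omega\mathbf M_i)}\le C$ with $C=O(E^\ast+m)$; and (c) the factorial inequality $(k+1)!\ge \frac{Te^3}{h\epsilon}(\norm{\boldsymbol\mu_{i+1}}+Te^2\norm{\mathbf d})$. Once these hold, \cref{lem:adaption-from-krovi} yields the claimed $\epsilon$-accuracy together with $\kappa(\mathbf A_i)\le O(C\cdot T/h)$, and the size is obtained from the known dimension $\lceil T/h\rceil\times k\times m$ of the system \eqref{eq:gaussian-linear-system-inclusion}.

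Hypothesis (a) is immediate: since every entry of $\mathbf M_i\in\mathbb R^{2m\times 2m}$ is bounded by $1$, we have $\norm{\mathbf M_i}\le\norm{\mathbf M_i}_F\le 2m$, so choosing the hidden constant in $h=\Theta(1/m)$ (say $h=1/(2m)$) gives $h\norm{\mathbf M_i}\le 1$. The heart of the argument, and what I expect to be the main obstacle, is hypothesis (b), because this is the only place where the physical energy promise $E^\ast$ of \eqref{eq:energybounddef} is converted into a bound on the numerical conditioning. Writing $\boldsymbol\Sigma(t)=\tfrac12\mathbf S(t)\mathbf S(t)^T$ for the cumulative symplectic propagator $\mathbf S(t)$ from the vacuum, and using $\sum_i N_i=\tfrac12(\sum_i X_i^2+P_i^2-m)$, the energy bound forces $\operatorname{tr}\boldsymbol\Sigma(t)\le 2E^\ast+m$ along the whole trajectory, hence $\norm{\mathbf S(t)}^2\le\norm{\mathbf S(t)}_F^2=2\operatorname{tr}\boldsymbol\Sigma(t)=O(E^\ast+m)$. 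Factorising the single-gate propagator as $\exp(\boldsymbol\Omega\mathbf M_i s)=\mathbf S(t_0+s)\,\mathbf S(t_0)^{-1}$ with $t_0=\sum_{j<i}t_j$, and using that symplectic matrices have singular values in reciprocal pairs (so $\norm{\mathbf S(t_0)^{-1}}=\norm{\mathbf S(t_0)}$), I get $\norm{\exp(\boldsymbol\Omega\mathbf M_i s)}\le\norm{\mathbf S(t_0+s)}\,\norm{\mathbf S(t_0)}=O(E^\ast+m)$. Thus one may take $C=O(E^\ast+m)$, which is the crucial polynomial input (the additive $m$ only contributes a lower-order polynomial factor, so $C=O(E^\ast)$ in the regime where the state carries nontrivial energy).

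For hypothesis (c) I would bound the right-hand side termwise as a polynomial: $1/h=O(m)$, $\norm{\mathbf d}=\norm{\mathbf d_i}=O(\sqrt m)$ (at most $2m$ entries each $\le 1$), and $\norm{\boldsymbol\mu_{i+1}}=O(\sqrt{E^\ast+m})$ again from the energy bound. This gives a bound of the form $R=O\!\big(\tfrac{Tm}{\epsilon}(\sqrt{E^\ast}+T\sqrt m)\big)$, so that $\log R=\Theta\!\big(\log\tfrac{T}{\epsilon}(E^\ast+mT)\big)$. By Stirling, $\log(k+1)!\gtrsim (k+1)\log\tfrac{k+1}{e}$, and substituting $k=\Theta(\log R/\log\log R)$ makes $\log(k+1)=\Theta(\log\log R)$ and hence $(k+1)\log(k+1)=\Theta(\log R)$, so $(k+1)!\ge R$ for a suitable constant. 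This is precisely the stated form $k=\Theta\!\big(\log(\tfrac{T}{\epsilon}(E^\ast+mT))/\log\log(\tfrac{T}{\epsilon}(E^\ast+mT))\big)$.

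Finally I would assemble the conclusions. Since $h=\Theta(1/m)$ gives $T/h=\Theta(Tm)$, \cref{lem:adaption-from-krovi} yields $\kappa(\mathbf A_i)\le O(C\cdot T/h)=O((E^\ast+m)\,Tm)=O(E^\ast\cdot Tm)$ (the additive $m$ being absorbed into a lower-order polynomial term), together with the $\epsilon$-accurate simulation of \eqref{eq:gaussian-ode}. For the size, the system \eqref{eq:gaussian-linear-system-inclusion} has dimension $\lceil T/h\rceil\times k\times m=O(\tfrac{Tm}{h}\,k)$, and since $k=O(\log(\tfrac{T}{\epsilon}(E^\ast+mT)))$ this is $O(\tfrac{Tm}{h}\log(TE^\ast m/\epsilon))$, exactly as claimed.
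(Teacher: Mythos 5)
Your proposal is correct and follows essentially the same route as the paper: verify the three hypotheses of \cref{lem:adaption-from-krovi} (the step bound $h\norm{\mathbf M_i}\le 1$, the propagator bound $C$ via the energy promise, and the Stirling estimate for $k$) and read off the condition-number and accuracy conclusions. Your inline derivation of $\norm{\exp(\boldsymbol\Omega\mathbf M_i s)}=O(E^\ast+m)$ from the symplectic reciprocal-singular-value property is exactly the content of the paper's \cref{lem:norm-of-exponentiation-bound}, and if anything you are slightly more explicit than the paper about the $\log/\log\log$ choice of $k$.
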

In order to prove the above lemma, we first need to show a few favorable properties of the generator $\boldsymbol\Omega\Bf M_i$, which arise from the bounded energy assumption.

\begin{lemma}\label{lem:norm-of-exponentiation-bound}
For Hamiltonians $H_i$ and evolution times $t_i$ satisfying the energy bound condition of \cref{def:Gaussian-simulation}, we have
\begin{align}
\norm{\exp(\boldsymbol\Omega\mathbf M_i t_i )} = \norm{\exp(-\boldsymbol\Omega\mathbf M_i t_i)} \leq 2E^\ast
\end{align}
\end{lemma}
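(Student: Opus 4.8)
The plan is to bound the operator norm of the single symplectic factor $\mathbf S := \exp(\boldsymbol\Omega\mathbf M_i t_i)$ by coupling the global energy promise of \cref{def:Gaussian-simulation} to the purity of the Gaussian states produced along the evolution. First I would record that $\mathbf S$ is symplectic: since $\mathbf M_i$ is symmetric, $\boldsymbol\Omega\mathbf M_i$ lies in the symplectic Lie algebra (see \cref{sec:phase-space-formalism}), so $\mathbf S^T\boldsymbol\Omega\mathbf S=\boldsymbol\Omega$. A standard consequence is that the singular values of $\mathbf S$ come in reciprocal pairs $\{s,1/s\}$, so $s_{\max}(\mathbf S)=1/s_{\min}(\mathbf S)$. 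In particular $\norm{\mathbf S^{-1}}=s_{\max}(\mathbf S^{-1})=1/s_{\min}(\mathbf S)=s_{\max}(\mathbf S)=\norm{\mathbf S}$, which already yields the claimed equality $\norm{\exp(\boldsymbol\Omega\mathbf M_i t_i)}=\norm{\exp(-\boldsymbol\Omega\mathbf M_i t_i)}$. It therefore suffices to bound $\norm{\mathbf S}=s_{\max}(\mathbf S)$.

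Next I would translate the energy promise into a spectral bound on the covariance matrices. Using $\sum_j N_j=\tfrac12\widehat r^T\widehat r-\tfrac m2\mathbb I$, the energy in a state with covariance $\boldsymbol\Sigma$ and mean $\boldsymbol\mu$ equals $\tfrac12(\tr\boldsymbol\Sigma+\norm{\boldsymbol\mu}^2)-\tfrac m2$, so the bound \eqref{eq:energybounddef} forces $\tr\boldsymbol\Sigma(t)\le 2E^\ast+m$, and hence $\lambda_{\max}(\boldsymbol\Sigma(t))\le 2E^\ast+m$, at every time $t\in[0,\tau]$. I apply this at the two instants bracketing the $i$-th evolution, writing $\boldsymbol\Sigma_{i-1}$ and $\boldsymbol\Sigma_i=\mathbf S\boldsymbol\Sigma_{i-1}\mathbf S^T$ for the covariances immediately before and after the gate, both of which satisfy $\lambda_{\max}\le 2E^\ast+m$.

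The final step combines these via purity. Since the evolution starts from the vacuum and applies only Gaussian unitaries, every intermediate state is a pure Gaussian state, so $\boldsymbol\Sigma_{i-1}=\tfrac12\mathbf S_{i-1}\mathbf S_{i-1}^T$ with $\mathbf S_{i-1}$ symplectic; its eigenvalues are $\tfrac12 s_a^2$ with the $s_a$ in reciprocal pairs, whence $\lambda_{\min}(\boldsymbol\Sigma_{i-1})=1/(4\lambda_{\max}(\boldsymbol\Sigma_{i-1}))\ge 1/(4(2E^\ast+m))$. From $\boldsymbol\Sigma_{i-1}\succeq\lambda_{\min}(\boldsymbol\Sigma_{i-1})\,\mathbb I$ and congruence by $\mathbf S$ we get $\boldsymbol\Sigma_i\succeq\lambda_{\min}(\boldsymbol\Sigma_{i-1})\,\mathbf S\mathbf S^T$, and taking operator norms (monotone on the positive-semidefinite cone) gives $\lambda_{\max}(\boldsymbol\Sigma_i)\ge\lambda_{\min}(\boldsymbol\Sigma_{i-1})\norm{\mathbf S}^2$. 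Rearranging,
\begin{align}
\norm{\mathbf S}^2\le\frac{\lambda_{\max}(\boldsymbol\Sigma_i)}{\lambda_{\min}(\boldsymbol\Sigma_{i-1})}\le 4(2E^\ast+m)^2,
\end{align}
so $\norm{\mathbf S}\le 2(2E^\ast+m)=O(E^\ast)$, i.e., the stated bound up to the lower-order dimensional term.

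The step I expect to be the main obstacle is exactly this coupling of a single symplectic factor to a \emph{global} energy promise: a lone, strongly squeezing gate has arbitrarily large operator norm in isolation, so the bound cannot come from the gate alone. The resolution is that the input $\boldsymbol\Sigma_{i-1}$ is itself a bounded-energy pure Gaussian state, so by purity its thinnest phase-space direction has variance $\gtrsim 1/E^\ast$; stretching that direction by $\norm{\mathbf S}$ would raise the output energy to $\sim\norm{\mathbf S}^2/E^\ast$, which the promise caps at $E^\ast$. I would also note a constant discrepancy: the clean bound is $2(2E^\ast+m)$ rather than $2E^\ast$, the gap arising from the zero-point and displacement contributions to the energy, so the statement should be read as $O(E^\ast+m)$ (equivalently, $E^\ast$ is taken to dominate $m$, as holds once any non-trivial evolution has occurred).
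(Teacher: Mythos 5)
Your proof is correct and takes essentially the same route as the paper's: the equality comes from the reciprocal pairing of singular values of symplectic matrices, and the norm bound comes from sandwiching the single gate between the bounded-energy pure Gaussian covariances just before and just after it — the paper phrases this via the cumulative products $\mathbf S_{i-1},\mathbf S_i$ and submultiplicativity ($\norm{\exp(\boldsymbol\Omega\mathbf M_it_i)}\le\norm{\mathbf S_i}\,\norm{\mathbf S_{i-1}^{-1}}=\norm{\mathbf S_i}\,\norm{\mathbf S_{i-1}}$ with $\norm{\mathbf S_j}^2\lesssim\tr\boldsymbol\Sigma_j$), while you phrase it via $\lambda_{\min}(\boldsymbol\Sigma_{i-1})=1/(4\lambda_{\max}(\boldsymbol\Sigma_{i-1}))$ and congruence, which is the same underlying fact. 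Your observation about the constant is also fair: the paper's own chain $\langle\sum_jN_j\rangle\ge\tr(\boldsymbol\Sigma_i)$ ignores the zero-point offset $m/2$ and the displacement contribution, so the stated bound $2E^\ast$ should really be read as $O(E^\ast+m)$, exactly as you note.
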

\begin{proof}
Note that $\exp(\boldsymbol\Omega\mathbf M_i t)$ is a symplectic matrix. For a symplectic matrix $\mathbf S$, we have $\mathbf S^{-1} = -\boldsymbol\Omega \mathbf S^T \boldsymbol\Omega$, and therefore, the singular values of $\mathbf S^{-1}$ and $\mathbf S$ coincide, which gives the equality in the statement of the lemma. To prove the upper bound, let $\langle X_j^{(i)}\rangle$ denote the $X_j$ operator expectation value right after the application of $i$-th Hamiltonian. Note that we have
\begin{align}
\langle X_i^{(j)} {}^2\rangle \geq \langle (X_i^{(j)})^2 \rangle - \langle X_i^{(j)} \rangle^2.
\end{align}
This yields
\begin{align}
\langle \sum_{j=1}^m N_j^{(i)} \rangle \geq \tr(\boldsymbol\Sigma_i)\geq \frac12 \norm{\mathbf S_i}^2
\end{align}
as the energy is bounded throughout by $E^{\ast}$, we get
\begin{align}\label{eq:norm-bound-on-S_i}
\norm{\mathbf S_i}^2 \leq 2E^{\ast}.
\end{align}
as $\mathbf S_i = \exp(\boldsymbol\Omega\Bf M_i t_i) \Bf S_{i-1}$, we have
\begin{align}
\begin{split}
\norm{\exp(\boldsymbol\Omega \Bf M_i t_i)}&=\norm{\exp(\boldsymbol\Omega \Bf M_i t_i) \Bf S_{i-1} \Bf S_{i-1}^{-1}}\\
&\leq \norm{\exp(\boldsymbol\Omega \Bf M_i t_i) \Bf S_{i-1}} \norm{\Bf S_{i-1}^{-1}}\\
&= \norm{\mathbf S_i} \norm{\mathbf S_{i-1}^{-1}}\\
&{\leq} 2E^{\ast}
\end{split}
\end{align}
where the last inequality follows from the fact that $\norm{\mathbf S_{i-1}^{-1}} = \norm{\mathbf S_{i-1}}$ (as mentioned earlier, the singular values of a symplectic matrix coincide with those of its inverse), and that the bound \eqref{eq:norm-bound-on-S_i} applies to all $i\in\{1,\cdots,n\}$.
\end{proof}

We are now ready to prove \cref{lem:parameter-choice-for-inclusion-in-BQL}.
\begin{proof}[Proof of \cref{lem:parameter-choice-for-inclusion-in-BQL}]
Note that $\langle X_i^{(j)} {}^2\rangle \geq \langle X_i^{(j)}\rangle^2 = (\boldsymbol{\mu}_i)_j$, and similarly for the momentum operator. Therefore, as the energy is bounded, we get $\norm{\boldsymbol{\mu_i}}\leq E^\ast$ for all $i\in\{1,\cdots,n\}$. Furthermore, \cref{lem:norm-of-exponentiation-bound} puts a bound on the $C$ parameter in \cref{lem:adaption-from-krovi} as $C\leq 2E^\ast$. Furthermore, as $\norm{\mathbf M}\leq O(m)$, we can choose $h=\Theta(1/m)$ to ensure $h\norm{\mathbf M}\leq 1$. Therefore, we can use \cref{lem:adaption-from-krovi} directly to conclude the result.
\end{proof}
Note that it remains to show that we can have a logspace oracle computing entries of $\Bf N_2(\mathbb I - \mathbf N_{1,i})$. Given the form of $\mathbf N_2$, we merely need to compute entries of $(\mathbb I-\mathbf N_{1,i})^{-1}$. Note that we have $\norm{\mathbb I-\mathbf N_{1,i}} \leq 2$, as our choice of $h$ guarantees $\norm{\Bf N_{1,i}}\leq 1$. Moreover, we have $(\mathbb I-\mathbf N_{1,i})^{-1} = \sum_{j=0}^{k-1} \mathbf N_{1,i}^j$, and hence, $\norm{(\mathbb I-\mathbf N_{1,i})^{-1}} \leq k$, which results in $\kappa({\mathbb I-\mathbf N_{1,i}} )\leq 2k$. Therefore, we can use a subroutine and compute the $(\mathbb I-\mathbf N_{1,i})^{-1}$ up to a desired error. We summarize this error analysis in the following statement.

\begin{lemma}
Given an oracle that computes entries of $(\mathbb I - \mathbf N_1)^{-1}$ up to error $\widetilde\epsilon = O(\frac{\epsilon}{(k+1)E^\ast T m})$, and using it for solving \eqref{eq:gaussian-linear-system-inclusion}, with the parameter choices of \cref{lem:parameter-choice-for-inclusion-in-BQL}, we can get a $2\epsilon$-accurate solution to \eqref{eq:gaussian-ode}.
\end{lemma}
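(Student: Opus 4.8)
The plan is to treat this as a standard perturbation analysis for linear systems, layered on top of the discretisation and truncation error already controlled by \cref{lem:parameter-choice-for-inclusion-in-BQL}. Write $\mathbf A_i = \mathbb I - \mathbf L_i$ for the exact coefficient matrix of \eqref{eq:gaussian-linear-system-inclusion}, in which the block $\mathbf N_2(\mathbb I - \mathbf N_{1,i})^{-1}$ appears, and let $\widetilde{\mathbf A}_i$ be the matrix obtained by replacing the exact inverse $(\mathbb I - \mathbf N_{1,i})^{-1}$ with its oracle approximation, which agrees with it entrywise up to $\widetilde\epsilon$. Let $\mathbf x$ solve $\mathbf A_i\mathbf x = \mathbf b$ and $\widetilde{\mathbf x}$ solve $\widetilde{\mathbf A}_i\widetilde{\mathbf x}=\mathbf b$. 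By \cref{lem:parameter-choice-for-inclusion-in-BQL} the exact solution $\mathbf x$ already approximates the ODE solution \eqref{eq:gaussian-ode} to within $\epsilon$, so by the triangle inequality it suffices to prove $\norm{\widetilde{\mathbf x}-\mathbf x}\leq\epsilon$.

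First I would bound the matrix perturbation $\Delta\mathbf A := \widetilde{\mathbf A}_i - \mathbf A_i$. The only affected blocks are the copies of $\mathbf N_2(\mathbb I - \mathbf N_{1,i})^{-1}$ inside $\mathbf L_i$; since $\mathbf N_2 = \sum_{l=0}^k \mathbf e_0\mathbf e_l^T\otimes\mathbb I$ contracts over the $k+1$ Taylor indices, an entrywise error $\widetilde\epsilon$ in $(\mathbb I - \mathbf N_{1,i})^{-1}$ propagates to a block whose operator norm is $O((k+1)\widetilde\epsilon)$, and because $\mathbf L_i$ places exactly one such block per time step on disjoint row and column supports, the overall estimate is $\norm{\Delta\mathbf A} = O((k+1)\widetilde\epsilon)$.

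Next I would invoke the elementary identity $\mathbf A_i(\widetilde{\mathbf x}-\mathbf x) = -\Delta\mathbf A\,\widetilde{\mathbf x}$, which yields $\norm{\widetilde{\mathbf x}-\mathbf x}\leq\norm{\mathbf A_i^{-1}}\,\norm{\Delta\mathbf A}\,\norm{\widetilde{\mathbf x}}$. Here $\norm{\mathbf A_i^{-1}}\leq\kappa(\mathbf A_i) = O(E^\ast Tm)$ by \cref{lem:parameter-choice-for-inclusion-in-BQL} (using $\sigma_{\max}(\mathbf A_i)\geq 1$ from the identity block), while $\norm{\widetilde{\mathbf x}}$ is controlled by the encoded mean vectors $\boldsymbol\mu(t)$, each of which satisfies $\norm{\boldsymbol\mu(t)}\leq E^\ast$. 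Collecting the polynomial factors contributed by the condition number, the Taylor contraction in $\mathbf N_2$, and the solution norm, and substituting the hypothesised $\widetilde\epsilon = O(\epsilon/((k+1)E^\ast Tm))$, gives $\norm{\widetilde{\mathbf x}-\mathbf x}\leq\epsilon$; the triangle inequality then delivers the claimed $2\epsilon$ accuracy.

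The main obstacle I anticipate is twofold. First, the identity $\mathbf A_i(\widetilde{\mathbf x}-\mathbf x)=-\Delta\mathbf A\,\widetilde{\mathbf x}$ is only useful once $\widetilde{\mathbf A}_i$ is known to be invertible, so I must check $\norm{\mathbf A_i^{-1}}\,\norm{\Delta\mathbf A}<1$, which holds for the stated $\widetilde\epsilon$ and sufficiently small $\epsilon$; a short Neumann-series argument then upgrades the first-order estimate to a genuine bound with only a constant-factor loss. Second, and more delicately, the entrywise-to-operator-norm conversion together with the normalisation of $\widetilde{\mathbf x}$ in the history-state encoding must be tracked carefully so that the combined polynomial factors collapse precisely to $(k+1)E^\ast Tm$ and no hidden power of $m$, $T$, or $k$ is dropped. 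Getting this bookkeeping right, rather than the perturbation idea itself, is where the real work lies.
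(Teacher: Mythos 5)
Your proposal is correct and follows essentially the same route as the paper: propagate the entrywise oracle error through $\mathbf N_2$ to get a perturbation of $\mathbf A_i$ of size $O((k+1)\widetilde\epsilon)$, then apply a standard first-order perturbation bound for the linear system using the condition-number estimate $\kappa(\mathbf A_i)=O(E^\ast Tm)$ from the preceding lemma and conclude by the triangle inequality with the $\epsilon$-accuracy of the exact discretised solution. The only cosmetic difference is that you perturb the solution directly via $\mathbf A_i(\widetilde{\mathbf x}-\mathbf x)=-\Delta\mathbf A\,\widetilde{\mathbf x}$, whereas the paper perturbs the inverse via $\mathbf A^{-1}-\mathbf B^{-1}=\mathbf A^{-1}(\mathbf B-\mathbf A)\mathbf B^{-1}$; these are equivalent, and your explicit check that $\norm{\mathbf A_i^{-1}}\norm{\Delta\mathbf A}<1$ is the same invertibility condition the paper handles implicitly through the $s_{\min}$ denominator.
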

\begin{proof}
Note that if we can find $(\mathbb I-\mathbf N_1)^{-1}$ with precision $\widetilde\epsilon$, then, we have evaluated $\mathbf A_i$ with precision $\widetilde\epsilon \cdot (k+1)$. Note that
\begin{align}
\begin{split}
\norm{\mathbf A^{-1} - \mathbf B^{-1}} &= \norm{\mathbf A^{-1}(\mathbf B - \Bf A )\Bf B^{-1}}\\
&\leq \norm{\Bf A^{-1}} \norm{\mathbf B^{-1}} \cdot \norm{\Bf B-\Bf A}\\
&\leq \frac{\norm{\Bf A - \Bf B}}{(s_{\mathrm{min}}(\Bf A) - \norm{\Bf A - \Bf B})^2}
\end{split}
\end{align}
for any pair of invertible matrices $\mathbf A,\Bf B$, where $s_{\mathrm{min}}$ denotes the smallest singular value. We replace $\mathbf A$ with $\mathbf A_i$, and $\mathbf B$ with its perturbed version, which we get by utilizing the subroutine computing inverse of $\mathbb I-\mathbf N_1$ with error $\widetilde\epsilon = O(\frac{1}{k})$ (which is implied by the choice of $\widetilde\epsilon = O(\epsilon/(kE^\ast T))$ in the statement of the lemma). Note that with this choice, we can compute the . Hence, we conclude that an error of $O(\widetilde\epsilon \cdot (k+1))$ in $\mathbf A_i$ would result in at most an error of $\kappa_{\mathbf A_i}^2 (k+1) \widetilde\epsilon = O(E^\ast T m (k+1) \widetilde\epsilon) = O(\epsilon)$ where the last equality follows from the choice of $\widetilde\epsilon$ in the statement.
\end{proof}

Summarizing the above results, we get the following statement.

\begin{theorem}\label{thm:GDC-is-in-BQL}
It is the case that
\begin{align}
\mathsf{GausSim} \leq_{\mathbf L} \text{poly-conditioned-}\mathsf{MatInv},
\end{align}
which also implies $\mathbf{GDC}\subseteq \mathbf{BQL}$.
\end{theorem}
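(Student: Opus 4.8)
The plan is to reduce the entire Gaussian computation to tracking the \emph{first moment} of the state, and then cast the resulting linear dynamics as a well-conditioned matrix-inversion problem in the style of \cite{krovi2023improved}. Since measuring the first mode in the position basis produces a Gaussian random variable $Z$ with $\mathbb E[Z] = (\boldsymbol\mu(\tau))_1$, an instance of $\mathsf{GausSim}$ is decided simply by computing the first coordinate of the final mean vector $\boldsymbol\mu(\tau)$ to additive accuracy better than the promised gap $b-a=\Omega(1/\mathsf{poly})$. The mean vector evolves piecewise-linearly according to the ODE~\eqref{eq:gaussian-ode}, one linear segment per Gaussian gate $(\mathbf M_i,\mathbf d_i,t_i)$, with closed-form propagator~\eqref{eq:solution-to-gaussian-ode}.

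For each segment I would invoke the linear-system encoding~\eqref{eq:gaussian-linear-system-inclusion}: discretizing time in steps $h$ and Taylor-truncating the propagator at order $k$ yields a sparse block-bidiagonal system $\mathbf A_i\mathbf x=\mathbf b$ whose solution contains the discretized trajectory of $\boldsymbol\mu$. The two quantitative ingredients are already in place. \Cref{lem:norm-of-exponentiation-bound} uses the energy promise to bound $\norm{\exp(\boldsymbol\Omega\mathbf M_it_i)}\le 2E^\ast$, and \cref{lem:parameter-choice-for-inclusion-in-BQL} then fixes $h=\Theta(1/m)$ and $k$ polylogarithmic so that \emph{simultaneously} the system has polynomial size, it approximates~\eqref{eq:gaussian-ode} to error $\epsilon$, and its condition number is $\kappa(\mathbf A_i)=O(E^\ast\tau m)=\mathsf{poly}$. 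This is exactly the input format demanded by poly-conditioned-$\mathsf{MatInv}$.

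Three bookkeeping steps remain, all logspace. First, the $m$ segments must be chained, the output mean of segment $i$ seeding segment $i+1$; I would either stack all segments into a single polynomially sized block-lower-triangular system, which preserves well-conditioning, or make $m$ sequential oracle queries feeding each solution forward, using that $\mathbf{BQL}$ is closed under logspace (Turing) reductions. Second, the matrix entries must be produced in logspace from the circuit description: the coefficients are elementary factorials and entries of $\boldsymbol\Omega\mathbf M_i$, while the only nontrivial block $(\mathbb I-\mathbf N_{1,i})^{-1}$ is well-conditioned ($\kappa\le 2k$) because $\mathbf N_{1,i}$ is block-nilpotent, so it is obtained from an inner $\mathsf{MatInv}$ call. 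Third, by the final error lemma this inner inversion need only be performed to precision $\widetilde\epsilon=O(\epsilon/((k+1)E^\ast\tau m))$, which keeps the accumulated error in the computed solution below the gap.

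The main obstacle is controlling the condition number, and this is precisely where the energy bound of \cref{def:Gaussian-simulation} is indispensable: without it the symplectic generators $\exp(\boldsymbol\Omega\mathbf M_it_i)$ can acquire exponentially large singular values through squeezing, rendering the propagators—and hence the linear systems—exponentially ill-conditioned and pushing the reduction outside poly-conditioned-$\mathsf{MatInv}$. The secondary delicate point is the error accounting across the $m$ chained segments and through the nested inner inversion; one must verify that these errors accumulate only polynomially, so that the final estimate of $(\boldsymbol\mu(\tau))_1$ resolves the inverse-polynomial gap $b-a$. Assembling the above yields $\mathsf{GausSim}\le_{\mathbf L}\text{poly-conditioned-}\mathsf{MatInv}$, and since that problem is $\mathbf{BQL}$-complete and $\mathsf{GausSim}$ is $\mathbf{GDC}$-complete, this gives $\mathbf{GDC}\subseteq\mathbf{BQL}$.
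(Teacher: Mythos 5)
Your proposal is correct and follows essentially the same route as the paper: cast the evolution of the mean vector as the piecewise-linear ODE \eqref{eq:gaussian-ode}, encode each segment via the Krovi-style Taylor-truncated linear system \eqref{eq:gaussian-linear-system-inclusion}, and use the energy promise (via \cref{lem:norm-of-exponentiation-bound} and \cref{lem:parameter-choice-for-inclusion-in-BQL}) to keep the condition number polynomial, with the inner inversion of $\mathbb I-\mathbf N_{1,i}$ handled to precision $\widetilde\epsilon$ exactly as in the paper's error lemma. Your explicit remarks on chaining the $m$ segments and on logspace-computability of the matrix entries are just slightly more detailed bookkeeping of the same argument.
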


\subsection{Simulating $\mathbf{BQL}$ with Gaussian computations} 

Here, we show how one can solve poly-conditioned-$\mathsf{MatInv}$ via $\mathbf{GDC}$. To this end, we first prove an intermediary lemma.

\begin{lemma}\label{lem:gaussain-bounding-lemma}
Let $\mathbf A$ be an anti-Hermitian matrix with singular values larger than $s_{\mathrm{min}}(\mathbf A)$ and smaller than $s_{\mathrm{max}}(\mathbf A)$. Then, we have that
\begin{align}\label{eq:matrix-sum-bound}
\norm{\frac{1}{\sqrt{\lceil T/\epsilon\rceil}}\sum_{k=0}^{\lceil T/\epsilon \rceil-1} e^{\epsilon\mathbf A k}} \leq \delta
\end{align}
is guaranteed with any $\epsilon \leq s_{\max}(\mathbf A)^{-1}$ and $T \geq \frac{16\, s_{\max}(\mathbf A)}{\delta^2 s^2_{\min}(\mathbf A)}$. Note that with this choice, the number of terms in the summation is $\lceil T/\epsilon \rceil = \Theta(\frac{\kappa^2(\mathbf A)}{\delta^2})$.
\end{lemma}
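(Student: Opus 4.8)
The plan is to exploit the fact that an anti-Hermitian matrix is normal, so that the whole problem diagonalizes and reduces to estimating a single scalar geometric sum. Write $M := \lceil T/\epsilon \rceil$ for the number of terms. Since $\mathbf A$ is anti-Hermitian it is unitarily diagonalizable with purely imaginary spectrum, $\mathbf A = \mathbf U\,\mathrm{Diag}(i\lambda_1,\dots,i\lambda_n)\,\mathbf U^\dagger$ with each $\lambda_j\in\mathbb R$ and $|\lambda_j|\in[s_{\mathrm{min}}(\mathbf A),s_{\mathrm{max}}(\mathbf A)]$ (the singular values of $\mathbf A$ are exactly the $|\lambda_j|$). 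The matrix $\frac1M\sum_{k=0}^{M-1}e^{\epsilon\mathbf A k}$ is a function of $\mathbf A$, hence is itself normal and diagonal in the same basis, so its operator norm equals the largest modulus of its eigenvalues:
\begin{align}
\norm{\frac1M\sum_{k=0}^{M-1}e^{\epsilon\mathbf A k}} = \max_{1\le j\le n}\,\frac1M\left|\sum_{k=0}^{M-1}e^{i\epsilon\lambda_j k}\right|.
\end{align}

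Next I would bound each scalar term by summing the finite geometric series. For $\theta:=\epsilon\lambda_j$ with $\theta\neq0$, the standard identity gives $\left|\sum_{k=0}^{M-1}e^{i\theta k}\right| = \left|\frac{\sin(M\theta/2)}{\sin(\theta/2)}\right| \le \frac1{|\sin(\theta/2)|}$. The role of the hypothesis $\epsilon\le s_{\mathrm{max}}(\mathbf A)^{-1}$ is precisely to keep the angle in a controlled range: it yields $|\theta| = \epsilon|\lambda_j| \le \epsilon\, s_{\mathrm{max}}(\mathbf A) \le 1 < \pi$, so that $|\theta/2|\le\tfrac12<\tfrac\pi2$ and Jordan's inequality $|\sin x|\ge\frac2\pi|x|$ applies. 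This gives the uniform lower bound $|\sin(\theta/2)| \ge \frac{|\theta|}{\pi} = \frac{\epsilon|\lambda_j|}{\pi} \ge \frac{\epsilon\, s_{\mathrm{min}}(\mathbf A)}{\pi}$, where I also used $|\lambda_j|\ge s_{\mathrm{min}}(\mathbf A)$.

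Combining the two steps, each scalar average is at most $\frac{\pi}{M\epsilon\, s_{\mathrm{min}}(\mathbf A)}$. Since $M = \lceil T/\epsilon\rceil \ge T/\epsilon$ we have $M\epsilon \ge T$, so the bound becomes $\frac{\pi}{T\, s_{\mathrm{min}}(\mathbf A)}$, independent of $j$. Finally, the hypothesis $T\ge \frac4{\delta\, s_{\mathrm{min}}(\mathbf A)}$ gives $T\, s_{\mathrm{min}}(\mathbf A)\ge 4/\delta$, whence the norm is at most $\frac{\pi\delta}{4} < \delta$, as claimed. For the count of terms, taking the extremal admissible choices $\epsilon = s_{\mathrm{max}}(\mathbf A)^{-1}$ and $T = 4/(\delta\, s_{\mathrm{min}}(\mathbf A))$ yields $M\approx T/\epsilon = \frac{4\,s_{\mathrm{max}}(\mathbf A)}{\delta\, s_{\mathrm{min}}(\mathbf A)} = \Theta(\kappa(\mathbf A)/\delta)$.

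I do not anticipate a serious obstacle here, since after diagonalization everything is an explicit geometric sum; the only points requiring care are (i) that the lower bound on $|\sin(\theta/2)|$ must hold \emph{uniformly} over all eigenvalues, which is exactly what the constraint $\epsilon\le s_{\mathrm{max}}(\mathbf A)^{-1}$ guarantees, and (ii) the implicit assumption that $\mathbf A$ is invertible, i.e.\ $s_{\mathrm{min}}(\mathbf A)>0$ — otherwise a zero eigenvalue would contribute a term of magnitude $1$ to the average and the bound would fail. Both are consistent with the stated hypotheses.
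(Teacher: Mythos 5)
Your proof is correct and follows essentially the same route as the paper's: diagonalize the anti-Hermitian matrix, reduce to a scalar geometric sum for each eigenvalue, and bound it using that $\epsilon\le s_{\max}(\mathbf A)^{-1}$ keeps the angle small while $|\lambda_j|\ge s_{\min}(\mathbf A)$ bounds it away from zero. The only cosmetic difference is that you write the geometric sum in Dirichlet-kernel form and invoke Jordan's inequality, whereas the paper bounds $|e^{i\theta}-1|$ from above by $2$ and from below by $|\theta|/2$ directly — the same estimate up to a factor of $2$, and your version even yields the marginally sharper constant $\pi\delta/4$.
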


\begin{proof}
Let $x\in\mathbb R$ with $|x|\in[s_{\min}(\mathbf A), s_{\max}(\Bf A)]$, and note that
\begin{align}\label{eq:number-sum-bound}
\begin{split}
\left|\frac{1}{\sqrt{\lceil T/\epsilon\rceil}}\sum_{k=0}^{\lceil T/\epsilon \rceil-1} e^{\epsilon x k}\right| &= \frac{1}{\sqrt{\lceil T/\epsilon \rceil}} \left|\frac{e^{ix\lceil T/\epsilon\rceil \epsilon}-1}{e^{i\epsilon x}-1} \right|\\
&\leq \frac{4}{\sqrt{T\epsilon}} \frac{1}{x}\\
&\leq \frac{4}{\sqrt{T\epsilon} \, s_{\min}(\mathbf A)}\\
&\leq \delta
\end{split}
\end{align}
where the second line follows from the fact that $|e^{i\theta}-1|\leq 2$ for all real $\theta$, and that $|e^{i\theta}-1|\geq \frac12 |\theta|$ for $-1\leq \theta\leq 1$. As for an anti-Hermitian $\Bf A$, we have $\exp(\mathbf A t)= \sum_{j} e^{i\lambda_jt} \ket{v_j}\bra{v_j}$ where $i\lambda_j$ are eigen-values of $\mathbf A$ satisfying $|\lambda_j|\in[s_{\min}(\mathbf A), s_{\max}(\mathbf A)]$. Using the statement above, and the aforementioned spectral decomposition for $e^{\Bf At}$, we can readily get \eqref{eq:matrix-sum-bound}.
\end{proof}

Now, recall the solution to a Gaussian dynamics \eqref{eq:solution-to-gaussian-ode} rewritten below
\begin{align}
\boldsymbol\mu(t) = \exp(\boldsymbol\Omega \mathbf M (t-t_0)) \boldsymbol\mu(t_0) + (\boldsymbol\Omega \mathbf M)^{-1} \left(\exp(\boldsymbol\Omega \mathbf M (t-t_0)) - \mathbb I \right) \boldsymbol\Omega \mathbf d.
\end{align}

Let $\mathbf A$ be the matrix of interest in our poly-conditioned-$\mathsf{MatInv}$ problem. Note that it is sufficient to invert anti-Hermitian matrices. This is due to the fact that for any given matrix $\mathbf B$, we can write
\begin{align}
\begin{pmatrix}
0 & \Bf B\\
-\Bf B^T & 0
\end{pmatrix}^{-1}
=
\begin{pmatrix}
0 & -(\Bf B^T)^{-1}\\
\Bf B^{-1} & 0
\end{pmatrix},
\end{align}
and hence, by inverting the anti-symmetric matrix $\begin{pmatrix}
0 & \Bf B\\
-\Bf B^T & 0
\end{pmatrix}$, we are able to find $\Bf B^{-1}$ (note that our transformation leaves the condition number unchanged).
Hence, from hereon, we consider the input matrix $\mathbf A$ is anti-Hermitian.
The idea of the reduction then is to consider
\begin{align}
\Bf M^{(xx)} = \Bf M^{(pp)} = 0,\quad \Bf M^{(xp)} = \mathbf A.
\end{align}
With this choice, we have
\begin{align}\label{eq:reduction-M}
\boldsymbol\Omega \mathbf M = \begin{pmatrix}
\mathbf A & 0\\
0 & \mathbf A
\end{pmatrix}
= \mathbf A \oplus \mathbf A
\end{align}
Note that such a Hamiltonian, applies a beamsplitter on all modes, where the unitary $U$ describing the evolution in the annihilation operator's basis i.e., $a\mapsto Ua$, is a real-valued matrix. Moreover, in the position and momentum operator basis, we have that the evolution of the position and momentum parts decouple due to \eqref{eq:reduction-M}. 

Consider the following set of equations
\begin{align}
\begin{split}
\boldsymbol\mu_1 &= \begin{pmatrix}
\mathbf{A}^{-1}\left(\exp(\epsilon\mathbf A) - \mathbb I\right) & 0\\
0 & \mathbf{A}^{-1}\left(\exp(\epsilon\mathbf A) - \mathbb I\right) 
\end{pmatrix}\boldsymbol\Omega\mathbf d,\\
\boldsymbol\mu_2 &= \begin{pmatrix}
\mathbf{A}^{-1}\left(\exp(2\epsilon\mathbf A) - \mathbb I\right) & 0\\
0 & \mathbf{A}^{-1}\left(\exp(2\epsilon\mathbf A) - \mathbb I\right) 
\end{pmatrix}
\boldsymbol\Omega\mathbf d\\
&\; \vdots\\
\boldsymbol\mu_{N-1} &= \begin{pmatrix}
\mathbf{A}^{-1}\left(\exp((N-1)\epsilon\mathbf A) - \mathbb I\right) & 0\\
0 & \mathbf{A}^{-1}\left(\exp((N-1)\epsilon\mathbf A) - \mathbb I\right) 
\end{pmatrix} \boldsymbol\Omega\mathbf d
\end{split}
\end{align}
Also, let $\boldsymbol\mu_0=0$ for convenience. Note that taking $\epsilon = s_{\max}(\Bf A)^{-1}$ with $N=\lceil \frac{4\kappa^2(\Bf A)}{\delta^2} \rceil$ gives
\begin{align}
\norm{\sum_{i=0}^{N-1} \frac{\boldsymbol\mu_i}{\sqrt N} + \boldsymbol{A}^{-1}\boldsymbol\Omega\Bf d} \leq \delta \norm{\Bf A^{-1}}\norm{\Bf d}
\end{align}
from \cref{lem:gaussain-bounding-lemma}. This argument provides us with the following. In order to compute $\boldsymbol\mu_i$, we merely need to run the Hamiltonian described by $\Bf M := -\boldsymbol\Omega \mathbf A$ for time $t_i = i \epsilon = is_{\max}^{-1}(\Bf A)$. Therefore, say we want to compute the $(i,j)$ entry of $\mathbf A^{-1}$ with precision $\tilde\delta$. It suffices to set the displacements to $\mathbf d = (0, -\Bf e_j)$, $\tilde\delta = s^{-1}_{\min}(\mathbf A)\delta$, and take the average of the measurement of the $i$-th location in each of the $N-1$ chunks which can be performed via the gadget of \cref{prop:sample-mean}. Overall, we have used $T = O(\frac{\kappa(\mathbf A)^4}{\tilde\delta^2})$ amount of time and $m\cdot N = O(m\frac{\kappa(\Bf A)^2}{\tilde\delta^2})$ many modes, where $m$ is the size of $\Bf A$ (i.e., $\Bf A\in\mathbb R^{m\times m}$). Finally, we need to ensure that the energy bound is also satisfied. Note that the states we obtain have covariance matrices and means as
\begin{align}
\boldsymbol\mu_i(t) = \Bf A^{-1} \left(U_i(t) - \mathbb I \right) \boldsymbol{\Omega} \Bf d, \quad \boldsymbol\Sigma_i = \frac12 \mathbb I,
\end{align}
where $U_i(t)$ are unitaries. Therefore, we have that the energy at any time $t$ is bounded by $ E^\ast = \sup_t \norm{\boldsymbol\mu_i(t)}^2 + \frac12 \times (\# \mathrm{modes})\leq \norm{\mathbf A^{-1}} + O(m\frac{\kappa(\Bf A)^2}{\tilde\delta^2}) = \mathsf{poly}(\kappa(\Bf A), \frac1\delta,m)$.
We summarize this result in the following theorem.

\begin{theorem}\label{thm:BQL-is-in-GDC}
We have $\mathbf{BQL}\subseteq\mathbf{GDC}$. It is the case that
\begin{align}
\text{poly-conditioned-}\mathsf{MatInv} \leq_{\mathbf L} \mathsf{GausSim},
\end{align}
which also implies $\mathbf{BQL}\subseteq\mathbf{GDC}$.
\end{theorem}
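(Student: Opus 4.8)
The plan is to prove the stated logspace reduction $\text{poly-conditioned-}\mathsf{MatInv} \leq_{\mathbf L} \mathsf{GausSim}$; since poly-conditioned-$\mathsf{MatInv}$ is $\mathbf{BQL}$-complete \cite{fefferman2021eliminating} and $\mathsf{GausSim}$ was shown above to be $\mathbf{GDC}$-complete, this delivers $\mathbf{BQL}\subseteq\mathbf{GDC}$ at once. First I would reduce to the \emph{anti-Hermitian} case: given an arbitrary well-conditioned real $\mathbf B$, pass to the dilation $\left(\begin{smallmatrix} 0 & \mathbf B \\ -\mathbf B^{T} & 0\end{smallmatrix}\right)$, which is real anti-symmetric, has the same condition number as $\mathbf B$, and whose inverse displays $\mathbf B^{-1}$ in an off-diagonal block. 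Hence it suffices to invert a real anti-Hermitian $\mathbf A$, and I read off a chosen entry $(\mathbf A^{-1})_{ij}$.

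Second, I would encode $\mathbf A$ into a quadratic Hamiltonian by setting $\mathbf M^{(xx)}=\mathbf M^{(pp)}=0$ and $\mathbf M^{(xp)}=\mathbf A$, so that, using anti-symmetry $\mathbf A^{T}=-\mathbf A$, the symplectic generator decouples as $\boldsymbol\Omega\mathbf M=\mathbf A\oplus\mathbf A$ (see \eqref{eq:reduction-M}); crucially this is a \emph{passive} transformation (a beam-splitter network, $e^{\boldsymbol\Omega\mathbf M t}$ orthogonal), so the covariance stays pinned at the vacuum value $\tfrac12\mathbb I$ with no squeezing. Running this Hamiltonian from the vacuum for times $t_i=i\epsilon$ with displacement $\mathbf d=(0,-\mathbf e_j)$ and reading the closed form \eqref{eq:solution-to-gaussian-ode} gives mean vectors $\boldsymbol\mu_i=\mathbf A^{-1}\bigl(\exp(i\epsilon\mathbf A)-\mathbb I\bigr)\boldsymbol\Omega\mathbf d$. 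The key identity is $\tfrac1N\sum_{i=0}^{N-1}\boldsymbol\mu_i=\mathbf A^{-1}\bigl(\tfrac1N\sum_i e^{i\epsilon\mathbf A}-\mathbb I\bigr)\boldsymbol\Omega\mathbf d$, and I would invoke \cref{lem:gaussain-bounding-lemma} to kill the geometric-sum term: with $\epsilon=s_{\max}(\mathbf A)^{-1}$ and $N=\lceil 4\kappa(\mathbf A)/\delta\rceil$ that term is $\delta$-small in operator norm, so the time-average of the means approximates $-\mathbf A^{-1}\boldsymbol\Omega\mathbf d$ up to error $\delta\lVert\mathbf A^{-1}\rVert$, whose coordinate $i$ recovers $(\mathbf A^{-1})_{ij}$ up to a known sign.

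Third, I would realize this average inside a single $\mathbf{GDC}$ instance. Each $\boldsymbol\mu_i$ is produced by a distinct Gaussian circuit, and I combine the $N$ of them with the energy-preserving beam-splitter gadget of \cref{prop:sample-mean}, yielding one output mode whose position measurement has mean $\tfrac1{\sqrt N}\sum_i(\boldsymbol\mu_i)_{\mathrm{coord}}=\sqrt N\cdot\bigl(\tfrac1N\sum_i(\boldsymbol\mu_i)_{\mathrm{coord}}\bigr)$ and variance bounded by $E^{\ast}$. Since $N$ is a known polynomial, I rescale the $\mathsf{GausSim}$ thresholds $a,b$ by $\sqrt N$; this \emph{amplifies} rather than shrinks the promise gap while leaving the variance bounded, so the gap-decision of \cref{def:Gaussian-simulation} applies. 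Finally I would check the three budgets: modes $m\cdot N=O(m\,\kappa(\mathbf A)^2/\tilde\delta)$, total time $O(1/(s_{\min}^2(\mathbf A)\tilde\delta))$, and energy $E^{\ast}=\sup_t\lVert\boldsymbol\mu_i(t)\rVert^2+\tfrac12(\#\text{modes})=O(\lVert\mathbf A^{-1}\rVert^2)+O(m\,\kappa(\mathbf A)^2/\tilde\delta)$, all $\mathsf{poly}(\kappa(\mathbf A),1/\delta,m)$ by well-conditioning, together with logspace uniformity of emitting all $N$ triples and the beam-splitter wiring.

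The hard part will be keeping modes, time, and especially \emph{energy} simultaneously polynomial: the averaging needs $N\sim\kappa/\delta$ copies, and one must verify that the beam-splitter combination genuinely leaves the total energy invariant, so that the per-copy energy (governed by $\lVert\mathbf A^{-1}\rVert\le 1/s_{\min}(\mathbf A)$ through the displacement and by the orthogonality of $e^{\boldsymbol\Omega\mathbf M t}$ keeping $\boldsymbol\Sigma=\tfrac12\mathbb I$) does not compound across the $N$ branches. A secondary, more routine obstacle is arguing logspace uniformity explicitly, since each triple $(\mathbf M_i,\mathbf d_i,t_i)$ depends on $\mathbf A$ only through the fixed encoding and the index $i$.
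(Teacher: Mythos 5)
Your proposal is correct and follows essentially the same route as the paper's proof: the anti-symmetric dilation $\left(\begin{smallmatrix} 0 & \mathbf B \\ -\mathbf B^{T} & 0\end{smallmatrix}\right)$, the encoding $\mathbf M^{(xp)}=\mathbf A$ giving $\boldsymbol\Omega\mathbf M=\mathbf A\oplus\mathbf A$, the time-averaged means killed by \cref{lem:gaussain-bounding-lemma} with $\epsilon=s_{\max}(\mathbf A)^{-1}$ and $N=\lceil 4\kappa(\mathbf A)/\delta\rceil$, the beam-splitter averaging gadget of \cref{prop:sample-mean}, and the same mode/time/energy accounting. The energy concern you flag is resolved exactly as you suspect: the averaging network is passive, so $\boldsymbol\Sigma=\tfrac12\mathbb I$ throughout and the total energy is $\sup_t\lVert\boldsymbol\mu(t)\rVert^2$ plus half the mode count, which stays polynomial.
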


\noindent Together with \cref{thm:GDC-is-in-BQL}, this completes the proof of \cref{thm:GDC=BQL}.

\section{The power of non-Gaussian bosonic computations}
\label{sec:CVBQP}

As discussed before, a formulation for the theory of computation using continuous quantum variables was first proposed by Lloyd and Braunstein \cite{lloyd1999quantum}. In this model, a continuous variable corresponds to a possibly unbounded operator $O$ acting on an infinite-dimensional Hilbert space $\mathcal H$. 
In the original paper by Lloyd and Braunstein \cite{lloyd1999quantum} and subsequent works, such as that by Sefi and van Loock \cite{sefi2011decompose}, it is argued by an approximation argument that the Gaussian gate set $\set{e^{it_1X}, e^{it_2X^2}, e^{i\frac{\pi}{4}(X^2 + P^2)}}$ along with a single non-Gaussian gate of degree 3 or more in the position and momentum operators, for example $e^{it_3X^3}$, is sufficient to simulate the evolution generated by Hamiltonians that are arbitrary polynomial operators in terms of $X$ and $P$. This statement was made precise in the context of (smooth) controllability theory by Wu, Tarn and Li in \cite{wu2006smooth}, however the computational efficiency of the simulation is unknown to the best of our knowledge. In successive works due to Bartlett et al.\ \cite{bartlett2002efficient} and Ukai et al.\ \cite{ukai2010universal}, the necessity of Gaussian gates for simulating all degree-2 operators is also shown.

In what follows, we define and study complexity classes of bounded-error continuous-variable quantum polynomial time computations, based on circuits generated by Gaussian and non-Gaussian Hamiltonians that are polynomials in the position and momentum bosonic operators, in \cref{sec:CVBQPdef}. We show that non-Gaussian bosonic computations can simulate $\mathbf{BQP}$ in \cref{sec:BQP-in-CVBQP}. Then, we focus on bosonic computations based on Gaussian and cubic phase gates and show that they can be strongly simulated in $\mathbf{EXPSPACE}$ in \cref{sec:expspace}. Finally, we study the problem of computing expectation values for the output of these bosonic circuits in \cref{sec:PARalgorithm} and show that it is contained in $\mathbf{PSPACE}$.

\subsection{\textbf{CVBQP}: bounded-error continuous-variable quantum polynomial time computations}
\label{sec:CVBQPdef}

In order to define bosonic computational complexity classes based on polynomial Hamiltonians, we first define the corresponding gates and circuits:

\begin{definition}[Bosonic gate of constant degree] A bosonic gate of constant degree over $n$ modes applied for time $t$ is a unitary operator $U$ over $n$ modes which may be written as $U=e^{-itH}$. Here, $H\in\mathcal P_n$ is a polynomial of constant degree in the position and momentum operators of the modes with constant coefficients.
\end{definition}

\noindent For instance, Gaussian unitary gates such as displacement or squeezing fall into that class. Similarly, cubic phase gates are also bosonic gates of constant degree. 

In general, a bosonic circuit with gates of constant degree is described by a collection of Hamiltonians that are polynomials of constant degree in the position and momentum operators of the modes with constant coefficients, together with the times for which these Hamiltonian evolutions are being applied. We define the output of such a circuit to be a classical measurement outcome of the single-mode number operator $N$, which is a non-negative integer in general. More formally:

\begin{definition}[Bosonic circuit with gates of constant degree with number measurement] A bosonic circuit $\mathcal C$ over $n$ modes with $m$ gates is described by constant-degree Hamiltonians $H_1,\dots,H_m\in\mathcal P_n$, gate times $t_1,\dots,t_m\in\mathbb R$. It takes as input $x\in\{0,1\}^n$ and returns an outcome $\mathcal C(x)\in\mathbb N$ which is the result of the measurement of $N$ on the first mode of the output state
\begin{equation}
    e^{-it_mH_m}\cdots e^{-it_1H_1}|x\rangle,
\end{equation}
where $|x\rangle$ denote a coherent state over $n$ modes.
\end{definition}

%
%

This definition leads to the following notion of bosonic circuit complexity:

\begin{definition}[Bosonic circuit complexity] The circuit complexity of a bosonic circuit $\mathcal C$ over $n$ modes with $m$ gates described by constant-degree Hamiltonians $H_1,\dots,H_m$ and gate times $t_1,\dots,t_m$ is given by $t_1+\dots+t_m$.
\end{definition}

In practice, the set of gates and measurements that can be implemented on a quantum system is restricted, and the computational equivalence between different choices of gate sets and measurements is not at all straightforward. 
In what follows, we employ the notation $\mathcal C_n[H_1,\dots,H_m]$ to denote a bosonic circuit with specific non-Gaussian gates generated by Hamiltonians $H_1,\dots,H_m$ together with Gaussian gates, and single-mode particle-number measurement $N$.
Note that we only specify the non-Gaussian gates for brevity, as one can always efficiently compile any Gaussian unitary generated by a quadratic polynomial using any Gaussian gate set that is universal for Gaussian computations, such as $\set{e^{it_1X}, e^{it_2X^2}, e^{i\frac{\pi}{4}(X^2 + P^2)}}$ \cite{becker2021energy}.  

Given a choice of bosonic gates, we define the corresponding complexity class of bounded-error continuous-variable quantum polynomial time computations as follows:

\begin{definition}[CVBQP]\label{def:CVBQPfixednG}
    $\mathbf{CVBQP}[H_1,\dots,H_m]$ is the class of languages $L \subseteq \{0,1\}^*$ for which there exists a polynomial-time uniform family $\{\mathcal C_n[H_1,\dots,H_m]\}_{n\in\mathbb N}$ of bosonic circuits of polynomial circuit complexity with non-Gaussian gates $H_1,\dots,H_m\in\mathcal P_{O(1),d}$ with $d>2$ and single-mode number measurement, such that: 
    \begin{itemize}
        \item for all $x\in L$, $\mathcal C_{|x|}[H_1,\dots,H_m](x)\in[b,b+E]$ (accepts) with probability greater than $\frac23$,
        \item for all $x \notin L$,  $\mathcal C_{|x|}[H_1,\dots,H_m](x)\in[a-E,a]$ (rejects) with probability greater than $\frac23$,
    \end{itemize} 
    given constants $a,b,E=O(\mathsf{poly}(n))$ such that $b-a\ge\Omega(\frac1{\mathsf{poly}(n)})$ and $a-E\ge0$.
\end{definition}

\noindent Here and after, we omit the $a,b,E$ dependencies for brevity. From a computational perspective, the acceptance intervals in the definition ensure that the measurement outcome (which in principle could be any non-negative integer) may be processed by a Turing machine up to sufficient precision to assess whether it is smaller than $a$ or greater than $b$. From a physical standpoint, it correspond to the practical limitation of a real-world scenario in which the energy range of a computation is bounded on average (see \cref{sec:discussionE} for a discussion).

We write $\mathbf{CVBQP}$ when no restriction on the set of constant-degree gates is made. We have $\mathbf{CVBQP}[H_1,\dots,H_m]\subseteq\mathbf{CVBQP}$ for any $H_1,\dots,H_m\in\mathcal P_{O(1),d}$ with $d>2$.
As previously mentioned, the computational equivalence between different non-Gaussian gate sets is a highly non-trivial problem, and only known in select cases where exact gate decompositions are available \cite{kalajdzievski2019exact}. We leave the reverse inclusion $\mathbf{CVBQP}\subseteq\mathbf{CVBQP}[H_1,\dots,H_m]$ for specific $H_1,\dots,H_m\in\mathcal P_{O(1)}$ of degree greater than $2$ as an open question.

\cref{def:CVBQPfixednG} leads to the following $\mathbf{CVBQP}[H_1,\dots,H_m]$-complete problem.

\begin{definition}[Probability estimation for bosonic circuits]\label{def:CVBQPfixednGcompleteprob}
    Given a polynomial-time uniform family of bosonic circuits of polynomial circuit complexity $\{\mathcal C_n[H_1,\dots,H_m]\}_{n\in\mathbb N}$ with non-Gaussian gates $H_1,\dots,H_m\in\mathcal P_{O(1),d}$ with $d>2$, $\mathsf{ProbaEstBosonCircuit}[H_1,\dots,H_m]$ is the problem of deciding whether $\mathrm{Pr}[C_n[H_1,\dots,H_m](0)\in[b,b+E]]$ is greater that $\frac23$ or $\mathrm{Pr}[C_n[H_1,\dots,H_m](0)\in[a-E,a]]$ is greater than $\frac23$, promised that one of the two cases hold, given $a,b,E$ such that $b-a\ge\Omega(\frac1{\mathsf{poly}(n)})$ and $a-E\ge0$.
\end{definition}

\begin{proposition}\label{prop:CVBQPfixednGcompleteprob}
    $\mathsf{ProbaEstBosonCircuit}[H_1,\dots,H_m]$ is $\mathbf{CVBQP}[H_1,\dots,H_m]$-complete.
\end{proposition}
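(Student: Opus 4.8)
The plan is to prove containment and hardness separately, both via the standard ``hardwiring the input'' argument that underlies circuit-evaluation completeness results. The only continuous-variable-specific ingredient is the observation that an arbitrary coherent-state input can be synthesized from the vacuum by a single layer of displacement gates, which are Gaussian and therefore already freely available in the $\mathbf{CVBQP}[H_1,\dots,H_m]$ model (recall that only the non-Gaussian gates need to be specified).

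For containment, I would argue that $\mathsf{ProbaEstBosonCircuit}[H_1,\dots,H_m]$ lies in $\mathbf{CVBQP}[H_1,\dots,H_m]$ essentially by definition. An instance is a description of a polynomial-complexity bosonic circuit $\mathcal C_n[H_1,\dots,H_m]$ acting on the vacuum, and the promise is exactly that either $\Pr[\mathcal C_n(0)\in[b,b+E]]>\tfrac23$ or $\Pr[\mathcal C_n(0)\in[a-E,a]]>\tfrac23$. The verifier needed to decide the problem is the circuit $\mathcal C_n$ itself, run on the vacuum with a final single-mode number measurement; the acceptance and rejection intervals coincide with those in \cref{def:CVBQPfixednG}, so nothing remains to check beyond matching the polynomial-time uniformity of the instance with the uniformity demanded of the verifier family.

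For hardness, let $L\in\mathbf{CVBQP}[H_1,\dots,H_m]$, witnessed by a polynomial-time uniform family $\{\mathcal C_n[H_1,\dots,H_m]\}_{n}$. Given an input $x\in\{0,1\}^n$, the reduction first computes the description of $\mathcal C_n$ in polynomial time using the uniformity machine, and then prepends a layer of single-mode displacement gates $\bigotimes_{i=1}^n D(x_i)$, producing a new circuit $\mathcal C_n^{x}$. Since $D(x_i)\ket0=\ket{x_i}$, the circuit $\mathcal C_n^x$ acting on the vacuum reproduces exactly the action of $\mathcal C_n$ on the coherent input $\ket x$, so $\mathcal C_n^x(0)$ and $\mathcal C_n(x)$ are identically distributed. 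Consequently $x\in L$ maps to a YES instance and $x\notin L$ to a NO instance of $\mathsf{ProbaEstBosonCircuit}[H_1,\dots,H_m]$, and the promise is always satisfied because a language guarantees that exactly one of $x\in L$, $x\notin L$ holds. The displacement gates are generated by Hamiltonians linear in $X$ and $P$, hence Gaussian, so they may be absorbed into the freely available Gaussian part of the circuit without enlarging the specified non-Gaussian gate set $\{H_1,\dots,H_m\}$; they add only constant circuit complexity per mode, preserving the polynomial circuit-complexity bound.

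There is no serious obstacle here: the result is a routine completeness statement, and the points that warrant care are bookkeeping rather than mathematical. One must verify that prepending the displacement layer keeps the construction inside the allowed gate set (which holds precisely because displacements are Gaussian), that the reduction runs in polynomial time (which follows from uniformity), and that the acceptance and rejection intervals $[b,b+E]$ and $[a-E,a]$ together with the constraint $a-E\ge0$ are transported unchanged. To be fully rigorous about the energy promise implicit in the definition, one would also note that the displacement by $x\in\{0,1\}^n$ changes the average particle number of the input by at most $\sum_i|x_i|^2\le n=O(\mathsf{poly}(n))$, so the polynomial energy bounds are maintained.
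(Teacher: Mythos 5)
Your proof is correct and takes essentially the same route as the paper, whose entire argument is the one-line observation that the input $x$ can be hardwired into a new circuit $\mathcal C_n^x$ by prepending a layer of displacement gates with amplitudes $x_1,\dots,x_n$, reducing to a vacuum-input instance. Your additional bookkeeping (displacements are Gaussian so the non-Gaussian gate set is unchanged, circuit complexity and energy bounds are preserved) is sound and merely makes explicit what the paper leaves implicit.
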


\begin{proof}
    Without loss of generality, we can always hardwire the description of $x=(x_1,\dots,x_n)$ into a new circuit $\mathcal C_n^{x}$ by adding a layer of displacement operators with amplitudes $x_1\dots x_n$ and assume that the input is the vacuum state.
\end{proof}

In what follows, we provide upper and lower bounds on specific instances of $\mathbf{CVBQP}[H_1,\dots,H_m]$.

\subsection{Simulating \textbf{BQP} with non-Gaussian computations}
\label{sec:BQP-in-CVBQP}

It is well-known that discrete-variable quantum computations may be embedded in bosonic systems. Some of the most prominent ways are due to Knill, Laflamme and Milburn \cite{knill2001scheme} and to Gottesman, Kitaev and Preskill (GKP) \cite{gottesman2001encoding}. The former encodes a qubit using a single-particle Fock state in two modes (\textit{dual-rail encoding}), and performs the computation using Gaussian (passive linear) unitary operations together with non-Gaussian (particle-number) measurements and feed-forward. The latter encodes a qubit in a complex quasi-periodic non-Gaussian state known as a GKP state (\textit{GKP encoding}), and performs the computation using Gaussian and cubic phase gates together with Gaussian (position) measurements and feed-forward. Both schemes are capable of simulating \textbf{BQP} computations efficiently, but require feed-forward of measurement outcomes and a supply of non-Gaussian states.

We show that a simple instance of bosonic computations using constant-degree gates on coherent state input and number measurement, as in \cref{def:CVBQPfixednG}, can simulate universal discrete-variable quantum computations, without the need for feed-forward of measurement outcomes.

First, we define a pair of operators analogous to the Pauli unitaries for a qubit.
\begin{definition}\label{def:PauliLikeOperators}
    Let 
    \begin{align}
    \overline{\sigma_x} := \left(\mathbb I - a^\dagger a  \right) a + h.c., \quad \overline{\sigma_z} := \mathbb I -2a^\dagger a.
    \end{align}
\end{definition}

\begin{theorem}\label{th:CVBQPcontainsBQP} 
It is the case that
\begin{equation}
\mathbf{BQP}\subseteq \mathbf{CVBQP}[\overline{\sigma_x},\overline{\sigma_z}\otimes\overline{\sigma_z}].
\end{equation}
\end{theorem}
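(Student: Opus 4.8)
The plan is to encode one logical qubit into the two lowest Fock states of each bosonic mode, $V := \mathrm{span}\{\ket0,\ket1\}$, and to show that the designated Hamiltonians act inside this code space exactly as the corresponding Pauli operators while leaving $V$ invariant. First I would compute the generators on the Fock basis. Since $\overline{\sigma_z}=\mathbb I-2a^\dagger a$ is diagonal, it preserves $V$ and satisfies $\overline{\sigma_z}\ket0=\ket0$, $\overline{\sigma_z}\ket1=-\ket1$, so it restricts to the Pauli $Z$. For $\overline{\sigma_x}=(\mathbb I-a^\dagger a)a+a^\dagger(\mathbb I-a^\dagger a)$ a direct calculation gives $\overline{\sigma_x}\ket0=\ket1$ and $\overline{\sigma_x}\ket1=\ket0$, i.e.\ the Pauli $X$ on $V$. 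The crucial design feature is that the would-be leakage term $a^\dagger(\mathbb I-a^\dagger a)\ket1$ vanishes because $(\mathbb I-a^\dagger a)\ket1=0$, so $\overline{\sigma_x}$ produces no $\ket2$ component; hence $\overline{\sigma_x}(V)\subseteq V$.

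The next step is to lift this subspace-preservation from the Hamiltonians to the unitaries they generate. Each generator is (essentially) self-adjoint on Schwartz space and satisfies $H(V)\subseteq V$; then for $v\in V$ and $w\in V^\perp$ we have $\langle Hw,v\rangle=\langle w,Hv\rangle=0$ because $Hv\in V$, so $H$ also preserves $V^\perp$. Thus $V$ reduces $H$, and by the spectral theorem the flow $e^{-itH}$ leaves $V^{\otimes n}$ invariant, acting on it exactly as $e^{-it\sigma}$ for the associated qubit operator $\sigma$. Applying this to $\overline{\sigma_x}$ yields arbitrary $X$-rotations $R_x(t)$; the Gaussian phase gate $e^{i\phi a^\dagger a}$, available for free in $\mathbf{CVBQP}[\cdot]$, acts on $V$ as $\mathrm{diag}(1,e^{i\phi})$, a $Z$-rotation up to global phase; and $e^{-it\,\overline{\sigma_z}\otimes\overline{\sigma_z}}$ acts on $V\otimes V$ as the entangling rotation $e^{-it\,Z\otimes Z}$. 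Gaussian (passive) swaps, which also preserve $V^{\otimes n}$, supply full connectivity between modes.

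With arbitrary single-qubit rotations about the $X$- and $Z$-axes (hence all of $SU(2)$ by Euler decomposition) together with the entangling $ZZ$-rotation, I obtain a universal gate set for qubit computation. I would then take any $\mathbf{BQP}$ language with its uniform circuit family and compile each gate into this set, either exactly using the continuous gate parameters or, if a finite gate set is preferred, via the Solovay--Kitaev theorem \cite{dawson2005solovay}, incurring only polylogarithmic overhead. The whole circuit is realized on $V^{\otimes m}$ starting from the Fock vacuum $\ket{0^m}$, which encodes the logical all-zeros state (input bits $x_i=1$ are set by encoded $X$ gates, consistent with the vacuum-input reduction of \cref{prop:CVBQPfixednGcompleteprob}). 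Because $N\in\{0,1\}$ on $V$, the final number measurement on the output mode returns the logical output bit, and choosing $a=\tfrac12$, $b=1$, $E=\tfrac12$ translates the $\mathbf{BQP}$ acceptance probabilities directly into the $\mathbf{CVBQP}$ thresholds.

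The main obstacle I anticipate is the rigorous no-leakage argument: one must check that these genuinely infinite-dimensional, unbounded Hamiltonians preserve the two-dimensional code space \emph{exactly} — the vanishing boundary coupling is precisely what prevents $e^{-itH}$ from slowly bleeding amplitude into higher Fock levels — and that the Solovay--Kitaev synthesis, while contributing the usual approximation error, never leaves the code space (it cannot, since every exact gate we use is block-diagonal with respect to $V^{\otimes m}\oplus(V^{\otimes m})^\perp$, so the approximation error stays logical). Handling the unbounded-operator functional calculus carefully within the Schwartz-space framework is where the real care is needed; the universality and threshold bookkeeping are then routine.
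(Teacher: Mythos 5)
Your proposal is correct and follows essentially the same route as the paper's proof: encode a qubit in the $\{\ket0,\ket1\}$ Fock subspace of each mode, observe that $\overline{\sigma_x}$, $\overline{\sigma_z}$ and $\overline{\sigma_z}\otimes\overline{\sigma_z}$ are block-diagonal with respect to this code space and restrict to the corresponding Paulis, invoke universality of the generated rotations together with Solovay--Kitaev, and read out via the number measurement. Your version merely spells out in more detail the no-leakage/invariance argument and the acceptance-threshold bookkeeping that the paper treats tersely.
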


\begin{proof}
Note the form of these operators in Fock basis
\begin{align}\label{eq:ZX}
\overline{\sigma_x} &= \left(\mathbb I - a^\dagger a  \right) a + h.c. = 
\begin{pmatrix}
0 & 1 & 0\\
1 & 0 & 0\\
0 & 0 & \ast
\end{pmatrix},\\
\overline{\sigma_z} &= \mathbb I -2a^\dagger a = 
\begin{pmatrix}
1 & 0 & 0\\
0 & -1 & 0\\
0 & 0 & \ast
\end{pmatrix},
\end{align}
where the matrices are in the Fock basis and each $\ast$ denotes an operator acting on the subspace of two or more particles. Now if we start from the vacuum state on all modes, i.e., $\ket{0^n}$, and we apply Hamiltonians of the form $e^{i\overline P_1\otimes\overline P_2\otimes\cdots\otimes\overline P_n}$ for $P_1,\dots,P_n\in\{\sigma_x,\sigma_z\}$ being the usual Pauli matrices, we remain in the subspace spanned by the Fock states $\ket{m_1,\dots,m_n}$, where $m_i\in\{0,1\}$. Since unitaries generated by Pauli Hamiltonians $\sigma_z\otimes \sigma_z$, $\sigma_x$, and $\sigma_z$ form a dense subset of all unitaries, we obtain that the Hamiltonians $\overline{\sigma_x}\in\mathcal P_{1,3}$ and $\overline{\sigma_z}\otimes\overline{\sigma_z}\in\mathcal P_{2,4}$ are capable of preparing the output states of discrete-variable quantum circuits, together with quadratic Hamiltonians. Indeed, recall that from the Solovay--Kitaev theorem this dense subset also allows efficient decompositions \cite{dawson2005solovay}, e.g., one may prepare elements of the Clifford + $T$ gateset by composing constantly many such unitaries. Finally, a measurement of the first qubit in the computational basis can be simulated by measuring the number operator on the first mode.
\end{proof}

\subsection{Simulating non-Gaussian computations in \textbf{EXPSPACE}}
\label{sec:expspace}

In this section, we study the problem of \textit{strongly} simulating bosonic computations.

In the discrete-variable theory of quantum complexity, strong simulation of polynomial-size quantum circuits, i.e., compute individual amplitudes within exponentially small precision, is complete for the complexity class $\#$\textbf{P} corresponding to counting the number of solutions to an \textbf{NP}-complete problem. Roughly speaking, the proof is based on viewing each amplitude as a weighted enumeration over different paths from the input to output configurations. In the discrete-variable formalism, local quantum gates acting on a few qubits can be represented using local matrices (represented by a constant number of parameters). This locality structure is crucial to the proof since, using these local representations, we can describe each amplitude as a weighted enumeration over, at most, exponentially many paths, each of which can be evaluated in polynomial time. The exponential upper bound on the number of paths is due to the finite (constant) dimension of each subsystem in the tensor product.

Can we prove a similar result in the continuous-variable domain? In what follows, we provide evidence for a negative answer, for continuous-variable quantum circuits based on cubic and Gaussian gates, together with number measurements. The class of decision problems that can be solved using these circuits and efficient classical post-processing is denoted as $\mathbf{CVBQP}[X^3]$ (see \cref{def:CVBQPfixednG}). Because exact gate decompositions are available, we remark that $\mathbf{CVBQP}[X^k]\subseteq\mathbf{CVBQP}[X^3]$, when $k$ is divisible by 2 or 3 \cite{kalajdzievski2019exact}, so all of our upper bounds hereafter also apply to those cases.

Due to the infinite-dimensional Hilbert space even for a single mode, it is unclear how one would formulate each amplitude as a weighted enumeration of a finite number of paths. In this section, we prove that continuous-variable quantum circuits that use $T$ cubic phase gates and $T^{O(1)}$ Gaussian gates over $T^{O(1)}$ modes can be strongly simulated (i.e., each amplitude can be estimated within additive error $2^{-2^{T^{O(1)}}}$) in \textbf{SPACE} $(2^{T^{O(1)}})$. More precisely:

\begin{theorem} [Strong simulation of bosonic circuits in \textbf{EXPSPACE}]
    Given a circuit $C$ of $T$ cubic phase gates and $\poly(T)$ Gaussian gates on $\poly(T)$ modes, then for $n, m \leq 2^{T^{O(1)}}$, the amplitudes $\bra{n} C \ket{m}$ can be computed within $2^{-2^{T^{O(1)}}}$ error in \textbf{SPACE} $(2^{T O(\log T)})$. 
    \label{thm:schrod-in-expspace}
\end{theorem}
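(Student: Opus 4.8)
The plan is to compute each amplitude $\bra{n}C\ket{m}$ as a finite path sum over the Fock basis, truncated using an a priori energy bound, and then to evaluate that sum in small space by recomputing intermediate quantities rather than ever storing a full state vector. First I would invoke the energy growth analysis of \cref{prop:energyboundcubic} together with the repeated-squaring mechanism discussed in \cref{sec:elementary_calculations}: after $T$ cubic phase gates interleaved with $\poly(T)$ Gaussian gates, the average particle number of the state at every intermediate step is bounded by a doubly-exponential quantity $E^\ast = 2^{2^{T^{O(1)}}}$ (the input Fock index $m \le 2^{T^{O(1)}}$ contributes only a singly-exponential seed). By Markov's inequality this controls the Fock-space tail, so projecting each intermediate resolution of the identity onto the truncated Fock subspace of dimension $D = 2^{2^{T^{O(1)}}}$ per mode introduces an error that, summed over the $L = T + \poly(T)$ gates, stays below the target additive precision $2^{-2^{T^{O(1)}}}$.

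Next I would insert a truncated resolution of the identity between every pair of consecutive gates, writing
\begin{equation}
\bra{n}C\ket{m} = \sum_{k_1,\dots,k_{L-1}} \bra{n}U_L\ket{k_{L-1}}\bra{k_{L-1}}U_{L-1}\ket{k_{L-2}}\cdots\bra{k_1}U_1\ket{m},
\end{equation}
where each $k_i$ ranges over the truncated multimode Fock basis and each $U_i$ acts on only $O(1)$ modes, so a transition amplitude reduces to a few-mode matrix element. The two gate types are handled separately. For a Gaussian gate the Fock-basis matrix elements have closed forms in terms of Hermite polynomials, obtained from the Euler--Bloch--Messiah decomposition (\cref{thm:EBMdecomp}), each computable to doubly-exponential precision in the allotted space. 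For a cubic phase gate $e^{itX^3}$, which is diagonal in position, I would use $\bra{k}e^{itX^3}\ket{k'} = \int \overline{\phi_k(x)}\,e^{itx^3}\phi_{k'}(x)\,dx$ with $\phi_k$ the $k$-th Hermite function, approximating this single oscillatory integral by discretizing $x$ on an exponentially fine grid over a bounded region whose extent is again controlled by the energy bound, with the grid resolution and per-sample precision chosen to keep the quadrature error within budget.

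The point that yields a \emph{space} rather than a time bound is that the path sum is never materialized. I would evaluate it by a divide-and-conquer recursion on the gate sequence — the ``depth reduction'' step — so that the $(n,m)$ entry of $U_L\cdots U_1$ is obtained by splitting the product in half, summing over a single middle Fock index, and recursing on each half. The recursion has depth $O(\log L)$, and at each level one stores only a middle index and a running partial amplitude, each specified to doubly-exponential bit-length; the elementary transition amplitudes are recomputed from scratch whenever needed, trading time for space in the standard way. This keeps the working space dominated by $O(\log L)$ numbers of bit-length $2^{T^{O(1)}}$, yielding the claimed $\mathbf{SPACE}(2^{T\,O(\log T)})$ bound.

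The main obstacle is precisely the doubly-exponential energy growth: it forces the intermediate truncation dimension $D$, the magnitudes of the Gaussian matrix elements, and the spatial extent and oscillation rate of the cubic-gate integrals to all be doubly-exponential, so every elementary quantity must be represented and combined to doubly-exponential precision. The delicate part is a careful error budget showing that the truncation error, the Hermite-polynomial evaluation error, and the quadrature error each remain bounded and compound only additively across the $L$ gates, so that the total deviation stays below $2^{-2^{T^{O(1)}}}$ while all intermediate data fit within exponential space. This is exactly the feature — rooted in the repeated-squaring of \cref{sec:elementary_calculations} — that distinguishes the continuous-variable setting from the discrete-variable $\#\mathbf{P}$ path-sum, where local gates have constant-size descriptions and the effective dimension stays singly exponential.
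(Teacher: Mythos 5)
Your first half---the doubly-exponential energy bound, the quantum-Markov truncation with linearly accumulating error, and the truncated Feynman path sum over the multimode Fock basis---matches the paper's proof (\cref{prop:energyboundcubic}, \cref{prop:error}, and Eq.~\eqref{eq:feynman-path}). Where you genuinely diverge is in how the path sum is converted into a space bound. The paper shows that, after rescaling by efficiently computable prefactors, every single-gate matrix element $\bra{m}U_i\ket{n}$ is approximated by a \emph{polynomial of degree $O(m+n)$ in the circuit parameters}---via explicit Laguerre/Hermite formulas for the Gaussian gates and a truncated Airy-function expansion for the cubic phase gate (\cref{lem:cubic+gaussian-degree})---so the full amplitude is a polynomial of doubly-exponential degree computed by a doubly-exponential-size arithmetic circuit; the Valiant--Skyum--Berkowitz--Rackoff depth reduction then yields exponential depth, and \cref{thm:par-pspace} (scaled up) gives $\mathbf{EXPSPACE}$. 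You instead evaluate the matrix product by balanced divide-and-conquer on the gate sequence, keeping $O(\log T)$ stack frames each holding a Fock index and a partial accumulator of $2^{O(T\log T)}$ bits and recomputing elementary amplitudes on demand. That outer mechanism is a legitimate alternative: intermediate values stay bounded by $1$ (products of contractions), the error blow-up of $D^{O(\log T)}$ with $D=2^{2^{O(T\log T)}}$ is absorbed by a polynomial adjustment of the working precision, and the space accounting closes.

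The gap is in your base case for the cubic phase gate. You propose quadrature of $\int\overline{\phi_k(x)}\,e^{itx^3}\phi_{k'}(x)\,dx$ for Fock indices $k,k'$ up to the truncation level $2^{2^{T^{O(1)}}}$. The integrand then oscillates on a scale $1/\sqrt{k}=2^{-2^{T^{O(1)}}}$, so the grid must be \emph{doubly}-exponentially fine (your ``exponentially fine grid'' is off by an exponential), and---more seriously---each sample requires evaluating a Hermite function of doubly-exponential index to $2^{T^{O(1)}}$ bits of accuracy. The three-term recurrence has doubly-exponentially many steps, and you give no argument that roundoff does not amplify over that many steps beyond what exponential working precision can absorb; nor do you bound the quadrature error against the derivative of the integrand, which is itself doubly-exponentially large. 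This is exactly where the paper does its technical work, substituting the closed-form Airy-function expression of Miatto--Quesada and controlling its truncation explicitly (\cref{lem:airy-truncation}). Your argument would be complete if the quadrature step were replaced by those closed forms (or by any other provably exponential-space evaluation of single-gate matrix elements); as written, the claim that the base case fits in the allotted space is asserted rather than proved.
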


An immediate corollary of this Theorem is that: 
\begin{corollary}
    $\mathbf{CVBQP}[X^3]$ can be simulated in $\mathbf{EXPSPACE}$.
    \label{th:CVBQPinEXPSPACE}
\end{corollary}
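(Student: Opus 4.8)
The plan is to establish the stated corollary by proving \cref{thm:schrod-in-expspace}, from which it follows immediately: a $\mathbf{CVBQP}[X^3]$ computation on an input of size $n$ uses $T=\poly(n)$ cubic phase gates and $\poly(n)$ Gaussian gates over $\poly(n)$ modes, so the space bound $2^{O(T\log T)}$ of the theorem becomes $2^{\poly(n)}=\mathbf{EXPSPACE}$. The acceptance probability is the sum, over the polynomially many accepting first-mode Fock values $k\in[b,b+E]$ and over all configurations of the remaining modes, of $|\bra{k,\cdot}C\ket{0}|^2$, each term being an amplitude computable by the theorem; and $\mathbf{EXPSPACE}$ is closed under accumulating such a (doubly-exponentially long) sum by reusing space across terms before comparing the result to the threshold $\tfrac23$.

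The heart of the matter is thus \cref{thm:schrod-in-expspace}. First I would invoke the energy bound (\cref{prop:energyboundcubic}): starting from a Fock input $\ket m$ with $m\le 2^{T^{O(1)}}$ and applying the $\poly(T)$ gates, the average particle number of every intermediate state stays below a doubly-exponential ceiling $\mathcal E=2^{2^{T^{O(1)}}}$. This lets me truncate each mode to Fock level $L=2^{2^{T^{O(1)}}}$, so the whole Hilbert space becomes a finite matrix algebra of dimension $D=L^{\poly(T)}=2^{2^{T^{O(1)}}}$. Because the energy is uniformly bounded \emph{throughout} the evolution, each intermediate state carries tail weight at most $\mathcal E/L$ on the discarded Fock levels, so replacing every gate $U_i=e^{-it_iH_i}$ by its $D\times D$ principal block incurs an error of at most that tail weight per step; summing over the $\poly(T)$ steps keeps the total truncation error in each amplitude below $2^{-2^{T^{O(1)}}}$.

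It then remains to evaluate $\bra n\,U_g\cdots U_1\,\ket m$ for the truncated blocks within the claimed space. Rather than multiplying these doubly-exponentially large dense matrices (which would cost $\mathbf{EEXP}$ time), I would expand the amplitude as a path sum $\sum_{j_1,\dots,j_{g-1}}(U_g)_{n,j_{g-1}}\cdots(U_1)_{j_1,m}$ and accumulate it by enumerating index tuples one at a time. Storing a tuple with the running partial product costs $O(g\log D)=\poly(T)\cdot 2^{T^{O(1)}}=2^{T^{O(1)}}$ bits, and holding numbers to the required $2^{-2^{T^{O(1)}}}$ precision costs another $2^{T^{O(1)}}$ bits; this is the depth-reduction step that converts the double-exponential-time bound into single-exponential space, with the sharper $2^{O(T\log T)}$ exponent coming from the precise energy scaling in \cref{prop:energyboundcubic}. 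The remaining ingredient is a subroutine computing each block entry to doubly-exponential precision within this budget: Gaussian-gate entries have closed forms in the Fock basis (displacement, rotation, and squeezing acting on Fock states via Hermite polynomials), while for $e^{-itX^3}$ one uses that $X^3$ is a bandwidth-$3$ operator in the Fock basis with entries of size $\poly(L)$, so $\bra j e^{-itX^3}\ket k$ is itself a convergent series of nested Fock path sums that can likewise be accumulated in single-exponential space; one then propagates all rounding errors through the $\poly(T)$ factors.

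The main obstacle I expect is the rigorous error control of the truncation in the presence of doubly-exponential operator norms and coefficients: a cubic phase gate couples low and high Fock levels with entries that grow polynomially in $L$, so a crude operator-norm bound on the difference between $U_i$ and its block is useless, and one genuinely needs the uniform energy bound to argue \emph{state-dependent} rather than norm-dependent truncation errors (of a Lieb--Robinson/energy-conservation flavor) that do not blow up across the $\poly(T)$ layers. A secondary obstacle is showing that the cubic-gate entries are computable to doubly-exponential precision within single-exponential space, which requires bounding the number of series terms and path lengths needed for convergence despite the doubly-exponential norm of the truncated $X^3$. Once these two estimates are in place, the space accounting above delivers \cref{thm:schrod-in-expspace}, and hence \cref{th:CVBQPinEXPSPACE}.
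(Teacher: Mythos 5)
Your overall architecture matches the paper's up to the final step: you invoke the doubly-exponential energy bound (\cref{prop:energyboundcubic}), truncate each gate using a state-dependent (energy-based) error argument --- this is exactly what \cref{prop:error} does via the quantum Markov inequality, so your ``main obstacle'' is resolved the way you anticipate --- and then expand the amplitude as a Feynman path sum. Where you diverge from the paper is in how the path sum is evaluated in exponential space: you enumerate the doubly-exponentially many index tuples sequentially, reusing space for the running accumulator, whereas the paper shows the amplitude is (up to efficient rescaling) a polynomial of doubly-exponential degree in the circuit parameters (\cref{thm:degree-bound}), applies the Valiant--Skyum--Berkowitz--Rackoff depth reduction to get exponential depth, and converts parallel time to space via \cref{thm:par-pspace}. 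Your sequential route is more elementary and the space accounting for the accumulator and index tuple is sound, \emph{provided} each matrix entry of each truncated gate can be produced to doubly-exponentially small precision in exponential space.

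That proviso is where the genuine gap lies, and it is not the ``secondary obstacle'' you suggest it is. Your proposed subroutine for $\bra{j}e^{-itX^3}\ket{k}$ --- treating $X^3$ as a banded Fock-basis operator and summing a ``convergent series of nested Fock path sums'' --- is precisely the Taylor-expansion strategy that the paper demonstrates fails: the partial sums $\sum_{j\le l}\frac{(is/3)^j}{j!}\bra{0}X^{3j}\ket{0}$ grow like $l^l$ and do not converge to the amplitude (\cref{sec:schrod-vs-heis}), because the sum and the underlying Gaussian integral cannot be exchanged. Working instead with the exponential of the \emph{truncated} generator does not rescue this: $e^{-it(X^3)_E}$ is not the truncation $(e^{-itX^3})_E$ that \cref{prop:error} controls, the truncated generator has doubly-exponential norm so the series needs doubly-exponentially many terms, and the intermediate terms suffer catastrophic cancellation among numbers with doubly-exponentially many digits, which cannot be stored in exponential space. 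The paper's actual resolution is the closed-form expression for the cubic-phase-gate matrix elements in terms of derivatives of the Airy function (\cref{lem:cubic+gaussian-degree}), together with a dedicated truncation bound for the Airy series (\cref{lem:airy-truncation}) showing that a polynomial of degree $O(m+n)$ in $s^{-1/3}$ approximates the (rescaled) entry to the required accuracy. Without this ingredient, or an equivalent one, your path-sum enumeration has nothing well-defined to enumerate, so the proof is incomplete at its computational core.
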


 \begin{proof}[Proof sketch of \cref{thm:schrod-in-expspace}]
 The proof strategy is as follows: first, we show that the output state of a $\mathbf{CVBQP}[X^3]$ circuit with polynomial bosonic circuit complexity can be well approximated by the output state of a bounded-energy version of the circuit (\cref{prop:error}). Then, we show that this energy bound can be taken as doubly exponential (\cref{prop:energyboundcubic}). This implies that $\mathbf{CVBQP}[X^3]$ is included in $\mathbf{EEXP}$ (\cref{thm:eexp-bound}). We then use a Feynman path technique to compute the output probability of $\mathbf{CVBQP}[X^3]$ circuits in an appropriate form. To bring the complexity down to \textbf{EXPSPACE}, we show that up to rescaling by efficiently computable numbers, the Feynman path form of the amplitudes are polynomials of degree at most doubly exponential in the number of cubic phase gates. We then apply a well-known depth reduction technique of Valiant, Skyum, Berkowitz, and Rackoff \cite{valiant1981fast}, which states: if a polynomial of degree $d$ can be computed using a boolean circuit of size $s$, then it can be calculated using a boolean circuit of size $(sd)^{O(1)}$ and depth $O(\log (s) \log (d))$. This implies strong simulation is possible in exponential depth and doubly exponential size. Using standard techniques, we conclude that this model can be simulated in \textbf{EXPSPACE}.    
 \end{proof}
 
In the following sections, we develop each of the tools outlined in the proof sketch above. We then put everything together to prove \cref{thm:schrod-in-expspace}. 

\subsubsection{Energy cutoff and doubly exponential upper bound}

The main result of this section is the following:
\begin{theorem}[Strong simulation of bosonic circuits in \textbf{EEXP}]
    Given a continuous variable quantum circuit $C$ of $T$ cubic phase gates and $\poly(T)$ Gaussian gates on $\poly(T)$ modes there exists a classical arithmetic circuit of size $2^{2^{O(T \log T)}}$ circuit which strongly simulates the quantum circuit, i.e., estimate $\bra{n} C \ket{m}$ for integers $m, n \leq 2^{2^{\poly(T)}}$ within $2^{-2^{\poly (T)}}$ error.
    \label{thm:eexp-bound}
\end{theorem}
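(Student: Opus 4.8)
The plan is to reduce the strong simulation to a finite-dimensional linear-algebra computation by exploiting the doubly-exponential energy cutoff, and then to bound the size of the resulting classical arithmetic circuit. Write the circuit as $C = U_g \cdots U_1$ with $g = \poly(T)$ gates, where each $U_i = e^{it_iH_i}$ is either Gaussian or a cubic phase gate. First I would invoke \cref{prop:energyboundcubic} to argue that, starting from a Fock or coherent input of energy at most $2^{2^{\poly(T)}}$, the state at every step has average particle number bounded by some $E^\ast = 2^{2^{O(T\log T)}}$. Combined with \cref{prop:error}, this shows that compressing the whole evolution onto the truncated Fock subspace $\mathcal H_D := \mathrm{span}\{\ket0,\dots,\ket{D-1}\}$ of dimension $D = 2^{2^{O(T\log T)}}$ changes each amplitude $\bra n C \ket m$ by at most $2^{-2^{\poly(T)}}$; concretely one takes $D$ a suitable polynomial multiple of $E^\ast/\epsilon$ with $\epsilon = 2^{-2^{\poly(T)}}$. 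It then suffices to compute the amplitudes of the truncated circuit $\tilde C := \tilde U_g \cdots \tilde U_1$, where $\tilde U_i = e^{it_i\tilde H_i}$ and $\tilde H_i = \Pi_D H_i \Pi_D$ is the compression of $H_i$ to $\mathcal H_D$.

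Next I would explain how each $\tilde U_i$ is assembled as a $D\times D$ matrix. In the Fock basis the position and momentum operators are banded through $a,a^\dagger$, so any constant-degree $H_i$ compresses to a banded Hermitian matrix $\tilde H_i$ of bounded bandwidth whose entries are explicit and grow only polynomially in the Fock index (for the Gaussian gates one may alternatively use the closed Hermite-polynomial matrix elements). To keep the computation purely arithmetic I would exponentiate $\tilde H_i$ by scaling-and-squaring: since $\|\tilde H_i\|\sim D^{3/2}$, choosing $s = 2^{\poly(T)}$ so that $\|t_i\tilde H_i/2^s\|$ is small makes a truncated Taylor series of $\poly$-many terms accurate, after which $s$ repeated matrix squarings recover $\tilde U_i$; each squaring costs $O(D^3)$ operations, so a single $\tilde U_i$ costs $2^{\poly(T)}\cdot 2^{2^{O(T\log T)}} = 2^{2^{O(T\log T)}}$ operations. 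The amplitude $\bra n C \ket m$ is then obtained by $g=\poly(T)$ matrix--vector multiplications applied to the input Fock vector (or to the known Fock expansion of a coherent state), each costing $O(D^2)$ operations, and reading off the $n$-th entry. Laying out all of these operations as a straight-line program yields a classical arithmetic circuit whose size is dominated by $\poly(T)$ exponentiations and products over dimension $D$, i.e.\ $2^{2^{O(T\log T)}}$, as claimed.

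The main obstacle is the precision and stability analysis rather than the gate counting. Because $\|\tilde H_i\|$ is itself doubly exponential, exponentiating it stably to additive error $2^{-2^{\poly(T)}}$ and controlling how the $\poly(T)$ successive unitary multiplications compound rounding and truncation errors requires care: I must verify that working with $2^{\poly(T)}$ bits of precision per entry (so that every intermediate number still admits a $2^{2^{\poly(T)}}$-size binary description) suffices, and that the error committed by exponentiating the compressed $\tilde H_i$ rather than truncating the true $e^{it_iH_i}$ is absorbed into the energy-leakage bound of \cref{prop:error}. The scaling-and-squaring step is the delicate one, since errors are amplified under repeated squaring and must be offset by the scaling factor $s$. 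Once these quantitative estimates are in place — essentially a triangle-inequality accounting over the $\poly(T)$ unitary factors together with the doubly-exponential operator-norm bounds — the arithmetic-circuit size and the $2^{-2^{\poly(T)}}$ accuracy guarantee follow.
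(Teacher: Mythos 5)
Your overall skeleton (doubly-exponential energy bound via \cref{prop:energyboundcubic}, truncation error via \cref{prop:error}, then a finite-dimensional linear-algebra computation of size $2^{2^{O(T\log T)}}$) matches the paper's proof, but the step where you actually produce the truncated gates contains a genuine gap. The paper's truncated gate is $U_{i,E}=\Pi_E U_i\Pi_E$, the compression of the \emph{exact} unitary, and the entire error analysis of \cref{prop:error} is tied to this choice: the per-gate bound $\|(U_i-U_{i,E})\ket{\psi_{i-1}}\|\le\sqrt{2E^*/E}$ comes from Cauchy--Schwarz together with \cref{lem:overlap}, applied to the quantity $\bra{\psi_{i-1}}U_i^{-1}\Pi_E U_i\Pi_E\ket{\psi_{i-1}}$, which only makes sense because the truncated gate literally is $\Pi_E U_i\Pi_E$. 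You instead take $\tilde U_i=e^{it_i\Pi_D H_i\Pi_D}$, the exponential of the compressed Hamiltonian, and assert that the discrepancy between $e^{it_i\Pi_D H_i\Pi_D}$ and $\Pi_D e^{it_iH_i}\Pi_D$ is ``absorbed into the energy-leakage bound of \cref{prop:error}.'' It is not: these are different operators (the first is unitary on the truncated space, the second only a contraction), and neither \cref{prop:error} nor \cref{prop:energyboundcubic} says anything about the dynamics generated by $\Pi_D H_i\Pi_D$. A Duhamel estimate reduces the discrepancy to $\int_0^{t_i}\|(I-\Pi_D)H_i\,e^{is\Pi_D H_i\Pi_D}\ket{\psi}\|\,ds$, and since $H_i$ is banded in the Fock basis this is controlled by the weight the \emph{truncated} dynamics places on Fock levels within $O(1)$ of the cutoff, multiplied by matrix elements of size $D^{O(1)}$. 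Bounding that boundary weight requires an energy or tail estimate for the truncated evolution, which you cannot import from \cref{prop:energyboundcubic}: its proof tracks the Heisenberg evolution of $N$ as a polynomial in $X$ and $P$ of bounded degree, a structure that $\Pi_D H_i\Pi_D$ (which is not a polynomial in $X,P$) destroys; and inferring it from closeness of the truncated dynamics to the true dynamics is circular, since that closeness is precisely what you are trying to prove.

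The paper sidesteps this issue entirely: rather than numerically exponentiating compressed Hamiltonians, it evaluates the matrix elements $\bra{m}U_i\ket{n}$ of the exact gates from the closed-form expressions of \cite{miatto2020fast} --- Laguerre polynomials for displacement, hyperbolic/Hermite expressions for squeezing, and Airy-function derivatives for the cubic phase gate --- and the real technical content (\cref{lem:cubic+gaussian-degree}, \cref{lem:airy-truncation}) is the precision analysis of those special functions at doubly-exponential indices. If you want to keep your matrix-exponentiation route, you must add a self-contained lemma bounding the particle-number tail of the dynamics generated by $\Pi_D H_i\Pi_D$, or otherwise bound $\|(\Pi_D e^{it_iH_i}\Pi_D - e^{it_i\Pi_D H_i\Pi_D})\ket{\psi}\|$ on the relevant states; without it, the claimed $2^{-2^{\poly(T)}}$ accuracy does not follow. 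The scaling-and-squaring bookkeeping you describe (doubly-exponential cost, $2^{\poly(T)}$ bits of precision, error doubling per squaring offset by the scaling factor) is reasonable and would go through once that missing lemma is in place.
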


\begin{proof}
    We first show that if the average energy of a computation is $E$ (corresponding to an effective number of particles that contribute to the computation), then we can effectively truncate the Hilbert space to energy $E$. More precisely, we can replace each gate $U$ (cubic or Gaussian) to $U_E$ where $U_E$ is an $E \times E$ matrix with entries $U_{E, m,n} = \bra{m} U \ket{n}$. This is established in \cref{prop:error}. Next, we bound $E$ for the computation by $2^{2^{T \log T}}$. This is established in \cref{prop:energyboundcubic}. Finally, we use a result of \cite{miatto2020fast} to evaluate each matrix entry within doubly exponential time (outlined at the end of this Section).
\end{proof}

To establish this result, we develop a framework to truncate the energy spectrum in a quantum computation. 

 For an operator $O$, let $O_E$ denote the truncation of $O$ up to particle number $E$.

\begin{proposition}
\label{prop:error}
   Let $U = U_T \ldots U_1$ be a single-mode quantum circuit of size $T$ consisting of Gaussian and cubic phase gates. Let $E^* = \max_{i \in \{0,1, \ldots , T\}} \bra{\psi_i} N \ket{\psi_i}$, where $\ket{\psi_i} = U_i \ldots U_1 \ket {\psi_0}$. Then 
   $$
   \|U_E \ket{\psi_0} - U \ket{\psi_0}\| \leq T\cdot \sqrt{\frac{2E^*}{E}} 
   $$
   Here $U_E := U_{T, E} \ldots U_{1, E}$. 
\end{proposition}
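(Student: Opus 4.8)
The plan is to proceed by a standard hybrid (telescoping) argument that reduces the global truncation error to a sum of per-gate errors, each controlled by the energy bound $E^*$. Write $\Pi_E$ for the orthogonal projector onto the span of Fock states $\ket n$ with $n<E$, so that by definition $U_{i,E}=\Pi_E U_i\Pi_E$ (extended by zero outside the low-energy block). The first observation is that each truncated gate is a compression of a unitary, hence $\|U_{i,E}\|\le1$, so any product of truncated gates is a contraction. I would then telescope
\begin{equation}
U_E\ket{\psi_0}-U\ket{\psi_0}=\sum_{i=1}^{T}\bigl(U_{T,E}\cdots U_{i+1,E}\bigr)\,(U_{i,E}-U_i)\,\bigl(U_{i-1}\cdots U_1\bigr)\ket{\psi_0},
\end{equation}
and bound each summand by the triangle inequality, using $\|U_{T,E}\cdots U_{i+1,E}\|\le1$ together with $U_{i-1}\cdots U_1\ket{\psi_0}=\ket{\psi_{i-1}}$. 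This reduces the problem to bounding $\|(U_{i,E}-U_i)\ket{\psi_{i-1}}\|$ for each $i$.

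For the per-gate bound I would use the identity
\begin{equation}
U_{i,E}-U_i=\Pi_E U_i\Pi_E-U_i=-\Pi_E U_i(I-\Pi_E)-(I-\Pi_E)U_i.
\end{equation}
The key structural point is that when applied to $\ket{\psi_{i-1}}$ the two terms land in orthogonal subspaces: $\Pi_E U_i(I-\Pi_E)\ket{\psi_{i-1}}$ lies in the range of $\Pi_E$, while $(I-\Pi_E)U_i\ket{\psi_{i-1}}$ lies in its orthogonal complement. By the Pythagorean theorem, and using $\|\Pi_E\|,\|U_i\|\le1$ together with $U_i\ket{\psi_{i-1}}=\ket{\psi_i}$,
\begin{equation}
\|(U_{i,E}-U_i)\ket{\psi_{i-1}}\|^2\le\|(I-\Pi_E)\ket{\psi_{i-1}}\|^2+\|(I-\Pi_E)\ket{\psi_i}\|^2.
\end{equation}
Each tail is then controlled by a Markov-type inequality: for any normalized $\ket\psi$,
\begin{equation}
\|(I-\Pi_E)\ket\psi\|^2=\sum_{n\ge E}|\langle n|\psi\rangle|^2\le\frac1E\sum_{n\ge E}n\,|\langle n|\psi\rangle|^2\le\frac{\langle\psi|N|\psi\rangle}{E}.
\end{equation}
Since both $\ket{\psi_{i-1}}$ and $\ket{\psi_i}$ have energy at most $E^*$ by definition of $E^*$, each per-gate error is at most $\sqrt{2E^*/E}$, and summing over the $T$ gates gives the claimed bound $T\sqrt{2E^*/E}$.

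The routine ingredients are the telescoping and the Markov inequality; the step deserving genuine care is the orthogonal decomposition, since it is exactly the orthogonality of the two error components --- rather than a crude triangle inequality --- that yields the constant $\sqrt2$ instead of $2$. The only real bookkeeping subtlety I anticipate is matching the definition of truncation precisely: one must check that $U_{i,E}$ equals $\Pi_E U_i\Pi_E$ as an operator on the full space, so that both the contraction property $\|U_{i,E}\|\le1$ and the displayed identity for $U_{i,E}-U_i$ hold, and that $E^*$ is indeed the maximum of $\langle\psi_i|N|\psi_i\rangle$ over all $i\in\{0,\dots,T\}$ so that both endpoints of every gate are covered by the tail estimate.
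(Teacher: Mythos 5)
Your proof is correct and follows the same overall strategy as the paper: a hybrid/telescoping reduction of the global error to $T$ per-gate errors, each bounded by $\sqrt{2E^*/E}$ via a Markov-type tail estimate on the number operator. Where you diverge is in how the per-gate bound is extracted. The paper expands $\|(U_i-U_{i,E})\ket{\psi_{i-1}}\|^2$ as $2\,\Re\bigl(1-\bra{\psi_{i-1}}U_i^{-1}\Pi_E U_i\Pi_E\ket{\psi_{i-1}}\bigr)$ and controls the cross term with a Cauchy--Schwarz ``quantum Markov inequality'' (their Lemma on overlaps), which requires bounding the energy of the auxiliary state $U_i\Pi_E\ket{\psi_{i-1}}$ via the operator inequality $\Pi_E U_i^{-1}NU_i\Pi_E\preceq U_i^{-1}NU_i$. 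You instead write $U_{i,E}-U_i=-\Pi_E U_i(I-\Pi_E)-(I-\Pi_E)U_i$ and exploit that the two error components land in orthogonal subspaces, so the Pythagorean theorem reduces everything to the tails $\|(I-\Pi_E)\ket{\psi_{i-1}}\|^2$ and $\|(I-\Pi_E)\ket{\psi_i}\|^2$, each at most $E^*/E$. Your route is somewhat cleaner: it needs only the energies of the actual circuit states $\ket{\psi_{i-1}},\ket{\psi_i}$ (which is exactly what $E^*$ controls), it makes the origin of the $\sqrt2$ transparent, and it sidesteps the paper's slightly opaque equality-that-should-be-an-inequality in the expansion of the squared norm. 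Both arguments yield the identical constant and the identical final bound.
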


In order to prove this result, we need the following Quantum Markov's inequality lemma:
\begin{lemma}[Quantum Markov's inequality]
    Let $ O$ be a PSD operator $: \mathbb{C}^\infty \rightarrow \mathbb{C}^\infty$, and let $\Pi_E$ be the projector onto the eigenspaces of $O$ with eigenvalues less than $E$. Let $\ket{\psi}, \ket{\phi} \in \mathbb{C}^\infty$ be such that $\bra{\psi}  O \ket{\psi} = E_\psi, \bra{\phi}  O \ket{\phi} = E_\phi$, then 
    $$
    |\braket{\psi}{\phi} - \bra{\psi}\Pi_E\ket{\phi}| \leq \frac{\sqrt{E_\psi E_\phi}}{E}.
    $$
    \label{lem:overlap}
\end{lemma}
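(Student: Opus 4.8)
The plan is to reduce the claim to a Cauchy--Schwarz estimate on the complementary high-energy subspace, combined with an operator inequality that plays the role of the classical Markov inequality (for a nonnegative random variable, $\Pr[X\ge E]\le \mathbb E[X]/E$). First I would introduce the complementary projector $\Pi_{\ge E}:=I-\Pi_E$ onto the eigenspaces of $O$ with eigenvalue at least $E$. Since $\Pi_E=I-\Pi_{\ge E}$, the quantity to bound is simply $\braket{\psi}{\phi}-\bra{\psi}\Pi_E\ket{\phi}=\bra{\psi}\Pi_{\ge E}\ket{\phi}$, so the whole lemma becomes a statement about the overlap of $\ket\psi$ and $\ket\phi$ restricted to the high-energy sector.

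The next step is to split this overlap using idempotence of the projector and Cauchy--Schwarz. Because $\Pi_{\ge E}^2=\Pi_{\ge E}$, we have $\bra{\psi}\Pi_{\ge E}\ket{\phi}=\langle \Pi_{\ge E}\psi\,|\,\Pi_{\ge E}\phi\rangle$, so $\lvert\bra{\psi}\Pi_{\ge E}\ket{\phi}\rvert\le \lVert\Pi_{\ge E}\ket{\psi}\rVert\,\lVert\Pi_{\ge E}\ket{\phi}\rVert$, and it remains to control each norm by the corresponding energy. This is where the \emph{quantum Markov inequality} enters as the operator bound $O\ge E\,\Pi_{\ge E}$. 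Writing the spectral decomposition $O=\int \lambda\,dP_\lambda$, one checks directly that
\[
O-E\,\Pi_{\ge E}=\int_{\lambda<E}\lambda\,dP_\lambda+\int_{\lambda\ge E}(\lambda-E)\,dP_\lambda\ \ge\ 0,
\]
where both integrands are nonnegative precisely because $O$ is PSD (so every $\lambda\ge0$) and because $\lambda-E\ge0$ on the support of $\Pi_{\ge E}$. Sandwiching this inequality between $\bra\psi$ and $\ket\psi$ gives $E_\psi=\bra{\psi}O\ket{\psi}\ge E\,\lVert\Pi_{\ge E}\ket{\psi}\rVert^2$, hence $\lVert\Pi_{\ge E}\ket{\psi}\rVert\le\sqrt{E_\psi/E}$, and identically $\lVert\Pi_{\ge E}\ket{\phi}\rVert\le\sqrt{E_\phi/E}$. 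Multiplying the two norm bounds yields $\lvert\bra{\psi}\Pi_{\ge E}\ket{\phi}\rvert\le\sqrt{E_\psi E_\phi}/E$, which is exactly the asserted estimate.

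The proof is elementary once the right decomposition is chosen, so I do not expect a genuine mathematical obstacle; the only point requiring care is the infinite-dimensional, unbounded setting. Specifically, $O$ (here the number operator) is unbounded, so I would invoke the spectral theorem for self-adjoint operators and note that the finiteness hypotheses $E_\psi=\bra\psi O\ket\psi<\infty$ and $E_\phi<\infty$ guarantee that $\ket\psi,\ket\phi$ lie in the form domain of $O$, which is all that is needed to make the spectral integrals and the sandwiching step rigorous; restricting to the Schwartz-space dense subspace fixed earlier in the paper removes any remaining domain subtleties. No approximation or limiting argument beyond the spectral calculus is required.
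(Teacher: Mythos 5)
Your proof is correct and follows essentially the same route as the paper's: both reduce the claim to bounding $\bra{\psi}(I-\Pi_E)\ket{\phi}$ via the operator inequality $E(I-\Pi_E)\preceq O$ followed by Cauchy--Schwarz on the projected vectors. Your write-up is merely more explicit about the spectral decomposition justifying the operator inequality and about domain issues for the unbounded operator, which the paper leaves implicit.
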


\begin{proof}
    We observe that $E (I -  \Pi_E) \preceq  O$, therefore, for any density matrix $ \rho$, $\Tr ( \rho (I-  \Pi_E)) \leq \frac{\Tr ( \rho  O)}{E}$, which is reminiscent of Markov's inequality in a quantum setting. Now we apply Cauchy Schwartz $|\bra{\psi}(I-\Pi_E)\ket{\phi}| \leq \sqrt{|\bra{\phi}(I-\Pi_E)\ket{\phi}| \cdot |\bra{\psi}(I-\Pi_E)\ket{\psi}|} \leq \frac{\sqrt{E_{\psi} E_{\phi}}}{E}$.
\end{proof}

Now, we use this lemma for $O=N$ to prove \cref{prop:error}. 

\begin{proof}[Proof of \cref{prop:error}]
The idea is very similar to proving that error grows linearly in discrete-variable quantum circuits. For $i = 1, \ldots, T$, let $\ket{\phi_i} := U_{i,E} \ldots U_{1,E}\ket{\psi_0}$, and define $\delta_i := \|\ket{\psi_i} - \ket{\phi_i}\|$. We set $\ket{\phi_0} = \ket{\psi_0}$ and $\delta_0 = 0$. Our goal is to bound $\delta_T$. We derive a recursive relationship $\delta_i$ and $\delta_{i-1}$ for any $i = 1, \ldots, T$:
\begin{align}
\begin{split}
    \delta_i &= \|U_i \ldots U_1 \ket{\psi_0} - U_{i,E} \ldots U_{1,E} \ket{\psi_0}\| \\ 
    &\leq \|(U_i- U_{i,E}) \ket{\psi_{i-1}}\| +\| U_{i,E} (\ket{\psi_{i-1}}- \ket{\phi_{i-1}})\| \\
    &\leq \|(U_i- U_{i,E}) \ket{\psi_{i-1}}\| +\|  \ket{\psi_{i-1}}- \ket{\phi_{i-1}}\| \\
    &= \|(U_i- U_{i,E}) \ket{\psi_{i-1}}\| + \delta_{i-1}.
\end{split}
\end{align}    
The second line is by triangle inequality, and the third line is by observing that $U_{i,E}$ does not change the Euclidian norm of a vector. Next, we use \cref{lem:overlap} for $O=N$ to bound 
    
\begin{align}
    \begin{split}
        1/2 \|(U_i- U_{i,E}) \ket{\psi_{i-1}}\|^2 &=  \Re{1 -\bra{\psi_{i-1}} U^{-1}_i \Pi_E U_{i} \Pi_E\ket{\psi_{i-1}}}\\
        &\leq  |1 -\bra{\psi_{i-1}} U^{-1}_i \Pi_E U_{i} \Pi_E\ket{\psi_{i-1}}|\\
        & \leq \frac{\sqrt{E_1 E_2}}{E},
    \end{split}
\end{align}
where $E_1 = \bra {\psi_{i-1}} \Pi_E U_i^{-1} O U_i \Pi_E \ket{\psi_{i-1}}$ and $E_2 = \bra{\psi_i} O \ket{\psi_i}$. We have $E_2 \leq E^*$ by definition. To see why $E_1 \leq E^*$, note that $\Pi_i U_i^{-1} O U_i \Pi_E \preceq U_i^{-1} O U_i$, and thus $E_2 \leq \bra{\psi_i} O \ket{\psi_i} \leq E^*$. As a result we obtain the recursion $\delta_i \leq \delta_{i-1} + \sqrt{2E^*/E}$, and hence $\delta_T \leq T \sqrt{\frac{2E^*}{E}}$.
\end{proof}

Next, we consider the number operator $N$ (or any low-degree operator in $X$ and $P$) and find a doubly exponential upper bound on the expectation value of this operator on the output state of a $\mathbf{CVBQP}[X^3]$ circuit of polynomial bosonic circuit complexity. 

\begin{proposition} [Energy in the circuit model]\label{prop:energyboundcubic}
    Let $U = U_T \ldots U_1$ be an $n$-mode unitary circuit composed of $T=\mathsf{poly}(n)$ Gaussian and cubic phase gates, each gate time polynomially bounded, and let $\ket{\psi} = U \ket{0^n}$. Then, $\bra{\psi}  {N} \ket {\psi} = O(2^{2^{\mathsf{poly}(n)}})$. 
\end{proposition}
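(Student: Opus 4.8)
The plan is to work entirely in the Heisenberg picture and bound $\bra\psi N\ket\psi=\bra{0^n}U^\dagger N U\ket{0^n}$ by controlling the operator $O:=U^\dagger N U$ once it has been brought into normal form (\cref{def:normalForm}), writing $O=\sum_{\mu,\nu}c_{\mu,\nu}X^\mu P^\nu$. Since $\bra\psi N\ket\psi=\sum_{\mu,\nu}c_{\mu,\nu}\bra{0^n}X^\mu P^\nu\ket{0^n}$, it suffices to bound three things: the degree of $O$ (hence the number of monomials that can appear), the maximal coefficient magnitude $\max_{\mu,\nu}|c_{\mu,\nu}|$, and the vacuum moments $|\bra{0^n}X^\mu P^\nu\ket{0^n}|$. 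The energy is then at most the product of these three quantities.

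First I would bound the degree. Conjugation by a Gaussian gate is the affine symplectic substitution of \cref{eq:affine} (extended to several modes and to $\mathit{SUM}$ gates), which preserves the total degree. Conjugation by a cubic phase gate $e^{itX_j^3}$ performs the substitution $X_j\mapsto X_j$, $P_j\mapsto P_j-3tX_j^2$, which sends a monomial of degree $d$ to a polynomial of degree at most $2d$. Since $N=\tfrac12(\sum_i X_i^2+P_i^2-n)$ has degree $2$ and there are at most $T$ cubic gates among the $T$ total gates, the degree of $O$ is at most $2\cdot 2^{T}=2^{T+1}=:D$, i.e.\ singly exponential. The number of monomials of degree at most $D$ in the $2n$ variables $X_1,\dots,P_n$ is then bounded by $(D+1)^{2n}=2^{O(nT)}=2^{\poly(n)}$, which is only singly exponential and therefore not the dominant contribution.

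Next I would track how the coefficients grow across the $T$ gates. Each cubic substitution $P_j\mapsto P_j-3tX_j^2$ expands a monomial through the binomial theorem, contributing factors $\binom{\nu_j}{k}(3t)^k\le 2^{D}(3t)^{D}$; each Gaussian substitution replaces each variable by an affine combination whose coefficients are bounded by $\|\exp(\boldsymbol\Omega\mathbf M t)\|\le e^{\|\mathbf M\|t}=e^{\poly(n)}$ (as $\mathbf M$ is a bounded-coefficient, constant-locality matrix and $t=\poly(n)$), multiplying a degree-$d$ coefficient by at most $e^{\poly(n)\cdot D}$. The delicate point, and the main obstacle, is that after each such substitution the result must be re-expanded into normal form using $[X,P]=iI$: the reorderings of \cref{lemma:CommutatorsInNormalForm} and its generalization \cref{lem:app-commutation} produce lower-degree terms whose coefficients pick up factors that grow with the degree, so a naive accounting risks a factorial in $D$ and an uncontrolled blow-up. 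I would show that these commutator-generated factors accumulate to at most $D!\le D^{D}=2^{2^{\poly(n)}}$ per gate, so that each gate increases the maximal coefficient by a factor at most $2^{2^{\poly(n)}}$; composing over the $T=\poly(n)$ gates still leaves $\max_{\mu,\nu}|c_{\mu,\nu}|\le 2^{2^{\poly(n)}}$, since $T\cdot 2^{\poly(n)}=2^{\poly(n)}$.

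Finally, the vacuum $\ket{0^n}$ is a product of single-mode Gaussian states whose quadratures each have variance $\tfrac12$, so by the Gaussian (Wick) moment formulas the moment $|\bra{0^n}X^\mu P^\nu\ket{0^n}|$ of total degree $d\le D$ is bounded by $d!!\le D!\le D^{D}=2^{2^{\poly(n)}}$. Multiplying the three bounds yields $\bra\psi N\ket\psi\le 2^{\poly(n)}\cdot 2^{2^{\poly(n)}}\cdot 2^{2^{\poly(n)}}=2^{2^{\poly(n)}}$, which is the claimed doubly exponential bound. The same argument applies verbatim with any fixed low-degree observable in place of $N$, so the conclusion extends to bounding the average value of $X$, $P$, or any constant-degree polynomial observable throughout such circuits.
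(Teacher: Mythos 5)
Your proof is correct and follows essentially the same route as the paper's: evolve $N$ in the Heisenberg picture, observe that Gaussian conjugation preserves degree while each cubic phase gate at most doubles it (giving degree $2^{T+1}$), bound the coefficient growth (dominated by squeezing, giving $2^{2^{\poly(n)}}$), and combine with bounds on the vacuum moments $\bra{0^n}X^{\mu}P^{\nu}\ket{0^n}$. The only cosmetic differences are that the paper first invokes the Bloch--Messiah decomposition to reduce the Gaussian layers to single-mode squeezers and displacements (where you bound the symplectic matrix norm directly), and that you make the normal-form reordering cost explicit, which the paper absorbs implicitly; neither changes the argument.
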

\begin{proof}
From the Bloch--Messiah decomposition of Gaussian unitary operations (see \cref{thm:EBMdecomp}), a $\mathbf{CVBQP}[X^3]$ circuit may only use passive multimode gates without loss of generality, which does not change the energy, so we only need to track how the energy evolves under single-mode displacement gates, single-mode squeezing gates, and cubic phase gates.

After $T$ gates, the multimode number operator $N$ transforms into a polynomial in $X$ and $P$ of degree at most $2^{T+1}$. To see this, let $O$ be an observable in the form $\sum_{\mu,\nu: |\mu + \nu| \leq d} \alpha_{\mu,\nu} X^{\mu} P^{\nu}$, where we used multi-index notations. Here, the degree of $O$ is at most $d$. If $H$ is a quadratic Hamiltonian, then the degree of $e^{iH}Oe^{-iH}$ does not change. However, when $H$ is a cubic Hamiltonian $sX^3$, the degree of $O$ at most doubles, since $e^{iH}Xe^{-iH}=X$ and $e^{iH}Pe^{-iH}=P+isX^2$. As a result, the degree of $ N(t) := U^\dagger_{T-t+1}\ldots U^\dagger_{T}NU_{T}\ldots U_{T-t+1}$ is at most $2^{t+1}$.

Moreover, after $T$ gates, the magnitude of the coefficients in the polynomial expansion of $N(T)$ is bounded by $O(2^{2^{\mathsf{poly}(n)}})$. To see that, consider a squeezing gate generated by $\frac12r(XP+PX)$, with $r=\mathsf{poly}(n)$ specified using $O(\log(n))$ bits. It maps $X$ to $e^{-r}X$ and $P$ to $e^{r}P$, so at step $t$, when the degree of $N(t-1)$ is bounded by $2^{t}$, it can at most multiply the biggest coefficients in the polynomial expansion of $N(t-1)$ by $e^{|r|2^{t}}=O(2^{2^T\mathsf{poly}(n)})$. The effect of a displacement or a cubic phase gate is similar and more limited. Overall, the magnitude is at most $O(2^{T2^T\mathsf{poly}(n)})=O(2^{2^{\mathsf{poly}(n)}})$.

As a result, we can write the final energy as $\bra{0}N(T)\ket{0} = \sum_{|\mu + \nu| \leq 2^{T+1}} \alpha_{\mu, \nu} \bra{0} X^{\mu} P^{\nu} \ket{0}$, with $\|\boldsymbol{\alpha}\|_1 = \sum_{|\mu + \nu| \leq 2^{T+1}} |\alpha_{\mu, \nu}|=O(2^{2^{\mathsf{poly}(n)}})$, where we used $\sum_{|\mu + \nu| \leq 2^{T+1}}1=\binom{2n+2^{T+1}}{2^{T+1}}=O(2^{2^{\mathsf{poly}(n)}})$. We now bound:
\begin{align}
        |\bra{0}{N}(T)\ket{0}| &\leq \|\boldsymbol{\alpha}\|_1 \cdot \max_{|\mu + \nu| \leq 2^{\mathsf{poly}(n)}} |\bra{0} X^{\mu} P^{\nu} \ket{0}|=O(2^{2^{\mathsf{poly}(n)}}),
\end{align}
where we used a crude upper bound on Eq.~(\ref{eq:genexpvacmon}) for the last step (see \cref{app:XP-vaccuum-exp} for a detailed calculation of position and momentum moments for the vacuum).
\end{proof}

\noindent The logarithmic bits of precision for the gate specification in the above lemma corresponds to gate time that is polynomially-bounded, i.e., bosonic circuits with polynomial circuit complexity, since $T=\mathsf{poly}(n)$.

Combining \cref{prop:error} and \cref{prop:energyboundcubic} shows that a $\mathbf{CVBQP}[X^3]$ computation may be simulated by restricting all gates to the subspace of states over $n$ modes with energy $E=O(2^{2^{\mathsf{poly}(n)}})$. The dimension of this subspace is $\binom{n+E}{E}=O(2^{2^{\mathsf{poly}(n)}})$.

It remains to show that the matrix of each gate in this subspace may be computed in \textbf{EEXP}. We first consider the case of a cubic phase gate $U = e^{i s/3 X^3}$ and use the recursive approach in \cite[Equation (148)]{miatto2020fast}:
\begin{align}
\begin{split}
    \bra{m} U \ket{n} &= \frac{2\sqrt{\pi}ye^{2/(3s^2)}}{\sqrt{2^{m+n} m! n!}} \sum_{k = 0}^m \sum_{l = 0}^n (-\frac{2i}{s^{1/3}})^{m + n - k - l} \binom nk \binom ml H_k (-i/s)H_l (-i/s) Ai^{(n + m - k - l)} (s^{-4/3})\\
    &=: \sum_{k = 0}^m \sum_{l = 0}^n U^{k,l}(m,n)
\end{split}
    \label{eq:cubic-sum2}
\end{align}
where $Ai : \mathbb{C} \rightarrow \mathbb{C}$ is the Airy function $Ai (x) = \frac{1}{2\pi} \int_{-\infty}^\infty dk e^{i k^3/3 + ikx}$, and $Ai^{(l)} (x) = \frac{1}{2\pi} \int_{-\infty}^\infty dk (ik)^l e^{i k^3/3 + ikx}$ is its $l$'th derivative. $H_k$ is the $k$'th Hermite polynomial of degree $k$. We can evaluate the Airy function and its (up to) exponentially many derivatives and Hermite polynomial of up to exponential degree within exponential precision using \textbf{EEXP} computation (this error analysis is carefully implemented in Lemma \ref{lem:cubic+gaussian-degree}). As a result, each coefficient $U^{k,l}(m,n)$ in Eq.~(\ref{eq:cubic-sum2}) can be evaluated within exponential bits of precision in \textbf{EEXP}. Other Gaussian gates can be evaluated in \textbf{EEXP} (and in possibly a much smaller complexity class) based on the recursive approach in \cite{miatto2020fast}.

More general types of measurement can be simulated using the above approach. The problem of estimating the probability of a positive outcome is complete for $\mathbf{CVBQP}[X^3]$ (see \cref{prop:CVBQPfixednGcompleteprob}), so by cutting off the polynomial measurement operator, its eigenstates for positive eigenvalues can be computed in \textbf{EEXP}. Putting things together, the probability of a positive outcome can be estimated up to constant precision (exponential, in fact) in \textbf{EEXP}, which completes the proof of \cref{thm:eexp-bound}.


\subsubsection{Bounding the degree of amplitudes}

The main result of this section is:
\begin{theorem} [Bounding the degree of the amplitudes]
    Given a continuous variable quantum circuit $C$ of $T$ cubic phase gates and $\poly(T)$ Gaussian gates on $\poly(T)$ modes, up to $\poly(T)$ time computable rescaling, there exists a polynomial of degree $2^{2^{\poly(T)}}$ in the parameters of the circuit which estimates $\bra{m} C\ket{n}$ to within $2^{-2^{T^{O(1)}}}$ precision. 
    \label{thm:degree-bound}
\end{theorem}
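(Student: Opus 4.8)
The plan is to reduce the amplitude to a single matrix entry of a truncated finite-dimensional circuit, and then express each truncated gate as a low-order Taylor polynomial of its generating Hamiltonian, so that the amplitude becomes a genuine polynomial in the gate parameters whose degree is controlled by the truncation order.

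First I would invoke \cref{prop:error} together with \cref{prop:energyboundcubic} to replace $C=U_T\cdots U_1$ by its energy-truncated version $C_E=U_{T,E}\cdots U_{1,E}$, where each $U_{i,E}$ is the compression of $U_i$ to the subspace of states with total particle number at most $E=2^{2^{\poly(T)}}$. This subspace has dimension $\binom{n+E}{E}=2^{2^{\poly(T)}}$, and the two propositions guarantee $|\bra m C\ket n-\bra m C_E\ket n|\le 2^{-2^{\poly(T)}}$ for $m,n\le E$. Since compressing to a subspace never increases operator norm, $\|U_{i,E}\|\le 1$ for all $i$, a fact I will use to keep error accumulation linear across the $T$ gates.

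Next, fix a gate $U_i=e^{-it_iH_i}$ and let $H_{i,E}$ denote the compression of its generator to the truncated subspace. Because $H_i$ is a polynomial of degree at most $3$ with constant coefficients in $X$ and $P$, and these operators have norm $O(\sqrt E)$ on that subspace, I get the crude bound $\|H_{i,E}\|=O(E^{3/2})=2^{2^{\poly(T)}}$. Writing $A_i:=-it_iH_{i,E}$, which is anti-Hermitian, I approximate $U_{i,E}$ by the degree-$K$ Taylor polynomial $P_{i,K}(t_i):=\sum_{k=0}^{K}\tfrac{1}{k!}A_i^k$. Setting $M:=|t_i|\,\|H_{i,E}\|=2^{2^{\poly(T)}}$, the eigenvalues of $A_i$ lie on the imaginary axis with modulus at most $M$, so the truncation error is bounded eigenvalue-wise by $2\,M^{K+1}/(K+1)!\le 2\,(eM/(K+1))^{K+1}$; choosing $K=\lceil 2eM\rceil=2^{2^{\poly(T)}}$ makes this at most $2^{-M}\ll 2^{-2^{\poly(T)}}$. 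The key structural point is that $P_{i,K}(t_i)$ is a matrix-valued polynomial of degree $K$ in $t_i$, and, since $H_{i,E}$ depends linearly on the coefficients defining $H_i$, of degree at most $K$ in each remaining gate parameter as well. Multiplying the $T$ per-gate polynomials, the amplitude $\bra m P_{T,K}(t_T)\cdots P_{1,K}(t_1)\ket n$ is a scalar polynomial in the gate parameters of total degree at most $TK=2^{2^{\poly(T)}}$, exactly the claimed degree; its coefficients are the (doubly-exponentially computable) products of compressed Hamiltonian matrices, and the overall normalization prefactors occurring in the matrix-element formulas can be pulled out as the advertised efficiently computable rescaling. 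Using $\|U_{i,E}\|\le 1$ and $\|P_{i,K}(t_i)\|\le 1+2^{-2^{\poly(T)}}$, a standard hybrid bound gives $\|C_E-\prod_i P_{i,K}\|\le\sum_{i=1}^T\|U_{i,E}-P_{i,K}(t_i)\|\cdot(1+o(1))\le 2^{-2^{\poly(T)}}$, which combined with the truncation error keeps the total within $2^{-2^{T^{O(1)}}}$.

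The main obstacle is the quantitative error budgeting: one must simultaneously make the Taylor order $K$ large enough that the per-gate error is doubly-exponentially small, yet small enough (still at most doubly exponential) that the final degree bound $TK=2^{2^{\poly(T)}}$ holds. This hinges on two points I would stress: the crude norm bound $\|H_{i,E}\|=O(E^{3/2})$ on the truncated subspace, and the observation that $A_i$ is anti-Hermitian, so the truncated-exponential error decays like $(eM/K)^{K}$ \emph{without} a catastrophic $e^{\|A_i\|}$ prefactor once $K\gtrsim 2eM$. A secondary subtlety is that the polynomial's coefficients, while of manageable degree, are themselves only computable in \textbf{EEXP}; this is harmless here, since it is exactly the degree bound that \cref{thm:schrod-in-expspace} feeds into the VSBR depth-reduction step, with coefficient evaluation absorbed into the circuit size.
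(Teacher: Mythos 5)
Your route differs from the paper's: instead of working with the exact matrix elements $\bra{m}U_i\ket{n}$ of the untruncated gates (which the paper writes in closed form via Laguerre, Hermite and Airy functions, truncates to polynomials of degree $O(m+n)$ in rescaled parameters, and then feeds into a Feynman path sum over intermediate Fock states $n_1,\dots,n_{T-1}\le E$), you truncate the \emph{generators} and Taylor-expand the resulting matrix exponentials. The degree accounting, the anti-Hermitian Taylor remainder bound $M^{K+1}/(K+1)!$ with $K=O(M)$, the norm bound $\lVert H_{i,E}\rVert=O(E^{3/2})$, and the telescoping error propagation are all sound, and the approach has the appeal of not requiring the special-function formulas of \cref{lem:cubic+gaussian-degree}.

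There is, however, a genuine gap at the junction between your two approximation steps. \cref{prop:error} controls $\lVert U_E\ket{\psi_0}-U\ket{\psi_0}\rVert$ for $U_{i,E}:=\Pi_E U_i\Pi_E$, the compression of the \emph{unitary}; its proof uses this form explicitly. Your Taylor polynomial $P_{i,K}(t_i)=\sum_{k\le K}\tfrac{1}{k!}(-it_iH_{i,E})^k$ converges to $e^{-it_i\Pi_E H_i\Pi_E}$, the exponential of the compressed \emph{generator}. These two operators are not equal --- $\Pi_E e^{-itH}\Pi_E\neq e^{-it\Pi_E H\Pi_E}$, since the matrix elements $\bra{m}e^{-itH}\ket{n}$ with $m,n\le E$ receive contributions from intermediate Fock states of arbitrarily high particle number --- and your error budget never bounds their difference; the remainder estimate you give bounds $\lVert e^{A_i}-P_{i,K}\rVert$, not $\lVert U_{i,E}-P_{i,K}\rVert$. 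To close the gap you would need a separate estimate, e.g.\ a Duhamel bound $\lVert(e^{-itH}-e^{-itH_E})\ket\psi\rVert\le\int_0^t\lVert(I-\Pi_E)H\,e^{-isH_E}\ket\psi\rVert\,ds$ together with a proof that the \emph{truncated} dynamics keeps the state's weight within a few particle numbers of the cutoff $E$ doubly-exponentially small (note that \cref{prop:energyboundcubic} bounds the energy of the true dynamics, not the truncated one, so some bootstrapping or an enlarged cutoff is required). This is plausible but is real work that your write-up skips by silently identifying $U_{i,E}$ with $e^{A_i}$; the paper's detour through exact gate amplitudes exists precisely to avoid this issue.
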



\begin{proof}
    Using \cref{prop:energyboundcubic} the energy of computation is $E^\ast = 2^{2^{O(T\log T)}}$. Therefore using \cref{prop:error} we can impose a cutoff $E = 2^{2^{O(T)}}$ such that $\|U_E \ket{0} - U \ket{0}\| = 1/2^{2^{T^{\Omega(1)}}}$. For any POVM $0 \leq \Pi \leq I$. Therefore
    \begin{align}
    \begin{split}
    |\bra{\psi_E} \Pi \ket{\psi_E} - \bra{\psi} \Pi \ket{\psi}| &= |\Tr (\Pi (\ket{\psi}\bra{\psi} - \ket{\psi_E}\bra{\psi_E}))|\\ 
    &\leq \| \ket{\psi}\bra{\psi} - \ket{\psi_E}\bra{\psi_E}\|_1\\
    &= 2 \sqrt{1 - |\braket{\psi_E}{\psi}|^2}= 1/2^{2^{T^{\Omega(1)}}}.
    \end{split}
    \end{align}
    As a result, it is sufficient to compute $\bra{m}U_E\ket{n}$. We use Feynman's path integral to evaluate this quantity. In particular, let $n_0 = n$ and $n_T = m$, then
\begin{align}
            \alpha = \sum_{0 \leq n_1, \ldots n_{T-1} \leq E} \prod_{i=1}^{T} \bra{n_i}U_{i}\ket{n_{i-1}}
            \label{eq:feynman-path}
\end{align}
By definition, each $U_i$ is either cubic or Gaussian. To prove the main result, it is sufficient to show that each Gaussian or cubic phase gate, up to rescaling, can be effectively approximated using a polynomial of at most doubly exponentially large degree. This is achieved in Lemma \ref{lem:cubic+gaussian-degree} below.
\end{proof}

\begin{lemma} [Bounding the degree of Gausian and cubic phase gates]
    Let $m,n \in \mathbb {N}$ and $U$ be a Gaussian gate (rotation, displacement, or squeezing) or a cubic phase gate. Then up to ($O(1)$-computable) normalization  $\bra{m}U\ket{n}$ is computable within high precision by a polynomial of degree $O(m+n)$ in circuit parameters. In particular,
    \begin{enumerate}
        \item $\bra{m}R(\theta)\ket{n}$ is zero if $m \neq n$ and is a polynomial of degree $m$ in $e^{i \theta}$ otherwise.
        \item $e^{|\alpha|^2/2}\bra{n+d}D(\alpha)\ket{n}$ is a polynomial of degree $d + n$ in $\alpha$ and degree $n$ in $\alpha^*$. 
        \item $\sqrt{\cosh(r)} \bra{m} S(r) \ket{n}$ is a polynomial of degree $O(m + n)$ in $\sinh(r), \cosh(r), \tanh (r)$.
        \item For the cubic phase gate $U = e^{i s/3 X^3}$, there exist constants $c, c'$ such that there is a polynomial of degree $d > c(m+n)$ in $s^{-1/3}$ which approximates $e^{-\frac 2{3s^2}} \bra{m}U\ket{n}$ to within additive error at most $1/d^{c' d}$, assuming $|\log s| = O(m+n)$. 
    \end{enumerate}
    \label{lem:cubic+gaussian-degree}
\end{lemma}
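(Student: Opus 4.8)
My plan is to establish each of the four items separately, since each corresponds to a different gate whose matrix elements in the Fock basis have well-documented closed forms. The overarching strategy is that for each Gaussian gate I will derive (or recall from standard references such as \cite{miatto2020fast}) an explicit expression for the Fock-basis matrix element, observe that up to an efficiently-computable normalization prefactor it is a polynomial in the relevant circuit parameter, and read off the degree; for the cubic phase gate I will instead argue that the exact (non-polynomial) expression from Eq.~\eqref{eq:cubic-sum2} can be truncated/approximated by a polynomial of controlled degree with doubly-exponentially small error.

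\textbf{The Gaussian cases (items 1--3).} For the rotation $R(\theta)=e^{i\theta a^\dagger a}$, I would simply note that $R(\theta)$ is diagonal in the Fock basis with $R(\theta)\ket n = e^{in\theta}\ket n$, so $\bra m R(\theta)\ket n = \delta_{m,n}e^{in\theta}$, which is a monomial (hence a degree-$n$ polynomial) in the variable $e^{i\theta}$. For the displacement $D(\alpha)$, I would start from the standard formula
\begin{equation}
\bra{n+d}D(\alpha)\ket n = e^{-|\alpha|^2/2}\sqrt{\tfrac{n!}{(n+d)!}}\,\alpha^{d}L_n^{(d)}(|\alpha|^2),
\end{equation}
where $L_n^{(d)}$ is an associated Laguerre polynomial of degree $n$; writing $|\alpha|^2=\alpha\conj\alpha$ and multiplying through by $e^{|\alpha|^2/2}$ removes the Gaussian prefactor and leaves a polynomial that is degree $d+n$ in $\alpha$ and degree $n$ in $\conj\alpha$, matching the claim. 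For the squeezing $S(r)$, I would use the known matrix elements involving $\tanh r$ and $\cosh r$ (these are obtained from the generating function / Hermite-polynomial representation already quoted in the excerpt around Eq.~\eqref{eq:sqFock}); the element vanishes unless $m\equiv n \pmod 2$, and after factoring out the $1/\sqrt{\cosh r}$ normalization it becomes a polynomial in $\sinh r,\cosh r,\tanh r$ of degree $O(m+n)$. Each of these is a routine bookkeeping of the degrees in the closed forms, so I would state the formula, pull out the stated normalization, and count.

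\textbf{The cubic phase case (item 4).} This is where the real work lies. Here $\bra m U\ket n$ is genuinely not a polynomial: by Eq.~\eqref{eq:cubic-sum2} it is a finite sum over $k\le m,\,l\le n$ of products of binomials, Hermite polynomials $H_k(-i/s),H_l(-i/s)$, and derivatives of the Airy function $Ai^{(n+m-k-l)}(s^{-4/3})$. The Hermite-polynomial and binomial factors are already polynomials in $s^{-1/3}$ of degree $O(m+n)$; the non-polynomial obstruction is the Airy derivatives. My plan is therefore to replace each $Ai^{(j)}(s^{-4/3})$ by a truncation of its convergent power series (or asymptotic expansion) in $s^{-4/3}$ to degree $d$, and to bound the truncation error. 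Because $Ai$ is entire and its Taylor coefficients decay super-geometrically (factorially, via the recursion $Ai''(x)=x\,Ai(x)$), truncating at degree $d$ gives error bounded by roughly $1/d^{c'd}$ uniformly over the relevant range, provided $|\log s|=O(m+n)$ keeps the argument $s^{-4/3}$ in a controlled region. After multiplying out, the whole expression is a polynomial in $s^{-1/3}$ of degree $d>c(m+n)$, and the accumulated error across the $O(m^2)$-term double sum remains bounded by $1/d^{c'd}$ after absorbing constants into $c'$.

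\textbf{The main obstacle} is the error analysis for the Airy function in item 4: I must verify that the factorial decay of the Airy Taylor coefficients really does beat the doubly-exponential growth in the degree $d$ and in the magnitudes of the Hermite/binomial coefficients, uniformly over $m,n$ up to the doubly-exponential energy cutoff and over $s$ satisfying $|\log s|=O(m+n)$. Concretely, I would establish a bound of the form $|Ai^{(j)}(x)|\le C^j j!$ on the relevant domain and a matching tail bound on the truncated series, then track how these combine with the $\|\boldsymbol\alpha\|_1=O(2^{2^{\mathrm{poly}(n)}})$-type coefficient bounds already obtained in \cref{prop:energyboundcubic}; choosing $d$ a sufficiently large polynomial multiple of $(m+n)$ forces the net error below $1/d^{c'd}$. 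The Gaussian items require care only in citing the correct closed forms and reading off degrees, but they carry no analytic subtlety.
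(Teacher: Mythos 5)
Your proposal is correct and follows essentially the same route as the paper: the same closed forms for rotation and displacement, a degree count on the explicit squeezing matrix elements (which the paper derives via the Zassenhaus formula rather than citing), and for the cubic phase gate a truncation of the Airy power series with factorial-decay tail bounds combined with magnitude bounds on the imaginary-argument Hermite factors, all controlled by the hypothesis $|\log s|=O(m+n)$. One minor correction: the coefficient growth you need to beat in item 4 comes from the prefactors $(2/s^{1/3})^{m+n-k-l}\binom{n}{k}\binom{m}{l}H_k(-i/s)H_l(-i/s)$ in the double sum (bounded as $2^{O(m+n)\max\{\log(1/s),1\}+O(1/s^2)}$), not from the doubly-exponential energy bounds of \cref{prop:energyboundcubic}, which enter only when the lemma is applied later.
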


\begin{proof}
    The first item is relatively straightforward: $\bra{m} R(\theta) \ket{n} = e^{i\theta n} \delta_{m,n}$ since the Fock basis is an eigenstate of $R(\theta)$. For the displacement operator, we use Equation 1.46a of \cite{ferraro2005gaussian}: $e^{|\alpha|^2/2} \bra{n+d}D(\alpha)\ket{n} = \sqrt{\frac{n!}{(n+d)!}} \alpha^d L^{d}_n (|\alpha|^2)$, where $L^d_n$ is the generalized Laguerre polynomial where $L^d_n (x) = \sum_{i=0}^n (-1)^i \binom{n+d}{n-i} \frac{x^i}{i!}$ is a polynomial of degree $n$. For the squeezing operator, we note:
    \begin{align}
        \begin{split}
            \bra{m} S(r) \ket{n} &= \frac{1}{\sqrt{ n!}} \bra{m} S(r) a^{\dagger n} \ket{0}\\
            &= \frac{1}{\sqrt{ n!}} \bra{m} (S(r) a^{\dagger } S^{\dagger}(r))^n S(r) \ket{0}\\
            &= \frac{1}{\sqrt{ n!}} \bra{m} (\cosh{(r)} a^
            \dagger - \sinh (r) a)^n \ket{S(r)}.
        \end{split}
    \end{align}
Here $\ket{S(r)} := S(r) \ket{0}$ is the squeezed state and is equal to (see e.g. \cite{ferraro2005gaussian}) 
\begin{align}
    \ket{S(r)} = \frac{1}{\sqrt{\cosh(r)}} \sum_{k = 0}^{\infty} (-\tanh(r))^k \frac{\sqrt{(2k)!}}{2^k k} \ket{2k}.
\end{align}
    To evaluate $(\cosh{(r)} a^
            \dagger - \sinh (r) a)^n$ we use the Zassenhaus formula
\begin{align}
    \begin{split}
    \sum_{n\geq 0} \frac{t^n}{n!}(\cosh(r) a^\dagger - \sinh(r) a)^n &=e^{t(\cosh(r) a^\dagger - \sinh(r) a)}\\
    &= e^{t(- \sinh(r) a)}e^{t(\cosh(r) a^\dagger)} e^{\frac{t^2}{2} \sinh(r) \cosh(r)}\\
    &= \sum_{k \geq 0} \frac{t^k}{k!}(- \sinh(r) a)^k \sum_{l \geq 0} \frac{t^l}{l!}( \cosh(r) a^\dagger)^l \sum_{p \geq 0} \frac{t^{2p} (\sinh(r) \cosh(r))^{2p}}{p!}\\
    \end{split}
\end{align}
Equating the $n$'th terms from both sides, we get
\begin{align}
    (\cosh(r) a^\dagger - \sinh(r) a)^n = \sum_{k + l + 2p \leq n} \frac{n!}{k! l! p!} (-1)^k (\sinh(r))^{k+2p} (\cosh(r))^{l+2p} a^k a^{\dagger l} 
\end{align}
    therefore
\begin{align}
    \begin{split}
    \sqrt{\cosh(r)}\bra{m} S(r) \ket{n} =  \sum_{\substack{0\leq k + l + 2p \leq n\\ k+m -l = 2q\\q \geq 0}}A_{k,l,p, n}(-1)^{m-l} (\sinh(r))^{2k+2p + m -l} (\cosh(r))^{2l+2p-m-k} 
    \end{split}
\end{align}
is a polynomial of degree $O(m+n)$ in $\cosh (r)$ and $\sinh(r)$;here 
$$
A_{k,l,p, n} := \frac{\sqrt{(2(m+k-l))!n!/\sqrt{(m)!}}}{2^{m+k-l} (m+k-l)!^{3/2}k! l! p!}
$$

To analyze the cubic phase gate, we truncate Equation 148 of \cite{miatto2020fast} up to $d$ terms. We first show
\begin{lemma} [Truncation of the Airy function]
    Let $Ai : \mathbb{C} \rightarrow \mathbb{C}$ be the Airy function with $Ai (x) = \frac{1}{2\pi} \int_{-\infty}^\infty dk e^{i k^3/3 x + ikx}$, and $Ai^{(l)} (x) = \frac{1}{2\pi} \int_{-\infty}^\infty dk (ik)^l e^{i k^3/3 x + ikx}$ be its $l$'th derivative. Let $c = (|x|^3 + |x|^2)$ and $d \geq 100 (|x|^3 + 1) + l$. Then, there exists a polynomial $\tilde Ai^{(l)}(x)$ of degree $d$ in $x$ such that 
    $$
    |Ai^{(l)}(x) - \tilde Ai^{(l)}(x)| \leq (\frac{c}{d})^d. 
    $$
    \label{lem:airy-truncation}
\end{lemma}

We will apply this lemma to truncate the amplitudes of the cubic phase gate 
\begin{align}
\begin{split}
    \bra{m} U \ket{n} &= \frac{2\sqrt{\pi}s^{-1/3}e^{2/(3s^2)}}{\sqrt{2^{m+n} m! n!}} \sum_{k = 0}^m \sum_{l = 0}^n (-\frac{2i}{s^{1/3}})^{m + n - k - l} \binom nk \binom ml H_k (-i/s)H_l (-i/s) Ai^{(n + m - k - l)} (s^{-4/3})
\end{split}
    \label{eq:cubic-sum}
\end{align}

 We claim that by truncating every Airy function $Ai^(l)$ in $e^{-2/(3s^2)} \bra{m} U \ket{n}$ in Equation \ref{eq:cubic-sum} up to $d = O(m + n)$ according to the above lemma, we obtain a polynomial of degree $O(m + n)$ which approximates the amplitude up to degree $1/d^{O(d)}$. We first observe that the Hermite function $H_k (x)$ is a polynomial of degree $k$ in $x$. Hence, for $d = O(m+n)$, the function resulting from truncating Airy functions is a polynomial of degree $O(m + n)$. To compute the error bound, we need to bound: 
\begin{align}
\begin{split} \delta := \frac{|s|^{-1/3}}{\sqrt{2^{m+n} m! n!}} \sum_{k = 0}^m \sum_{l = 0}^n |\frac{2}{s^{1/3}}|^{m + n - k - l} \binom nk \binom ml |H_k (-i/s)| \cdot |H_l (-i/s)| (\frac{c}{d})^d
\end{split}
    \label{eq:cubic-truncation-error}
\end{align}
To bound the expression above, we first need to bound the magnitude of the complex Hermite polynomials. To do this, we use the following orthonormality relationship from \cite{van1990new, karp2001holomorphic} for any $b > 0$.
$$
\int |H_k (x + i y)|^2 e^{-\frac{b}{b+1} x^2 - by^2} dx dy = \pi\sqrt{\frac{b+1}{b^2}} n! 2^n (1 + 2/b)^n
$$
As a result, for any $b > 0$, the following upper bound holds
$$
|H_k (i y)|^2 \leq e^{by^2} \pi\sqrt{\frac{b+1}{b^2}} n! 2^n (1 + 2/b)^n
$$
In particular, for $b = 2$, the upper bound $|H_n(iy)| \leq 2 e^{y^2} \sqrt{n!} 2^n$ holds. Using this bound 
\begin{align}
    \begin{split}
    \delta &\leq \frac{4|s|^{-1/3}}{\sqrt{2^{m+n} m! n!}} \sum_{k = 0}^m \sum_{l = 0}^n |\frac{2}{s^{1/3}}|^{m + n - k - l} \binom nk \binom ml  e^{2/s^2} \sqrt{l! k!} 2^{l+k} (\frac{c}{d})^d\\       &\leq 2^{O(n+m)\cdot \max\{\log (1/s),1\} +O(1/s^2)} (\frac{c}{d})^d\\        
    \end{split}
\end{align}

Therefore if $|\log s| = O(n+m)$, and $d > \Omega (n+m)$, $\delta < 1/d^{O(d)}$.

\begin{proof}[Proof of Lemma \ref{lem:airy-truncation}]
Using Equation 2.7. \cite{corless1992numerical} the Airy function can be computed as $Ai (x) = Ai(0) f(x) + Ai'(0) g(x)$, where $Ai(0) = \frac{3^{-2/3}}{\Gamma(2/3)}$ and $Ai'(0) = \frac{-3^{-1/3}}{\Gamma(1/3)}$ and
$$
f(x) = \sum_{n=0}^{\infty} \frac{3^n \Gamma (n + 1/3)}{\Gamma(1/3)(3n)!} x^{3n}
$$
and
$$
g(x) = \sum_{n=0}^{\infty} \frac{3^n \Gamma (n + 2/3)}{\Gamma(2/3)(3n+1)!} x^{3n+1}.
$$
As a result the $l$'th derivative of $Ai(x)$ becomes
\begin{align}
    \begin{split}
    Ai^{(l)}(x) &= Ai(0) f^{(l)}(x) + Ai'(0) g^{(l)}(x)\\
    &= \frac{Ai(0)}{\Gamma(1/3)} \sum_{n=\lceil l/3 \rceil}^{\infty} \frac{3^n \Gamma (n + 1/3)}{(3n-l)!} x^{3n-l} + \frac{Ai'(0)}{\Gamma(2/3)} \sum_{n=\lceil l/3 \rceil-1}^{\infty} \frac{3^n \Gamma (n + 2/3)}{(3n+1-l)!} x^{3n+1-l}\\
    \end{split}
\end{align}
We let $\tilde {Ai}^{(l)}$ be equal to the first $d$ terms in the above expression. Therefore for $d \geq l$:
\begin{align}
    \begin{split}
    |Ai^{(l)} (x) - \tilde{Ai}^{(l)}(x)| &\leq (1 + \frac{|x|}{3d-l})   \sum_{n=d}^{\infty} \frac{3^n (n + 1)!}{(3n-l)!} (|x|)^{3n-l}\\
    &\leq \frac{(1 + \frac{|x|}{2d})}{1- \frac{3 |x|^3}{2d}}   \frac{(|x|)^{3d-l}}{(2d/3)^d} 
    \end{split}
\end{align}
In the first line we used the identity $\Gamma(1/3) \Gamma (2/3) = \frac{2\pi}{\sqrt{3}}$ to deduce the coefficients have magnitude $\leq 1$, and the fact that $\Gamma (n + 1/3) \leq \Gamma (n + 1/3) \leq (n+1)!$. In the second line, we used the fact that the summand in the first line decays faster than a geometric series with ratio $\frac{3 |x|^3}{2d}$ and that $d \geq l$. Assuming $d \geq l$ and $d \geq 100 (|x|^3 +1)$ the overall constant factor can be made as small as $1.1$ (and $1 + O(1/d))$ to be more precise); for simplicity, we absorb this constant in the exponent of the exponential following it. If $|x| \geq 1$, we can use the upper bound $(\frac{3|x|^3}{d})^d$ holds and if $|x| < 1$ we can use the upper bound $(\frac{3|x|^2}{d})^d$. Hence, the upper bound expression is in the lemma holds.
\end{proof}

\end{proof}

We now complete the proof of our main \cref{thm:schrod-in-expspace} using the following well-known depth reduction result below.
\begin{theorem}[Depth reduction \cite{valiant1981fast}] Let $p$ be a polynomial of degree $d$ which can be computed using a Boolean circuit of size $s$, then there exists a Boolean circuit of size $(sd)^{O(1)}$ and depth $\log (d) \cdot (\log (s) + \log (d))$ that computes this function.
\end{theorem}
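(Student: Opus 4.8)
The final statement is the classical depth-reduction theorem of Valiant, Skyum, Berkowitz and Rackoff, so the plan is either to invoke it as a black box (as the surrounding text does) or to reproduce its proof via the partial-derivative, degree-halving technique. For a self-contained argument I would first reduce to the homogeneous, bounded-fan-in setting: given a circuit $\Phi$ of size $s$ computing a polynomial $p$ of degree $d$, replace each gate by its $d+1$ homogeneous components, so that the new circuit is layered by degree, every multiplication gate has fan-in two, and the size grows only to $O(sd^2)$. Since $p$ is the sum of the homogeneous pieces at the output gate, it suffices to depth-reduce the computation of each component, which I write $[g]$ for the polynomial computed at gate $g$.

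The heart of the argument is to introduce, for pairs of gates $(g,h)$ with $\deg(g)/2 < \deg(h) \le \deg(g)$, an auxiliary polynomial $[g\!:\!h]$ playing the role of the derivative of $[g]$ with respect to treating $[h]$ as a single atomic variable. I would then establish two facts. First, a degree bound: $\deg([g\!:\!h]) = \deg(g)-\deg(h) < \deg(g)/2$, so such a polynomial always has degree at most half that of $g$. Second, a closed system of structural recurrences ($[g\!:\!h]$ is additive over addition gates and obeys a product rule through the heavier child of each multiplication gate, with $[g\!:\!g]=1$), together with the master identity $[g] = \sum_{h} [g\!:\!h]\cdot[h_1]\cdot[h_2]$, where $h=h_1\times h_2$ ranges over the multiplication gates whose degree is the first to exceed $\deg(g)/2$. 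The point of this identity is that all three factors $[g\!:\!h]$, $[h_1]$, $[h_2]$ then have degree at most $\deg(g)/2$.

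Armed with these, the plan is a divide-and-conquer over the degree scales $d, d/2, d/4, \dots, 1$. At each of the $O(\log d)$ scales the master identity rewrites every relevant polynomial as a sum of at most $\mathrm{poly}(s)$ triple products of factors of half the degree, and a balanced evaluation of such a sum-of-products over $\mathrm{poly}(s)$ terms has depth $O(\log s + \log d)$. Iterating the halving $O(\log d)$ times yields total depth $O(\log d(\log s + \log d))$; memoizing the $O(s^2)$ distinct pairs $(g,h)$ keeps the size at $\mathrm{poly}(s,d)$; and the Boolean (rather than arithmetic) circuit is recovered by replacing each ring operation on the numbers involved with the standard logarithmic-depth ($\mathbf{NC}^1$) subcircuit for integer/field arithmetic, costing only a polylogarithmic factor in depth and a polynomial factor in size.

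I expect the main obstacle to be exactly the delicate definition of the $[g\!:\!h]$ polynomials and the verification that their recurrences never combine factors whose degrees fail to halve — this is what forces only $O(\log d)$ stages rather than $O(\log s)$ — while simultaneously guaranteeing, through subcircuits shared across all $(g,h)$, that the accumulated size stays polynomial instead of exploding. Since this is a standard and well-documented result, however, the cleanest route in context is to cite \cite{valiant1981fast} directly and feed in the bounds established earlier: by \cref{thm:degree-bound} and \cref{thm:eexp-bound} one has degree and size $d,s = 2^{2^{O(T\log T)}}$, so the depth-reduced circuit has depth $2^{O(T\log T)}$ and doubly-exponential size, which is evaluable gate-by-gate along a single path in space $2^{O(T\log T)}$, completing the $\mathbf{EXPSPACE}$ simulation of \cref{thm:schrod-in-expspace}.
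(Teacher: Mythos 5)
The paper offers no proof of this statement: it is imported verbatim from \cite{valiant1981fast} and immediately applied, so there is nothing internal to compare your argument against. Your sketch of the VSBR argument (homogenization, the $[g\!:\!h]$ gate-pair polynomials with the degree-halving property, the master identity yielding $O(\log d)$ stages of depth $O(\log s+\log d)$ each, and the $\mathbf{NC}^1$ conversion from arithmetic to Boolean gates) is essentially correct, and your closing paragraph — citing the result and feeding in the doubly-exponential degree and size bounds to get an exponential-depth circuit evaluable in $\mathbf{EXPSPACE}$ — is exactly how the paper uses it.
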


We apply this theorem to \cref{thm:degree-bound} and \cref{thm:eexp-bound} to prove that each amplitude can be strongly simulated in doubly exponential size and exponential depth. In other words, the amplitude can be evaluated using doubly exponentially many processors (see \cref{def:pram}) in exponential time. The \textbf{EXPSPACE} upper bound is implied by \cref{thm:par-pspace}.

\subsubsection{Comparing Schr\"odinger and Heisenberg pictures in discrete and continuous variables}
\label{sec:schrod-vs-heis}

One feature of quantum computation over discrete variables is that, in terms of computational complexity, the Heisenberg and Schr\"odinger pictures lead to equivalent definitions for quantum computations. In particular, in the Schr\"odinger picture, we start with $n$ qubits initiated in $\ket{0}^{\otimes n}$, apply unitary gates one by one to obtain a quantum state like $U |0\rangle^{\otimes n} =:|\psi\rangle = \sum_x \alpha_x |x\rangle$, and then measure each amplitude according to the Borne rule, i.e., obtain string $y$ with probability $|\alpha_y|^2$. Strong simulation corresponds to computing $\alpha_y$ within an exponentially small error. In the Heisenberg picture, we start with a simple observable, such as a Pauli string $P$, and we evolve the observable to obtain $P_U = U P U^\dagger$. Strong simulation corresponds to computing the coefficients of $P_U$ in a suitable such as the Pauli basis, i.e., compute $\gamma^P_Q = Tr (Q P_U)$ given the description of $P$ and $Q$. It is well-known that special sub-universal computations, such as the Clifford computation on stabilizer states, are succinctly and efficiently representable within the Heisenberg picture (leading to efficient classical simulations). The intuition is that under Clifford dynamics, Pauli operators are mapped to Pauli operators, i.e., the $\gamma$ coefficients are $0$ or $1$. For general dynamics, $\gamma$ coefficients take arbitrary values. Here, we first show that in terms of the definition of strong simulations outlined above, the Heisenberg and Schrodinger pictures yield the same computational complexity classes. In particular, let $\mathcal{O}_{\text{Heis}}$ be an oracle $\{0,1\}^* \rightarrow \{0,1\}^*$ which takes a pair of Pauli strings $(P,Q)$ along with the description of a polynomial-size (in the number of qubits) quantum circuit $U$, and outputs $\gamma^P_{Q} := Tr (Q U P U^\dagger)$ within polynomial bits of precision in $n$. Also, let $\mathcal{O}_{\text{Schr\"od}}$ be an oracle $\{0,1\}^* \rightarrow \{0,1\}^*$ which takes the description of $(x,y, U)$, where $x, y$ are $n$ bit strings and $U$ is the description of a polynomial-size quantum circuit and outputs $\alpha_{x,y} := \bra{x}U\ket{y}$ within polynomial bits of precision.

\begin{theorem}[Computational equivalence of Schr\"odinger and Heisenberg pictures] For a discrete variable formulation of quantum computing, a $\#$\textbf{P} simulation of $\mathcal{O}_{\text{Heis}}$ implies a $\#$\textbf{P} simulation of $\mathcal{O}_{\text{Schr\"od}}$, and vice versa.
\label{thm:schor-heis-dv}
\end{theorem}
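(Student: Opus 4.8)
The plan is to exploit that both $\alpha_{x,y}=\bra{x}U\ket{y}$ and $\gamma^P_Q=\mathrm{Tr}(QUPU^\dagger)$ are Feynman path sums over a fixed universal gate set with entries in a fixed ring (say $\mathbb Z[\tfrac1{\sqrt2},i]$): after clearing a common denominator of size $2^{\mathrm{poly}(n)}$, their real and imaginary parts are signed counting ($\mathbf{GapP}$) functions, and ``a $\#\mathbf P$ simulation'' of an oracle means computing these integers to polynomial precision. Since $\mathbf{GapP}$ is closed under products of polynomially many factors and under sums indexed by polynomial-length strings, it suffices to exhibit, in each direction, an explicit identity writing the target quantity as such an exponential sum of the other oracle's outputs with $\mathrm{poly}(n)$-time computable coefficients. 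I would therefore prove the two implications by producing these conversion formulas and then invoking closure.

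For $\mathcal O_{\text{Schr\"od}}\Rightarrow\mathcal O_{\text{Heis}}$, I would insert resolutions of the identity in the computational basis into $\gamma^P_Q=\sum_x\bra{x}QUPU^\dagger\ket{x}$. The structural fact is that every Pauli string is a \emph{monomial} matrix: $P\ket{w}=\omega_P(w)\ket{w\oplus q_P}$ for a phase $\omega_P(w)\in\{\pm1,\pm i\}$ and a fixed bit-mask $q_P$ (the $X/Y$-support of $P$), both computable in $\mathrm{poly}(n)$ time. The two Pauli factors then collapse the four inserted sums to a sum over two basis strings,
\[
\gamma^P_Q=\sum_{x,z}\omega_Q(x\oplus q_Q)\,\omega_P(z\oplus q_P)\,\bra{x\oplus q_Q}U\ket{z}\;\overline{\bra{x}U\ket{z\oplus q_P}},
\]
so each summand is a product of two Schr\"odinger amplitudes (the second being a conjugate, since $\bra{z\oplus q_P}U^\dagger\ket{x}=\overline{\bra{x}U\ket{z\oplus q_P}}$) times a trivially computable phase. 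The right-hand side is exactly an exponential $\mathbf{GapP}$ sum built from $\mathcal O_{\text{Schr\"od}}$.

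The reverse implication $\mathcal O_{\text{Heis}}\Rightarrow\mathcal O_{\text{Schr\"od}}$ is where the main difficulty lies, because $\gamma^P_Q$ is manifestly real (by Hermiticity of $Q$ and $UPU^\dagger$) whereas $\alpha_{x,y}$ carries a phase, so no purely static formula can recover $\alpha_{x,y}$ from real Heisenberg data. The magnitude is immediate: expanding the diagonal projectors in the $\{I,Z\}^{\otimes n}$ basis, $\ket{x}\bra{x}=2^{-n}\sum_{S}(-1)^{\sum_{i\in S}x_i}Z_S$, gives $|\alpha_{x,y}|^2=\mathrm{Tr}(U^\dagger\ket{x}\bra{x}U\ket{y}\bra{y})=\sum_{S,T}a_S\,b_T\,\mathrm{Tr}(Z_T U^\dagger Z_S U)$, an exponential signed sum of Heisenberg coefficients of the (still polynomial-size) circuit $U^\dagger$. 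To recover the phase I would introduce one ancilla and the controlled unitary $cU=\ket{0}\bra{0}\otimes I+\ket{1}\bra{1}\otimes U$, which remains polynomial-size, and run the Hadamard test: $\mathrm{Re}\,\bra{\phi}U\ket{\phi}$ and $\mathrm{Im}\,\bra{\phi}U\ket{\phi}$ equal expectation values of an ancilla Pauli on $cU\,(\ket{\pm}\bra{\pm}\otimes\ket{\phi}\bra{\phi})\,cU^\dagger$, which, after expanding the input projector $\ket{\phi}\bra{\phi}$ in the Pauli basis, are again exponential signed sums of Heisenberg coefficients of $cU$. Choosing $\ket{\phi}\in\{\tfrac1{\sqrt2}(\ket{x}+\ket{y}),\,\tfrac1{\sqrt2}(\ket{x}+i\ket{y})\}$ and subtracting the diagonal amplitudes $\bra{x}U\ket{x},\bra{y}U\ket{y}$ (obtained the same way with $\ket{\phi}=\ket{x},\ket{y}$) isolates $\mathrm{Re}\,\alpha_{x,y}$ and $\mathrm{Im}\,\alpha_{x,y}$.

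Finally I would check the bookkeeping: that all Pauli phases $\omega_P$ and masks $q_P$ are $\mathrm{poly}(n)$-time computable, that the common denominators remain $2^{\mathrm{poly}(n)}$ so the results are exact integers recoverable to polynomial precision, and that $\mathbf{GapP}$ closure under products and exponential sums applies to each displayed identity. The principal obstacle to handle carefully is the phase-recovery gadget in the second direction---verifying that the controlled-$U$ construction together with the Hadamard-test identities faithfully reduces a complex amplitude to real Heisenberg coefficients---alongside confirming the monomial-collapse phases in the first direction; the rest is routine closure and precision accounting.
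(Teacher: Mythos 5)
Your proposal is correct and follows essentially the same route as the paper's proof: one direction inserts computational-basis resolutions of the identity to write $\gamma^P_Q$ as an exponential signed sum of products of Schr\"odinger amplitudes, and the other expands $\ket{x}\bra{x}$ in the $\{I,Z\}^{\otimes n}$ Pauli basis to get $|\alpha_{x,y}|^2$ as an exponential sum of Heisenberg coefficients, recovering the phase via a Hadamard test on a controlled-$U$. Your write-up merely makes explicit some bookkeeping the paper leaves implicit (the $\mathbf{GapP}$ framing, the monomial collapse of the Pauli factors, and the concrete phase-recovery gadget), but the underlying argument is the same.
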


\begin{proof} We first show that $\mathcal{O}_{\text{Schr\"od}} \subseteq \#\textbf{P}^{\mathcal{O}_{\text{Heis}}}$. This implies that a $\# \textbf{P}$ simulation of ${\mathcal{O}_{\text{Heis}}}$ implies a $\# \textbf{P}$ simulation of $\mathcal{O}_{\text{Schr\"od}}$. Without loss of generality, we simulate the instance $(0,0,U)$, since for any unitary $V$ the instance $(x,y, V)$ is equivalent to $(0,0, X^x V X^y)$. The projector onto $|\psi\rangle = U \ket {0}$ can be computed as $U (|0\rangle \langle 0|)^{\otimes n}U^\dagger = \sum_{P \in \{I, Z\}^n} U P U^\dagger$. To see this, we observe that the amplitude $|\alpha_{0,0}|^2 = \sum_{P,Q \in \{I,Z\}^n} \gamma^P_{Q} (-1)^{|Q|}$, $|Q|$ is the Hamming weight of the string describing $Q$ (by assigning $I$ to $0$ and $Z$ to $1$). A $\#$\textbf{P} computation can evaluate this exponential sum. The phase component of $\alpha_{0,0}$ can be computed within the same computational model by e.g. performing the Hadamard test on real and imaginary parts of the Unitary.  Next, we show that $\mathcal{O}_{\text{Heis}} \subseteq \#\textbf{P}^{\mathcal{O}_{\text{Schr\"od}}}$. Consider the instance $(P,Q, U)$. For a Pauli string $P$, let $P_{x,y} = \bra{x}P\ket{y}$. Then $\gamma^P_Q = \sum_{x,y,w,z} Q_{w,z} P_{x,y} \alpha_{w,x} \alpha^*_{z,y}$ which is an exponential sum that can be evaluated using $\#\textbf{P}$ having access to $\mathcal{O}_{\text{Schr\"od}}$.
\end{proof}

Does this equivalence result hold in the continuous variable domain? In the remainder of this Section, we give evidence that the answer is probably no. First, let us clarify the definitions for Schr\"odinger and Heisenberg pictures in the continuous quantum computational domains. We consider the cubic and Gaussian gate set. In the Schr\"odinger picture, we start by $m$ modes initialized in vacuum, i.e., $\ket{0}^{\otimes m}$. We apply a polynomial-size quantum circuit $U$ consisting of cubic and Gaussian gates to obtain $\ket{\psi} = I \ket{0}^{\otimes m}$. Strong simulation corresponds to computing $\braket{\mathbf{n}}{\psi}$, where $\mathbf{n} \in \mathbb{N}^m$. Let $\mathcal{O}_{\text{Schr\"od}}: ( \mathbf{n}, U) \mapsto \alpha_{\mathbf{n}}$ (within polynomial bits of precision) be the corresponding oracle. In the Heisenberg picture we start with an observable $X^{\boldsymbol{\mu}} P^{\boldsymbol{\nu}}$, evolve to $U X^{\boldsymbol{\mu}} P^{\boldsymbol{\nu}} U^\dagger = \sum_{\boldsymbol{\alpha}, \boldsymbol{\beta}} \gamma^{\boldsymbol{\mu}, \boldsymbol{\nu}}_{\boldsymbol{\alpha}, \boldsymbol{\beta}} X^{\boldsymbol{\mu}} P^{\boldsymbol{\beta}}$ (these coefficients are unique due to linear independence of the Heisenberg---polynomials in $X$ and P---basis). Strong simulation corresponds to computing $\gamma^{\boldsymbol{\mu}, \boldsymbol{\nu}}_{\boldsymbol{\alpha}, \boldsymbol{\beta}}$ up to polynomial bits of precision. Let $\mathcal{O}_{\text{Schr\"od}}: (\boldsymbol{\mu, \nu, \alpha, \beta}, U) \mapsto \gamma^{\boldsymbol{\mu}, \boldsymbol{\nu}}_{\boldsymbol{\alpha}, \boldsymbol{\beta}}$ (within polynomial bits of precision) be the corresponding oracle. We conjecture
\begin{conjecture} [Inequivalence between Heisenberg and Schr\"odinger pictures in continuous domains] In terms of strong simulation of quantum circuits consisting of cubic and Gaussian gates, there are computational problems that are solvable by the Schr\"odinger picture and not Heisenberg even with access to a $\#$\textbf{P} machine. In other words
    $\mathcal{O}_{\text{Schr\"od}} \subsetneq \#\textbf{P}^{\mathcal{O}_{\text{Heis}}}$.
    \label{conj:schrod-heis}
\end{conjecture}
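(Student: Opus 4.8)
Because the final statement is a conjecture, the plan is to isolate the two halves of the claimed separation $\mathcal{O}_{\text{Schr\"od}}\notin\#\mathbf{P}^{\mathcal{O}_{\text{Heis}}}$ and to identify which half is within reach and which is the genuine obstruction. The separation would follow from two ingredients: (i) an \emph{upper bound} placing everything a $\#\mathbf{P}$ machine can extract from the Heisenberg oracle inside $\mathbf{PSPACE}$, and (ii) a \emph{lower bound} showing that the Schr\"odinger amplitudes $\alpha_{\mathbf n}=\braket{\mathbf n}{\psi}$ encode a problem outside $\mathbf{PSPACE}$. Since $\mathbf{PSPACE}\subsetneq\mathbf{EXPSPACE}$ holds unconditionally by the space hierarchy theorem, and since \cref{thm:schrod-in-expspace} already places Schr\"odinger strong simulation in $\mathbf{EXPSPACE}$, it would suffice for (ii) to prove that this simulation is $\mathbf{EXPSPACE}$-hard: if it were both in $\mathbf{PSPACE}$ and $\mathbf{EXPSPACE}$-hard we would collapse the hierarchy. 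I would therefore reduce the conjecture to the single statement that $\mathcal{O}_{\text{Schr\"od}}$ is $\mathbf{EXPSPACE}$-hard, which upgrades it to an unconditional separation.

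For ingredient (i), I would show $\mathcal{O}_{\text{Heis}}\in\mathbf{FPSPACE}$ by reusing the machinery behind \cref{th:expvalinPSPACE}. The coefficients $\gamma^{\boldsymbol\mu,\boldsymbol\nu}_{\boldsymbol\alpha,\boldsymbol\beta}$ are precisely the normal-form coefficients of $UX^{\boldsymbol\mu}P^{\boldsymbol\nu}U^\dagger$, and the proof of that theorem already tracks these through the degree-doubling induced by each cubic phase gate, together with a Valiant--Skyum--Berkowitz--Rackoff depth reduction yielding a polynomial-time parallel (hence $\mathbf{PSPACE}$) evaluation of any single coefficient to polynomial precision. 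Granting $\mathcal{O}_{\text{Heis}}\in\mathbf{FPSPACE}$, a $\#\mathbf{P}$ computation making polynomially many queries to it can be simulated by enumerating its nondeterministic branches in reusable space and evaluating each query in $\mathbf{PSPACE}$, so $\#\mathbf{P}^{\mathcal{O}_{\text{Heis}}}\subseteq\mathbf{PSPACE}$. This is the tractable half.

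The hard part is ingredient (ii), and it is exactly the open problem flagged as the first item in \cref{sec:open}. The mechanism one hopes to exploit is the repeated-squaring phenomenon: interleaving the Fourier gate $e^{i\frac\pi4(X^2+P^2)}$ with cubic phase gates produces observables of degree $2^T$, so a polynomial-size circuit explores an effective Hilbert space of doubly exponential dimension. I would attempt to embed the configuration graph of an exponential-space Turing machine into the Fock amplitudes of such a circuit, using the doubly-exponentially many Feynman paths of \eqref{eq:feynman-path} as the computational medium and arranging the Gaussian and cubic layers so that $\alpha_{\mathbf n}$ is supported only along accepting histories. The central difficulty is that the very structure driving the dimension blow-up --- repeated squaring --- simultaneously furnishes the compressed arithmetic-circuit description that makes $\mathbf{EXPSPACE}$ an \emph{upper} bound; a successful hardness reduction must embed a computation that resists this compression, which no existing technique achieves, so I expect this to be the main obstacle.

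Finally, I would buttress the conjecture by pinpointing where the discrete-variable equivalence of \cref{thm:schor-heis-dv} breaks down, since recovering it would refute the conjecture. There, $\mathcal{O}_{\text{Schr\"od}}\subseteq\#\mathbf{P}^{\mathcal{O}_{\text{Heis}}}$ relied on the finite Pauli expansion $(\ket{0}\!\bra{0})^{\otimes n}=2^{-n}\sum_{P\in\{I,Z\}^n}P$, turning an amplitude into an exponential sum of Heisenberg coefficients evaluable in $\#\mathbf{P}$. The continuous analog would expand the Fock projector $\ket{\mathbf n}\!\bra{\mathbf n}$ in the $X,P$ basis; but for the relevant $\mathbf n$ and circuits this expansion ranges over a doubly-exponential basis with doubly-exponentially large coefficients and requires exponential precision, so it is not a $\#\mathbf{P}$-length sum. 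This concrete failure is the evidence that the two pictures genuinely diverge in the continuous setting.
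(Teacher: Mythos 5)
This statement is a conjecture, and the paper does not prove it either; it only offers two pieces of supporting evidence, and your proposal tracks that structure closely. Your first two ingredients --- placing $\#\mathbf{P}^{\mathcal{O}_{\text{Heis}}}$ inside $\mathbf{PSPACE}$ by reusing the parallel algorithm behind \cref{th:expvalinPSPACE}, and reducing the whole conjecture to an $\mathbf{EXPSPACE}$-hardness lower bound on $\mathcal{O}_{\text{Schr\"od}}$ --- are exactly the paper's first justification (the ``exponential versus doubly exponential'' dimensionality discrepancy), and the paper likewise defers that lower bound to future work, flagging it as the first open problem in \cref{sec:open}. Your remark that the space hierarchy theorem would then make the separation unconditional is a clean formalization of what the paper only states informally, and is a genuine if modest sharpening of its framing.

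Where you depart from the paper is in the second piece of evidence, and there your version is materially weaker. The paper does not merely count the size of the Fock-projector expansion; it takes the concrete series $\ket{0}\!\bra{0}=\lim_{\epsilon\to1^-}\sum_{p\ge0}\frac{(-\epsilon)^p}{p!}a^{\dagger p}a^p$, checks term by term that the truncated sum converges for each elementary Gaussian gate (rotation, displacement, squeezing), and then shows that for the cubic phase gate the partial sums of $\bra{0}e^{isX^3/3}\ket{0}$ grow like $l^l$ and diverge, tracing the failure to the illegality of exchanging sum and integral and to the Airy-function form of the true amplitude. This distinction matters: a $\#\mathbf{P}$ machine has no trouble summing exponentially many terms, so the obstruction to porting \cref{thm:schor-heis-dv} is not the length or magnitude of the expansion but the fact that no truncation of the natural series approximates the answer at all. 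Your ``doubly-exponential basis with doubly-exponentially large coefficients'' heuristic gestures at the right phenomenon but, as written, does not rule out that some finite truncation converges --- which is precisely what the paper's explicit divergence computation establishes for the most natural candidate decomposition. If you want your third paragraph to carry the same evidentiary weight as the paper's, you should replace the counting argument with the divergence argument.
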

In the remainder of this section, we justify why this conjecture is plausible. First of all, while in \cref{sec:expspace}, we show that to simulate cubic and Gaussian circuits in the Schr\"odinger picture, one needs to simulate a ``doubly exponentially'' large dimensional Hilbert space  (in the number of cubic phase gates), in \cref{sec:expspace}, we show that for the same task in the Heisenberg picture, we need to keep track of vectors in vector spaces that are only ``exponentially larger.'' One basis for \cref{conj:schrod-heis} is this discrepancy between ``exponential'' and ``doubly exponential'' dimensionalities. Using standard depth reduction techniques, we prove that the simulation in the Schr\"odinger and Heisenberg pictures can be done in \textbf{PSPACE} and \textbf{EXPSPACE}, respectively. Of course, to justify this \cref{conj:schrod-heis} more rigorously, we need to prove \textbf{EXPSPACE} lower bound on strong simulation of the Schr\"odinger picture which we leave for future work. 

Our second justification for \cref{conj:schrod-heis} is by observing that the approach used in the proof of \cref{thm:schor-heis-dv} fails in proving equivalence between the two pictures due to infinite dimension Hilbert spaces. Consider the following decomposition (See Equation 18 of \cite{paris1996quantum})
\begin{equation} 
\ket{0}\!\bra{0} = \lim_{\epsilon \rightarrow 1^-} \sum_{p \geq 0} \frac{1}{p!} (-\epsilon)^p a^{\dagger p} a^p.
\end{equation} 
As a result 
\begin{align}
\begin{split}
    \alpha &= \lim_{\epsilon \rightarrow 1^-} \sum_{p \geq 0} \frac{1}{p!} (-\epsilon)^p \bra{0}U a^{\dagger p} a^p U^\dagger\ket{0}.\\
    &=: \lim_{\epsilon\rightarrow 1^-} \sum_{p \geq 0} A_p(\epsilon)
\end{split}
\end{align}
We ask, ``can we somehow truncate the sum $\sum_{p \geq 0} A_p(\epsilon)$ to exponentially many terms to approximate $\alpha$?'' If so, then if the rate of convergence is quick enough, we can use $\mathcal{O}_{\text{Heis}}$ to estimate $\alpha$ in \textbf{PSPACE}. In the following, we explain this series diverges for the cubic phase gate, giving evidence that the reduction approach discussed in the proof of \cref{thm:schor-heis-dv} fails.

Before describing the cubic phase gate situation, we prove that finite truncation is possible for Gaussian gates. We need to show this for the three elementary Gaussian operations: rotation, squeezing, and displacement. Rotation is easy to analyze because $ R  a R^\dagger = e^{i\theta}  a$ for $R = e^{i \theta  a^\dagger  a}$. Therefore, we only need to keep the first term ($\alpha = 1$). For displacement, $D = e^{\delta  a - \delta^* a^\dagger}$ and $D a^{\dagger p} D^\dagger = (a^\dagger - \delta^*)^p$, as a result, $A_p (\epsilon) = \frac{(-\epsilon |\delta^2|)^p}{p!}$ which tends to zero. Lastly, we study squeezing $ S = e^{1/2 r (a^2 - a^{\dagger 2})}$. We can evaluate $S \ket{0} = \frac{1}{\sqrt{\cosh{r}}} \sum_{n \geq 0} \frac{\sqrt{(2n)!}}{2^n n!} \tanh^n{r} \ket{2n}$ (see for example (33) in \cite{weedbrook2012gaussian}). Therefore $a^p S \ket{0} = \frac{1}{\sqrt{\cosh{r}}} \sum_{2n \geq p} \frac{\sqrt{(2n)!}}{2^n n!} \tanh^n{r} \sqrt{\frac{(2n)!}{(2n-p)!}}\ket{2n-p}$. Therefore
\begin{equation} 
A_p(\epsilon) = \frac{1}{\cosh{r}} 
\sum_{2n \geq p}
\binom{2n}{n} \frac{1}{4^n} \tanh^{2n}{r} 
\binom{2n}{p} (-\epsilon)^p
\end{equation} 
Therefore
\begin{align}
\begin{split}
    \sum_{p \geq 0} A_p(\epsilon) &= \sum_{n \geq 0} \frac{1}{\cosh{r}} \binom{2n}{n} \frac{1}{4^n} \tanh^{2n}{r}  \sum_{p \leq 2n} \binom{2n}{p} (-\epsilon)^p\\
    &= \sum_{n \geq 0} \frac{1}{\cosh{r}} \binom{2n}{n} \frac{1}{4^n} \tanh^{2n}{r}  (1-\epsilon)^n\\
\end{split}
\end{align}
Therefore $\lim_{\epsilon \rightarrow 1^-} \sum_{p \geq 0} A_p(\epsilon) = \frac{1}{\cosh{r}}$, i.e., only the first term survives.

Now, let us study the cubic phase gate. We explain that the above expansion does not converge for the cubic phase gate. For simplicity, assume we want to compute $\alpha_1 := \bra{0}e^{i s X^3/3} \ket{0}$. We consider the expansion $\alpha_1 (l) := \sum_{j = 0}^{2l} \frac{(is/3)^j}{j!}\bra{0} X^{3j} \ket{0}$. Basic calculation indicates for even $l$, $\bra{0} X^{3j} \ket{0} = 2^{-3j+1} \frac{(3j-1)!}{(3/2j-1)!}$ (for odd $l$ we get zero). Therefore $\alpha_1 (l) = \sum_{j = 0}^{l} \frac{(-1)^{j} s^{2j}}{(2j)!} 2^{-6j+1} \frac{(6j-1)!}{(3j-1)!}$. We can show that the magnitude of this sum is dominated by the magnitude of the last term, which diverges like $l^l$ and hence $\infty =\lim_{l \rightarrow \infty} |\alpha_1 (l)| \neq |\alpha_1|$. To understand the root source of the difficulty, we write $\alpha_1$ in the position eigenbasis:
$$
\alpha_1 (s) = \int_{-\infty}^{+\infty} \frac{e^{-x^2 + is x^3/3}}{\sqrt{\pi}} dx
$$
performing Taylor expansion for $e^{isX^3}$ is similar to evaluating 
$$
\alpha_1 (s) = \int_{-\infty}^{+\infty} \frac{e^{-x^2}}{\sqrt{\pi}} \sum_{j=0}^\infty \frac{(isx^3/3)^j}{j!} dx
$$
The above expression is convergent and correctly takes a value between $-1$ and $+1$; however, we cannot change the order of the sum and integral (because, e.g., it does not satisfy the conditions of the Fubini-Tonelli theorem). It turns out that the solution to the integral above is given by the Airy function (See, for example, C.1. of \cite{miatto2020fast}): 
$$
\alpha_1(s) = 2 \sqrt{\pi} e^{2/3 y^6} |y| Ai (y^4), \text{ where } y^3 = s.
$$

\subsection{A parallel algorithm for computing expectation values}
\label{sec:PARalgorithm}

In this section, we focus on the problem of computing expectation values of position and momentum operators at the output of $\mathbf{CVBQP}[X^3]$ circuits. 

\begin{definition}[Expectation value computation for bosonic circuits with fixed non-Gaussian gates]\label{def:ExpValfixednGprob}
    Given a polynomial-time uniform family $\{\mathcal C_n[H_1,\dots,H_m]\}_{n\in\mathbb N}$ of bosonic circuits of polynomial circuit complexity with non-Gaussian gates $H_1,\dots,H_m\in\mathcal P_{O(1),d}$ with $d>2$, a measurement operator $H\in\mathcal P_1$, $\mathsf{ExpValBosonCircuit}[H_1,\dots,H_m;H]$ is the problem of computing the sign of $\langle x|C_n^\dag HC_n|x\rangle$ on input $x\in\{0,1\}^n$.
\end{definition}
 
For cubic phase gates, we show that this problem of computing the sign of an expectation value is in $\mathbf{PSPACE}$. This is in contrast with the problem of estimating the probability of the sign of an outcome, which by \cref{prop:CVBQPfixednGcompleteprob} and \cref{thm:schrod-in-expspace} is in $\mathbf{EEXP}$.

\begin{theorem}\label{th:expvalinPSPACE} We have:
\begin{equation}
    \mathsf{ExpValBosonCircuit}[X^3;X]\in\mathbf{PSPACE}.
\end{equation}
\end{theorem}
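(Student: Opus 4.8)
The plan is to work entirely in the Heisenberg picture and track the evolution of the single measured observable $X_1$ through the circuit, exploiting the fact that a low-degree observable only ever reaches \emph{singly}-exponential degree — in sharp contrast with the Schr\"odinger picture of \cref{sec:expspace}, where the state explores a doubly-exponentially large effective Hilbert space. First I would fold the coherent-state input into a layer of displacement gates (the hardwiring trick in the proof of \cref{prop:CVBQPfixednGcompleteprob}), reducing the task to determining the sign of $\langle 0^n|O|0^n\rangle$ where $O:=C_n^\dagger X_1 C_n$ and $C_n=U_m\cdots U_1$. Writing the observable $O_j$ obtained after conjugating by $j$ of the gates in the normal form of \cref{def:normalForm}, $O_j=\sum_{\boldsymbol\mu,\boldsymbol\nu}\alpha^{(j)}_{\boldsymbol\mu,\boldsymbol\nu}X^{\boldsymbol\mu}P^{\boldsymbol\nu}$, the elementary substitution rules of \cref{sec:elementary_calculations} govern a single conjugation step: a Gaussian gate replaces each $X_i,P_i$ by an affine combination of canonical operators and hence leaves the total degree unchanged, whereas a cubic phase gate sends $P_i\mapsto P_i+cX_i^2$ and therefore at most doubles it. Consequently $\deg O\le 2^m$, so over $n$ modes there are at every stage at most $\binom{2n+2^m}{2n}=2^{O(nm)}$ monomials — singly exponential. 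Finally $\langle 0^n|O|0^n\rangle=\sum_{\boldsymbol\mu,\boldsymbol\nu}\alpha_{\boldsymbol\mu,\boldsymbol\nu}\langle 0|X^{\boldsymbol\mu}P^{\boldsymbol\nu}|0\rangle$ is read off from the closed-form vacuum moments (\cref{app:XP-vaccuum-exp}), which vanish unless every index is even and are otherwise efficiently computable rationals.

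The key to making each conjugation step effective without ever enumerating the pre-collection expansion (which would be doubly exponential, since $(aX+bP+c)^{2^m}$ has $3^{2^m}$ terms before simplification) is to compute each \emph{collected} normal-form coefficient directly by a closed formula. For a fixed source monomial of $O_{j-1}$ and a fixed target monomial of $O_j$, the transition coefficient under the current gate is a product over the affected modes of non-commutative binomial coefficients supplied by \cref{thm:WyssBinomialThm} together with the reordering identities of \cref{lemma:CommutatorsInNormalForm} and its generalization \cref{lem:app-commutation} (and \cref{cor:closedFormBinomialExpansion} in the cases where the relevant commutator is central, e.g.\ the purely quadratic Gaussian substitutions). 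Each such transition coefficient is therefore computable in $\mathsf{poly}(n)$ time from the source index, the target index, and the gate, and its magnitude — together with the accumulated coefficients $\alpha_{\boldsymbol\mu,\boldsymbol\nu}$ — may be as large as $2^{2^{\mathsf{poly}(n)}}$ by the repeated-squaring phenomenon of \cref{sec:elementary_calculations}.

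To obtain the \textbf{PSPACE} bound rather than the naive exponential-space one, I would cast the evolution as a polynomial-time \textbf{PRAM[*]} computation. Each of the $m=\mathsf{poly}(n)$ conjugation steps is a synchronous parallel round: one processor is spawned for each pair consisting of a monomial of $O_{j-1}$ and a monomial of $O_j$ — at most $2^{O(nm)}\cdot 2^{O(nm)}=2^{O(nm)}$ pairs — each processor evaluates its single transition coefficient in $\mathsf{poly}(n)$ time by the closed formulas above, and the contributions sharing a common target monomial are then summed by a parallel reduction of $O(\mathsf{poly}(n))$ depth. Because \textbf{PRAM[*]} permits unit-cost arithmetic on unbounded integers, the doubly-exponential coefficient magnitudes incur no slowdown, so the whole algorithm runs in time $\mathsf{poly}(n)$ using exponentially many processors; the identity of \cref{thm:PRAMtimeEqualsPSPACE} between polynomial-time \textbf{PRAM[*]} computation and \textbf{PSPACE} then places the sign computation in \textbf{PSPACE}. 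For the multimode case I would invoke the Bloch--Messiah decomposition (\cref{thm:EBMdecomp}) to rewrite every Gaussian layer as passive linear optics plus single-mode gates, so that the only genuinely multimode operators entering the update are two-mode $SUM$ gates, whose conjugation action on $X$ and $P$ is again affine and thus fits the same framework.

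The main obstacle I anticipate is the bookkeeping of the second and third paragraphs: one must verify that \emph{collected} normal-form coefficients admit a genuinely $\mathsf{poly}(n)$-time closed formula for every (source, target) pair — so that no processor is forced to materialize the doubly-exponentially long uncollected expansion — and that the ``collect like terms'' reduction can be implemented in polynomial \textbf{PRAM[*]} time over a singly-exponential address space of monomials. Getting these two invariants (singly-exponential monomial count, poly-time per-pair transition) to hold simultaneously is precisely what keeps the computation inside \textbf{PRAM[*]}-\textbf{PTIME}, and hence \textbf{PSPACE}, rather than the \textbf{EXPSPACE} of the Schr\"odinger-picture analysis.
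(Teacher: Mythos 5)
Your proposal is correct and follows essentially the same route as the paper's proof: Heisenberg-picture tracking of $X_1$ in normal form, degree at most doubling per cubic phase gate (hence singly-exponential monomial count), closed-form normal-ordering coefficients from the non-commutative binomial/Zassenhaus machinery, unit-cost arithmetic on a \textbf{PRAM[*]} to absorb the doubly-exponential coefficient magnitudes, the $\mathbf{PRAM[*]}$-$\mathbf{PTIME}=\mathbf{PSPACE}$ identity, closed-form vacuum moments, and reduction of the multimode Gaussian layers to $SUM$ gates. Your per-(source, target)-pair processor allocation with an explicit parallel reduction is only a minor reorganization of the paper's per-monomial message-passing algorithm, and the obstacle you flag (poly-time computability of each collected transition coefficient) is exactly what the paper's precomputed closed-form coefficient formulas are there to handle.
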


\noindent The proof of this result is based on a parallel algorithm for computing the evolution of position and momentum operators under bosonic computations (see \ref{sec:parallelComplexity} for a brief review of parallel computations).
 
Before presenting the proof, we recall a useful technical result.
The Zassenhaus formula (see \cref{lem:zassenhaus} or \cite{zaimi2011binomial}) allows us to make some assertions about the normal form of operators that result from using the standard gate set. 
In particular,
\begin{proposition}\label{prop:normalOrdering}
    For all $n \in \mathbb N$ the following product of binomial operators in the position and momentum variables $X$ and $P$, has a normal form (as in \cref{def:normalForm}) expansion:
    \begin{equation}
        (\alpha X + \beta P)^n =  \sum_{k=0}^{\lfloor n/2 \rfloor} \sum_{j=0}^{n-2k} c_{j,(n-2k)-j} X^j P^{(n-2k)-j},
    \end{equation}
    where
        \begin{align*}
            c_{j,(n-2k)-j}  &= \frac{n!}{k!j!(n-2k-j)!} \left(\frac{-i}{2}\right)^k (\alpha^{k+j} \beta^{n-k-j}).
        \end{align*}
\end{proposition}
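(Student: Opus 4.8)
The plan is to recognize this statement as a direct specialization of the closed-form non-commutative binomial expansion of \cref{cor:closedFormBinomialExpansion}. That corollary computes $(U+V)^n$ in terms of the normally ordered binomial operators $B_m(U,V) = \sum_{\ell} \binom{m}{\ell} U^\ell V^{m-\ell}$, \emph{provided} that the commutator $C := [V,U]$ is central, i.e.\ $[U,C]=[V,C]=0$. So first I would set $U = \alpha X$ and $V = \beta P$, so that $(U+V)^n = (\alpha X + \beta P)^n$ is exactly the operator we want to expand, and the monomials $U^\ell V^{m-\ell} = \alpha^\ell\beta^{m-\ell} X^\ell P^{m-\ell}$ occurring in $B_m(U,V)$ are already in the normal form of \cref{def:normalForm}, with every $X$ to the left of every $P$.

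The key structural observation — and the only hypothesis that needs checking — is that the relevant commutator is a scalar multiple of the identity. Using the canonical relation $[X,P]=iI$, we have
\begin{equation}
    C = [V,U] = [\beta P,\alpha X] = \alpha\beta\,[P,X] = -i\alpha\beta\, I,
\end{equation}
which commutes with everything, so in particular $[U,C]=[V,C]=0$ and the hypothesis of \cref{cor:closedFormBinomialExpansion} holds. This centrality is precisely the feature that makes the expansion collapse to a finite closed form rather than an unwieldy nested-commutator series, and it is the reason the formula specializes so cleanly.

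With the hypothesis verified, applying \cref{cor:closedFormBinomialExpansion} and using $[V,U]^k = (-i\alpha\beta)^k I$ gives
\begin{equation}
    (\alpha X + \beta P)^n = \sum_{k=0}^{\lfloor n/2\rfloor} \frac{n!}{(n-2k)!\,k!\,2^k}\, B_{n-2k}(\alpha X,\beta P)\,(-i\alpha\beta)^k.
\end{equation}
I would then expand each binomial operator as $B_{n-2k}(\alpha X,\beta P) = \sum_{j=0}^{n-2k}\binom{n-2k}{j}\alpha^j\beta^{n-2k-j}X^j P^{n-2k-j}$ and read off the coefficient of $X^j P^{(n-2k)-j}$. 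Multiplying the prefactor $\tfrac{n!}{(n-2k)!\,k!\,2^k}$ by $\binom{n-2k}{j}=\tfrac{(n-2k)!}{j!\,(n-2k-j)!}$ cancels the $(n-2k)!$, while the powers of $\alpha,\beta$ combine as $\alpha^{j+k}\beta^{(n-2k-j)+k}=\alpha^{k+j}\beta^{n-k-j}$, yielding precisely
\begin{equation}
    c_{j,(n-2k)-j} = \frac{n!}{k!\,j!\,(n-2k-j)!}\left(\frac{-i}{2}\right)^k \alpha^{k+j}\beta^{n-k-j}.
\end{equation}

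Since the heavy lifting (the closed-form binomial theorem) is inherited from \cref{cor:closedFormBinomialExpansion}, there is no genuine obstacle; the only care needed is bookkeeping. One must fix the ordering convention consistently — identifying the ``left'' variable of $B_m$ with $\alpha X$ so the output lands in $X$-before-$P$ normal form — and track that the relevant commutator is $[P,X]=-iI$ rather than $[X,P]=+iI$, since a swapped convention would turn $(-i/2)^k$ into $(i/2)^k$. A quick check at $n=2$ settles the signs: the formula returns $\alpha^2X^2 + 2\alpha\beta XP + \beta^2 P^2 - i\alpha\beta I$ (the $k=1$ term contributing the central correction $-i\alpha\beta I$), which agrees with the direct expansion using $PX = XP - iI$.
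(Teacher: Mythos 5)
Your proof is correct and follows essentially the same route as the paper's: identify $U=\alpha X$, $V=\beta P$, verify that $[\beta P,\alpha X]=-i\alpha\beta I$ is central so \cref{cor:closedFormBinomialExpansion} applies, and then expand the binomial operators $B_{n-2k}(\alpha X,\beta P)$ to read off the coefficients. The sign bookkeeping and the $n=2$ sanity check are both right.
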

\begin{proof}
    Clearly $[\beta P, \alpha X] = (-i\alpha\beta)I$ satisfies the condition $[\beta P, (-i\alpha\beta)I] = 0 = [\alpha X, (-i \alpha\beta)I]$. Applying \cref{cor:closedFormBinomialExpansion} we get that,
    \begin{align*}
        (\alpha X + \beta P)^n &= \sum_{k=0}^{\lfloor \frac{n}{2} \rfloor} \frac{n!}{(n-2k)!k!2^k} B_{n-2k}(\alpha X, \beta P) (-i\alpha \beta)^k \\
        &= \sum_{k=0}^{\lfloor \frac{n}{2} \rfloor} \frac{n!}{(n-2k)!k!} \left(\frac{-i\alpha\beta}{2} \right)^k \sum_{j=0}^{n-2k} \binom{n-2k}{j} (\alpha X)^j (\beta P)^{n-2k-j} \\
        &= \sum_{k=0}^{\lfloor \frac{n}{2} \rfloor} \sum_{j=0}^{n-2k} \frac{n!}{k!j!(n-2k-j)!} \left(\frac{-i}{2}\right)^k (\alpha^{k+j} \beta^{n-k-j}) X^j P^{n-2k-j},
    \end{align*}
    which is in normal form. Moreover, since $B_{n-2k}$ is homogeneous of degree $n-2k$ for each fixed $k$ we get the closed form formulas above for each coefficient. Notice that whenever $n$ is even, the only non-zero terms have even total degree, and when $n$ is odd, the non-zero terms must have odd total degree.
\end{proof}

Observe that the coefficients for the monomials in the result above can be obtained from the coefficients of the polynomial $(X+P)^n$, by appropriately scaling with $\alpha$ and $\beta$. We should be able to obtain a similar expression for the product operator $(\alpha X + \beta P^2)$.

\begin{proposition}\label{prop:normalOrdering2}
    $(\alpha X + \beta P^2)^n$ has the following normal form expression
      \begin{align*}
         \frac{(\alpha X + \beta P^2)^n}{n !} 
        &= \sum_{j+k+2l+3m=n} \frac{(\alpha^{j + l + 2m} \beta^{k + l + m}) (-1)^l(i)^{m-l}}{3^m\cdot j! k! l! m!} X^j P^{2k+l}.
    \end{align*}
    Furthermore, the coefficients are computable in parallel polynomial depth in $n$.
\end{proposition}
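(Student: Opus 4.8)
The plan is to exponentiate. Introduce a formal parameter $t$ and study the generating operator $e^{t(\alpha X + \beta P^2)}$, whose coefficient of $t^n$ is exactly $\tfrac{1}{n!}(\alpha X + \beta P^2)^n$; the right-hand side of the proposition will then be read off as a Cauchy product of elementary exponential series. First I would compute the Lie algebra generated by $A:=\alpha X$ and $B:=\beta P^2$. Using $[X,P]=iI$ and the derivation identities recalled in \cref{sec:elementary_calculations}, one finds $[A,B]=2i\alpha\beta P=:C$, then $[A,C]=-2\alpha^2\beta I=:D$, and crucially $[B,C]=2i\alpha\beta^2[P^2,P]=0$, while $D$ is a multiple of the identity and hence central. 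Consequently every nested commutator of $A,B$ of depth $\ge 3$ either contains a factor $[B,C]$ or is a bracket with the central element $D$, so it vanishes, and $\mathrm{span}\{A,B,C,I\}$ is nilpotent of step $3$.

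Given this nilpotency, I would apply the Zassenhaus formula (\cref{lem:zassenhaus}): the infinite product collapses after the cubic term because the higher Zassenhaus exponents $C_n(A,B)$ are homogeneous Lie polynomials of total degree $n$, hence linear combinations of depth-$(n-1)$ brackets, which vanish for $n\ge 4$. Substituting $[A,B]=C$, $[A,[A,B]]=D$, and $[B,[A,B]]=0$ yields the exact factorization
\begin{equation}
e^{t(\alpha X+\beta P^2)}=e^{t\alpha X}\,e^{t\beta P^2}\,e^{-i t^2\alpha\beta P}\,e^{-\frac{t^3}{3}\alpha^2\beta},
\end{equation}
where I used $e^{-\frac{t^2}{2}C}=e^{-it^2\alpha\beta P}$ and $e^{\frac{t^3}{6}D}=e^{-\frac{t^3}{3}\alpha^2\beta}$. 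Expanding the four factors as power series, the monomial $X^jP^{2k+l}$ arises by selecting the $t^j$ term of the first factor (factor $\alpha^j/j!$), the $t^k$ term of the second (factor $\beta^k/k!$ and $P^{2k}$), the $t^{2l}$ term of the third (factor $(-i\alpha\beta)^l/l!$ and $P^l$), and the $t^{3m}$ term of the fourth (factor $(-\tfrac13\alpha^2\beta)^m/m!$); imposing the constraint $j+k+2l+3m=n$ from the total power of $t$ and grouping the powers of $\alpha,\beta$ with the factorials and phases produces the claimed closed form.

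For the final assertion, the closed form exhibits each coefficient (for a fixed output monomial) as a sum of $O(n^2)$ terms, each a product of integer factorials bounded by $n!$ — hence of bit-length $O(n\log n)$ — and monomials in $\alpha,\beta$ of degree $\le n$. Such quantities are built from iterated products and prefix sums, which admit standard parallel algorithms of polynomial depth (in fact polylogarithmic depth with polynomially many processors using parallel prefix-product and fast integer multiplication), giving the stated parallel complexity.

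The main obstacle I anticipate is not the coefficient bookkeeping but justifying the \emph{exact} termination of the Zassenhaus series — that the relations $[B,C]=0$ and $D\in\mathbb C I$ genuinely annihilate every higher-order exponent. The clean way to discharge this is to verify directly that $\mathrm{span}\{A,B,C,I\}$ is closed under bracketing with both $A$ and $B$, which reduces the infinitely many vanishing conditions to the three commutators computed above together with the centrality of $I$; careful tracking of the phase factor $(-i\alpha\beta)^l$ from the $C$-exponential is the one place where sign conventions must be handled with care.
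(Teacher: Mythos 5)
Your proof follows essentially the same route as the paper's: both arguments truncate the Zassenhaus formula (\cref{lem:zassenhaus}) after the cubic exponent, using precisely the observations that $[\beta P^2,[\alpha X,\beta P^2]]=0$ and that $[\alpha X,[\alpha X,\beta P^2]]$ is a multiple of the identity (so all higher Zassenhaus exponents, being depth-$\ge 3$ nested brackets, vanish), and then read off the coefficient of $t^n$ from the Cauchy product of four exponential series. Your justification of the termination via closure of $\mathrm{span}\{A,B,C,I\}$ under bracketing is a clean way to package what the paper states more tersely.

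The one point you should not gloss over is that your (correct) computation does \emph{not} reproduce the coefficient as printed in the proposition. You obtain $[A,[A,B]]=2i\alpha^2\beta[X,P]=-2\alpha^2\beta I$, hence the central factor $e^{-t^3\alpha^2\beta/3}$ and the middle factor $e^{-it^2\alpha\beta P}$, which combine to the phase $(-1)^{l+m}i^{l}$ on the monomial $X^jP^{2k+l}$; the proposition prints $(-1)^l i^{m-l}=i^{l+m}$, and the paper's own proof writes the central factor as $e^{it^3\alpha^2\beta/3}$, carrying a spurious factor of $i$. A direct check settles this in your favor: for $n=2$ the $P$-term of $(\alpha X+\beta P^2)^2/2!$ is $-i\alpha\beta P$ (your factors give $-i\alpha\beta$, the printed formula gives $+i\alpha\beta$), and for $n=3$ the identity term is $-\alpha^2\beta/3$ (yours gives $-\alpha^2\beta/3$, the printed formula gives $+i\alpha^2\beta/3$). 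So your derivation is sound, but your closing sentence claiming it "produces the claimed closed form" is not literally true; what it produces is the corrected coefficient
\begin{equation}
\frac{(\alpha X+\beta P^2)^n}{n!}=\sum_{j+k+2l+3m=n}\frac{\alpha^{j+l+2m}\beta^{k+l+m}\,(-1)^{l+m}\,i^{l}}{3^m\, j!\,k!\,l!\,m!}\,X^jP^{2k+l},
\end{equation}
and you should state that explicitly rather than defer to the printed formula. The parallel-computability claim is unaffected by the phase correction.
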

\begin{proof}
    Consider that $[X,P^2]=2iP$ and so 
    \begin{equation}
        [[[X,P^2],X],X] = [[[X,P^2],X],P^2] = [[[X,P^2],P^2],P^2]=0.
    \end{equation}
    Thus, we can truncate the Zassenhaus formula to third-order commutators and we have,
    \begin{equation}
        e^{t(\alpha X + \beta P^2)} = e^{t(\alpha X)}e^{t(\beta P^2)}e^{-t^2(2\alpha\beta i)P/2} e^{t^3(2\alpha^2 \beta)iI/6} = e^{t(\alpha X)} e^{t(\beta P^2)} e^{-t^2(\alpha\beta i P)} e^{t^3(\alpha^2 \beta i I)/3},
    \end{equation}
    where the left-hand side is the generating function for $(\alpha X + \beta P^2)^n$ and the right-hand side contains terms of the form $X^j P^{2k + l}$. 
    \begin{equation}
        \sum_\lambda \frac{(\alpha X + \beta P^2)^\lambda t^\lambda}{\lambda !} 
        = \sum_{j,k,l,m} t^{j+k+2l+3m} \frac{(\alpha X)^j (\beta P^2)^k (-\alpha \beta i P)^l (\alpha^2 \beta i I/3)^m}{j!k!l!m!}.
    \end{equation}
    Comparing coefficients for the $t^n$ terms we have,
    \begin{align*}
         \frac{(\alpha X + \beta P^2)^n}{n !} 
        &= \sum_{j+k+2l+3m=n} \frac{(\alpha^{j + l + 2m} \beta^{k + l + m}) (-1)^l(i)^{m-l}}{3^m\cdot j! k! l! m!} X^j P^{2k+l}.
    \end{align*}
    To see why the coefficients are computable in parallel depth we note that factorials (and other arithmetic operations involved) are easy for parallel random access machines, and other models of parallel computation. 
\end{proof}

The goal is to show that the problem of determining the sign of a coefficient of the monomial $X$ in the normal form of the expansion of a continuous-variable quantum circuit with polynomially-sized coefficients can be solved by a classical parallel algorithm. Each gate in a continuous-variable quantum circuit transforms the position and momentum variables by means of a substitution rule. In particular, as we have seen in \cref{sec:elementary_calculations}, unitary evolution of the conjugate variables can produce polynomial operators of higher degrees, including product terms of the form $(\sum_{i,j}c_{i,j}X^iP^j)^k$. This suggests that after polynomially many gates, there may be exponentially many monomial terms, even if they are in normal form. The exponential width of the polynomial expansion, together with the polynomial number of gates, suggests that a parallel algorithm might succeed for such a task.

The idea of the algorithm is to attach a processor to each possible monomial term, keeping track separately of the real and imaginary parts of its coefficient in the shared memory of the PRAM. At each time step $t$, each processor corresponding to a non-zero coefficient is active and will perform the substitution rule corresponding to the gate $U_t$ in its local memory and then write send messages to all processors corresponding to the monomials produced by the substitution. In the next time step, all machines will read their messages and update the coefficients in the global memory. If a monomial term goes from zero to non-zero, then the corresponding processor is activated for the next time step. After $T$ time steps, the sign of (the real part of) the coefficient of the monomial $X$ in the normal form of the expansion is returned as the output.

\begin{proposition}\label{prop:CVOperatorEvolutionIsInPSPACE}
    There is a parallel polynomial-time algorithm using exponentially many processors, that takes the description of a continuous-variable quantum circuit $U = U_T U_{T-1} \cdots U_1$ composed of Gaussian and cubic phase gates (with coefficients of size at most $\mathsf{poly}(T)$) over one mode and outputs the coefficients corresponding to  monomials $X^mP^n$ in the normal form expansion of $X$ transformed by the unitary bosonic circuit $U$, 
    \begin{equation}U X U^\dagger = \sum_{m,n} c_{m,n} X^mP^n = c_{1,0} X + c_{0,1} P + \cdots \end{equation}
\end{proposition}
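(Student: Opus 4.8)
The plan is to compute $O_T := U X U^\dagger$ iteratively in the Heisenberg picture, maintaining at each step a normal-form expansion (in the sense of \cref{def:normalForm}) of the intermediate operator $O_i := W_i X W_i^\dagger$, where $W_i := U_i \cdots U_1$, using the recursion $O_i = U_i O_{i-1} U_i^\dagger$ with $O_0 = X$. The single-gate conjugation rules are exactly the substitutions derived in \cref{sec:elementary_calculations}: a Gaussian gate sends $X \mapsto \alpha X + \beta P + e I$ and $P \mapsto \gamma X + \delta P + f I$ for efficiently computable constants (the affine symplectic map of \cref{eq:affine}), preserving degree, while a cubic phase gate sends $X \mapsto X$ and $P \mapsto P + c X^2$, at most doubling the degree. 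Consequently, exactly as in the degree analysis of \cref{prop:energyboundcubic}, the degree of $O_i$ is at most $2^i \le 2^T$, so $O_T$ contains at most $O(4^T)$ distinct monomials $X^m P^n$ with $m+n \le 2^T$. First I would assign one PRAM processor (\cref{def:pram}) to each such monomial and store the real and imaginary parts of its coefficient $c_{m,n}$ in the shared memory, initialized so that $c_{1,0}=1$ and every other coefficient is zero.

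One round of the algorithm implements the substitution induced by gate $U_i$ in parallel. Each active processor --- one whose current coefficient is nonzero --- applies the conjugation rule to its own monomial $X^m P^n$, i.e.\ it expands $(U_i X U_i^\dagger)^m (U_i P U_i^\dagger)^n$ back into normal form and then writes (adds) its contributions into the memory cells of the target monomials. For the cubic gate this step is clean: $X^m (P + cX^2)^n$ equals $X^m$ times the normal form of $(cX^2+P)^n$, handled by the $X \leftrightarrow P$ analogue of \cref{prop:normalOrdering2}, and multiplying by $X^m$ on the left merely raises the $X$-degree without any reordering. For a Gaussian gate I would expand each affine factor using \cref{prop:normalOrdering} (peeling off powers of the commuting constants $eI, fI$ by the ordinary binomial theorem), multiply the two resulting normal-form polynomials, and re-normal-order the mixed products $P^b X^c$ using the commutation identities of \cref{lemma:CommutatorsInNormalForm} and \cref{lem:app-commutation}. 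All of these expansions involve only factorials, binomials and powers, which the propositions already certify to be computable in parallel polynomial depth; the only additional ingredient is summing the (exponentially many) contributions that land on a common target monomial, which is a parallel reduction of depth logarithmic in the number of contributors, hence $O(T)$, under CRCW semantics. After $T$ such rounds the processors hold the coefficients of $O_T$, and in particular $c_{1,0}$ and $c_{0,1}$.

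The hard part will be the bookkeeping of the non-commutative re-ordering in the Gaussian step together with the size of the coefficients. Concretely, I must verify that re-expressing a product $\big(\sum_{a,b} u_{a,b} X^a P^b\big)\big(\sum_{c,d} v_{c,d} X^c P^d\big)$ in normal form --- which requires commuting each $P^b$ past each $X^c$, generating the lower-degree terms proportional to $[P,X]^\ell = (-i)^\ell$ --- can be carried out so that every target coefficient is a sum of polynomial-depth-computable terms while the total number of source/target pairs stays within the $O(4^T)$ processor budget. Because the coefficients can be doubly-exponentially large (as the repeated-squaring discussion preceding \cref{prop:energyboundcubic} shows), I would run the algorithm on a $\mathrm{PRAM}[*]$ with unit-cost multiplication of unbounded integers, representing each coefficient exactly as a pair of big integers; this keeps the running time polynomial in $T$. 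The requested coefficients (and their signs) are then read off directly, and it is this representation that lets \cref{thm:PRAMtimeEqualsPSPACE} ($\mathrm{PRAM}[*]$-$\mathrm{PTIME} = \mathbf{PSPACE}$) ultimately deliver the \textbf{PSPACE} bound of \cref{th:expvalinPSPACE}.
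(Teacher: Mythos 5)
Your proposal is correct and follows essentially the same route as the paper's proof: one processor per monomial $X^mP^n$, gate-by-gate Heisenberg substitution using the replacement rules of \cref{sec:elementary_calculations}, re-expansion into normal form via \cref{prop:normalOrdering} and \cref{prop:normalOrdering2}, and a CRCW $\mathrm{PRAM}[*]$ with unit-cost multiplication to absorb the doubly-exponential coefficient growth before invoking \cref{thm:PRAMtimeEqualsPSPACE}. Your treatment is in fact somewhat more explicit than the paper's about the non-commutative re-ordering of products $P^bX^c$ in the Gaussian step and the logarithmic-depth parallel reduction for accumulating contributions to a common target monomial, but these are refinements of the same argument rather than a different one.
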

\begin{proof}
The algorithm follows the pseudocode given below (\cref{alg:parallelCoeffBosonCircuit}).
\begin{algorithm}
\caption{A parallel algorithm for calculating the normal form expansion of X} 
\begin{algorithmic}[1]
\State \textbf{Input:} Ordered list $x = (g_i, t_i)_{i \in [T]}$ encoding quantum circuit
\State \textbf{Output:} The non-zero coefficients $\eta_{k,l}=\eta_{k,l}^{(r)} + \eta_{k,l}^{(i)}$
\State \textbf{Initialization:}
\State Store in the global memory the elementary substitution operations from \cref{sec:elementary_calculations}
\State Precompute coefficients for the expressions from \cref{prop:normalOrdering} with $\alpha=\beta=1$ \label{step:precomputeCoeff}
\State Allocate processor $M_{k, l}$ to the monomial $X^k P^l$
\State Allocate registers in the global memory to the real part $\eta_{k, l}^{(r)}$ and imaginary part $\eta_{k, l}^{(i)}$ for the coefficient of each monomial $X^k P^l$
\State Activate machines $M_X \equiv M_{1,0}$ and $M_P \equiv M_{0,1}$
\State Set $\eta_{1, 0}^{(r)} \gets 1$
\For{$k,l$}
\State Set $\eta_{k,l}^{(r)}\gets 0$ and $\eta_{k,l}^{(i)} \gets 0$
\EndFor        
\State \textbf{Processing:}
\For{$t = 1$ to $T$}
\For{each active machine $M_{k, l}$}
\State $g_i$ transforms $X^k P^l$ according to \cref{sec:elementary_calculations}
\State $M_{k,l}$ computes the coefficients of the polynomial in normal form by looking up the precomputed values from step \ref{step:precomputeCoeff} and scaling with $\alpha$ and $\beta$ 
\State In local memory, 
\State Add the results to all affected global registers $\eta_{k',l'}$
\EndFor
\For{each machine $M_{k, l}$}
\If{coefficient is zero}
\State Deactivate machine $M_{k, l}$
\ElsIf{coefficient is non-zero}
\State Activate machine $M_{k, l}$
\EndIf
\EndFor
\EndFor
\end{algorithmic}\label{alg:parallelCoeffBosonCircuit}
\end{algorithm}
    The input to the algorithm is an ordered list encoding a quantum circuit on one bosonic mode using $T$ unitary gates from the set $S:=\set{e^{it_1X}, e^{it_2X^2}, e^{i\frac{\pi}{4}(X^2 + P^2)},e^{it_3X^3}}$. Let $x:=(g_i,t_i)_{i\in[T]}$ be this input where $g_i$ is an encoding for a gate in $S$ and $t_i \in \mathbb Q$ is the rational parameter with which it is applied. We will write $(g_i,t_i)=(X,t)$ if the unitary applied is $e^{itX}$, and so on. If $g_i$ is a Fourier gate, we exclusively use the parameter $\pi/4$ which is considered to be part of the specification of the gate and $t_i=1$. 

    From \cref{sec:elementary_calculations} we know that the evolution of the position and momentum operators under the action of the gates can be given by a set of replacement rules for the non-commuting formal variables $X$ and $P$. From \cref{prop:normalOrdering} and \cref{prop:normalOrdering2}, compositions of these rules result in \textit{higher-order} replacement rules that can be put in normal form. The general form of these higher-order rules are precomputed using the closed-form formulas for the coefficients and stored, so that they are available to the processors during simulation.

    Allocate to each non-commutative monomial of the form $X^k P^l$ where $k, l \in \mathbb N$, a pair of registers $\smash{\eta_{k, l}^{(r)}}$ and $\smash{\eta_{k, l}^{(i)}}$ in the global memory which store the real and imaginary parts, respectively, of its current coefficient at time $t$. At time $t=0$, set all coefficients to $0$, except the ones corresponding to $X$ which are set to $1$. Now, to each coefficient allocate a processor $M_{k, l}$ where $(k, l)$ is a label for a monomial in terms of its powers. Each processor has access to the input, and thus knows which unitary gate acts in each time-step. 
    
    In the first step, only the machines $M_X\equiv M_{1,0}$ and $M_P \equiv M_{0,1}$ are \textit{active}. For each step $t$, every active machine $M_{k,l}$ performs the substitutions corresponding to the gates specified in the input and stores the results in its local memory. It then sends a message to the machines $M_{k',l'}$ if the $(k',l')$-th monomial term is produced by substitution, along with the corresponding coefficient. If a monomial coefficient is equal to zero after all computations are complete in a given time-step, deactivate the respective machine. If it goes from zero to non-zero, then activate the respective machine for the next time-step.

    \paragraph{Complexity analysis:} First, consider that in the precomputation step the coefficients of the terms produced in \cref{prop:normalOrdering} and \cref{prop:normalOrdering2} contain factorials of size at most $n!$. On a PRAM[*] (with multiplication as a unit cost primitive operation), there is a $O(\log n)$ step procedure to calculate each such factorial: Each of $\lceil n/2 \rceil$ processors multiplies two consecutive numbers $a$ and $a+1$ in the first step and writes the answer to the global memory, repeat with half as many processors for each step until the last processor returns $n!$. Each coefficient contains a constant number of factorial terms and there are polynomially-many such coefficients for a single expansion, so there is a PRAM[*]-procedure using $\mathsf{poly}(n)$ processors to compute the list of coefficients for each such binomial expansion with cost $O(\log (n))$.

    Next, in each processing step the processor for $X^kP^l$ performs a substitution based on the elementary transformations in \cref{sec:elementary_calculations}, by looking up the transformations corresponding to the gate $g_i$ with parameter $t_i$. $M_{k,l}$ may then need to use \cref{prop:normalOrdering} and \cref{prop:normalOrdering2} to calculate the normal ordered resultant, and thus calculate coefficients as above. There are $T$ gates and thus, the algorithm takes at most $\mathsf{poly}(T)$ parallel time. 
\end{proof}
Notice that the ideas of the algorithm above work just as well for the complementary basis of annihilation and creation operators $\set{ \hat a,  \hat a^\dagger}$, rather than the position and momentum operators $\set{X,P}$. The replacement rules of \cref{sec:elementary_calculations} must be modified to operate on the annihilation and creation operators $\set{ \hat a,  \hat a^\dagger}$, so that we can arrive at equivalent expansions with terms of the (normal) form $ \hat a^m ( \hat a^\dagger)^n$. 
\begin{align}
    {a} &\xmapsto{(X,t)} {a} - \frac{it}{\sqrt{2}} I
    &{a}^\dagger &\xmapsto{(X,t)} {a}^\dagger + \frac{it}{\sqrt{2}} I \nonumber \\
    {a} &\xmapsto{(X^2,t)} {a} (I - it) - (it) a^\dagger
    &{a}^\dagger &\xmapsto{(X^2,t)} {a}^\dagger (I + it) + (it) a \nonumber \\
    {a} &\xmapsto{(X^2+P^2,\pi/4)} (-i)  a
    &{a}^\dagger &\xmapsto{(X^2+P^2,\pi/4)} i  a^\dagger \nonumber \\
    {a} &\xmapsto{(X^3,t)}  a - (2\sqrt 2 it)X^2
    &{a}^\dagger &\xmapsto{(X^3,t)}  a^\dagger + (2\sqrt 2 it)X^2
\end{align}
Another generalization we can make to the prior result, is to replace $X$ with a generic monomial operator $O$ of the form $a^m ( a^\dagger)^n$. A particularly nice property of monomial operators of this form is that their expectations with respect to the Fock basis have a simple closed form. For example,
\begin{equation}
    \bra{0} a^m ( a^\dagger)^n\ket{0} = \sqrt{m!n!}\delta_{m,n}.
\end{equation}
Or more generally for $k \in \mathbb{N}$,
\begin{equation}
    \bra{k} a^m ( a^\dagger)^n\ket{k} = \frac{\sqrt{(m+k)!(n+k)!}}{k!}\delta_{m,n}.
\end{equation}

Hence, \cref{alg:parallelCoeffBosonCircuit} can be extended to compute expectations of operators that are arbitrary monomials in $\set{a,a^\dagger}$, by also precomputing the expectations of monomial operators up to exponential degree (which we can do since factorials are easy for parallel random-access machines with unit-cost multiplication), and then multiplying with the amplitudes produced at the end of the prior algorithm.

\begin{proposition}\label{prop:ComputingVacExpIsInPSPACE}
    There is a parallel polynomial-time algorithm using exponentially many processors, that takes the description of a continuous-variable quantum circuit $U = U_T U_{T-1} \cdots U_1$ composed of Gaussian and cubic phase gates (with coefficients of size at most $\mathsf{poly}(T)$) over one mode, an input $x\in \set{0,1}$ and a constant-degree operator $O$ of the form $ a^\mu ( a^\dagger)^\nu$ and outputs the corresponding expectation of $O$, 
    \begin{equation}\bra{x} U O U^\dagger \ket{x} = \sum_{m,n} c_{m,n} \bra{x}  a^m ( a^\dagger)^n \ket{x} = \begin{cases}
    \sum_{m} c_{m,m} m! \text{, if $x=0$} \\
    \sum_{m} x_{m,m} (m+1)! \text{, if $x=1$.}
    \end{cases}
    \end{equation}
\end{proposition}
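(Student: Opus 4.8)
The plan is to bootstrap directly from \cref{prop:CVOperatorEvolutionIsInPSPACE}. First I would run essentially the same parallel algorithm (\cref{alg:parallelCoeffBosonCircuit}), with two modifications: initialize the computation with the operator $O = a^\mu (a^\dagger)^\nu$ rather than $X$, and use the replacement rules for the creation and annihilation operators displayed immediately above the proposition, so that the tracked monomials are in the normal form $a^m (a^\dagger)^n$ instead of $X^m P^n$. Since $O$ has constant degree and each cubic phase gate at most doubles the degree of the tracked operator while Gaussian gates leave it unchanged, after $T$ gates the evolved operator $U O U^\dagger = \sum_{m,n} c_{m,n}\, a^m (a^\dagger)^n$ has degree $2^{O(T)}$, so there are at most exponentially many nonzero monomials. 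Attaching one processor $M_{m,n}$ to each monomial, the coefficients $c_{m,n}$ (stored as separate real and imaginary registers) are produced in parallel time $\mathsf{poly}(T)$ exactly as before, using the precomputed normal-form expansions of \cref{prop:normalOrdering} and \cref{prop:normalOrdering2}.

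Next I would exploit the closed-form expectations $\bra{x} a^m (a^\dagger)^n \ket{x} = \delta_{m,n}\cdot m!$ for $x=0$ and $\delta_{m,n}\cdot (m+1)!$ for $x=1$. The Kronecker delta means only the diagonal coefficients $c_{m,m}$ survive, so each processor $M_{m,m}$ simply multiplies its stored coefficient by the corresponding factorial weight, while every off-diagonal processor contributes nothing. The factorial weights, for $m$ up to $2^{O(T)}$, are precomputed in parallel by the standard balanced-tree product (each factorial costs $O(\log(2^{O(T)}))=O(T)$ PRAM[*] steps with unit-cost multiplication), exactly as in the complexity analysis of \cref{prop:CVOperatorEvolutionIsInPSPACE}. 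Finally, the weighted diagonal terms are combined over the (exponentially many) values of $m$ by a balanced binary addition tree of depth $O(T)$, yielding $\bra{x} U O U^\dagger \ket{x}$ in parallel time $\mathsf{poly}(T)$. Invoking \cref{thm:PRAMtimeEqualsPSPACE}, the overall computation lands in $\mathbf{PSPACE}$.

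The main technical point to handle carefully is the magnitude of the numbers involved: by \cref{prop:energyboundcubic} the coefficients $c_{m,n}$ can be doubly-exponentially large, and the factorial weights $m!$ for $m = 2^{O(T)}$ are likewise doubly-exponential integers requiring exponentially many bits. This is precisely why the algorithm must be run on a PRAM[*], where additions and multiplications of arbitrary-size integers are unit cost; the exponential bit-length of each register does not inflate the parallel \emph{time}, only the number of active processors and the memory, both of which stay within the budget that \cref{thm:PRAMtimeEqualsPSPACE} converts into polynomial space. I would also double-check that the depth of the final summation tree remains polynomial — its width is $2^{O(T)}$, hence its depth is $O(T)$ — so that no step of the argument secretly requires more than $\mathsf{poly}(T)$ parallel time.
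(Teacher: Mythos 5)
Your proposal is correct and follows essentially the same route as the paper: run \cref{alg:parallelCoeffBosonCircuit} in the $\{a,a^\dagger\}$ basis starting from $O$, exploit the closed-form Fock expectations $\bra{x}a^m(a^\dagger)^n\ket{x}\propto\delta_{m,n}$ so that only diagonal coefficients survive, precompute the factorial weights by a balanced product tree on a PRAM[*], and sum by a logarithmic-depth addition tree before invoking \cref{thm:PRAMtimeEqualsPSPACE}. Your explicit attention to the doubly-exponential magnitudes of the coefficients and factorials, and why unit-cost arithmetic absorbs them, is a welcome elaboration of a point the paper leaves implicit.
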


Obviously, we can determine the sign of the expectation $\bra{x} U O U^\dagger \ket{x}$ from the above. Altogether these observations allow us to conclude that \cref{prop:CVOperatorEvolutionIsInPSPACE} together with \cref{thm:PRAMtimeEqualsPSPACE} gives us an upper-bound of $\textbf{PSPACE}$ for the decision problem of determining the sign of the expectation value of an observable $O$, and further our algorithm allows us to calculate this expectation directly. The next step is to expand this result to $m$ bosonic modes.

\paragraph{The multimode case:}
Consider now the question of $m$-continuous modes where $m>1$. We will approach this more general problem by extending the tools we have previously used. Firstly, we will augment the standard Gaussian gate set $\set{e^{itX}, e^{itX^2}, e^{i\frac{\pi}{4}(X^2+P^2)}}$ we have used for the single-mode version of the problem to $m$ continuous modes, by adding in the multimode SUM gate, which is the continuous-variable analog to the discrete-variable CNOT gate. In particular we define\footnote{In the rest of this section, we will simplify our notation by using $X_j P_k$ to represent the tensor product of operators $X_j \otimes P_k$, where the subscripts $j,k$ are natural number indices for the modes.},
\begin{equation}
    SUM_{j,k} := e^{-iX_j\otimes P_k}.
\end{equation} 
The multimode Gaussian gate set including for all modes $1\leq j\leq m$ the position-displacement operator $e^{itX_j}$, quadratic phase gate $e^{itX_j^2}$, Fourier transform $e^{i\frac{\pi}{4}(X_j^2+P_j^2)}$ and the SUM gates $e^{iX_jP_{j+1}}$ was presented by Bartlett et al.\ \cite{bartlett2002efficient}, and shown to generate the multimode group of Clifford (Gaussian) unitaries.

Next, notice that the $SUM_{j,k}$ gate has the following commutator relations,
\begin{align}
    [X_jP_k, X_j I_k] &= 0, &[X_jP_k, I_j P_k] =& 0 \nonumber \\
    [X_jP_k, P_j I_k] &= i I_j P_k, &[X_jP_k, I_j X_k] =& -i X_j I_k \nonumber \\
    [X_jP_k, X_j P_k] &= 0, &[X_jP_k, X_j X_k] =& -i X_j^2 I_k \nonumber \\
    [X_jP_k, P_j P_k] &= i I_j P_k^2, &[X_jP_k, X_jP_j] =& i X_jP_k \nonumber \\
    [X_jP_k, X_kP_k] &= -i X_jP_k, &[X_jP_k, P_j X_k] &= i I_j (X_kP_k) - i (X_j P_j) I_k
\end{align}
where $I_j$ and $I_k$ are the identity operators on the $j$-th and $k$-th mode respectively. This gives rise to the following quadratic replacement rules,
\begin{align}\label{eqn:two-mode-replacement-rules}
X_j I_k &\xmapsto{(X_j P_k)} X_j I_k
& P_j I_k &\xmapsto{(X_j P_k)} P_j I_k - I_jP_k \nonumber\\
I_j X_k &\xmapsto{(X_j P_k)}I_j X_k + X_j I_k 
& I_j P_k &\xmapsto{(X_j P_k)} I_j P_k \nonumber\\
X_j P_k &\xmapsto{(X_j P_k)} X_j P_k 
& X_j X_k &\xmapsto{(X_j P_k)} X_j X_k + X_j^2 \nonumber\\
P_j P_k &\xmapsto{(X_j P_k)} P_j P_k - P_j^2 
& X_j P_j &\xmapsto{(X_j P_k)} X_j P_j - X_jP_k \nonumber\\
X_k P_k &\xmapsto{(X_j P_k)} X_k P_k + X_jP_k 
& P_j X_k &\xmapsto{(X_j P_k)} P_j X_k - X_k P_k + X_j P_j - X_jP_k.
\end{align}
Using the properties of tensor products, we can see that the first four rules are sufficient, but we provide a few more calculations as examples.

With the formal substitutions above in addition to those for the single-mode case, the idea for a parallel algorithm for the multimode case is straightforward. The single-mode gates are handled just as before, while the two-mode $SUM$ gates follow the rules computed above.

\begin{proposition}\label{prop:mCVOperatorEvolutionIsInPSPACE}
    There is a parallel polynomial-time algorithm using exponentially many processors, that takes the description of a continuous-variable quantum circuit $U = U_T U_{T-1} \cdots U_1$ composed of Gaussian and  gates (with coefficients of size at most $\mathsf{poly}(T)$) over $m$ modes and outputs the coefficients corresponding to the monomial $X_1$ according to the normal form of the expansion, 
    \begin{equation}U X_1 U^\dagger = \sum_{j,k} \sum_{m,n} c_{m,n} X_j^m P_k^n = c_{1,0}^1 X_1 + c_{0,1}^1 P_1 + \cdots \end{equation}
\end{proposition}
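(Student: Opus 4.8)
The plan is to extend the single-mode parallel algorithm (\cref{prop:CVOperatorEvolutionIsInPSPACE}) to $m$ modes by enlarging the state maintained in the PRAM to multi-mode monomials and by adding substitution rules for the two-mode $SUM$ gate. First I would set up the bookkeeping: allocate to each multi-index monomial $X_1^{\mu_1}\cdots X_m^{\mu_m}P_1^{\nu_1}\cdots P_m^{\nu_m}$ a processor and a pair of global registers holding the real and imaginary parts of its coefficient, exactly as in the single-mode case but now indexed by a pair of $m$-tuples $(\boldsymbol\mu,\boldsymbol\nu)$. Initialize all coefficients to zero except the one for $X_1$, which is set to $1$, and activate only the processor for $X_1$. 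This is the analog of the single-mode initialization, and the Heisenberg evolution $UX_1U^\dagger$ is computed gate by gate.

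Next I would specify how each gate acts. Single-mode gates act on a single mode index and are handled verbatim by the rules of \cref{sec:elementary_calculations} together with the normal-ordering expansions of \cref{prop:normalOrdering} and \cref{prop:normalOrdering2}, applied to the appropriate tensor factor. The only genuinely new ingredient is the $SUM_{j,k}=e^{-iX_j\otimes P_k}$ gate, whose action on the canonical operators is governed by the commutators and the resulting quadratic replacement rules computed just above in Eq.~\eqref{eqn:two-mode-replacement-rules}. Because conjugation by $SUM_{j,k}$ is Gaussian (it is generated by a quadratic Hamiltonian $X_jP_k$), it is an affine symplectic map on the canonical operators and hence does not increase the degree of any monomial; a monomial of degree $d$ maps to a sum of monomials each of degree at most $d$. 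This is the key structural fact that keeps the computation tractable: only the single-mode cubic phase gates can increase the degree (by at most doubling it), so after $T=\mathsf{poly}(T)$ gates the degree is at most $2^{O(T)}$ and the number of monomials, hence processors, is at most doubly exponential, i.e. within the $\mathbf{PSPACE}$ budget via \cref{thm:PRAMtimeEqualsPSPACE}.

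Then I would describe the processing loop identically to \cref{alg:parallelCoeffBosonCircuit}: at each time step, every active processor reads the current gate from the input, applies the corresponding (single-mode or two-mode) substitution to its monomial, expands the result into normal form by looking up precomputed coefficients and scaling, and writes increments to the affected global registers; processors are activated or deactivated according to whether their coefficient becomes nonzero or zero. The correctness follows by induction on the gate index, since each substitution rule faithfully encodes the Heisenberg conjugation of the canonical operators. For the complexity bound, I would note that each substitution produces only polynomially-many (for affine rules) or $\mathsf{poly}(T)$-degree-expansion-many monomials per step, the coefficient arithmetic involves factorials of at most exponential argument (easy for a PRAM[*] in logarithmic depth as in the single-mode analysis), and there are $T$ sequential steps, giving parallel time $\mathsf{poly}(T)$.

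The main obstacle I anticipate is purely organizational rather than conceptual: verifying that the two-mode replacement rules of Eq.~\eqref{eqn:two-mode-replacement-rules} compose correctly with the single-mode normal-ordering expansions when a $SUM$ gate is interleaved with cubic phase gates, so that the normal form on $2m$ non-commuting variables is unambiguous and the parallel updates do not collide. Once one checks (using the Leibniz/derivation property of the adjoint action, as in \cref{lemma:CommutatorsInNormalForm}) that conjugation by any gate in the multimode gate set sends a normal-form monomial to a normal-form polynomial of controlled degree, the inductive argument and the complexity analysis go through exactly as in the single-mode case, and the $\mathbf{PSPACE}$ upper bound follows by invoking \cref{thm:par-pspace}.
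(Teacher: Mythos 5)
Your proposal is correct and follows essentially the same route as the paper's proof: reuse the single-mode algorithm of \cref{prop:CVOperatorEvolutionIsInPSPACE} (i.e., \cref{alg:parallelCoeffBosonCircuit}) for single-mode gates, handle the two-mode $SUM_{j,k}$ gates via the replacement rules of Eq.~\eqref{eqn:two-mode-replacement-rules}, observe that $SUM$ is generated by the quadratic Hamiltonian $X_jP_k$ and hence acts affinely without increasing degree, and conclude parallel polynomial time and the $\mathbf{PSPACE}$ bound via \cref{thm:PRAMtimeEqualsPSPACE}. Your bookkeeping by full multi-index monomials $X^{\boldsymbol\mu}P^{\boldsymbol\nu}$ is in fact a bit more careful than the paper's mode-by-mode description, since general multimode circuits produce cross terms.

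One numerical claim in your write-up should be fixed, because as stated it would undermine the complexity argument: you assert that the number of monomials, hence processors, is ``at most doubly exponential, i.e.\ within the $\mathbf{PSPACE}$ budget.'' A PRAM running in polynomial parallel time can only fork $2^{\mathsf{poly}(T)}$ processors, so doubly exponentially many processors would \emph{not} be compatible with \cref{thm:PRAMtimeEqualsPSPACE}. Fortunately the correct count is singly exponential: the degree after $T$ gates is at most $2^{O(T)}$, and the number of monomials of that degree in $2m=\mathsf{poly}(T)$ variables is $\binom{2m+2^{O(T)}}{2m}=2^{\mathsf{poly}(T)}$, matching the ``exponentially many processors'' in the statement. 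What can be doubly exponential is the \emph{magnitude} of the coefficients (i.e., exponentially many bits per register), which the unbounded-integer registers of the PRAM[*] model accommodate; this is presumably the quantity you conflated with the monomial count.
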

\begin{proof}
    We will allocate exponentially many processors to each of the $m$ modes as before. Whenever a single-mode unitary gate appears in the description of the continuous-variable quantum circuit, we can directly use \cref{alg:parallelCoeffBosonCircuit} to update the set of global variables corresponding to the coefficients of the transformed elementary variables for that mode. If a $SUM(j,k)$ gate is introduced, we will instead use the replacement rules described in Eq.~(\ref{eqn:two-mode-replacement-rules}) to update the coefficients corresponding to the operators on modes $j$ and $k$.

    Notice that the coefficients of each mode's observables can be maintained by exponentially many processors, and so we can deal with polynomially many modes. Moreover, the depth of the computation is not significantly affected by $SUM$ gates, so the total depth is still polynomial. This means that the entire algorithm works in parallel polynomial time.
\end{proof}
Combining these results shows that $\mathsf{ExpValBosonCircuit}[X^3;X] \in \mathbf{PSPACE}$, and in fact for any constant-degree monomial operator $O$, $\mathsf{ExpValBosonCircuit}[X^3;O] \in \mathbf{PSPACE}$.

\section{Ground energy of bosonic Hamiltonians}
\label{sec:gsp}

The problem of interest in this section is to understand the complexity of determining the ground energy of a local bosonic Hamiltonian that is a polynomial in the position and momentum operators.
The finite-dimensional version of this problem (the so-called Local Hamiltonian problem) is very well-studied, and is characterized by the complexity class $\mathbf{QMA}$, for which estimating the ground energy of a local Hamiltonian is a complete problem. We define a version of this class in the bosonic setting, based on the complexity class introduced in \cref{def:CVBQPfixednG}.

\begin{definition}[Continuous-variable quantum Merlin Arthur]\label{def:CVQMA}
    $\mathbf{CVQMA}[H_1,\dots,H_m](c,s)$ is the class of languages $L \subseteq \{0,1\}^*$ for which there exists a polynomial-time uniform family $\{\mathcal V_n\}_{n\in\mathbb N}$ of $\mathbf{CVBQP}[H_1,\dots,H_m]$ verifiers such that:
    \begin{itemize}
        \item for all $x\in L$, there exists a quantum state $\ket\psi$ over at most $\mathsf{poly(|x|)}$ modes such that $V_{|x|}(x,\ket\psi)$ accepts with probability greater than $c$;
        \item for all $x \notin L$, for all quantum states $\ket\psi$ over at most $\mathsf{poly(|x|)}$ modes, $V_{|x|}(x,\ket\psi)$ accepts with probability smaller than $s$.
    \end{itemize} 
\end{definition}

We denote by \textsf{CVLH} the continuous-variable counterpart of the local Hamiltonian problem, which is formally defined as follows:

\begin{definition}[Continuous-variable local Hamiltonian problem]\label{def:cvlh}
Let $\mathcal S$ be a set of states and suppose that we are given a Hamiltonian $H$ which is a polynomial in the position and momentum operators, presented in the normal form
\begin{align}
H = \sum_{\substack{\boldsymbol{\mu},\boldsymbol{\nu} \in \mathbb{Z}_+^n \\ |\boldsymbol{\mu}+\boldsymbol{\nu}|\leq d}} h_{\boldsymbol{\mu},\boldsymbol{\nu}} X^{\boldsymbol{\mu}} P^{\boldsymbol{\nu}},
\end{align}
of degree $d$, over $n$ modes, and with constant rational coefficients with $|h_{\boldsymbol{\mu},\boldsymbol{\nu}}|=O(1)$. An instance of $\mathsf{CVLH}_\mathcal{S}^{d}(a,b)$ consists of a Hamiltonian $H$ specified as above, along with two constants $a$ and $b$ with $b-a\geq \frac{1}{\mathsf{poly}(n)}$ and a promise that either
$
\inf_{\psi\in\mathcal{S}} \langle \psi | H | \psi\rangle \leq a,
$
or
$
\inf_{\psi\in\mathcal{S}} \langle \psi | H | \psi\rangle \geq b,
$
holds. The \textsc{yes} instances are those where the smallest eigenvalue of $H$ is at most $a$, and the \textsc{no} instances are those where it is larger than $b$. We should decide which is the case.
\end{definition}

\noindent As the exact values of the constants $a,b$ do not change the complexity of the problem, we omit that dependency in what follows.

As we shall see in the upcoming sections, when $\mathcal S$ is related to the set of states with bounded stellar rank $r$, the parameters $d$ and $r$ tune the complexity of this problem, letting it range from being solvable in polynomial time, all the way to being undecidable. In particular, the dependency on the set of states over which the energy is being optimised is crucial to obtain a decidable problem.

Note that since bosonic Hamiltonians are generally unbounded operators, it is not clear a priori whether their spectrum is bounded from below: we call this \textit{the boundedness problem}.

\begin{definition}[Hamiltonian boundedness problem]\label{def:boundednessprob}
Given a Hamiltonian of degree $d$ in the normal form, as in \cref{def:cvlh}, $\mathsf{HBound}_{\mathcal S}^{d}$ is the problem of deciding whether the Hamiltonian is bounded from below over the set of states $\mathcal S$, or not.
\end{definition}

We study the computational complexity of these two problems in terms of standard classes for Gaussian Hamiltonians in \cref{sec:Gaussian-opt} and for non-Gaussian ones in \cref{sec:non-Gaussian-optimization}, and we provide the basis for relating them to the class \textbf{CVQMA} in \cref{sec:history}.

\subsection{Gaussian Hamiltonians}
\label{sec:Gaussian-opt}

In this section, we focus on the case of Gaussian Hamiltonians, i.e., when the degree of $H$ is less than or equal to $2$.
Denoting the set of Gaussian states over $n$ modes as $\mathcal G$, we show that one can decide in polynomial time whether $\min_{\ket\psi\in\mathcal G} \bra{\psi} H \ket\psi$ exists (i.e., if the Hamiltonian is bounded from below on Gaussian states). In fact, this problem can be reformulated as checking the positive definiteness of a matrix of coefficients. Moreover, when the boundedness condition is satisfied, we present a semi-definite program (SDP) that minimizes the energy over $\mathcal G$, and hence, the problem can be put in the class $\mathbf{P}$ (\cref{th:gspGaussian}). Since ground states of quadratic Hamiltonians are Gaussian states \cite{serafini2017quantum}, this allows us to solve both the boundedness problem and the continuous-variable local Hamiltonian problem in polynomial time for Gaussian Hamiltonians. We further show that it is sufficient to decide boundedness over the set of coherent states $\mathcal C$ (\cref{prop:boundednessGcoh}). The following result summarises this:

\begin{theorem}[Complexity of ground state problems for Gaussian Hamiltonians]\label{th:gspGaussian}
Deciding boundedness of a Gaussian Hamiltonian is equivalent to deciding boundedness over coherent states and can be done in polynomial time:
\begin{equation}
    \mathsf{HBound}_{\mathcal C}^{2}=\mathsf{HBound}_\mathcal{H}^{2}\subseteq\mathbf P.
\end{equation}
Moreover, estimating the ground state energy can also be done in polynomial time:
\begin{equation}
\mathsf{CVLH}_\mathcal{H}^{2}(a,b)\subseteq\mathbf P,
\end{equation}
for all $(a,b)$ such that $b-a\geq \frac{1}{\mathsf{poly}(n)}$.
\end{theorem}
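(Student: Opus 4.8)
The theorem concerns Gaussian (degree $\le 2$) Hamiltonians $H = \widehat r^T \mathbf M \widehat r + \mathbf d \cdot \widehat r + c$. I need to establish three things: (i) boundedness over all of $\mathcal H$ is equivalent to boundedness over coherent states $\mathcal C$; (ii) both are decidable in $\mathbf P$; and (iii) when bounded, the ground energy can be estimated in $\mathbf P$.
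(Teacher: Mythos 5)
Your proposal stops at the setup: it restates the three claims of the theorem as a to-do list but supplies no argument for any of them, so there is nothing here that can be checked against the paper's proof. The entire mathematical content is missing.

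To indicate what is actually needed: the paper's route is to parametrize Gaussian states by their stellar function $F^\star(\mathbf z)=\mathcal N^{-1}\exp(-\tfrac12\mathbf z^T\mathbf A\mathbf z+\mathbf b^T\mathbf z)$ with Takagi decomposition $\mathbf A=\mathbf U^T\mathbf D\mathbf U$, and to compute the expectation value of every quadratic monomial in closed form in terms of $(\mathbf U,\mathbf D,\mathbf c)$ (\cref{corol:gaussian-deg2-exp}). The energy then decouples into a quadratic form in the displacement-like parameter $\mathbf c$ plus a term depending only on $(\mathbf U,\mathbf D)$. Boundedness is shown to be equivalent to positive semi-definiteness of an explicit $2n\times2n$ real matrix built from the Hamiltonian coefficients (via \cref{lem:symmetric-vs-hermitian}), which is checkable in $\mathbf P$; since the coherent states correspond to $\mathbf A=0$, $\mathbf c=\mathbf z$, their expectation values realize exactly that quadratic form, which gives $\mathsf{HBound}_{\mathcal C}^{2}=\mathsf{HBound}_{\mathcal H}^{2}$ (\cref{prop:boundednessGcoh}). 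Finally, because the ground state of a bounded quadratic Hamiltonian is Gaussian, the remaining optimization over $(\mathbf U,\mathbf D)$ is recast as the semi-definite program in Eq.~\eqref{eq:Gaussian-Optimization-SPD}, solvable in polynomial time, which yields $\mathsf{CVLH}_{\mathcal H}^{2}(a,b)\subseteq\mathbf P$. None of these steps — the moment formulas, the decoupling, the PSD criterion, the coherent-state equivalence, or the SDP — appear in your proposal, and each requires a nontrivial calculation.
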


Additionally, we show that the single-mode version of the problem can be solved analytically (\cref{prop:single-mode-quadratic}).

We now turn to the proofs of the various propositions.
We recall that a Gaussian Hamiltonian is a degree-2 Hamiltonian and that a Gaussian state is one which can be obtained by application of some Gaussian dynamics on the vacuum state $\ket{0^n}$. More concretely, $\ket\psi$ is a Gaussian state if there exist degree-2 Hamiltonians $H_1,\cdots,H_k$ such that $\ket{\psi} = e^{-iH_k} \cdots e^{-iH_1} \ket{0^n}$. We also recall that (mixture of) Gaussian states are the states with zero stellar rank. Therefore, the stellar function of a Gaussian state is of the form
\begin{align}\label{eq:stellarfuncGaussianstate}
F^\star(\Bf z) = \frac{1}{\mathcal N} \exp(-\frac12 \Bf z^T \Bf A \Bf z + \Bf b^T \Bf z),
\end{align}
where $\Bf b\in\mathbb C^n$ can be any complex vector, and $\Bf A$ is any complex symmetric (though \textit{not} hermitian)\footnote{A complex symmetric matrix $\Bf S$ is one that satisfies $\Bf S^T=\Bf S$.} with the property that $|A|\leq \mathbb I$. Note that any complex symmetric matrix $\Bf A$ can be decomposed as $\Bf A = \Bf U \Bf D \Bf U^T$, where $\Bf U$ is a unitary and $\Bf D$ is a positive semi-definite diagonal matrix. This is due to the fact that we can always use the singular value decomposition for $\Bf A$ to write $\Bf A= \Bf U \Bf D \Bf V$, and then use the symmetric assumption, which enforces $\Bf V = \Bf U^T$. This factorization is sometimes referred to as a Takagi--Autonne decomposition \cite{houde2024matrix}. The factor $\mathcal N$ is for normalisation and we omit it in what comes next.

In \cref{app:innerstellar}, we develop techniques that allow us to compute expectation of polynomials as a function of $\Bf A$ and $\Bf b$. Most importantly, we show the following relations:

\begin{corollary}\label{corol:gaussian-deg2-exp}
Let $F^\star_\psi(z) = \exp(-\frac12\Bf z^T\Bf A \Bf z + \Bf b^T \Bf z)$, where $\Bf A = \Bf U^T \Bf D \Bf U$. It is the case that
\begin{enumerate}
\item $(\langle F^\star_\psi, a_i F^\star_\psi\rangle)_i =  \Bf c$,
\item $(\langle F^\star_\psi, a_i a_j F^\star_\psi\rangle)_{ij} = \Bf U^T \frac{-\Bf D}{1- \Bf D^2} \Bf U + \Bf c \Bf c^T$,
\item $(\langle F^\star_\psi, a_ja_i^\dagger F^\star_\psi\rangle)_{ij} = \Bf U^T \frac{1}{1-\Bf D^2} \conj{\Bf U} + \Bf c^T \conj{\Bf c}$,
\end{enumerate}
where $\Bf c = \Bf U^T \frac{1}{1-\Bf D^2} (\conj{\Bf U} \Bf b - \Bf D \Bf U \conj{\Bf b})$.
\end{corollary}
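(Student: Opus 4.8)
The plan is to work entirely in the Segal--Bargmann picture set up in \cref{sec:stellar_rank}, where $a_i\leftrightarrow\partial_{z_i}$, $a_i^\dagger\leftrightarrow\times z_i$, and $\langle F^\star_\psi,OF^\star_\psi\rangle=\int_{\mathbb C^n}\conj{F^\star_\psi(\mathbf z)}\,(O F^\star_\psi)(\mathbf z)\,d\mu(\mathbf z)$. The first move is to let the derivative operators act on the Gaussian: since $F^\star_\psi(\mathbf z)=\exp(-\tfrac12\mathbf z^T\mathbf A\mathbf z+\mathbf b^T\mathbf z)$ with $\mathbf A$ symmetric, $\partial_{z_i}F^\star_\psi=(b_i-(\mathbf A\mathbf z)_i)F^\star_\psi$. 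Hence $a_iF^\star_\psi=(b_i-(\mathbf A\mathbf z)_i)F^\star_\psi$, then $a_ia_jF^\star_\psi=\big(-A_{ij}+(b_i-(\mathbf A\mathbf z)_i)(b_j-(\mathbf A\mathbf z)_j)\big)F^\star_\psi$, and $a_ja_i^\dagger F^\star_\psi=\big(\delta_{ij}+z_i(b_j-(\mathbf A\mathbf z)_j)\big)F^\star_\psi$. Every expectation value in the statement is thereby reduced to the first and second moments of the probability measure $d\nu\propto|F^\star_\psi|^2\,d\mu$ on $\mathbb C^n$ (normalised, as $\|F^\star_\psi\|=1$).

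Next I would identify $d\nu$ as a Gaussian and read off its moments. Writing out $\conj{F^\star_\psi}F^\star_\psi\,e^{-\|\mathbf z\|^2}$, the exponent is $-\tfrac12\mathbf z^T\mathbf A\mathbf z-\tfrac12\conj{\mathbf z}^T\conj{\mathbf A}\conj{\mathbf z}-\conj{\mathbf z}^T\mathbf z+\mathbf b^T\mathbf z+\conj{\mathbf b}^T\conj{\mathbf z}$. Collecting $\mathbf z$ and $\conj{\mathbf z}$ into $\mathbf w=(\mathbf z,\conj{\mathbf z})$, this becomes $-\tfrac12\mathbf w^T\mathbf M\mathbf w+\mathbf B^T\mathbf w$ with
\[
\mathbf M=\begin{pmatrix}\mathbf A&\mathbb I\\\mathbb I&\conj{\mathbf A}\end{pmatrix},\qquad \mathbf B=\begin{pmatrix}\mathbf b\\\conj{\mathbf b}\end{pmatrix}.
\]
The condition $|t_i|<1$ on the Takagi spectrum of $\mathbf A$ is exactly what makes the underlying real quadratic form positive definite, so the integral converges and the Wick/Isserlis rules apply: $\langle\mathbf w\rangle=\mathbf M^{-1}\mathbf B$ and $\langle\mathbf w\mathbf w^T\rangle-\langle\mathbf w\rangle\langle\mathbf w\rangle^T=\mathbf M^{-1}$. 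The blocks of $\mathbf M^{-1}$ furnish $\langle z_iz_j\rangle_c$, $\langle z_i\conj{z_j}\rangle_c$ and $\langle z_i\rangle$, which is all the input the three identities need.

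The computational heart is inverting $\mathbf M$ in closed form using $\mathbf A=\mathbf U^T\mathbf D\mathbf U$. I would diagonalise it by the congruence $\mathbf T=\mathrm{diag}(\mathbf U^\dagger,\mathbf U^T)$, which corresponds to the genuine unitary change of variable $\mathbf z=\mathbf U^\dagger\mathbf z'$ (so that $\conj{\mathbf z}=\mathbf U^T\conj{\mathbf z'}$ automatically and $d\mu$ is preserved). Using $\mathbf U\mathbf U^\dagger=\conj{\mathbf U}\mathbf U^T=\mathbf U^T\conj{\mathbf U}=\mathbb I$, one checks $\mathbf T^T\mathbf M\mathbf T=\left(\begin{smallmatrix}\mathbf D&\mathbb I\\\mathbb I&\mathbf D\end{smallmatrix}\right)$, which decouples into per-mode $2\times2$ blocks $\left(\begin{smallmatrix}d_i&1\\1&d_i\end{smallmatrix}\right)$ with inverse $\tfrac{1}{d_i^2-1}\left(\begin{smallmatrix}d_i&-1\\-1&d_i\end{smallmatrix}\right)$. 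Conjugating back by $\mathbf T$ yields the top-left and top-right blocks $(\mathbf M^{-1})_{zz}=\mathbf U^\dagger\tfrac{-\mathbf D}{\mathbb I-\mathbf D^2}\conj{\mathbf U}$ and $(\mathbf M^{-1})_{z\conj z}=\mathbf U^\dagger\tfrac{\mathbb I}{\mathbb I-\mathbf D^2}\mathbf U$.

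Finally I would assemble the three formulas. Item 1 is immediate: $c_i:=\langle a_i\rangle=b_i-(\mathbf A\langle\mathbf z\rangle)_i$, and substituting $\langle\mathbf z\rangle=(\mathbf M^{-1})_{zz}\mathbf b+(\mathbf M^{-1})_{z\conj z}\conj{\mathbf b}$ and simplifying via $\mathbf U^T\conj{\mathbf U}=\mathbb I$ collapses the answer to $\mathbf c=\mathbf U^T\tfrac{1}{\mathbb I-\mathbf D^2}(\conj{\mathbf U}\mathbf b-\mathbf D\mathbf U\conj{\mathbf b})$. For item 2 the connected piece is $-\mathbf A+\mathbf A(\mathbf M^{-1})_{zz}\mathbf A$, and the identity $-\mathbf D-\tfrac{\mathbf D^3}{\mathbb I-\mathbf D^2}=\tfrac{-\mathbf D}{\mathbb I-\mathbf D^2}$ collapses it to $\mathbf U^T\tfrac{-\mathbf D}{\mathbb I-\mathbf D^2}\mathbf U$, while the disconnected part $c_ic_j$ gives $\mathbf c\mathbf c^T$. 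For item 3, using the elementary fact $\langle z_i\rangle=\langle a_i^\dagger\rangle=\conj{c_i}$ produces the disconnected term $\conj{c_i}c_j$, whereas the connected part is $\mathbb I-(\mathbf M^{-1})_{zz}\mathbf A=\mathbf U^\dagger\tfrac{1}{\mathbb I-\mathbf D^2}\mathbf U$; the operator ordering $a_ja_i^\dagger$ indexes this matrix by its transpose, which is precisely why the stated form reads $\mathbf U^T\tfrac{1}{\mathbb I-\mathbf D^2}\conj{\mathbf U}$. The main obstacle is the bookkeeping in these last two steps: keeping the four flavours $\mathbf U,\mathbf U^T,\mathbf U^\dagger,\conj{\mathbf U}$ straight through the block inversion, and justifying the complex Wick formula for the combined holomorphic/antiholomorphic variable $\mathbf w$ (namely that treating $\mathbf z$ and $\conj{\mathbf z}$ as independent and inverting the non-positive-looking $\mathbf M$ returns the correct convergent moments whenever $|t_i|<1$).
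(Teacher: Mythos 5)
Your proof is correct and reaches all three formulas, but it is organized differently from the paper's argument. The paper (see \cref{lem:gaussian-generator} and \cref{lem:generatingFunction-multimode}) computes the full generating function $\langle F^\star_\psi, e^{\sum_i t_i a_i}e^{\sum_i s_i a_i^\dagger}F^\star_\psi\rangle$ by explicit Gaussian integration --- decoupling the modes via the Takagi congruence $\widetilde{\Bf z}=\Bf U\Bf z$, reducing to per-mode one-dimensional real integrals over $\Re z$ and $\Im z$, fixing the normalisation by evaluating at $s=t=0$ --- and then obtains the moments by differentiating at the origin. You instead let $a_i=\partial_{z_i}$ and $a_i^\dagger=z_i$ act on the stellar function first, which reduces everything to the first and second moments of the measure $|F^\star_\psi|^2\,d\mu$, and then extract those moments from the block inverse of $\Bf M=\bigl(\begin{smallmatrix}\Bf A&\mathbb I\\\mathbb I&\conj{\Bf A}\end{smallmatrix}\bigr)$ via Wick's theorem. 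The two computations are equivalent in content --- the paper's generating function is exactly the moment generating function of your measure, and your $2\times2$ block inverse $\tfrac{1}{d_i^2-1}\bigl(\begin{smallmatrix}d_i&-1\\-1&d_i\end{smallmatrix}\bigr)$ reproduces the entries $\tfrac{-\Bf D}{1-\Bf D^2}$ and $\tfrac{1}{1-\Bf D^2}$ of the paper's $\Sigma$ --- but your version buys a shortcut: you only ever need $\Bf M^{-1}$ and $\Bf M^{-1}\Bf B$, not the full exponential, and the algebra for items 2 and 3 collapses to the two scalar identities you state. What the paper's route buys in exchange is that the convergence and normalisation issues are settled once and for all inside an honest real Gaussian integral, which is exactly the one point you leave open: the ``complex Wick formula'' for the non-independent pair $(\Bf z,\conj{\Bf z})$ with the non-positive-looking $\Bf M$ does need the reduction to real and imaginary parts (precisely the $x$/$y$ split in the proof of \cref{lem:gaussian-generator}) to be legitimate, with $\Bf D<\mathbb I$ guaranteeing positive definiteness of the resulting real form. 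That gap is standard and fillable, so it does not undermine the argument. One small remark on item~3: your direct computation gives the connected part $(\langle a_ja_i^\dagger\rangle)_{ij}=\bigl(\Bf U^\dagger\tfrac{1}{1-\Bf D^2}\Bf U\bigr)_{ij}$, which is what the paper's own generating function in \cref{lem:generatingFunction-multimode} also yields upon differentiating $\partial_{t_j}\partial_{s_i}$; the transposed form printed in the corollary (and the scalar-looking $\Bf c^T\conj{\Bf c}$) reflects an index-bookkeeping slip in the statement rather than an error in your derivation, so you should trust your computation rather than force the transpose.
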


\noindent Hence, we are able to write the problem of energy minimization over the Gaussian states in a simple mathematical form. Hereafter, we give a detailed derivation of a yet simpler form for this problem. Note that in \cref{corol:gaussian-deg2-exp}, the expressions are more neatly expressed in terms of $(\Bf c, \Bf U, \Bf D)$ rather than $(\Bf b, \Bf U, \Bf D)$. Here we show that for the purpose of optimization, it does not matter which set we optimize over. This can be done by taking the mapping
\begin{align}\label{eq:def-f-u,d}
f_{\Bf U,\Bf D}(\Bf b):= \Bf U^T \frac{1}{1-\Bf D^2}(\conj{\Bf U} \Bf b - \Bf D \Bf U \conj{\Bf b}).
\end{align}
We show hereafter that this map is invertible for any choice of $\Bf U$ and $\Bf D$. Hence, for every $\Bf c$, there exists a (unique) $\Bf b$. To show the invertibility of $f_{\Bf U,\Bf D}$, we write vectors in terms of their real and imaginary parts. This means, for $\Bf x \in \mathbb C^n$, we employ the notation
\begin{align}\label{eq:complex-to-real-vectors}
\wt{\Bf x}:= \begin{pmatrix}
    \Re(\Bf x)\\
    \Im(\Bf x)
\end{pmatrix}
\in \mathbb R^{2n}
\end{align}
and for any complex matrix $\Bf A$, we let
\begin{align}\label{eq:tilde-def}
\wt{\Bf A}:= \begin{pmatrix}
    \Re(\Bf A) & -\Im(\Bf A)\\
    \Im(\Bf A) & \Re(\Bf A)
\end{pmatrix}
\end{align}
which gives that $\wt{\Bf A \Bf x} = \wt{\Bf A}\wt{\Bf x}$. Moreover, we have
\begin{align}\label{eq:complex-to-real-conjugation}
\wt{\conj{\Bf{A}}} = \Bf J\,  \wt{\Bf A} \, \Bf J,\quad \wt{\conj{\Bf x}} = \Bf J \, \wt{\Bf x},
\end{align}
where $\Bf J = \begin{pmatrix}
\mathbb I_n & 0\\
0 & -\mathbb I_n
\end{pmatrix}$, where $\mathbb I_n$ is the $n\times n$ identity matrix. Furthermore, we have 
\begin{align}\label{eq:tilde-trick}
\Re(\langle \Bf x, \Bf y\rangle) = \wt{\Bf x}^T \wt{\Bf y}, \qquad \Re(\Bf x^T \Bf y) = \wt{\Bf x}^{T} \, \Bf J\, 
\wt{\Bf y}.
\end{align}
Lastly, we note that $\wt{\Bf A^\dagger} = \wt{\Bf A}^T$.
These can all be checked by direct inspection. 

Going back to proving the invertibility of $f_{\Bf U, \Bf D}$ from Eq.~\eqref{eq:def-f-u,d}, by denoting the left hand side by $\Bf c$, and employing identities introduced above, we have
\begin{align}
\wt{\Bf c} = \wt{\Bf U} \left(\frac{1}{1-\wt{\Bf D}^2}  - \frac{\wt{\Bf D}\, \Bf J}{1-\wt{\Bf D}^2} \right) \wt{\Bf U}^T \, \wt{\Bf b}.
\end{align}
Note that $\Bf D$ is a diagonal matrix, and hence, we can write
\begin{align}
\frac{1}{1-\wt{\Bf D}^2}  - \frac{\wt{\Bf D}\, \Bf J}{1-\wt{\Bf D}^2} = \begin{pmatrix}
\frac{1-\Bf D}{1-\Bf D^2} & 0\\
0 & \frac{1+\Bf D}{1-\Bf D^2}
\end{pmatrix} >0,
\end{align}
which concludes that the map between $\Bf c$ and $\Bf b$ is invertible. As a result, for the rest of this section, we will be optimizing over $(\Bf c, \Bf U, \Bf D)$ for convenience.

We first solve the problem for a single mode, and then, we discuss the more general form of it.

\begin{proposition}\label{prop:single-mode-quadratic}
The Hamiltonian
\begin{equation}H = \alpha a^2 + \conj \alpha^2 (a^\dagger)^2 + \beta a^\dagger a,\end{equation} is bounded from below if and only if $\beta > 2|\alpha|$, and the minimum energy is achieved for the Gaussian state with stellar function as in Eq.~(\ref{eq:stellarfuncGaussianstate}) with $A=a^\ast$ and $b=b^\ast$ where
\begin{align}
a^\ast = e^{-i\theta_\alpha} \tanh(\frac12\tanh^{-1}\left( \frac{2|\alpha|}{\beta}\right)), \quad b^\ast = 0,
\end{align}
with ground energy $\frac{1}{2}\left( \sqrt{\beta^2 - 4|\alpha|^2} - \beta \right)$.
\end{proposition}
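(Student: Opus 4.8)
The plan is to restrict the optimization to single-mode Gaussian states and to minimize the expectation value of $H$ directly using the moment formulas of \cref{corol:gaussian-deg2-exp}; this is legitimate because the ground state of a quadratic Hamiltonian is Gaussian \cite{serafini2017quantum}, so the infimum over Gaussian states equals the true ground energy. Writing $\alpha=|\alpha|e^{i\theta_\alpha}$ and parametrizing a single-mode Gaussian state by its stellar data $A=e^{2i\chi}t$ with $t=|A|\in[0,1)$ and displacement $c=\langle a\rangle$, I would instantiate the single-mode specialization of \cref{corol:gaussian-deg2-exp} to get $\langle a^2\rangle=-\frac{A}{1-|A|^2}+c^2$ and $\langle a^\dagger a\rangle=\frac{|A|^2}{1-|A|^2}+|c|^2$. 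Substituting into $H=\alpha a^2+\conj\alpha a^{\dagger2}+\beta a^\dagger a$ gives the explicit objective
\begin{equation}
\langle\psi|H|\psi\rangle=\frac{\beta|A|^2-2\,\Re(\alpha A)}{1-|A|^2}+\Big(\beta|c|^2+2\,\Re(\alpha c^2)\Big).
\end{equation}

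I would then minimize in two stages. First, optimizing over the phase of $c$ shows the displacement term equals $(\beta-2|\alpha|)|c|^2$ at its best phase, which is unbounded below as $|c|\to\infty$ precisely when $\beta<2|\alpha|$, and is minimized at $c=0$ when $\beta>2|\alpha|$. This already yields the boundedness dichotomy and forces $b^\ast=0$ (equivalently $c^\ast=0$). Second, setting $c=0$ and aligning the phase of $A$ with $\theta_\alpha$ (i.e. $\Re(\alpha A)=|\alpha|t$, optimal since $\beta,t\ge 0$), the problem reduces to the one-variable minimization of $f(t)=\frac{\beta t^2-2|\alpha|t}{1-t^2}$ over $t\in[0,1)$. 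A routine derivative computation gives the stationarity condition $|\alpha|t^2-\beta t+|\alpha|=0$, whose root in $[0,1)$ is $t^\ast=\frac{\beta-\sqrt{\beta^2-4|\alpha|^2}}{2|\alpha|}$.

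To finish, I would identify $t^\ast$ with the claimed closed form via the half-angle identity $\tanh(u/2)=\frac{\cosh u-1}{\sinh u}$ applied to $u=\tanh^{-1}(2|\alpha|/\beta)$: since $\cosh u=\beta/\sqrt{\beta^2-4|\alpha|^2}$ and $\sinh u=2|\alpha|/\sqrt{\beta^2-4|\alpha|^2}$, one gets exactly $\tanh(u/2)=t^\ast$, so $A^\ast=e^{-i\theta_\alpha}t^\ast$ matches $a^\ast$. For the energy value, substituting the stationarity relation $2|\alpha|t=\frac{2\beta t^2}{1+t^2}$ into $f$ collapses it to $f(t^\ast)=-\frac{\beta(t^\ast)^2}{1+(t^\ast)^2}$, and the algebraic simplification $\frac{(t^\ast)^2}{1+(t^\ast)^2}=\frac{\beta-\sqrt{\beta^2-4|\alpha|^2}}{2\beta}$ produces the stated ground energy $\tfrac12\big(\sqrt{\beta^2-4|\alpha|^2}-\beta\big)$. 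As an independent cross-check of boundedness and the energy, I would also diagonalize $H$ in quadratures: after a rotation making $\alpha$ real, $H=(|\alpha|+\tfrac\beta2)X^2+(\tfrac\beta2-|\alpha|)P^2-\tfrac\beta2$, which is a positive quadratic form (hence bounded, with achieved minimum $\sqrt{(\tfrac\beta2+|\alpha|)(\tfrac\beta2-|\alpha|)}-\tfrac\beta2$) exactly when $\beta>2|\alpha|$.

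I expect the only genuinely delicate point to be the boundary case $\beta=2|\alpha|$: there the objective is still bounded below by $-\beta/2$ but the infimum corresponds to $t^\ast\to1$, i.e. it is approached by increasingly squeezed states and never attained by a normalizable Gaussian state. I would therefore phrase the equivalence as the minimum being \emph{attained} iff $\beta>2|\alpha|$, and note the $\beta=2|\alpha|$ subtlety explicitly. The rest—the two-stage minimization, the hyperbolic identity, and the final energy simplification—is elementary calculus and algebra.
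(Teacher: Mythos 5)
Your proposal is correct and follows essentially the same route as the paper's proof: restrict to Gaussian states via their stellar parametrization, use the moment formulas of \cref{corol:gaussian-deg2-exp} to write the energy as a sum of a squeezing term and a displacement term, kill the displacement term to get the boundedness dichotomy and $b^\ast=0$, align the phase of $A$ with $-\theta_\alpha$, and minimize a one-variable function of $t=|A|$. The only technical divergence is that you find the stationary point by a direct derivative (solving $|\alpha|t^2-\beta t+|\alpha|=0$) and then match the closed form via the half-angle identity, whereas the paper substitutes $\zeta=\tanh^{-1}(t)$ and minimizes $\cosh(2\zeta-\rho)$ directly; these are equivalent, and your quadrature diagonalization $H=(|\alpha|+\tfrac\beta2)X^2+(\tfrac\beta2-|\alpha|)P^2-\tfrac\beta2$ is a worthwhile independent check. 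Your observation about the boundary case is a genuine and correct refinement: at $\beta=2|\alpha|$ the Hamiltonian (e.g.\ $2|\alpha|X^2-|\alpha|$ after a phase rotation) \emph{is} bounded from below but the infimum is not attained, so the ``if and only if'' in the statement should really be read as ``the minimum is attained iff $\beta>2|\alpha|$,'' which is in fact how the paper's own proof phrases its conclusion.
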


\begin{proof} 
We aim to optimize this Hamiltonian over Gaussian states. Indeed, we know that the ground state of degree-2 Hamiltonians is one of such states. Using \cref{corol:gaussian-deg2-exp} we get
\begin{align}
\langle F^\star_\psi, H F^\star_\psi\rangle = -\beta-2\Re(\alpha e^{i\theta_a}) \frac{|a|}{1-|a|^2} + \frac{\beta}{1-|a|^2} + 2 \frac{\Re(\alpha(b-a\conj b)^2)}{(1-|a|^2)^2} + \beta \frac{|b-a\conj b|^2}{(1-|a|^2)^2}
\end{align}
Let $\phi$ represent the phase of the complex number $\alpha (b-a\conj b)^2$. We have
\begin{align}
\langle F^\star_\psi, H F^\star_\psi\rangle = -\beta-2\Re(\alpha e^{i\theta_a})\frac{|a|}{1-|a|^2} + \frac{\beta}{1-|a|^2} + \frac{c}{(1-|a|^2)^2} \left( 2|\alpha|\cos(\phi) + \beta \right),
\end{align}
where $c := |b-a\conj b|^2\geq0$. Note that if $|\alpha|> \frac12 \beta$, then, we can make the above expression as small as we want, by choosing a large $c$ (which can be done by choosing a large $b$, and proper phases for $a,b$). Therefore, there is no ground state if $|\alpha|> \frac12 \beta$. Now, assuming $|\alpha|\leq \frac12 \beta$, we note that any non-zero $c$ will result in an energy increase. Hence, we must set $c=0$, which will result in $b=0$. This reduces the problem to the following
\begin{align}
\min_{\psi} \langle F^\star_\psi, H F^\star_\psi \rangle = -\beta + \min_{a\in D} \frac{-2\Re(\alpha e^{i\theta_a}) |a| + \beta}{1-|a|^2},
\end{align}
where $D := \{ z: |z|<1\}$ is the interior of the unit disc in the complex plane. One can readily optimize over the phase of $a$, by setting $\theta_a = - \theta_\alpha$, and get
\begin{align}\label{eq:energy-gaussian}
\min_{\psi} \langle F^\star_\psi, H F^\star_\psi \rangle = -\beta + \min_{r\in[0,1)} \frac{\beta - 2|\alpha|r }{1-r^2}
\end{align}
Letting $\zeta = \tanh^{-1}(r)$, we can rewrite the optimization problem in terms of $\zeta$, which gives
\begin{align}
\min_{\psi} \langle F^\star_\psi, H F^\star_\psi \rangle = -\beta + \beta\left(\min_\zeta \cosh^2\zeta - \frac{2|\alpha|}{\beta} \sinh\zeta\cosh\zeta\right)
\end{align}
Notice that
\begin{align}
\begin{split}
\cosh^2\zeta - \frac{2|\alpha|}{\beta} \sinh\zeta\cosh\zeta = \frac12\left( \cosh(2\zeta) - \sinh(2\zeta) \tanh(\rho)\right)+\frac12 &= \frac12 \frac{\cosh(2\zeta - \rho)}{\cosh\rho} + \frac12\\
&\geq \frac{1}{2}(\sech\rho+1)
\end{split}
\end{align}
where we have used elementary hyperbolic identities, and employed $\rho:=\tanh(\frac{2|\alpha|}{\beta})$.
This gives us the solution $\zeta^\ast = \frac12 \tanh^{-1}(2|\alpha|/\beta)$. Hence, the ground state of our Hamiltonian has the following parameters
\begin{align}
a^\ast = e^{-i\theta_\alpha} \tanh(\frac12\tanh^{-1}\left( \frac{2|\alpha|}{\beta}\right)), \quad b^\ast = 0.
\end{align}
Also, note that the ground state exists if and only if $|\alpha|< \frac12 \beta$. Finally, note that the ground energy is given by
\begin{align}
\text{ground energy} = \frac{1}{2}(\sqrt{\beta^2-4|\alpha|^2} - \beta).
\end{align}
\end{proof}

Next, we use a similar technique for the multimode case to prove the following.

\begin{proposition}\label{prop:deg2-optimization}
Let
\begin{align}
\begin{split}
H = &\sum_{i<j} h^{(1)}_{i,j} a_i a_j^\dagger + h^{(2)}_{i,j} a_ia_j +  \conj h_{i,j}^{(2)} a_i^\dagger a_j^\dagger\\
+ &\sum_{i} h^{(3)}_i a_i + \conj h^{(3)}_i a_i^\dagger
\end{split}
\end{align}
Then, $H$ is bounded from below if and only if
\begin{align}
\begin{pmatrix}
\Re(\Bf h^{(2)} - 2\Bf h^{(1)}) & \Im(\Bf h^{(2)} - 2\Bf h^{(1)})\\
\Im(-\Bf h^{(2)} - 2\Bf h^{(1)}) & \Re(\Bf h^{(2)} + 2\Bf h^{(1)})
\end{pmatrix}
\end{align}
is a positive definite matrix. Moreover, if this condition is satisfied, the ground energy can be found in polynomial time via reformulation of the problem as a semi-definite program.
\end{proposition}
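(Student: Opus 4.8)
The plan is to reduce the energy minimization over Gaussian states to a problem that is manifestly a semi-definite program, using the expectation formulas from \cref{corol:gaussian-deg2-exp} and the reparametrization from $(\Bf b,\Bf U,\Bf D)$ to $(\Bf c,\Bf U,\Bf D)$ established above. First I would write the energy $\langle F^\star_\psi, H F^\star_\psi\rangle$ as a function of the Gaussian-state parameters by substituting the three relations of \cref{corol:gaussian-deg2-exp} into the normal-form expression for $H$. The linear part $\sum_i h^{(3)}_i a_i + \conj h^{(3)}_i a^\dagger_i$ contributes terms linear in $\Bf c$ (and $\conj{\Bf c}$), while the quadratic part contributes through $\Bf U^T \frac{-\Bf D}{1-\Bf D^2}\Bf U + \Bf c\Bf c^T$ and $\Bf U^T\frac{1}{1-\Bf D^2}\conj{\Bf U}+\Bf c^T\conj{\Bf c}$. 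The key structural observation, mirroring the single-mode \cref{prop:single-mode-quadratic}, is that the dependence on the displacement $\Bf c$ enters as a quadratic-plus-linear form, so that for fixed $(\Bf U,\Bf D)$ the optimal $\Bf c$ can be analyzed separately.

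The crux of the boundedness criterion is to isolate the part of the energy that grows with the displacement $\Bf c$ and to show it is governed by a single real quadratic form. I would pass to the real representation $\wt{(\cdot)}$ defined in Eq.~\eqref{eq:complex-to-real-vectors}--\eqref{eq:tilde-def} and use the reality identities \eqref{eq:tilde-trick}. Collecting the $\Bf c$-dependent contributions from the $h^{(1)}$ and $h^{(2)}$ terms, the leading behaviour as $\|\Bf c\|\to\infty$ is a real quadratic form in $\wt{\Bf c}$ whose matrix, after using $\Re$ and $\Im$ decompositions of $\Bf h^{(1)},\Bf h^{(2)}$, is exactly
\begin{align}
\begin{pmatrix}
\Re(\Bf h^{(2)} - 2\Bf h^{(1)}) & \Im(\Bf h^{(2)} - 2\Bf h^{(1)})\\
\Im(-\Bf h^{(2)} - 2\Bf h^{(1)}) & \Re(\Bf h^{(2)} + 2\Bf h^{(1)})
\end{pmatrix}.
\end{align}
If this matrix fails to be positive definite, one can drive the energy to $-\infty$ by sending $\Bf c$ along a non-positive direction (just as a nonzero $c$ with bad phase diverged in the single-mode proof), giving unboundedness. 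Conversely, if it is positive definite, the $\Bf c$-quadratic dominates and confines the minimization; one then checks that the remaining $(\Bf U,\Bf D)$-dependence is also bounded below, because $\Bf D$ ranges over a compact-closure set ($|t_i|<1$) and the divergent directions $\|\Bf D\|\to 1$ are controlled once the displacement form is definite.

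For the algorithmic claim, once boundedness holds I would recast the minimization as an SDP over the covariance data. The natural variables are the first and second moments, i.e.\ the displacement $\boldsymbol\mu$ and the covariance matrix $\boldsymbol\Sigma$ (equivalently the symmetrized second moments $\langle a_ia_j\rangle,\langle a_ja_i^\dagger\rangle$). The energy is \emph{linear} in these moments, while the physical constraint that a real symmetric $\boldsymbol\Sigma$ corresponds to a genuine Gaussian quantum state is the Robertson--Schr\"odinger positivity condition $\boldsymbol\Sigma + \tfrac{i}{2}\boldsymbol\Omega\succeq 0$, which is a linear matrix inequality. Hence minimizing a linear objective subject to an LMI is a semi-definite program of polynomial size, solvable to inverse-polynomial accuracy in polynomial time \cite{parrilo2003semidefinite}, which places $\mathsf{CVLH}^2_{\mathcal H}$ in $\mathbf P$.

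I expect the main obstacle to be the boundedness direction: cleanly separating the genuinely divergent $\Bf c$-contribution from the $(\Bf U,\Bf D)$-contribution and verifying that, under positive definiteness of the displayed matrix, the energy is bounded below \emph{uniformly} over all $(\Bf U,\Bf D)$ including the boundary $\|\Bf D\|\to 1$. In the single-mode case this was transparent because one could set $c=0$ and optimize a one-variable hyperbolic expression; in the multimode case the squeezing directions interact with the displacement through the $\Bf c\Bf c^T$ cross-terms, so the argument that no joint limit of large squeezing and large displacement escapes to $-\infty$ requires care. The cleaner route is probably to perform the minimization over $\Bf c$ first (completing the square in the real representation), reducing to an expression in $(\Bf U,\Bf D)$ whose boundedness is then equivalent to the stated matrix condition.
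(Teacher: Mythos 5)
Your overall strategy matches the paper's for the displacement part, but there is a genuine gap in the sufficiency direction of the boundedness criterion. First, a structural point that both removes one of your worries and exposes the gap: after substituting the moment formulas of \cref{corol:gaussian-deg2-exp}, the energy decouples \emph{exactly} into a term depending only on $\Bf c$ and a term depending only on $(\Bf U,\Bf D)$ --- the $\Bf c\Bf c^T$ contributions to the second moments combine into purely $\Bf c$-dependent expressions, so there are no squeezing--displacement cross terms to control. Consequently, positive definiteness of the displayed matrix guarantees only that the $\Bf c$-piece is bounded below; it says nothing, a priori, about the $(\Bf U,\Bf D)$-piece $-2\bigl|\tr(\Bf h^{(2)}\Bf U^T\tfrac{\Bf D}{1-\Bf D^2}\Bf U)\bigr|+\tr(\Bf h^{(1)}\Bf U^T\tfrac{1}{1-\Bf D^2}\conj{\Bf U})$, whose two summands separately diverge as $\Bf D\to\mathbb I$. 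Your claim that ``the divergent directions $\|\Bf D\|\to1$ are controlled once the displacement form is definite'' is exactly the step that needs a proof and is the technical heart of the paper's argument: one passes to the real (tilde) representation, exploits the block structures of $\wt{\Bf h}^{(1)}$ and $\wt{\Bf h}^{(2)}\Bf J$ to simultaneously block-diagonalize them as $\mathrm{diag}(\Bf P,\Bf P)$ and $\mathrm{diag}(-\Bf Q,\Bf Q)$, reduces the squeezing objective to $\tr\bigl((\Bf P-2\Bf Q)\Bf X\bigr)+\tr\bigl((\Bf P+2\Bf Q)\Bf X^{-1}\bigr)$ via $\Bf X=\tfrac{1+\Bf D}{1-\Bf D}>\mathbb I$, and observes that this is bounded below (by $0$) iff $\Bf P\ge2\Bf Q$ --- which happens to coincide with the displayed condition (via the analogue of \cref{lem:symmetric-vs-hermitian}). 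That coincidence is a theorem, not an automatic consequence of the displacement analysis, and your proposal does not supply it. The equivalence between $|\Bf x^T\Bf S\Bf x|\le\langle\Bf x,\Bf P\Bf x\rangle$ and the real $2n\times2n$ block matrix being PSD is also asserted rather than derived, though that part is routine.

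On the algorithmic claim your route genuinely differs from the paper's and is sound: parametrizing by the first and (non-centered symmetrized) second moments, writing the energy as a linear functional of these, and imposing the Robertson--Schr\"odinger condition as an LMI (via a Schur complement to absorb the $\boldsymbol\mu\boldsymbol\mu^T$ term) yields a polynomial-size SDP whose feasible set is exactly the moment data of Gaussian states. This is arguably cleaner than the paper's SDP in the variable $\Bf X=\tfrac{1+\Bf D}{1-\Bf D}$, and it sidesteps the Takagi parametrization entirely for the optimization step; but it does not by itself produce the explicit boundedness criterion in the statement, so the gap above remains.
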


Note that we have assumed a normal form in terms of $a$ and $a^\dagger$. However, one can always transform a normal form in $X,P$ form into a normal form in terms of annihilation and creation operators in polynomial time, if the degree is constant. In this section, we work with this particular form, as it would be more convenient. The proof is as follows.

\begin{proof}
Using the results from \cref{corol:gaussian-deg2-exp}, we get the following

\begin{align}\label{eq:Gaussian-optimiation-formula}
\begin{split}
\langle F^\star_\psi, H F^\star_\psi\rangle = & 2\Re\left(\tr(\Bf h^{(2)} \Bf U^T \frac{\Bf D}{1-\Bf D^2} \Bf U )\right) + \tr(\Bf h^{(1)} \Bf U^T \frac{1}{1-\Bf D^2} \conj{\Bf U})\\
&+ 2 \Re\left(  \mathbf c^T \Bf h^{(2)} \Bf c\right) + \Bf c^T \Bf h^{(1)} \conj{\Bf c} + 2\Re(\Bf c \cdot \Bf h^{(3)}).
\end{split}
\end{align}
Note that the problem has already decoupled into two optimization problems, that are independent of each other. One is an optimization over $(\Bf U, \Bf D)$, while the other is an optimization over $\Bf c$. This has been demonstrated by writing the corresponding terms in separate lines in Eq.~\eqref{eq:Gaussian-optimiation-formula}.
Consider the optimization over $\Bf c$ for now. The objective function is
\begin{align}
2 \Re\left(  \mathbf c^T \Bf h^{(2)} \Bf c\right) + \Bf c^T \Bf h^{(1)} \conj{\Bf c} + 2\Re(\Bf c \cdot \Bf h^{(3)}).
\end{align}
Note that if there exists $\Bf c_0$ such that $2 \Re\left(  \mathbf c_0^T \Bf h^{(2)} \Bf c_0\right) + \Bf c_0^T \Bf h^{(1)} \conj{\Bf c_0}=-\epsilon <0$, then, one can make the expression arbitrarily negative by choosing $\Bf c= K \Bf c_0$ for a large $K>0$. This is due to the fact that $2 \Re\left(  \mathbf c^T \Bf h^{(2)} \Bf c\right) + \Bf c^T \Bf h^{(1)} \conj{\Bf c} + 2\Re(\Bf c \cdot \Bf h^{(3)}) = -K^2 \epsilon + K \cdot 2\Re(\Bf c_0 \cdot \Bf h^{(3)})$, which is a concave quadratic polynomial in one variable, and hence, is unbounded from below. Therefore, our goal for now is to identify the conditions under which $\min_{\Bf c} 2 \Re\left(  \mathbf c^T \Bf h^{(2)} \Bf c\right) + \Bf c^T \Bf h^{(1)} \conj{\Bf c} \geq 0$. By choosing a proper phase for $\mathbf c$, we can reduce this expression to $-2\abs{\mathbf c^T \Bf h^{(2)} \Bf c} + \Bf c^T \Bf h^{(1)} \conj{\Bf c}$. This can get arbitrarily small, i.e., we can get to $-\infty$, unless
\begin{align}\label{eq:gaussian-mess-1}
\langle \conj{\Bf c}, \Bf h^{(1)} \conj{\Bf c} \rangle \geq 2 \abs{\Bf c^T \Bf h^{(2)} \Bf c}, \quad \forall \mathbf{c}\in\mathbb C^{n}.
\end{align}
Hence, Eq.~\eqref{eq:gaussian-mess-1} is a necessary condition for the Hamiltonian to be bounded from below. Note that our second minimization problem is
\begin{align}\label{eq:opt-multi-g}
\begin{split}
\min_{\psi} \langle F^\star_\psi, H F^\star_\psi\rangle = \min \quad &-2\abs{\tr(\Bf h^{(2)} \Bf U^T \frac{\Bf D}{1-\Bf D^2} \Bf U )} + \tr(\Bf h^{(1)} \Bf U^T \frac{1}{1-\Bf D^2} \conj{\Bf U})\\
\mathrm{s.t.} \quad &\Bf U\in \mathcal S\mathcal U(n), 0 \leq \Bf D < \mathbb I
\end{split}
\end{align}

In \cref{lem:symmetric-vs-hermitian} we examine the conditions under which the Hamiltonian $H$ is bounded from below. In particular, we prove that Eq.~\eqref{eq:gaussian-mess-1} is equivalent to the following condition
\begin{align}\label{eq:existance-for-a}
\begin{pmatrix}
\Re(\Bf h^{(1)} - 2\Bf h^{(2)}) & \Im(-\Bf h^{(1)} + 2\Bf h^{(2)})\\
\Im(\Bf h^{(1)} + 2\Bf h^{(2)}) & \Re(\Bf h^{(1)} + 2\Bf h^{(2)})
\end{pmatrix}\geq0,
\end{align}
and therefore is a necessary condition for having a ground state. Let us denoting the matrix on the left hand side of Eq.~\eqref{eq:existance-for-a} by $\Bf M$. As outlined in the proof of \cref{lem:symmetric-vs-hermitian}, we have that $2 \Re\left(  \mathbf c^T \Bf h^{(2)} \Bf c\right) + \Bf c^T \Bf h^{(1)} \conj{\Bf c} = \wt{\Bf c}\Bf M\wt{\Bf c}$, where we have adapted the tilde notation from Eq.~\eqref{eq:complex-to-real-vectors}. Further using the mapping of complex vectors to real vectors of double the size lets us to write our optimization problem over $\Bf c$ as $\min_{\Bf x\in\mathbb R^{2n}} \Bf x^T \Bf M \Bf x + 2\wt{\Bf h^{(3)}} {}^T\, \Bf J \, \Bf x $ (please refer to Eq.~\eqref{eq:tilde-def}, Eq.~\eqref{eq:complex-to-real-conjugation}, and Eq.~\eqref{eq:tilde-trick}). Assuming $\Bf h^{(3)} \neq 0$, we require $\Bf M$ to be positive definite for the optimization problem to be bounded from below. This problem can be analytically solved to obtain $\conj{\Bf x} = -\Bf M^{-1} \,\Bf J \,\wt{\Bf h}^{(3)}$.

Moving on to the second optimization problem, we assume Eq.~\eqref{eq:existance-for-a} is satisfied, so that we try to find the ground energy. Recall that we have
\begin{align}\label{eq:reformulated-ge}
\text{ground energy} = \min \quad 2\Re\left\{\tr(\Bf h^{(2)} \Bf U \frac{\Bf D}{1-\Bf D^2} \Bf U^T)\right\} + \tr(\Bf h^{(1)} \Bf U \frac{1}{1-\Bf D^2}\Bf U^\dagger).
\end{align}
Employing the notation from Eq.~\eqref{eq:tilde-def}, we have that $2\Re\{\tr(\Bf X)\} = \tr(\wt{\Bf X})$ for any square operator $\Bf X$. Moreover, a simple calculation reveals that $\wt{\Bf X^\dagger} = \wt{\Bf X}^T$ and $\wt{\conj{\Bf X}}= \Bf J \wt{\Bf X} \Bf J$. Therefore, we can reformulate Eq.~\eqref{eq:reformulated-ge} as
\begin{align}\label{eq:tilde-reformed}
\text{ground energy} = \min \quad \tr\left( \wt{\Bf U}^T  \Bf J \wt{\Bf h}^{(2)} \wt{\Bf U} \frac{\wt{\Bf D}}{1-\wt{\Bf D^2}} \Bf J \right) + \frac12 \tr\left( \wt{\Bf U}^T \wt{\Bf{h}}^{(1)} \wt{\Bf U} \frac{1}{1-\wt{\Bf D}^2}\right)
\end{align}
Note that the domain we are optimizing here is more complex. For instance, $\wt{\Bf U}$ cannot be any arbitrary orthogonal matrix as it should satisfy the block structure of Eq.~\eqref{eq:tilde-def}.\footnote{Another way to see this is through the strict inclusion $U(n)\subset O(2n)$ for all $n\geq 2$.} However, we note that $\wt{\Bf{h}}^{(2)}\Bf J$ and $\wt{\Bf{h}}^{(1)}$ have specific structures which will allow us to overcome this issue. In particular, we note that both $\wt{\Bf{h}}^{(2)}\Bf J$ and $\wt{\Bf{h}}^{(1)}$ are hermitian and that for any eigenvector of $\wt{\Bf{h}}^{(2)}\Bf J$, say $\Bf u = (\Bf u_r, \Bf u_i)^T$, with eigenvalue $\mu_i$, we have that $\Bf u':=(-\Bf u_i, \Bf u_r)$ is also an eigenvector but with eigenvalue $-\mu_i$. This can be observed from the block structure of $\wt{\Bf{h}}^{(2)}\Bf J$:
\begin{align}
\wt{\Bf{h}}^{(2)}\Bf J = \begin{pmatrix}
\Re(\Bf h^{(2)}) & \Im(\Bf h^{(2)})\\
\Im(\Bf h^{(2)}) & -\Re(\Bf h^{(2)})
\end{pmatrix}
\end{align}
and that any matrix of the form $\begin{pmatrix}
\Bf A & \Bf B\\
\Bf B & -\Bf A
\end{pmatrix}$ satisfies the condition mentioned above. On the other hand, any matrix of the form $\begin{pmatrix}
\Bf A & -\Bf B\\
\Bf B & \Bf A
\end{pmatrix}$, including $\wt{\Bf h}^{(1)}$ satisfies the property that if $\Bf v = (\Bf v_1, \Bf v_2)^T$ is an eigenvector with eigenvalue $\nu_i$, so is the new vector $\Bf v = (-\Bf v_2, \Bf v_1)$. Therefore, there is a basis in which, we can write
\begin{align}\label{eq:good-basis-def}
\wt{\Bf h}^{(1)} = \begin{pmatrix}
\Bf P & 0\\
0 & \Bf P
\end{pmatrix},\qquad
\wt{\Bf h}^{(2)}\Bf J = \begin{pmatrix}
-\Bf Q & 0\\
0 & \Bf Q
\end{pmatrix},
\end{align}
where $\Bf P,\Bf Q\geq 0$. Note that the boundedness condition Eq.~\eqref{eq:existance-for-a} is equivalent to $\Bf P\geq 2\Bf Q$. Is is straightforward to show that the matrix $\wt{\Bf U}$ that minimizes the right hand side of Eq.~\eqref{eq:tilde-reformed} is of the form $\wt{\Bf U} = \begin{pmatrix}
\Bf V & 0\\
0 & \Bf V
\end{pmatrix}$ in the basis introduced in Eq.~\eqref{eq:good-basis-def}. It is also straightforward to check that such a matrix satisfies the conditions on $\wt{\Bf U}$, as this procedure merely tells us what are the phases that should be applied to each column of $\Bf U$. As a result, our problem transforms into the following
\begin{align}
\text{ground energy} = \min_{0\leq \Bf D<I} \tr\left( \Bf P \frac{\Bf D}{1-\Bf D^2} \right) - 2\tr\left( \Bf Q \frac{1}{1-\Bf D^2} \right).
\end{align}
Note that in this formulation, the $n\times n$ matrices $\Bf Q, \Bf P$, and $\Bf D$ are all real-valued, and that we have changed the condition for $\Bf D$ from be diagonal to be merely positive and less than $I$. Finally, we let $\Bf X:= \frac{1+\Bf D}{1-\Bf D}$. This allows us to rewrite our optimization problem as
\begin{align}
\text{ground energy} = \min_{\Bf X> \Bf I} \tr\left((\Bf P-2\Bf Q) \Bf X\right) + \tr\left((\Bf P+2\Bf Q) \Bf X^{-1}\right)
\end{align}
which can be readily reformulated as the following semi-definite program
\begin{center}
\underline{{Semi-definite program for ground energy}}
\begin{align}\label{eq:Gaussian-Optimization-SPD}
\begin{split}
&\min \tr\left((\Bf P-2\Bf Q) \Bf X\right) + \tr\left((\Bf P+2\Bf Q) \Bf Y\right)\\
&\quad \text{s.t.} \qquad \begin{pmatrix}
\Bf X & \Bf I\\
\Bf I & \Bf Y
\end{pmatrix}
\geq 0,\\
& \quad \qquad\qquad \Bf X>\Bf I.
\end{split}
\end{align}
\end{center}
The condition $\begin{pmatrix}
\Bf X & \Bf I\\
\Bf I & \Bf Y
\end{pmatrix}$ is equivalent to $\Bf Y\geq \Bf X^{-1}$ and $\Bf X \geq 0$ (see \cite[Section A.5.5]{boyd2004convex}). Note that for achieving the minimum objective value, as $\Bf P+2\Bf Q> 0$, it must be the case that $\Bf Y=\Bf X^{-1}$ happens. As pointed out earlier, $\Bf P-2\Bf Q\geq 0$ is equivalent to the necessary condition for boundedness, but the optimization problem \cref{eq:Gaussian-Optimization-SPD} ensures that given $\Bf P\geq 2\Bf Q$, the ground energy does indeed exist (as it is lower bounded by $0$), and hence, the condition presented in \cref{eq:existance-for-a} is the necessary and sufficient condition for having a ground state (with the caveat that the inequality must be strict whenever $\Bf h^{(3)}\neq 0$).
\end{proof}

Below is the lemma that was used in our of \cref{prop:deg2-optimization} above to obtain the boundedness condition.

\begin{lemma}\label{lem:symmetric-vs-hermitian}
Let $\Bf S$ be a $n\times n$ complex symmetric matrix, and $\Bf P$ be a hermitian matrix of the same size. The following are equivalent
\begin{enumerate}
    \item $\abs{\Bf x^T \Bf S \Bf x} \leq \langle \Bf x, \Bf P \Bf x\rangle$, for all $\Bf x\in \mathbb{C}^n$.
    \item $\begin{pmatrix}
        \Re(\Bf P-\Bf S) & \Im(\Bf S-\Bf P)\\
        \Im(\Bf S+\Bf P) & \Re(\Bf P+\Bf S)
    \end{pmatrix}\geq 0$.
\end{enumerate}
Note that we are using the usual inner product notation $\langle \Bf x,y\rangle := \conj{\Bf x}^T \Bf y$ on $\mathbb C^n$.
\end{lemma}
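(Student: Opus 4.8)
The plan is to prove both implications at once by collapsing condition (1) into the positive semidefiniteness of a single real quadratic form on $\mathbb R^{2n}$, and then identifying its matrix with the block matrix of condition (2) through the real-embedding identities recorded in Eq.~\eqref{eq:complex-to-real-vectors}, Eq.~\eqref{eq:tilde-def}, and Eq.~\eqref{eq:tilde-trick}.

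First I would remove the absolute value. Since $\langle\Bf x,\Bf P\Bf x\rangle=\conj{\Bf x}^T\Bf P\Bf x$ is real (as $\Bf P$ is Hermitian) and invariant under the global phase change $\Bf x\mapsto e^{i\phi}\Bf x$, whereas the bilinear form transforms as $\Bf x^T\Bf S\Bf x\mapsto e^{2i\phi}\,\Bf x^T\Bf S\Bf x$, I can optimise over the phase. Using $|z|=\max_\phi\Re(e^{2i\phi}z)$ and the fact that $2\phi$ sweeps all phases as $\phi$ does, I claim condition (1) holds for every $\Bf x$ if and only if $\Re(\Bf x^T\Bf S\Bf x)\le\langle\Bf x,\Bf P\Bf x\rangle$ for every $\Bf x$. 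The forward direction is immediate from $\Re(z)\le|z|$; for the converse, given $\Bf x$ I choose $\phi$ with $e^{2i\phi}\,\Bf x^T\Bf S\Bf x=|\Bf x^T\Bf S\Bf x|$, apply the real inequality to $e^{i\phi}\Bf x$, and use that $\langle e^{i\phi}\Bf x,\Bf P e^{i\phi}\Bf x\rangle=\langle\Bf x,\Bf P\Bf x\rangle$. This reduces (1) to the single statement $\langle\Bf x,\Bf P\Bf x\rangle-\Re(\Bf x^T\Bf S\Bf x)\ge0$ for all $\Bf x\in\mathbb C^n$.

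Next I would pass to real coordinates. The identities $\Re(\langle\Bf x,\Bf y\rangle)=\wt{\Bf x}^T\wt{\Bf y}$ and $\Re(\Bf x^T\Bf y)=\wt{\Bf x}^T\Bf J\wt{\Bf y}$ from Eq.~\eqref{eq:tilde-trick}, combined with $\wt{\Bf A\Bf x}=\wt{\Bf A}\wt{\Bf x}$, give $\langle\Bf x,\Bf P\Bf x\rangle=\wt{\Bf x}^T\wt{\Bf P}\wt{\Bf x}$ and $\Re(\Bf x^T\Bf S\Bf x)=\wt{\Bf x}^T\Bf J\wt{\Bf S}\wt{\Bf x}$, so the form becomes $\wt{\Bf x}^T(\wt{\Bf P}-\Bf J\wt{\Bf S})\wt{\Bf x}$. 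Because $\Bf x\mapsto\wt{\Bf x}$ is a real-linear bijection $\mathbb C^n\to\mathbb R^{2n}$, nonnegativity for all $\Bf x$ is equivalent to $\wt{\Bf P}-\Bf J\wt{\Bf S}\ge0$. I would then expand the blocks: Hermiticity of $\Bf P$ makes $\Re(\Bf P)$ symmetric and $\Im(\Bf P)$ antisymmetric (hence $\wt{\Bf P}$ symmetric), while symmetry of $\Bf S$ makes both $\Re(\Bf S)$ and $\Im(\Bf S)$ symmetric (hence $\Bf J\wt{\Bf S}=\bigl(\begin{smallmatrix}\Re(\Bf S)&-\Im(\Bf S)\\-\Im(\Bf S)&-\Re(\Bf S)\end{smallmatrix}\bigr)$ symmetric); subtracting reproduces exactly the block matrix $\bigl(\begin{smallmatrix}\Re(\Bf P-\Bf S)&\Im(\Bf S-\Bf P)\\\Im(\Bf S+\Bf P)&\Re(\Bf P+\Bf S)\end{smallmatrix}\bigr)$ of condition (2), closing the equivalence.

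I do not anticipate a serious obstacle; the only delicate point is the phase-optimisation step, where I must verify that absorbing $e^{i\phi}$ into $\Bf x$ introduces the factor $e^{2i\phi}$ on the bilinear form while leaving the Hermitian form untouched, so that the single real inequality is genuinely \emph{equivalent} to the modulus bound and not merely implied by it. The remainder is a bookkeeping-heavy block expansion in which I must track the signs produced by $\Bf J$ and by the (anti)symmetry of the real and imaginary parts of $\Bf P$ and $\Bf S$.
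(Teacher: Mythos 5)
Your proposal is correct and follows essentially the same route as the paper's proof: first reduce the modulus bound to the single real inequality $\Re(\Bf x^T\Bf S\Bf x)\le\langle\Bf x,\Bf P\Bf x\rangle$ by absorbing a phase into $\Bf x$ (the paper uses $\Bf x'=\Bf x e^{-i\phi/2}$, which is your phase-optimisation step), then pass to $\mathbb R^{2n}$ via the tilde embedding and identify $\wt{\Bf P}-\Bf J\wt{\Bf S}\ge0$ with the block matrix of condition (2). The block expansion and symmetry checks you flag as the remaining bookkeeping all go through exactly as you describe.
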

\begin{proof} Note that condition (1) of the lemma, i.e., the condition $\abs{\Bf x^T \Bf S \Bf x} \leq \langle x, \Bf P \Bf x\rangle$ for all $x$, can be rephrased as follows
\begin{align}\label{eq:intermediate-symmetric-vs-hermitian}
\Re\left(\Bf x^T \Bf S \Bf x\right) \leq \langle \Bf x, \Bf P \Bf x\rangle\text{, for all }\Bf x\in \mathbb{C}^n.
\end{align}
The reason for this equality is the following: firstly note that as $\Re(z)\leq |z|$ we can conclude \eqref{eq:intermediate-symmetric-vs-hermitian} from the condition (1). Now, let $\phi$ be the phase of the complex number $\mathbf x^T \mathbf S \mathbf x$, and note that $\Bf x':= \Bf x e^{-i\phi/2}$ satisfies $|\Bf x^T \Bf S \Bf x| = \Bf x'^T \Bf S \Bf x' \leq \langle \Bf x',\Bf P\Bf x'\rangle = \langle \Bf x,\Bf P\Bf x\rangle$, where the inequality follows if we assume \eqref{eq:intermediate-symmetric-vs-hermitian}. Therefore, we have shown that assuming $\eqref{eq:intermediate-symmetric-vs-hermitian}$ one can get condition (1). Thus, these conditions are equivalent.

We find it useful to apply the convention introduced in Eq.~\eqref{eq:complex-to-real-vectors} and Eq.~\eqref{eq:tilde-def} and to write the above condition for real vectors. Using Eq.~\eqref{eq:tilde-trick}, we  rewrite the right hand side of Eq.~\eqref{eq:intermediate-symmetric-vs-hermitian} in the real domain as
\begin{align}
\Re\left(\Bf x^T \Bf S \Bf x\right) \leq \langle \Bf x, \Bf P \Bf x\rangle \Leftrightarrow \wt{\Bf x}^T \Bf J\wt{\Bf S}\, \wt{\Bf x} \leq \wt{\Bf x}^T \wt{\Bf P}\, \wt{\Bf x}
\end{align}
Note that $\Bf J\wt{\Bf S}$ and $\Bf P$ both are real symmetric matrices, and therefore, the condition $\wt{\Bf x}^T \Bf J\wt{\Bf S}\, \wt{\Bf x} \leq \wt{\Bf x}^T \wt{\Bf P}\, \wt{\Bf x}$ is equivalent to $\Bf J \wt{\Bf S}\leq \Bf P$. Expanding this condition, we get
\begin{align} 
\begin{pmatrix}
\Re(\Bf P-\Bf S) & \Im(\Bf S-\Bf P)\\
\Im(\Bf S+\Bf P) & \Re(\Bf P+\Bf S)
\end{pmatrix} \geq 0
\end{align}
which completes the proof.
\end{proof}

In what follows, we remark on a few facts that can be learned from the above arguments. We first, take a closer look at the boundedness condition in \cref{eq:existance-for-a}.

\begin{proposition}\label{prop:boundednessGcoh}
A quadratic Hamiltonian is bounded from below if and only if it is bounded on coherent states.
\end{proposition}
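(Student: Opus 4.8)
The converse direction is immediate: since $\mathcal C\subseteq\mathcal H$, any Hamiltonian bounded from below over all states is in particular bounded over coherent states. The plan is therefore to establish the nontrivial direction, namely that boundedness on coherent states already forces boundedness over the whole Hilbert space. I would base the argument entirely on the machinery developed in the proof of \cref{prop:deg2-optimization}: recall that together with \cref{lem:symmetric-vs-hermitian}, that proof shows a quadratic $H$ (written in the annihilation/creation normal form) is bounded from below if and only if the positive-semidefiniteness condition \eqref{eq:existance-for-a} holds, this criterion capturing the true ground energy because ground states of quadratic Hamiltonians are Gaussian \cite{serafini2017quantum}. So it suffices to show that boundedness on coherent states alone implies \eqref{eq:existance-for-a}.

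The key step is to compute the energy functional restricted to coherent states and observe that it isolates exactly the displacement subproblem. Coherent states are precisely the Gaussian states with $\Bf A=0$, i.e.\ $\Bf D=0$ with a free displacement $\Bf b$ (equivalently a free $\Bf c$, since the map $f_{\Bf U,\Bf D}$ reduces to the identity $\Bf c=\Bf b$ when $\Bf D=0$). Setting $\Bf D=0$ in \eqref{eq:Gaussian-optimiation-formula}, the two squeezing-dependent terms vanish, while $\tr\!\big(\Bf h^{(1)}\Bf U^T\tfrac{1}{1-\Bf D^2}\conj{\Bf U}\big)$ collapses to the constant $\tr\Bf h^{(1)}$ using the unitarity identity $\Bf U^T\conj{\Bf U}=\mathbb I$. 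Hence the coherent-state energy equals
\begin{equation}
\tr\Bf h^{(1)} + 2\Re\!\big(\Bf c^T\Bf h^{(2)}\Bf c\big) + \Bf c^T\Bf h^{(1)}\conj{\Bf c} + 2\Re(\Bf c\cdot\Bf h^{(3)}),
\end{equation}
so that $H$ is bounded from below on $\mathcal C$ if and only if this displacement functional is bounded below as $\Bf c$ ranges over $\mathbb C^n$.

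The final step reuses the scaling argument from the proof of \cref{prop:deg2-optimization}: if the quadratic part $2\Re(\Bf c^T\Bf h^{(2)}\Bf c)+\Bf c^T\Bf h^{(1)}\conj{\Bf c}$ were negative along some direction $\Bf c_0$, then replacing $\Bf c$ by $K\Bf c_0$ and letting $K\to\infty$ drives the functional to $-\infty$ (the linear term in $\Bf h^{(3)}$ being subdominant), so boundedness forces this quadratic form to be nonnegative for all $\Bf c$. That nonnegativity is exactly condition \eqref{eq:gaussian-mess-1}, which by \cref{lem:symmetric-vs-hermitian} is equivalent to \eqref{eq:existance-for-a}. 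Invoking \cref{prop:deg2-optimization} once more closes the loop, since \eqref{eq:existance-for-a} is precisely the criterion for $H$ to be bounded below over all states (indeed, in the good basis \eqref{eq:good-basis-def} it reads $\Bf P\ge 2\Bf Q$, whence the semidefinite objective \eqref{eq:Gaussian-Optimization-SPD} is lower bounded by $0$, guaranteeing the squeezing sector is bounded too). I expect the only genuinely delicate point to be the bookkeeping confirming that the coherent restriction isolates the displacement sector cleanly and that its \emph{necessary} positivity condition is identical to the \emph{full} boundedness criterion; both facts are already implicit in the decoupling of \eqref{eq:Gaussian-optimiation-formula} and the SDP analysis, so no new estimate is required.
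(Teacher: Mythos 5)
Your proof is correct and follows essentially the same route as the paper: plug the coherent-state parameters ($\Bf A=0$, $\Bf c=\Bf b=\Bf z$) into the Gaussian energy formula of \cref{prop:deg2-optimization} to see that the coherent-state energy is exactly the displacement functional, then observe via the scaling argument and \cref{lem:symmetric-vs-hermitian} that its boundedness is the same positivity condition \eqref{eq:existance-for-a} that governs boundedness over all (Gaussian, hence all) states. Your version is merely more explicit about the trivial inclusion direction and the constant $\tr\Bf h^{(1)}$ term, which the paper elides.
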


\begin{proof}
For a coherent state with displacement $\Bf z\in\mathbb C^n$, we have that $\Bf b=\Bf z$ and $\Bf A=0$, which results in $\Bf c=\Bf z$ as defined in \cref{corol:gaussian-deg2-exp}. Plugging into Eq.~\eqref{eq:Gaussian-optimiation-formula} gives
\begin{align}
\langle \Bf z, H \Bf z\rangle = 2 \Re\left(  \mathbf z^T \Bf h^{(2)} \Bf z\right) + \Bf z^T \Bf h^{(1)} \conj{\Bf z} + 2\Re(\Bf z \cdot \Bf h^{(3)}).
\end{align}
As discussed in the proof of \cref{prop:deg2-optimization}, the boundedness of the above expression turns out to be the necessary and sufficient condition for the Hamiltonian to be bounded over all Gaussian states. 
\end{proof}

\noindent As we point out later in \cref{sec:non-Gaussian-optimization}, this property does not hold for higher-degree Hamiltonians, i.e., boundedness over coherent states does not imply boundedness over all states. 

\subsection{Non-Gaussian Hamiltonians}\label{sec:non-Gaussian-optimization}

In this section, we turn to the more subtle case of non-Gaussian Hamiltonians that are polynomials of arbitrary degree in the position and momentum operators.

We first show that the boundedness problem is much harder for non-Gaussian Hamiltonians than it is for Gaussian ones. Indeed, already for degree-4 Hamiltonians we obtain the following result:

\begin{theorem}\label{th:boundedness}
The problem $\mathsf{HBound}_\mathcal{H}^{4}$ of deciding boundedness of a degree-4 bosonic Hamiltonian is $\mathbf{co}$-$\mathbf{NP}$-hard.
\end{theorem}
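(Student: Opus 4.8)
The plan is to reduce the \textbf{co}-\textbf{NP}-complete problem of matrix copositivity to $\mathsf{HBound}^{4}_{\mathcal H}$. Recall that, given a symmetric $M\in\mathbb R^{n\times n}$, deciding whether $y^\top M y\ge 0$ for all $y\in\mathbb R^n$ with nonnegative entries is \textbf{co}-\textbf{NP}-complete \cite{murty1987some}: non-copositivity has the short certificate ``a vector $y\ge 0$ with $y^\top My<0$'', so it lies in \textbf{NP}, and it is in fact \textbf{NP}-complete. Given such an $M$ (symmetrized via $M\mapsto (M+M^\top)/2$ and rescaled so that $|M_{ij}|\le 1$, neither of which affects copositivity), I would build the $n$-mode degree-$4$ Hamiltonian
\[
H_M \;=\; \sum_{i,j=1}^n M_{ij}\,X_i^2 X_j^2,
\]
which is constructible in polynomial time and has constant coefficients as required by \cref{def:cvlh}.

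The crucial point is that $H_M$ involves only the mutually commuting position operators $X_1,\dots,X_n$, so there is no operator-ordering ambiguity and $H_M$ acts simply as multiplication by the real polynomial
\[
p_M(x_1,\dots,x_n)=\sum_{i,j}M_{ij}\,x_i^2 x_j^2 = y^\top M\,y,\qquad y=(x_1^2,\dots,x_n^2),
\]
on (a dense domain of) $L^2(\mathbb R^n)$ in the position representation. Consequently $\langle\psi|H_M|\psi\rangle=\int |\psi(x)|^2\,p_M(x)\,dx$, so $H_M$ is bounded from below over $\mathcal H$ iff $\inf_{x\in\mathbb R^n}p_M(x)>-\infty$. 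Since $p_M$ is homogeneous of (even) degree $4$, it is bounded from below iff it is nonnegative everywhere: if $p_M(x_0)<0$ then $p_M(tx_0)=t^4p_M(x_0)\to-\infty$ as $t\to\infty$, whereas $p_M\ge 0$ gives the lower bound $0$. Finally, as $x$ ranges over $\mathbb R^n$ the vector $y=(x_1^2,\dots,x_n^2)$ ranges over exactly the nonnegative orthant, so nonnegativity of $p_M$ everywhere is equivalent to $y^\top My\ge 0$ for all $y\ge 0$, i.e.\ to copositivity of $M$. This shows $H_M$ is bounded from below iff $M$ is copositive, giving the many-one reduction and hence \textbf{co}-\textbf{NP}-hardness.

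The part requiring care is the continuous-variable operator-theoretic bookkeeping rather than the combinatorics. I must justify that optimizing the quadratic form over normalized Schwartz states reproduces the classical infimum, namely $\inf_{\|\psi\|=1}\int|\psi(x)|^2 p_M(x)\,dx=\inf_{x}p_M(x)$; this follows because $p_M$ is continuous, so the essential infimum of the multiplication operator coincides with the pointwise infimum, and one can concentrate a normalized Schwartz wavepacket near any point $x_0$ (or near a direction driving $p_M\to-\infty$) to approach the infimum from above. Equivalently, one invokes that a multiplication operator by a continuous real function is essentially self-adjoint on Schwartz space with spectrum $\overline{p_M(\mathbb R^n)}$, consistent with the conventions of the preliminaries. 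I would also record that the symmetrization and coefficient rescaling keep the instance within the $|h_{\boldsymbol\mu,\boldsymbol\nu}|=O(1)$ convention of \cref{def:cvlh}. It is worth highlighting the contrast with the Gaussian case of \cref{th:gspGaussian}, where boundedness is decidable in $\mathbf P$: the jump to \textbf{co}-\textbf{NP}-hardness already at degree $4$ arises precisely because the degree-$4$ symbol can encode an arbitrary copositivity instance, for which no efficient certificate of the ``bounded'' (positive) side is expected.
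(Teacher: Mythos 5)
Your reduction is correct, and it reaches the same \textbf{co}-\textbf{NP}-hardness conclusion from the same source problem (matrix copositivity, \cite{murty1987some}) as the paper, but via a genuinely different gadget. The paper's proof uses the particle-number Hamiltonian $H=\sum_{i,j}M_{ij}N_iN_j$, which is diagonal in the Fock basis; copositivity then controls the sign of the eigenvalues $\sum_{i,j}M_{ij}m_im_j$ over $\mathbf m\in\mathbb Z_+^n$, and non-copositivity is witnessed by explicit two-level superpositions $\sqrt{1-x_i/\lceil x_i\rceil}\,\ket0+\sqrt{x_i/\lceil x_i\rceil}\,\ket{\lceil x_i\rceil}$ that interpolate the quadratic form to arbitrary nonnegative real arguments and are then scaled to drive the energy to $-\infty$. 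You instead use $H_M=\sum_{i,j}M_{ij}X_i^2X_j^2$, a multiplication operator, and reduce the whole question to a classical statement: a homogeneous degree-$4$ symbol $y^\top My$ with $y=(x_1^2,\dots,x_n^2)$ is bounded below iff it is nonnegative iff $M$ is copositive, with the quantum infimum matching the pointwise infimum by wavepacket localization. What your route buys is that the operator-theoretic content collapses to standard facts about multiplication operators, the nonnegative orthant is hit exactly and continuously (no interpolation via superpositions in the Fock basis is needed), and the hardness is seen to hold already for Hamiltonians depending only on the position quadratures. What the paper's route buys is reuse: the same number-operator gadget $\sum_{i,j}M_{ij}N_iN_j$ is deployed again for the \textbf{NP}-hardness of the energy-minimization problem over coherent and Gaussian states (\cref{prop:deg4-optimization-over-gaussian} and \cref{thm:gaussian-optimization}), where the particle-number-diagonal structure and the identity $\langle\mathbf z|N_iN_j|\mathbf z\rangle=|z_i|^2|z_j|^2+\delta_{ij}|z_i|^2$ do the work, and it yields the slightly stronger observation that hardness persists when restricting the optimization to Fock states. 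Your bookkeeping (symmetrization, rescaling to $|h_{\boldsymbol\mu,\boldsymbol\nu}|=O(1)$, essential self-adjointness on Schwartz space) is all in order, so no gaps remain.
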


We prove this result through a reduction from the matrix copositivity (\textsf{McP}) problem:

\begin{definition}[Matrix copositivity]
Given an $n\times n$ matrix $M$, $\mathsf{McP}$ is the problem of deciding whether
\begin{align}
\sum_{i,j} M_{i,j} x_i x_j
\end{align}
is non-negative for all $\Bf x\in(\mathbb R_+)^n$.
\end{definition}

\noindent We have employed the notation $\mathbb R_+ = \{x\in \mathbb R: x\geq 0\}$. The \textsf{McP} problem is known to be $\mathbf{co}$-$\mathbf{NP}$-complete \cite{murty1987some}.

\begin{proof}
Consider the Hamiltonian
\begin{align}
H = \sum_{i,j\in[n]} M_{i,j} N_i N_j.
\end{align} 
It is transparent that if $\Bf M\in\mathbb R^{n\times n}$ is copositive, then $H$ is a positive semi-definite operator. This is due to the fact that $H$ is diagonal in the number basis, and hence, its eigenvalues are computed according to $\sum_{i,j} M_{i,j} m_i m_j$ for some $\Bf m\in\mathbb Z_+^n$. Therefore, $\lambda_{min}(H)\geq 0$. But, as $\ket0$ saturates the lower-bound, we get that $\lambda_{min}(H)=0$ if $\Bf M$ is copositive.

Next, we show that if $\Bf M$ is not copositive, then, $\lambda_{min}(H)=-\infty$. To this end, let $\Bf x\in\mathbb R^n$ be a vector such that $\Bf x^T \Bf M \Bf x =-\epsilon<0$. Then, consider the states over one mode $\ket{\psi_i}:= \sqrt{1-\frac{x_i}{\lceil x_i \rceil}}\ket{0} + \sqrt{\frac{x_i}{\lceil x_i \rceil}}\ket{\lceil x_i \rceil}$, where we have used the Fock basis representation. Now, define the state $\ket{\psi}:=\otimes_{i=1}^n\ket{\psi_i}$. It is evident that
\begin{align}
\bra{\psi}H\ket{\psi} = \Bf x^T \Bf M \Bf x = -\epsilon.
\end{align}
Note that one can make $\epsilon$ arbitrarily large by simply scaling the vector $\Bf x$. This proves that $\lambda_{min}(H)=-\infty$ if $\Bf M$ is not copositive. 
\end{proof}

\noindent Note that the proof of the theorem implies that the problem $\mathsf{HBound}_{\mathcal{F}}^{4}$ is $\mathbf{co}$-$\mathbf{NP}$-hard, where $\mathcal{F}$ is the set of Fock states.

Recall that degree-2 Hamiltonians are bounded from below if and only if they are bounded from below on coherent states (see \cref{prop:boundednessGcoh}). Hence, one can ask if, for a degree-4 Hamiltonian, the boundedness over coherent states is sufficient for it to be bounded over all states. The answer to this question turn out to be negative, as demonstrated by the Hamiltonian $H = N_1^2 + N_2^2 - 2N_1 N_2 - N_1 - N_2$. This Hamiltonian satisfies $\langle \Bf z| H |\Bf z \rangle = \sum_{i,j} M_{i,j} |z_i|^2 |z_j|^2$, with the $2\times2$ matrix $\Bf M = \mathbb I - \sigma_x \geq 0$, where $\sigma_x$ is the Pauli-X matrix. Therefore, $\bra{\Bf z} H \ket{\Bf z}\geq 0$. Nevertheless, for a product of Fock basis states of the form $\ket\psi=\ket{m,m}$, we have $\bra{\psi} H \ket\psi = -2m$, which shows that $H$ is not bounded from below.

Next, we discuss an approach based on Sum-of-Squares (SoS) decomposition of non-negative polynomials which provides a sufficient condition for a Hamiltonian to be bounded from below. This condition can be checked in classical polynomial time. To this end, we first introduce the following lemma (see \cref{app:commutation} for a proof).

\begin{lemma}\label{prop:SOS}
Any multimode Hamiltonian can be expanded as
\begin{align}\label{eq:XP-form}
H = \sum_{\boldsymbol\mu,\boldsymbol\nu} h_{\boldsymbol\mu,\boldsymbol\nu} \{X^{\boldsymbol\mu},P^{\boldsymbol\nu}\}.
\end{align}
Note that the coefficients $(h_{\boldsymbol\mu,\boldsymbol\nu})_{\boldsymbol\mu,\boldsymbol\nu}$ are real-valued.
\end{lemma}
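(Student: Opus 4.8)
The plan is to induct on the degree $d$ of $H$, peeling off the top-degree part at each step and reducing to a strictly lower-degree self-adjoint polynomial. The two structural facts I would rely on are: (i) every polynomial in the canonical operators can be brought to the \emph{normal form} of \cref{def:normalForm}, i.e.\ written as $\sum_{\boldsymbol\mu,\boldsymbol\nu} c_{\boldsymbol\mu,\boldsymbol\nu} X^{\boldsymbol\mu} P^{\boldsymbol\nu}$ with complex coefficients, by repeatedly using $[X_a,P_b]=i\delta_{a,b}I$ to move every $P$ to the right of every $X$ (each such swap lowering the total degree by two); and (ii) the normal-ordered monomials $\{X^{\boldsymbol\mu}P^{\boldsymbol\nu}\}_{\boldsymbol\mu,\boldsymbol\nu}$ are linearly independent as operators, being a Poincar\'e--Birkhoff--Witt basis of the Weyl algebra.

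First I would record the relevant adjoints and the key degree estimate. Since $X$ and $P$ are (essentially) self-adjoint, $(X^{\boldsymbol\mu}P^{\boldsymbol\nu})^\dagger = P^{\boldsymbol\nu}X^{\boldsymbol\mu}$, so the anticommutator $\{X^{\boldsymbol\mu},P^{\boldsymbol\nu}\} = X^{\boldsymbol\mu}P^{\boldsymbol\nu}+P^{\boldsymbol\nu}X^{\boldsymbol\mu}$ is self-adjoint, while the commutator $[X^{\boldsymbol\mu},P^{\boldsymbol\nu}]$ is anti-self-adjoint and, by the derivation property of $[\,\cdot\,,\,\cdot\,]$ together with the canonical relations, is a polynomial of degree at most $|\boldsymbol\mu+\boldsymbol\nu|-2$. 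Writing $X^{\boldsymbol\mu}P^{\boldsymbol\nu}=\tfrac12\{X^{\boldsymbol\mu},P^{\boldsymbol\nu}\}+\tfrac12[X^{\boldsymbol\mu},P^{\boldsymbol\nu}]$ therefore expresses each monomial as a manifestly self-adjoint symmetric piece plus a strictly lower-degree correction.

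The heart of the argument is that self-adjointness forces the top-degree coefficients to be real. Putting $H=\sum_{\boldsymbol\mu,\boldsymbol\nu} c_{\boldsymbol\mu,\boldsymbol\nu}X^{\boldsymbol\mu}P^{\boldsymbol\nu}$ in normal form and renormal-ordering $H^\dagger = \sum \overline{c_{\boldsymbol\mu,\boldsymbol\nu}} P^{\boldsymbol\nu}X^{\boldsymbol\mu}$, the degree-$d$ part of $H^\dagger$ equals $\sum_{|\boldsymbol\mu+\boldsymbol\nu|=d}\overline{c_{\boldsymbol\mu,\boldsymbol\nu}}X^{\boldsymbol\mu}P^{\boldsymbol\nu}$, because the reordering corrections all drop the degree. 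Comparing with the degree-$d$ part of $H=H^\dagger$ and invoking the linear independence in (ii) yields $c_{\boldsymbol\mu,\boldsymbol\nu}=\overline{c_{\boldsymbol\mu,\boldsymbol\nu}}$ for $|\boldsymbol\mu+\boldsymbol\nu|=d$. Consequently $H' := H - \tfrac12\sum_{|\boldsymbol\mu+\boldsymbol\nu|=d} c_{\boldsymbol\mu,\boldsymbol\nu}\{X^{\boldsymbol\mu},P^{\boldsymbol\nu}\}$ is self-adjoint (a difference of self-adjoint operators with real coefficients), and its top-degree part cancels, so $H'$ is a self-adjoint polynomial of degree $<d$. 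Applying the induction hypothesis to $H'$ — with the base case $d=0$, where self-adjointness forces $H=cI=\tfrac{c}{2}\{X^{\boldsymbol 0},P^{\boldsymbol 0}\}$ with $c\in\mathbb R$ — completes the expansion with real coefficients.

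The main obstacle, and the step I would treat most carefully, is fact (ii): the reductions only compare cleanly if one works in a fixed basis, so I would state and use the linear independence of normal-ordered monomials explicitly (equivalently, that normal ordering is a well-defined bijection onto this basis) rather than treating ``degree'' informally. The multi-index bookkeeping across the $n$ modes is routine but must be tracked so that the degree-lowering in the commutator corrections is genuinely uniform; everything else is a direct consequence of the canonical commutation relations.
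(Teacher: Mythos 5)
Your proof is correct. It takes a genuinely different route from the paper's. The paper's argument (in \cref{app:commutation}) first establishes an explicit change of basis from normal-ordered monomials to anticommutators: \cref{lem:coef-cannonical} shows by induction that $[X^s,P^r]=i\sum_{m<s,\,n<r}\alpha^{(r,s)}_{m,n}\{X^m,P^n\}$ with \emph{positive}, efficiently computable coefficients, and \cref{lem:basic-matrix} handles the multimode tensor structure; writing $X^{\boldsymbol\mu}P^{\boldsymbol\nu}=\tfrac12\{X^{\boldsymbol\mu},P^{\boldsymbol\nu}\}+\tfrac12[X^{\boldsymbol\mu},P^{\boldsymbol\nu}]$ then places every Hamiltonian in the anticommutator basis, with reality of the coefficients following from self-adjointness and linear independence (left implicit in the paper). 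You instead never convert commutators into anticommutators: you peel off the top degree, using $H=H^\dagger$ together with the fact that reordering corrections drop the degree by two to force the top normal-form coefficients to be real, subtract the corresponding real anticommutator combination, and recurse on a strictly lower-degree self-adjoint operator. Your version is cleaner as an existence proof, treats the multimode case uniformly via multi-index bookkeeping without a separate tensor-product lemma, and correctly isolates the one nontrivial input (PBW linear independence of normal-ordered monomials). What the paper's version buys in exchange is the explicit recursion for the conversion coefficients, their positivity, and the $O(r^3s^3)$ computation bound, which are used downstream in the sum-of-squares witness of \cref{prop:soswitness}; if you wanted your argument to support that application you would need to note that the reordering corrections in your induction are themselves explicitly computable, which they are.
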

We can now associate a real-valued polynomial over $2n$ real variables to any polynomial Hamiltonian. We denote this polynomial by $p_H$, which is formally defined as
\begin{align}
p_H(x_1,\cdots,x_n,p_1,\cdots,p_n) = 2 \sum_{\boldsymbol\mu,\boldsymbol\nu} h_{\boldsymbol\mu,\boldsymbol\nu} x^{\boldsymbol\mu} p^{\boldsymbol\nu} \in\mathbb R[x_1,\cdots x_n,p_1,\cdots,p_n].
\end{align}

We point out that it is crucial to have a connection with real-valued polynomials, as SoS decompositions are not naturally defined for complex-valued polynomials. Let $\Sigma$ denote the set of Hamiltonians that admit a SoS decomposition. Similarly, let $\sigma$ denote the set of real-valued polynomials over the reals that admit a SoS decomposition. 
We are now ready to present our SoS condition.

\begin{proposition}\label{prop:soswitness}
Let $H$ be a degree-$4$ Hamiltonian. We have that $H+\alpha\cdot \mathbb I\in \Sigma$ if $p_H\in\sigma$, where $\alpha\in\mathbb R$ is a parameter which given an SoS decomposition of $p_H$ can be found efficiently.
\end{proposition}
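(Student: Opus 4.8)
The plan is to lift the \emph{classical} sum-of-squares decomposition of $p_H$ to an \emph{operator} one via Weyl (symmetric) quantization, exploiting the fact that for degree-$4$ Hamiltonians all non-commutative corrections collapse to constants. Write $\mathcal W(f)$ for the Weyl quantization of a real classical polynomial $f$ in the variables $x_1,\dots,x_n,p_1,\dots,p_n$; recall that $\mathcal W(f)$ is (essentially) self-adjoint for real $f$, that $\mathcal W(\mathrm{const})=\mathrm{const}\cdot\mathbb I$, that $\mathcal W$ maps constant-coefficient polynomials into $\mathcal P$, and that $\mathcal W(f)\mathcal W(g)=\mathcal W(f\star g)$, where $\star$ is the Moyal product $f\star g=\sum_{k\ge0}\frac1{k!}\left(\frac i2\right)^k f\,\overleftrightarrow{D}^k g$ and $\overleftrightarrow D=\sum_a(\overleftarrow{\partial_{x_a}}\,\overrightarrow{\partial_{p_a}}-\overleftarrow{\partial_{p_a}}\,\overrightarrow{\partial_{x_a}})$ is the Poisson bidifferential operator (we use $\hbar=1$, consistent with $[X_a,P_b]=i\delta_{a,b}\mathbb I$).

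Given the decomposition $p_H=\sum_k q_k^2$ with each $q_k$ a real polynomial of degree $\le2$, I would set $Q_k:=\mathcal W(q_k)$, which is self-adjoint, and study the manifestly positive operator $\sum_k Q_k^\dagger Q_k=\sum_k Q_k^2=\mathcal W\!\big(\sum_k q_k\star q_k\big)$. The first key computation is that $q_k\star q_k=q_k^2+c_k$ for an explicit constant $c_k$: the $k=1$ Moyal term is the self-Poisson bracket $\{q_k,q_k\}_{\mathrm{PB}}=0$; the $k=2$ term $\tfrac1{2!}(\tfrac i2)^2\,q_k\overleftrightarrow D^2 q_k$ is a constant because each factor $q_k$ is differentiated exactly twice; and all $k\ge3$ terms vanish since third derivatives of a quadratic are zero. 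Hence $\sum_k Q_k^2=\mathcal W(p_H)+\big(\sum_k c_k\big)\mathbb I$. This is exactly where degree $4$ is essential: for $\deg q_k=3$ the $k=2$ term would be a genuine quadratic operator, not a constant.

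The second key computation relates $\mathcal W(p_H)$ to $H$. Since $X^{\boldsymbol\mu}=\mathcal W(x^{\boldsymbol\mu})$ and $P^{\boldsymbol\nu}=\mathcal W(p^{\boldsymbol\nu})$, the Weyl symbol of $\tfrac12\{X^{\boldsymbol\mu},P^{\boldsymbol\nu}\}$ is $\tfrac12\big(x^{\boldsymbol\mu}\star p^{\boldsymbol\nu}+p^{\boldsymbol\nu}\star x^{\boldsymbol\mu}\big)$, and this symmetrized Moyal expansion retains only even powers of $\overleftrightarrow D$ (the odd orders cancel between the two orderings). For $|\boldsymbol\mu+\boldsymbol\nu|\le4$ only the $k=0$ term (giving $x^{\boldsymbol\mu}p^{\boldsymbol\nu}$) and a $k=2$ term (a constant) survive, so summing against the coefficients $h_{\boldsymbol\mu,\boldsymbol\nu}$ of \cref{prop:SOS} yields $\mathcal W(p_H)=H-\gamma\,\mathbb I$ for an explicit constant $\gamma$ that is a fixed linear functional of the $h_{\boldsymbol\mu,\boldsymbol\nu}$; again degree $4$ leaves no quadratic remainder. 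Combining the two steps,
\begin{equation}
H+\alpha\,\mathbb I=\sum_k Q_k^\dagger Q_k,\qquad \alpha:=\sum_k c_k-\gamma,
\end{equation}
which is a manifest operator sum of squares, so $H+\alpha\mathbb I\in\Sigma$. Since each $c_k$ and $\gamma$ is a fixed binomial/factorial functional of the $q_k$ and the $h_{\boldsymbol\mu,\boldsymbol\nu}$, and there are only polynomially many constant-degree terms, $\alpha$ is computed efficiently once the $q_k$ are given.

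The main obstacle, and the heart of the argument, is controlling the non-commutative corrections precisely enough to guarantee that the residual $\sum_k Q_k^2-H$ is a \emph{constant} rather than a lower-degree operator: a degree-$2$ residual could be unbounded below and would ruin the SoS property for every $\alpha$. The cleanest way to pin this down is the Moyal/Weyl bookkeeping above, where the vanishing of third derivatives of quadratics and the parity cancellation of odd Moyal orders do all the work; a direct commutator-pushing computation (repeatedly applying $[X_a,P_b]=i\delta_{a,b}\mathbb I$) would also succeed but is messier. A secondary point to check is the multimode bookkeeping of $\overleftrightarrow D$ summed over modes, which is routine.
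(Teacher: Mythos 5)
Your proof is correct and follows essentially the same route as the paper's: both promote each degree-$\le 2$ factor of the classical SoS decomposition to a self-adjoint operator via symmetric (Weyl) ordering and then verify that all non-commutative corrections collapse to a multiple of the identity, which is exactly where the degree-$4$ hypothesis enters. The only difference is the bookkeeping device — you use the Moyal-product expansion and parity of $\overleftrightarrow{D}$, whereas the paper verifies the equivalent identity $\{m_1,m_2\}=2\{X^{s_1+s_2},P^{r_1+r_2}\}+c\cdot\mathbb I$ for the handful of nontrivial low-degree monomial cases directly.
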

This proposition above helps  determine if a degree-$4$ Hamiltonian is bounded from below, and gives a lower bound $\alpha$ on the Hamiltonian.

\begin{proof}
If $p_H\in\sigma$, then $H\in\Sigma + \alpha \mathbb I$, with $\alpha$ computable efficiently given the SoS decomposition of $p_H$. Let $p_H = \sum_{i} g_i^2$. Each $g_i$ has degree at most $2$. Hence, can be written as $g_i = a_i x^2 + b_i p^2 + c_i xp + d_i$. We use the following claim: let $m_1 = \{X^{s_1}, P^{r_1}\}$ and $m_2 = \{X^{s_2}, P^{r_2}\}$. It is the case that $\{m_1,m_2\} = 2\{X^{s_1+s_2}, P^{r_1+r_2}\} + c\cdot \mathbb I$
\footnote{Its correctness is obvious whenever $(s_i,r_i)$ have a zero component. Hence, the only non-trivial cases are $(1,1)$ with $(1,0)$ and also $(1,1)$ with $(1,1)$.}. Therefore, promoting $g_i$'s to operators, will give a decomposition for $H$, up to some shift. More concretely, $H = \sum_{i} G_i^2 + \alpha\cdot \mathbb I$. To be more precise, let $p_H$ have an SoS decomposition as follows
\begin{align}
p_H = \sum_i g_i^2,
\end{align}
where $g_i=\sum_{m,n} \alpha^{(i)}_{m,n} x_mx_n + \beta^{(i)}_{m,n} x_mp_n + \gamma^{(i)}_{m,n} p_mp_n + \delta^{(i)}_m x_m + \eta^{(i)}_m p_m$. One can observe that $H = \sum_{i} G_i^2 + c\cdot \mathbb I$, where $G_i:=\sum_{i} \alpha^{(i)}_{m,n} X_mX_n + \frac{1}{2}\beta^{(i)}_{m,n} \{X_m,P_n\} + \gamma^{(i)}_{m,n} P_mP_n + \delta^{(i)}_m X_m + \eta^{(i)}_m P_m$.
\end{proof}

In the rest of this section, we focus on the problem of estimating the ground state energy of polynomial Hamiltonians. As we will show in \cref{thm:undecidability}, putting no restriction on the family of states over which the optimization takes place leads to an undecidable problem.
Hence, to obtain decidable instances, we introduce sets of states of increasing complexity, parametrized by their stellar rank and their particle number:

\begin{definition}\label{def:e-cons-stellar}
    For all $r,E\in\mathbb N$, the set of states of stellar rank bounded by $r$ with particle number bounded by $E$ over $n$ modes is denoted by $\mathcal S_r^E$.
\end{definition}

\noindent For instance, $\mathcal S_0^{\mathsf{exp}(n)}$ is the set of Gaussian states with exponentially bounded number of particles, while $\mathcal S_{\mathsf{poly}(n)}^{\mathsf{poly}(n)}$ is the set of states with polynomially-bounded stellar rank and particle number.

Recall that any state of finite stellar rank $r$ over $n$ modes can be decomposed as $G\ket c$, where $G$ is an $n$-mode Gaussian unitary and $\ket c$ is a state with support only on Fock state with particle number less or equal to $r$ (see \cref{thm:core}). The Euler--Bloch--Messiah decomposition (\cref{thm:EBMdecomp}) shows that the Gaussian unitary $G$ can be expressed as a sequence of passive linear gates, squeezing, and displacement. In other words, we have
\begin{align}\label{eq:gaussian-ebm}
G = V \bigotimes_{i=1}^n G_i U,
\end{align}
where $V,U$ are passive linear operators (which conserve the total particle number), and $G_i$ are single-mode Gaussian unitary operators, that can be written as a squeezing followed by a displacement, i.e., $G_i = S_i \cdot D_i$. We denote the squeezing parameter of a Gaussian by $\xi$ and the displacement parameter by $\delta$ as in Eq.~\eqref{eq:simple-gaussian}. Given this decomposition, we provide an operational characterization of the set $\mathcal S_r^E$ as follows:

\begin{proposition}\label{prop:boundG}
    With the above notations, let $\ket\psi\in\mathcal S_r^E$. Then, we have that for all $i=1,\dots,n$,
    \begin{align}
        |\xi_i|\le O(\log((r+1) E)),\qquad|\delta_i|\le O(\mathsf{poly}(r,E)).
    \end{align}
\end{proposition}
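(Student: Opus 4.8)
The plan is to exploit the Euler--Bloch--Messiah decomposition (\cref{thm:EBMdecomp}) together with the fact that a core state of stellar rank $r$ has total particle number at most $r$, and to argue that squeezing and displacement can only lower the energy of $\ket\psi$ by a controlled amount. Write $\ket\psi=V\bigotimes_{i=1}^n G_iU\ket c$ with $G_i=S_iD_i$ and $\langle c|N|c\rangle\le r$, and set $\ket{c'}=U\ket c$; since $U$ is passive it preserves the total particle number, so $\langle c'|N|c'\rangle\le r$ as well. Because $V$ is passive, $V^\dagger NV=N$, and the single-mode gates $G_i$ commute with $N_j$ for $j\ne i$, so the output energy splits over the modes,
\begin{equation}
\langle\psi|N|\psi\rangle=\sum_{i=1}^n\langle c'|G_i^\dagger N_iG_i|c'\rangle=\sum_{i=1}^n\langle N_i\rangle_{G_i\rho_iG_i^\dagger},
\end{equation}
where $\rho_i$ is the reduced state of $\ket{c'}$ on mode $i$, satisfying $\langle N_i\rangle_{\rho_i}\le r$.

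First I would bound the squeezing. The crucial point is that although the squeezing in $G_i$ could in principle be compensated by anti-squeezing present in the core, the bounded particle number of $\rho_i$ limits how anisotropic its covariance matrix can be. Concretely, the single-mode uncertainty principle gives $\det\Sigma_{\rho_i}\ge\tfrac14$, while $\tr\Sigma_{\rho_i}\le 2\langle N_i\rangle_{\rho_i}+1\le 2r+1$, so $\lambda_{\min}(\Sigma_{\rho_i})\ge\tfrac{1}{4(2r+1)}$. Displacement leaves the covariance matrix unchanged, and the squeezing symplectic $S_{S_i}$ has singular values $e^{\pm|\xi_i|}$; using $\Sigma_{\rho_i}\succeq\lambda_{\min}(\Sigma_{\rho_i})\,\mathbb I$ one obtains $\tr\Sigma_{G_i\rho_iG_i^\dagger}=\tr(S_{S_i}\Sigma_{\rho_i}S_{S_i}^T)\ge\lambda_{\min}(\Sigma_{\rho_i})\,e^{2|\xi_i|}$. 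Since each summand above is nonnegative and their sum is at most $E$, this yields $\tfrac12\bigl(\tfrac{e^{2|\xi_i|}}{4(2r+1)}-1\bigr)\le E$, hence $|\xi_i|\le O(\log((r+1)E))$.

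Next I would bound the displacement. The mean vector $\mu'$ of $G_i\rho_iG_i^\dagger$ satisfies $|\mu'|^2\le 2\langle N_i\rangle_{G_i\rho_iG_i^\dagger}+1\le 2E+1$, and decomposes as $\mu'=S_{S_i}(\mu_{\rho_i}+s(\delta_i))$, where $s(\delta_i)$ is the quadrature shift induced by $D_i$, of norm $\sqrt2\,|\delta_i|$. Inverting, $\sqrt2\,|\delta_i|=|s(\delta_i)|\le\|S_{S_i}^{-1}\|\,(|\mu'|+\|S_{S_i}\|\,|\mu_{\rho_i}|)$; using $\|S_{S_i}^{\pm1}\|=e^{|\xi_i|}$, the squeezing bound $e^{|\xi_i|}\le\sqrt{4(2r+1)(2E+1)}$ just established, and $|\mu_{\rho_i}|\le\sqrt{2r+1}$, this gives $|\delta_i|\le O(\mathsf{poly}(r,E))$, as claimed.

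The step I expect to be the main obstacle — and the one where \cref{conj:stellar-bound} enters — is making the covariance lower bound $\lambda_{\min}(\Sigma_{\rho_i})\ge\Omega(1/r)$ rigorous in the full regime relevant to the reduction. The reduced-state uncertainty argument above is clean mode by mode, but controlling uniformly the interplay between the passive mixing $U$, the per-mode squeezing directions, and possible multimode correlations of the core is delicate, and it is precisely such a quantitative ``a state of bounded stellar rank cannot be too squeezed'' statement that \cref{conj:stellar-bound} is meant to supply. Granting it, the two displayed bounds follow by combining the per-mode energy lower bounds with the global constraint $\langle\psi|N|\psi\rangle\le E$.
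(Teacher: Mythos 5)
Your proof is correct, and for the key step it takes a genuinely different — and in fact stronger — route than the paper. The paper's proof also starts from the Euler--Bloch--Messiah decomposition and the per-mode energy accounting $2\bra\psi N\ket\psi=\sum_j e^{2\xi_j}\langle(X-x_j)^2\rangle_{\mathrm{core}}+e^{-2\xi_j}\langle(P-p_j)^2\rangle_{\mathrm{core}}$, but to lower-bound the shifted quadrature moments of the core it proves $\Pi_r X^2\Pi_r\ge\frac{1}{16r}$ via the Gershgorin circle theorem and then invokes \cref{conj:stellar-bound} (supported only numerically) for the displaced version $\Pi_r(X-x)^2\Pi_r\ge\Omega(1/r)$. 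You instead derive the same quantitative statement from the Robertson--Schr\"odinger uncertainty relation: any single-mode state $\rho$ (pure or mixed, so in particular the reduced state of the core on mode $i$) with $\langle N\rangle_\rho\le r$ satisfies $\mathrm{Var}(P)\le\langle P^2\rangle\le 2r+1$, hence
\begin{align}
\langle(X-x)^2\rangle_\rho\ \ge\ \mathrm{Var}_\rho(X)\ \ge\ \frac{1}{4\,\mathrm{Var}_\rho(P)}\ \ge\ \frac{1}{4(2r+1)},
\end{align}
which is exactly the $\Omega(1/r)$ bound the conjecture is invoked to supply. What your approach buys is that the proposition (and hence \cref{thm:bounded-squeezing-and-displacement}) becomes unconditional; what the paper's spectral approach buys is a slightly sharper constant in the unshifted case and an explicit eigenvalue picture, but at the cost of leaving the shifted case conjectural. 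Your closing paragraph is therefore too cautious: the worry about ``multimode correlations of the core'' is unfounded, because the partial trace of any valid state is a valid state and the single-mode uncertainty relation $\det\Sigma_{\rho_i}\ge\frac14$ holds for it regardless of how the passive mixing $U$ entangles the modes, and the particle-number bound $\langle N_i\rangle_{\rho_i}\le r$ follows from nonnegativity of each term in $\sum_i\langle N_i\rangle_{c'}\le r$. The only points requiring minor care are the ordering $G_i=S_iD_i$ when propagating the mean vector (which you handle correctly) and the fact that the core's reduced state may have nonzero mean, bounded by $|\mu_{\rho_i}|\le\sqrt{2r+1}$ (which you also include). You may want to flag that your argument actually proves the operator inequality asserted in \cref{conj:stellar-bound}, up to the value of the constant.
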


\noindent We refer to \cref{app:boundG} for a proof when $r=0$, and \cref{lem:last-lem} for $r>0$.

With these results in place, we study the computational complexity of the ground energy problem over $\mathcal S_r^E$, as $r$ increases from $0$ to $\infty$: we first show that it is $\mathbf{NP}$-complete over $\mathcal S_0^{\mathsf{exp}(n)}$ in \cref{thm:gaussian-optimization}, then that it is in $\mathbf{QMA}$ over $\mathcal S_{\mathsf{poly}(n)}^{\mathsf{poly}(n)}$ in \cref{thm:bounded-squeezing-and-displacement}, and finally that it is $\mathbf{RE}$-hard in \cref{thm:undecidability} when no restriction is made on the stellar rank.

\subsubsection{Optimization over Gaussian states}
\label{sec:deg-2-optimization}

We first consider the problem of estimating the minimum energy of non-Gaussian Hamiltonians over Gaussian states with at most exponential average particle number in $\mathcal S_0^{\mathsf{exp}(n)}$. We show:

\begin{theorem}\label{thm:gaussian-optimization}
$\mathsf{CVLH}_{\mathcal S_0^{\mathsf{exp}(n)}}^4$ is $\mathbf{NP}$-complete.
\end{theorem}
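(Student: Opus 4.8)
The plan is to prove the two directions—$\mathbf{NP}$-hardness and containment in $\mathbf{NP}$—separately, reusing the copositivity machinery developed for \cref{th:boundedness} on the hardness side and exploiting the finite energy budget together with \cref{prop:boundG} and \cref{corol:gaussian-deg2-exp} on the membership side.

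For hardness I would reduce from matrix \emph{non}-copositivity, the complement of $\mathsf{McP}$, which is $\mathbf{NP}$-complete since $\mathsf{McP}$ is $\mathbf{co}$-$\mathbf{NP}$-complete \cite{murty1987some}. Given $\mathbf M\in\mathbb R^{n\times n}$, I form exactly the Hamiltonian $H=\sum_{i,j}M_{i,j}N_iN_j$ from \cref{th:boundedness}; expanded in $X,P$ it has degree $4$, and after dividing by $\|\mathbf M\|_\infty$ it has $O(1)$ coefficients while the thresholds rescale so as to preserve an $\Omega(1/\mathsf{poly}(n))$ gap. Two observations drive the reduction. First, if $\mathbf M$ is copositive then $H$ is diagonal in the Fock basis with eigenvalues $\sum_{i,j}M_{i,j}m_im_j\ge0$, so $H\succeq0$ as an operator and in particular $\inf_{\psi\in\mathcal S_0^{\mathsf{exp}(n)}}\langle\psi|H|\psi\rangle\ge0$. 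Second, if $\mathbf M$ is not copositive, choose $\mathbf x\in\mathbb R_{\ge0}^n$ with $\mathbf x^T\mathbf M\mathbf x=-\epsilon<0$ and evaluate $H$ on the product coherent state $\bigotimes_{i=1}^n\ket{\sqrt{t\,x_i}}$, so that $\langle N_i\rangle=t x_i$; using $\langle N_iN_j\rangle=t^2 x_ix_j$ for $i\ne j$ and $\langle N_i^2\rangle=t^2x_i^2+tx_i$ gives energy $t^2\mathbf x^T\mathbf M\mathbf x+t\sum_iM_{i,i}x_i=-t^2\epsilon+O(t)$, which drops below any fixed constant for a polynomially large $t$ while keeping the average particle number $t\sum_i x_i$ polynomial—well within the $\mathsf{exp}(n)$ budget. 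Taking thresholds $a=-1,\,b=0$ (before rescaling) then makes the $\mathsf{CVLH}$ instance a \textsc{yes} instance precisely when $\mathbf M$ is not copositive. It remains only to note that a non-copositivity witness $\mathbf x$ can be taken with $\mathsf{poly}(n)$ bits and $\epsilon\ge1/\mathsf{poly}(n)$, which follows from the bounded bit-complexity of $\mathbf M$.

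For containment, the witness is a classical description of the energy-minimising Gaussian state and the verifier evaluates its energy. Two facts must be established. The first is bit-size: the set of Gaussian states of bounded energy is compact and $\langle H\rangle$ is continuous, so the infimum is attained inside $\mathcal S_0^{\mathsf{exp}(n)}$, and by \cref{prop:boundG} in the $r=0$ case (which requires no conjecture, see \cref{app:boundG}) any such state has squeezing $|\xi_i|=O(\log\mathsf{exp}(n))=O(\mathsf{poly}(n))$ and displacement $|\delta_i|=O(\mathsf{poly}(\mathsf{exp}(n)))$; the latter are exponential in magnitude but only $\mathsf{poly}(n)$ bits, so the covariance matrix and mean vector (equivalently the stellar parameters $\mathbf A,\mathbf b$) admit a $\mathsf{poly}(n)$-bit description. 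The second is efficient evaluation: for a degree-$4$ Hamiltonian $\langle\psi|H|\psi\rangle$ is a fixed-degree polynomial in $(\boldsymbol\Sigma,\boldsymbol\mu)$ computable by Wick's theorem through the formulas of \cref{corol:gaussian-deg2-exp} and \cref{app:innerstellar}, so from $\mathsf{poly}(n)$-bit parameters it can be computed to $1/\mathsf{poly}(n)$ additive precision in polynomial time. The verifier accepts iff this value lies below $\tfrac{a+b}{2}$; in the \textsc{yes} case the attained minimiser, rounded to a nearby $\mathsf{poly}(n)$-bit Gaussian state (which stays on the correct side of the threshold by the Lipschitz continuity of the energy polynomial and the $\Omega(1/\mathsf{poly})$ gap), is a valid witness, while in the \textsc{no} case no Gaussian state can reach below the threshold.

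The main obstacle I anticipate is the containment direction, and specifically the quantitative control of the witness: certifying that a near-optimal Gaussian state with a polynomial-bit description exists, and that the energy polynomial can be evaluated to the required $1/\mathsf{poly}(n)$ precision despite parameters of exponential magnitude. The hardness direction is comparatively routine given \cref{th:boundedness}; the only care needed there is to keep the reduction within the bounded-coefficient and bounded-energy constraints of \cref{def:cvlh}.
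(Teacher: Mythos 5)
Your overall strategy coincides with the paper's: hardness via the Hamiltonian $H=\sum_{i,j}M_{i,j}N_iN_j$ and a reduction from matrix non-copositivity evaluated on product coherent states, and membership via a classical description of the optimal Gaussian state whose degree-$4$ moments the verifier computes. The membership half is fine and in fact more carefully argued than the paper's one-line treatment (the paper likewise relies on \cref{prop:boundG} at $r=0$ and the moment formulas of \cref{corol:gaussian-deg2-exp}).

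There is, however, a genuine gap in your hardness argument: the claim that a non-copositivity witness can be taken with $\epsilon=-\mathbf x^T\mathbf M\mathbf x\ge 1/\mathsf{poly}(n)$ ``from the bounded bit-complexity of $\mathbf M$'' is false. Bounded bit-complexity of $\mathbf M$ and of $\mathbf x$ only guarantees that a negative value of the quadratic form is at most $-2^{-L}$, where $L$ is the encoding size, and this is the bound the paper actually uses (citing \cite{murty1987some}: the optimum of the box-constrained problem is either $0$ or at most $-2^{-L}$). No inverse-polynomial lower bound on $\epsilon$ is available, and it is not known that non-copositivity remains $\mathbf{NP}$-hard under an inverse-polynomial promise gap. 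Consequently your choice of a polynomially large $t$ does not push the energy $-t^2\epsilon+O(t)$ below a fixed threshold in the worst case: one needs $t\gtrsim 2^{L}\tr(\mathbf M)$, i.e., exponentially large coherent amplitudes and hence exponential average particle number. This is precisely why the theorem is stated over $\mathcal S_0^{\mathsf{exp}(n)}$ rather than $\mathcal S_0^{\mathsf{poly}(n)}$, a point your proof inadvertently erases by asserting that the particle number stays polynomial. The repair is immediate --- take $K=2^{L}\tr(\mathbf M)+1$ as the paper does, which is admissible only because the state family allows exponential energy --- but as written the reduction does not establish the \textsc{yes} case.
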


Note that the problem is indeed in \textbf{NP}, as in the YES cases, the prover can send the displacement and squeezing parameters. Then, the verifier can compute degree-4 expectations of each monomial, and hence, evaluate the overall energy. Following this procedure, the verifier cannot be tricked in the NO cases.
We next show $\mathbf{NP}$-hardness in the case of unbounded displacement, which follows from \cref{prop:deg4-optimization-over-gaussian} below.

\begin{proposition}\label{prop:deg4-optimization-over-gaussian}
Let $\mathcal S$ be any family of states which includes all coherent states over $n$ modes. Then, minimizing the energy of degree-4 Hamiltonians over $\mathcal S$ is $\mathbf{NP}$-hard.
\end{proposition}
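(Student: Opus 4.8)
The plan is to reduce from matrix non-copositivity, the complement of $\mathsf{McP}$, which is $\mathbf{NP}$-complete since $\mathsf{McP}$ is $\mathbf{co}$-$\mathbf{NP}$-complete \cite{murty1987some}. Given an instance $\Bf M\in\mathbb R^{n\times n}$, which I may rescale so that $|M_{i,j}|\le1$ without affecting copositivity (a positive-scale-invariant property), I would associate the degree-$4$ Hamiltonian
\begin{equation}
H = \sum_{i,j\in[n]} M_{i,j}\, N_i N_j,
\end{equation}
exactly as in the proof of \cref{th:boundedness}, and fix the decision thresholds $a=-1$ and $b=0$. Writing each $N_i=(X_i^2+P_i^2-1)/2$ puts $H$ into normal form with $O(1)$ coefficients in polynomial time, so this is a valid polynomial-time many-one reduction to an instance of $\mathsf{CVLH}^4_{\mathcal S}$.

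The two facts I would establish are an operator bound and a coherent-state expectation. First, $H$ is diagonal in the Fock basis with eigenvalues $\Bf n^T\Bf M\Bf n$ for $\Bf n\in\mathbb Z_+^n$; since copositivity on $\mathbb Z_+^n$ is equivalent to copositivity on $\mathbb R_+^n$ (by rational scaling followed by continuity), $\Bf M$ copositive is equivalent to $H\succeq0$ as an operator. Second, using $a_i\ket{\Bf z}=z_i\ket{\Bf z}$ and the independence of the modes, a direct computation of elementary moments gives
\begin{equation}
\langle\Bf z|H|\Bf z\rangle = \Bf y^T\Bf M\Bf y + \sum_i M_{i,i}\,y_i, \qquad y_i:=|z_i|^2\ge0,
\end{equation}
where the linear correction comes from the coherent-state Fock variance $\langle N_i^2\rangle-\langle N_i\rangle^2=|z_i|^2$.

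These facts yield a clean dichotomy that holds for \emph{every} $\mathcal S$ containing the coherent states. If $\Bf M$ is copositive, then $H\succeq0$ forces $\langle\psi|H|\psi\rangle\ge0$ for all $\psi\in\mathcal S$, while the vacuum (a coherent state in $\mathcal S$) attains $0$; hence $\inf_{\psi\in\mathcal S}\langle\psi|H|\psi\rangle=0\ge b$, a \textsc{no} instance. If $\Bf M$ is not copositive, I pick $\Bf y^\ast\ge0$ with $(\Bf y^\ast)^T\Bf M\Bf y^\ast=-\epsilon<0$ and coherent amplitudes $z_i(t)=\sqrt{t\,y_i^\ast}$, giving $\langle\Bf z(t)|H|\Bf z(t)\rangle=-\epsilon t^2+t\sum_i M_{i,i}y_i^\ast\to-\infty$, so $\inf_{\psi\in\mathcal S}\langle\psi|H|\psi\rangle=-\infty\le a$, a \textsc{yes} instance. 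The gap $b-a=1$ satisfies the promise, completing the reduction.

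The point requiring the most care — and the main obstacle — is that the argument must work simultaneously for all admissible $\mathcal S$, not merely for the coherent states over which we optimize in the divergent case. The lower bound in the copositive case is therefore not an optimization over coherent states but the \emph{global} operator inequality $H\succeq0$, which is precisely what prevents larger families (for instance Fock states, over which $H$ could a priori behave very differently, as the example $N_1^2+N_2^2-2N_1N_2-N_1-N_2$ shows) from producing spuriously negative energies. I would also remark that the linear variance term $\sum_i M_{i,i}y_i$ is harmless in the divergent case since the quadratic term dominates, so no bookkeeping beyond the elementary coherent-state moments is needed. Finally, for the energy-constrained version over $\mathcal S_0^{\mathsf{exp}(n)}$ needed in \cref{thm:gaussian-optimization}, the same construction applies by capping $t$ so that $|z_i|^2\le\mathsf{exp}(n)$, which still drives the energy below any fixed polynomial threshold while keeping the state in the prescribed energy window.
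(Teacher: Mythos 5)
Your proposal is correct and follows essentially the same route as the paper: the same reduction from matrix non-copositivity via $H=\sum_{i,j}M_{i,j}N_iN_j$, the same use of diagonality in the Fock basis (equivalently $H\succeq0$ when $\Bf M$ is copositive) for the lower bound, and the same coherent-state family $z_i=\sqrt{t\,y_i^\ast}$ driving the energy to $-\infty$ in the non-copositive case. The extra bookkeeping you add (thresholds $a=-1$, $b=0$, the $\mathbb Z_+^n$ versus $\mathbb R_+^n$ equivalence, and the energy cap for $\mathcal S_0^{\mathsf{exp}(n)}$) is all consistent with what the paper does in \cref{th:boundedness} and \cref{thm:gaussian-optimization}.
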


Similar to \cref{th:boundedness}, the proof of this proposition relies on the fact that \textit{matrix non-copositivity} ($\mathsf{MncP}$) is an \textbf{NP}-hard problem. Indeed, our proof below introduces a reduction from $\mathsf{MncP}$ to the optimization of degree-4 Hamiltonians over Gaussian states.

\begin{proof}[Proof of \cref{prop:deg4-optimization-over-gaussian}]
Let $\Bf M\in\mathbb R^{n\times n}$. Consider the following Hamiltonian
\begin{align}
H = \sum_{i,j} M_{ij} N_i N_j
\end{align}
In what follow, we show that
\begin{enumerate}
\item If $\Bf M$ is copositive
\begin{align}
\min_{\psi\in\mathcal S} \langle \psi |H| \psi\rangle = 0
\end{align}
\item Otherwise, we have
\begin{align}
\inf_{\psi\in\mathcal S} \langle \psi |H| \psi\rangle = -\infty.
\end{align}
\end{enumerate}
Note that this is sufficient since the ability to answer whether the ground energy is below $-2$ or above $-1$ decides $\mathsf{MncP}$. In what follows, we prove each of the above claims, and hence, conclude the proof.

\begin{enumerate}
\item Moreover $H$ is diagonal in the number basis. Hence
\begin{align}\label{eq:pf-of-deg4gaus-1}
\langle \psi |H| \psi\rangle \geq \min_{\Bf m\in \mathbb N^n} \langle \Bf m |H| \Bf m\rangle,  
\end{align}
where $\ket{\Bf m}$ represents the tensor product of Fock basis vectors i.e., $\ket{\Bf m}:=\ket{m_1}\otimes\cdots\otimes\ket{m_n}$.
As $\Bf M$ is copositive, we get that $\langle \Bf m, H \Bf m\rangle = \sum_{i,j} M_{i,j} m_i m_j\geq 0$. Putting this together with Eq.~\eqref{eq:pf-of-deg4gaus-1} gives $\langle \psi|H| \psi\rangle \geq 0$ for any state $\ket\psi$. Also, note that $\ket\psi=\ket{0^n}$ achieves the zero energy.

\item Letting $\Bf M$ being non-copositive, there exists $\conj{\Bf x} \in \mathbb R_+^n$ such that $\sum_{ij} M_{ij} x_i x_j =-\epsilon$ for some $\epsilon>0$. Furthermore, let $a:= \sum_i M_{i,i} x_i$. Now, consider the coherent state $\ket{\Bf z}$, where $z_i = \sqrt{ K x_i}$ for some $K>0$. We have
\begin{align}\label{eq:long-coherent}
\begin{split}
\langle \Bf z| H|\Bf z\rangle &= \sum_{i,j} M_{i,j} \langle \Bf z| a_i^\dagger a_i a_j^\dagger a_j| \Bf z\rangle\\
&= \sum_{i,j} M_{i,j} \conj{z}_j z_i \bra{\Bf z} a_i a_j^\dagger \ket{\Bf z}\\
&= \sum_{i,j} M_{i,j} \conj{z}_j z_i \bra{\Bf z} a_j^\dagger a_i\ket{\Bf z} + \sum_{i} M_{i,i} |z_i|^2\\
&= \sum_{i,j} M_{i,j} |z_i|^2 |z_j|^2 + \sum_{i} M_{i,i} |z_i|^2\\
&= - K^2 \cdot \epsilon + K\cdot a
\end{split}
\end{align}
Note that choosing $K$ to be sufficiently large, gives arbitrarily large negative numbers.
\end{enumerate}
\end{proof}

We are now ready to present the proof of \cref{thm:gaussian-optimization}.

\begin{proof}[Proof of \cref{thm:gaussian-optimization}]
To show \textbf{NP}-hardness, we note that even the following reformulation of $\mathsf{MncP}$, which considers only constantly large vectors $\Bf x$, remains \textbf{NP}-hard
\begin{align}\label{eq:bounded-matrix-copositivity}
\begin{split}
\min \quad &\Bf x^T \Bf M \Bf x\\
\mathrm{s.t.} \quad &0\leq x_i \leq 1, \forall i\in[n]
\end{split}
\end{align}
The reason Eq.~\eqref{eq:bounded-matrix-copositivity} is still \textbf{NP}-hard is due to the fact that its optimal value is either $0$ or at most $-2^{-L}$ with $L$ being the size of $\Bf M$ (depending on $\Bf M$ being copositive or not) \cite{murty1987some}. Hence, following the proof of \cref{prop:deg4-optimization-over-gaussian}, one can set $\conj{\Bf x}$ to be solution of Eq.~\eqref{eq:bounded-matrix-copositivity}, giving $\epsilon>2^{-L}$ in Eq.~\eqref{eq:long-coherent}, and in case of non-copositivity, choose $K = 2^{L}\tr(\Bf M)+ 1$ to conclude that the energy is at most $-(2\tr(\Bf M) + 2^{-L})$. Finally, note that $\mathcal S_0^{\mathsf{exp}(n)}$ contains the coherent states of exponentially large amplitudes so we can indeed choose $z_i=\sqrt{Kx_i}$ as in the above proof.
\end{proof}

\begin{remark}\label{rem:any-stellar-rank-is-np-hard}
Note that the hardness result presented in the proof above can be extended to any family of states that contain exponentially displaced coherent states (this is a stronger statement than \cref{prop:deg4-optimization-over-gaussian}). Therefore, we have $\mathsf{CVLH}_{\mathcal S_{r}^{\mathsf{exp}}}$ is $\mathbf{NP}$-hard for any $r\geq 0$.
\end{remark}

\subsubsection{Optimization over states of bounded stellar rank}

In this section, we discuss the complexity of optimizing a Hamiltonian over states of bounded stellar rank. Recall that $\mathcal S_r^E$ denotes the set of states with stellar rank at most $r$ and energy (average particle number) at most $E$.

\begin{theorem}\label{thm:bounded-squeezing-and-displacement}

We have that $\mathsf{CVLH}_{\mathcal S_{\mathsf{poly}(n)}^{\mathsf{poly}(n)}}^{O(1)}\in\mathbf{QMA}$.
\end{theorem}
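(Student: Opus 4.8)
The plan is to exhibit a $\mathbf{QMA}$ verification procedure built on the core-state decomposition of finite-stellar-rank states. By \cref{thm:core} and the Euler--Bloch--Messiah decomposition (\cref{thm:EBMdecomp}), any candidate state $\ket\psi\in\mathcal S_{\mathsf{poly}(n)}^{\mathsf{poly}(n)}$ can be written as $\ket\psi=G\ket c$, where $G=V\bigotimes_i(S_iD_i)U$ is a Gaussian unitary and $\ket c$ is a core state supported on Fock states of total particle number at most $r=\mathsf{poly}(n)$. The witness will be the pair $(G,\ket c)$: a classical description of the Bloch--Messiah parameters of $G$ together with the quantum state $\ket c$. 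Two facts make this efficient. First, \cref{prop:boundG} (which is exactly where \cref{conj:stellar-bound} enters) bounds the squeezing by $O(\log((r+1)E))=O(\log n)$ and the displacement by $\mathsf{poly}(r,E)=\mathsf{poly}(n)$, so $G$ admits a polynomial-size classical description up to the required precision. Second, the subspace of Fock states with total particle number at most $r$ has dimension $\binom{n+r}{r}$, whose logarithm is $O(r\log(n+r))=\mathsf{poly}(n)$; hence $\ket c$ can be encoded in $\mathsf{poly}(n)$ qubits, e.g.\ by storing each occupation number $n_i\le r$ in binary.

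The verifier proceeds as follows. It first checks that the supplied parameters of $G$ respect the bounds of \cref{prop:boundG} (rejecting otherwise), and then classically computes the conjugated Hamiltonian $H'=G^\dagger HG$. Since conjugation by a Gaussian unitary acts as an affine symplectic substitution on $X$ and $P$ (see Eq.~\eqref{eq:affine}) and preserves the constant degree $d$, this is a polynomial-time computation, and the parameter bounds guarantee that the coefficients of $H'$ stay polynomially bounded (each of the at most $d$ factors contributes at most $e^{|\xi|}=\mathsf{poly}(n)$ from squeezing and $|\delta|=\mathsf{poly}(n)$ from displacement). Because $X$ and $P$ are tridiagonal in the Fock basis, the projection $\Pi_r H'\Pi_r$ onto the bounded subspace (where $\Pi_r$ projects onto total particle number at most $r$) is a sparse Hermitian matrix with polynomially many efficiently computable nonzero entries per row, and its operator norm is $\mathsf{poly}(n)$, since matrix elements between Fock states of particle number at most $r$ scale only as $r^{O(d)}$. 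Using $\ket c=\Pi_r\ket c$ we have $\bra\psi H\ket\psi=\bra cH'\ket c=\bra c\Pi_rH'\Pi_r\ket c$, so the verifier merely needs to estimate the expectation of a sparse, efficiently-specified, polynomially-bounded Hermitian operator on the witness. This is a standard $\mathbf{QMA}$ task: sparse Hamiltonian simulation is in $\mathbf{BQP}$, and combined with phase estimation it yields an estimate of $\bra cH'\ket c$ to additive error $\tfrac1{\mathsf{poly}(n)}$, which suffices given the promise gap $b-a\ge\tfrac1{\mathsf{poly}(n)}$. The verifier accepts iff this estimate lies below $\tfrac{a+b}{2}$; in parallel it estimates $\bra cG^\dagger NG\ket c$ by the same method (a degree-$2$ sparse observable) and rejects if the energy exceeds $E$, ensuring the witnessed state genuinely lies in $\mathcal S_r^E$.

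Completeness and soundness then follow from the faithfulness of the decomposition: in the \textsc{yes} case the prover sends the bounded parameters of a near-optimal $G^\star$ together with the core state $\ket{c^\star}$ witnessing energy at most $a$; in the \textsc{no} case any admissible witness corresponds to a state of stellar rank at most $r$ and energy at most $E$, on which $H$ has energy at least $b$, so no witness can pass. I expect the main obstacle to be twofold. The first is the parameter-boundedness step: without \cref{prop:boundG}/\cref{conj:stellar-bound} there is no a priori polynomial bound on the squeezing and displacement needed to realize a low-energy state of the prescribed stellar rank, so $G$ need not have any efficient classical description---this is precisely why the conjecture is assumed. The second is the quantitative control of the truncated conjugated Hamiltonian: one must check that conjugation does not blow up the coefficients super-polynomially and that the Fock-basis truncation at particle number $r$ simultaneously preserves the expectation value exactly (immediate, since $\ket c$ is supported there) and produces a sparse operator of polynomial norm with efficiently computable entries, so that the sparse-Hamiltonian $\mathbf{QMA}$ machinery applies at inverse-polynomial precision.
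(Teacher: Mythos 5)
Your proposal is correct and follows essentially the same route as the paper's proof: the core-state/Bloch--Messiah decomposition $\ket\psi=G\ket c$, the parameter bounds from \cref{prop:boundG} (where \cref{conj:stellar-bound} enters), the classically computed conjugated Hamiltonian $\Pi_r G^\dagger H G\Pi_r$ shown to be sparse with polynomially bounded norm, and sparse-Hamiltonian simulation on the quantum witness $\ket c$. The only addition is your explicit verification that $\bra cG^\dagger NG\ket c\le E$, which the paper leaves implicit but which is a reasonable refinement for soundness with respect to the energy constraint defining $\mathcal S_r^E$.
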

 
\begin{proof}

Our goal is to optimize the energy with respect a given Hamiltonian $H = \sum_{\boldsymbol\mu, \boldsymbol\nu \in \mathbb Z_+^n} \alpha_{{\boldsymbol\mu, \boldsymbol\nu}} a^{{\boldsymbol\mu}} a^{\dagger {\boldsymbol\nu}}$ over $n$ modes and of constant degree $d$ over the set of states with stellar rank $r=\mathsf{poly}(n)$ and average particle number at most $\mathsf{poly}(n)$. 

Let $E_r(H)$ be the minimum energy over $\mathcal S_r$. Using \cref{thm:core}, any such quantum state can be characterized by $\ket{\psi} = G \ket{c}$, where $\ket{c}$ is a core state of degree $r$ and $G$ is a Gaussian operator. Let $\mathsf{C}_{r,n}$ be the family of core states of degree $r$ over $n$ modes, and let $\mathcal{G}_n$ be the family of Gaussian operators over $n$ modes. Therefore, the minimum energy becomes 
\begin{equation} 
E_r (H) = \inf_{\ket{c} \in \mathsf{C}_{r,n}, G \in \mathcal{G}_n} \bra{c} G^\dagger H G \ket{c}
\end{equation}
Let $\Pi_r:=\sum_{\mathbf m: |\Bf m|\le r} \ket{\Bf m}\!\bra{\Bf m}$ is the projection onto the Fock basis states that have $r$ many particles. We obtain (see also \cite{fiuravsek2022efficient})
\begin{align}
E_r (H) = \inf_{G\in\mathcal G} \lambda_{\min}(\Pi_{r}  G^\dagger H G \Pi_{r}),
\end{align}
Note that the dimension of $\tilde H:= \Pi_{r}  G^\dagger H G \Pi_{r}$ is $\binom{n+r-1}r = n^{O(r)}$. Next, we use the Euler--Bloch--Messiah decomposition (\cref{thm:EBMdecomp}) to decompose $G = V \bigotimes_{i=1}^m G_i U$, where $U$ and $V$ are passive linear optical elements. They map $U a^\dagger_i U^\dagger = \sum_j U_{ij} a^\dagger_j$, where $U_{ij}$ are the complex matrix elements corresponding to $U$ (similarly for $V)$. $G_i$ are single-mode Gaussian operators. We note that these operators act as affine maps on creation and annihilation operators, i.e., $G_i a G_i^\dagger = A_i a + B_i a^\dagger + C_i$, for complex numbers $A_i, B_i, C_i$. Crucially, neither of these two operators changes the degree of a Hamiltonian under conjugation. For more details, see \cref{sec:CVprocesses}. We have, $[\Pi_r, U] = 0$, since these operators preserve the particle number, so 
 \begin{align}
E_r (H) = \inf_{G\in\mathcal G} \lambda_{\min}(\Pi_{r}  \bigotimes_i G_i^\dagger (V^\dagger H V) \bigotimes_i G_i \Pi_{r}).
\end{align}
Now, when the average particle number is polynomially bounded, \cref{prop:boundG} ensures that the displacement and squeezing parameters specifying the Gaussian unitary are polynomially-bounded and logarithmically-bounded, respectively.
 
We now give a $\mathbf{QMA}$ procedure to perform the optimization within inverse-polynomial error when the stellar rank is bounded as $r=\mathsf{poly}(n)$. The prover encodes the description of the Gaussian operator $G$, computes $G^\dagger H G$ and brings it into a normal form in terms of $a,a^\dagger$ in polynomial time. The procedure is based on the following observations. $H' = V^\dagger H V$ from $H$ and $H'' = \otimes_i G^\dagger_i H' \otimes_i G_i$ from $H'$ can be computed using linear operators.
Now, the conjugated Hamiltonian still has the same degree and can be expressed as
\begin{align}\label{eq:conjugated-Hamiltonian}
H'' = G^\dagger H G = \sum_{\substack{\boldsymbol{\mu},\boldsymbol{\nu} \in \mathbb{Z}_+^n \\ |\boldsymbol{\mu}+\boldsymbol{\nu}|\leq d}} 
\alpha''_{\boldsymbol\mu,\boldsymbol\nu} a^{\boldsymbol\mu} a^{\dag\boldsymbol\nu}.
\end{align}
 
$\tilde H$ can be computed from $H''$ by computing overlaps with number bases 
\begin{equation} 
 \tilde H = \Pi_r G^\dagger H G \Pi_r = 
 \sum_{\substack{\boldsymbol{\mu},\boldsymbol{\nu} \in \mathbb{Z}_+^n \\ |\boldsymbol{\mu}+\boldsymbol{\nu}|\leq d}} \sum_{\mathbf {m,n}: |\Bf m|= |\Bf n|=r}\alpha''_{\boldsymbol\mu,\boldsymbol\nu}\ket{\mathbf{m}} \bra{\mathbf{n}} \bra{\mathbf{m}} a^{\boldsymbol\mu} a^{\dagger \boldsymbol\nu}\ket{\Bf n}.
\end{equation} 
 
Furthermore, $\tilde H$ is polynomially sparse. That is due to the fact that each term $a^{\boldsymbol\mu} a^{\boldsymbol\nu}{}^\dagger$ is 1-sparse in the Fock basis, and that there are $\binom{n+d-1}{d}= n^{O(d)}\in\mathsf{poly}(n)$ many terms in the summation Eq.~\eqref{eq:conjugated-Hamiltonian}. Therefore, we are now left with finding the ground energy of an exponential size, efficiently row-computable, sparse Hamiltonian. By a standard Hamiltonian simulation algorithm \cite{berry2015hamiltonian}, if we are provided the ground state, we can run a Hamiltonian simulation protocol in polynomial time and find the ground energy.
By \cref{prop:boundG}, the displacement/squeezing parameters are polynomially/logarithmically bounded, and based on this argument, we can show that $\wt{H}$ will have polynomially bounded norm and hence standard \textbf{QMA} protocols apply.

We now detail the proof that $\tilde H$ has polynomially bounded norm. Note that $H$ can be written as
\begin{align}
H = \sum_{k=1}^d \boldsymbol\zeta^{\otimes k}{}^\dagger \Bf h_k \boldsymbol\zeta^{\otimes k},
\end{align}
where $\boldsymbol\zeta := (a_1,\cdots,a_n,a_1^\dagger,\cdots,a_n^\dagger)^T$. Using a displacement unitary $D(d_1)\otimes D(d_2)\otimes \cdots D(d_n)$ with $\Bf d\in\mathbb C^n$, followed by Gaussian witness of the form $V\cdot (S(\xi_1)\otimes\cdots\otimes S(\xi_n))$ where $V$ is an interferometer and $S$ are squeezing gates with parameters $\boldsymbol\xi\in\mathbb C^n$, we get the following transformation on $\zeta$
\begin{align}\label{eq:mapping-params-after-conjugation}
\boldsymbol\zeta \mapsto  (\Bf R \, \mathrm{diag}(\boldsymbol\xi) \Bf R^\dagger) \Bf U_V\boldsymbol\zeta + \Bf d,
\end{align}
where $\Bf R = \frac{1}{\sqrt2}\begin{pmatrix}
\mathbb I & i\mathbb I\\\
\mathbb I & -i\mathbb I
\end{pmatrix}$, is the matrix transforming from the position and momentum basis to the annihilation and creation basis.
Therefore, we get that the new Hamiltonian can be expressed as
\begin{align}
H' = \sum_{k=1}^d \boldsymbol\zeta^{\otimes k}{}^\dagger \Bf h'_k \boldsymbol\zeta^{\otimes k}
\end{align}
with
\begin{align}\label{eq:new-H}
\Bf h'_k = \sum_{j\geq k} \sum_{\pi\in \mathcal S_j} \pi\left( \Bf T^{\otimes k}{}^\dagger \otimes \mathbf d^{\otimes j-k} {}^\dagger  \right) \cdot \mathbf h_j \cdot \pi\left( \Bf T^k \otimes \mathbf d^{\otimes j-k} \right),
\end{align}
where $\mathcal S_j$ is the set of permutation matrices over $j$ tensor factors, and $\Bf T = \Bf R\, \mathrm{diag}(\boldsymbol\xi) \Bf R^\dagger \Bf U_V$. Notice that there are constantly many terms in summation formula Eq.~\eqref{eq:new-H}. Moreover, the norm of each term on the right-hand-side of Eq.~\eqref{eq:new-H} is at most $\max_{j\in[d]}(\norm{\mathbf h_j} \cdot \norm{\Bf d}^{2(j-k)}) \cdot \norm{\Bf T}^{2k}$. Since $\Bf h_k$ by assumption has entries bounded by a constant, we get that $\norm{\Bf h_j}\leq O(n)$ for all $j\in[d]$, and moreover, by assumption $\norm{\Bf d}\leq \mathsf{poly}(n)$. Also, the assumption $|\xi|\leq O(\log (n))$ implies that $\norm{\mathbf{T}}\leq \mathsf{poly}(n)$. These will together imply that $\norm{\Bf h'}\leq \mathsf{poly}(n)$. Finally, since $\norm{\boldsymbol\zeta^{\otimes k}\ket{\Bf m}}\leq r^{O(d)}\leq \mathsf{poly}(n)$, (with $r$ being the stellar rank and $d$ being the degree of $H$) we have that the entries of $\wt{H}$ are polynomially bounded, and hence, it can be simulated within $\mathbf{BQP}$.
\end{proof}

With a similar line of reasoning, we show that optimizing energy over states of constant stellar rank is complete for $\mathbf{NP}$ (which generalizes . 

\begin{theorem}\label{thm:bounded-squeezing-and-displacementNP}
We have that $\mathsf{CVLH}_{\mathcal S_r^{\mathsf{exp}}}$ is in $\mathbf{NTIME}(n^{O(r)})$.
\end{theorem}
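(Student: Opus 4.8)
The plan is to follow the same strategy as the proof of \cref{thm:bounded-squeezing-and-displacement}, but to observe that when the stellar rank $r$ is small the reduction to a finite-dimensional Hamiltonian produces a matrix whose dimension fits within the time budget $n^{O(r)}$, so that the entire verification can be carried out deterministically on each nondeterministic branch, without a quantum witness or a quantum verifier. Concretely, I would let the nondeterministic machine guess a classical description of the Gaussian unitary $G$ realizing the optimal state $\ket\psi = G\ket c$ guaranteed by \cref{thm:core}. By the Euler--Bloch--Messiah decomposition (\cref{thm:EBMdecomp}), $G = V\bigotimes_i G_i U$ with $U,V$ passive and $G_i = S(\xi_i)D(\delta_i)$ single-mode. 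Since the optimization ranges over $\mathcal S_r^{\mathsf{exp}(n)}$, \cref{prop:boundG} (which relies on \cref{conj:stellar-bound}) gives $|\xi_i|\le O(\log((r+1)E)) = O(\mathsf{poly}(n))$ and $|\delta_i|\le O(\mathsf{poly}(r,E))$, which with $E = \mathsf{exp}(n)$ is at most exponential in magnitude. Hence each parameter admits a description of polynomially many bits up to inverse-exponential precision, and the whole unitary $G$ has a polynomial-size classical description that the machine can guess on a single branch.

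Given $G$, the machine then builds the finite-dimensional Hamiltonian $\tilde H := \Pi_r G^\dagger H G \Pi_r$ exactly as in the proof of \cref{thm:bounded-squeezing-and-displacement}. The conjugated Hamiltonian $G^\dagger H G$ retains constant degree, and its normal-form coefficients $\alpha''_{\boldsymbol\mu,\boldsymbol\nu}$ are computable in $\mathsf{poly}(n)$ time from the affine action of $G$ on the creation and annihilation operators. The projector $\Pi_r$ onto the Fock sector of total particle number at most $r$ has rank $\binom{n+r-1}{r} = n^{O(r)}$, so $\tilde H$ is an explicit matrix of dimension $n^{O(r)}$, each of whose $n^{O(r)}$ entries $\alpha''_{\boldsymbol\mu,\boldsymbol\nu}\bra{\Bf m}a^{\boldsymbol\mu}a^{\dagger\boldsymbol\nu}\ket{\Bf n}$ is computable in $\mathsf{poly}(n)$ time. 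The machine then computes $\lambda_{\min}(\tilde H)$ to inverse-polynomial precision by any standard diagonalization routine, which runs in time polynomial in the matrix dimension, i.e., $n^{O(r)}$, and accepts iff $\lambda_{\min}(\tilde H)\le a$. Correctness follows because $\inf_{G\in\mathcal G}\lambda_{\min}(\Pi_r G^\dagger H G \Pi_r) = E_r(H)$, established exactly as in the $\mathbf{QMA}$ proof. As a sanity check, for constant $r$ this gives $\mathbf{NTIME}(\mathsf{poly}(n)) = \mathbf{NP}$, which together with the $\mathbf{NP}$-hardness of \cref{rem:any-stellar-rank-is-np-hard} yields $\mathbf{NP}$-completeness.

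The main obstacle is controlling precision and parameter magnitudes under the $\mathsf{exp}(n)$ energy budget. Because the displacement parameters may be exponentially large, the entries of $\tilde H$ may themselves be exponentially large, so I would need a careful numerical analysis showing that the coefficients $\alpha''_{\boldsymbol\mu,\boldsymbol\nu}$ and the matrix elements can be evaluated, and the eigenvalue extracted, to enough bits that the promise gap $b-a\ge 1/\mathsf{poly}(n)$ is resolved, all while keeping the bit-complexity within $n^{O(r)}$. This mirrors the norm bound $\norm{\tilde H}\le\mathsf{poly}(n)$ argument in the proof of \cref{thm:bounded-squeezing-and-displacement}, which there used polynomially bounded displacement; here the same computation goes through with the relevant quantities replaced by their exponential counterparts, which remain representable in polynomially many bits. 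The nondeterministic guess of $G$ plays the role that the quantum witness played in \cref{thm:bounded-squeezing-and-displacement}, and the classical diagonalization replaces the quantum Hamiltonian-simulation verifier, which is possible precisely because the dimension $n^{O(r)}$ now lies within the allotted nondeterministic running time.
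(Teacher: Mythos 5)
Your proposal is correct and follows essentially the same route as the paper's proof: nondeterministically guess the classical description of the Gaussian unitary, conjugate the Hamiltonian, project onto the stellar-rank-$r$ Fock sector to obtain an explicit $n^{O(r)}$-dimensional matrix with at most exponentially large entries, and diagonalize classically within the nondeterministic time budget. The precision concern you flag is handled in the paper by the observation that eigenvalues of polynomial-size matrices can be computed to exponential accuracy in polynomial time, which is exactly the resolution your analysis calls for.
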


\begin{proof}
To show that the problem is $\mathbf{NP}$, we let the prover provide Gaussian parameters to the verifier. Then, the verifier can efficiently compute the conjugated Hamiltonian as in \eqref{eq:new-H}. Note that the mapping \eqref{eq:mapping-params-after-conjugation} now allows $|\xi_i|\leq \mathsf{poly}(n)$ and $|d_i|\leq \mathsf{exp}(n)$, which yields in the new coefficients (i.e., elements of $\mathbf h'_k$ in \eqref{eq:new-H}) to be at most exponentially large. We are then left with computing the minimum eigenvalue of $\Pi_r H' \Pi_r$ with $r$ being the stellar rank. This matrix is of $n^{O(r)} = \mathsf{poly}(n)$ size, and has entries that are at most $\mathsf{exp}(n)$ large. Therefore, we can compute its eigenvalues with inverse polynomial precision in polynomial time.
\end{proof}

Note that the reason that we are fine considering exponential energy in the $\mathbf{NP}$ algorithm is that one can compute eigenvalues of polynomial size matrices with exponential accuracy in polynomial time. However, this is not the case for the local Hamiltonian problem, since requiring exponentially small precision increases the complexity from $\mathbf{QMA}$ to $\mathbf{PSPACE}$ \cite{deshpande2022importance}.

\subsubsection{Optimization over states of arbitrary stellar rank}

In the most general setting, we are interested to know the hardness of estimating the ground energy over all states, i.e., when $r=\infty$ with no bound on average particle number, i.e., over the states in $\mathcal H= \mathcal S_\infty^\infty$. As stated in the following theorem, this problem is undecidable.

\begin{theorem}\label{thm:undecidability}
$\mathsf{CVLH}^8_{\mathcal H}$ is $\mathbf{RE}$-hard, and therefore undecidable.
\end{theorem}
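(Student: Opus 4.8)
The plan is to reduce an undecidable Diophantine satisfiability problem to $\mathsf{CVLH}^8_{\mathcal H}$ by encoding solvability over the non-negative integers into the ground energy of a Fock-diagonal bosonic Hamiltonian, following the observation of \cite{kieu2003quantum} that Diophantine satisfiability can be read off from the spectrum of operators built from number operators. I would start from the Matiyasevich--Davis--Putnam--Robinson theorem in its degree-reduced form \cite{jones58three}: there is a fixed universal polynomial family for which deciding, given an integer parameter $w$, whether $\exists\,\mathbf x\in\mathbb N^{M}$ with $P(w,\mathbf x)=0$ is $\mathbf{RE}$-complete, and $P$ may be taken of degree $4$ in its integer unknowns. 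The key preliminary step is to rewrite this equation as a non-negative sum of squares of quadratics: introducing auxiliary unknowns for products (e.g.\ $u=x_ix_j$, pinned by the degree-$2$ relation $u-x_ix_j=0$) reduces any polynomial equation over $\mathbb N$ to a system $Q_1=\dots=Q_L=0$ with each $Q_j$ of degree $2$, and the system is solvable over $\mathbb N$ iff the original equation is. Hence $P(w,\mathbf x)=0$ is solvable exactly when $R_w(\mathbf x):=\sum_{j=1}^L Q_j(w,\mathbf x)^2$ has a non-negative integer zero.

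Next I would substitute number operators for the integer unknowns. Using $N_i=(X_i^2+P_i^2-I)/2$, define
\[ H_w \;=\; \sum_{j=1}^{L} Q_j(w,N_1,\dots,N_M)^2 . \]
Each $Q_j$ is quadratic, so $Q_j(w,\mathbf N)$ has degree $4$ in the canonical operators and its square degree $8$; thus $H_w$ is a polynomial Hamiltonian of degree $8$. The operators $N_1,\dots,N_M$ commute and are simultaneously diagonalised by the Fock states $\ket{\mathbf n}$, $\mathbf n\in\mathbb N^M$, with
\[ H_w\ket{\mathbf n} \;=\; \Big(\sum_{j=1}^{L} Q_j(w,\mathbf n)^2\Big)\ket{\mathbf n}. \]
In particular $H_w\succeq 0$, and since it is diagonal in a complete orthonormal basis its lowest energy over all of $\mathcal H$ equals $\inf_{\mathbf n\in\mathbb N^M}\sum_j Q_j(w,\mathbf n)^2$. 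Because the $Q_j$ have integer coefficients this infimum is a non-negative integer: it is $0$ (attained at the corresponding Fock state) precisely when $R_w$ has a zero, and $\ge 1$ otherwise. Taking thresholds $a=\tfrac13$, $b=\tfrac23$ (a constant gap) then turns a decision of $\mathsf{CVLH}^8_{\mathcal H}$ into a decision of Diophantine solvability, so $\mathbf{RE}$-hardness follows as the map $w\mapsto H_w$ is computable.

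The step I expect to require the most care is reconciling this construction with the normalisation demanded by \cref{def:cvlh}, namely $|h_{\boldsymbol\mu,\boldsymbol\nu}|=O(1)$, while the parameter $w$ is an arbitrarily large integer whose bit-length is the actual input size. Inserting $w$ as a literal constant produces coefficients of magnitude $\mathrm{poly}(w)$, i.e.\ exponential in the input length. I would instead treat $w$ as one of the quadratic-system unknowns (so that each $Q_j(w,\mathbf N)$ stays quadratic in the number operators) and pin its value using bounded-coefficient gadgets on auxiliary modes: a doubling chain enforcing $N_{k+1}=2N_k$ through penalties $(N_{k+1}-2N_k)^2$, a base case $(N_0-I)^2$, and a linear read-out $\big(N_w-\sum_{k\in S}N_k\big)^2$ selecting the known binary digits $S$ of $w$. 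Each such penalty is the square of an operator of degree $2$ in $X,P$, hence of degree $4$ with $O(1)$ coefficients, well within the degree-$8$ budget; folding them into $H_w$ forces a Fock state to have zero energy iff it both pins the registers to $w$ and solves $P(w,\mathbf x)=0$, at the cost of $O(\log w)=\mathrm{poly}(\text{input})$ extra modes. A final routine point is to confirm that the spectral argument is legitimate for the unbounded $H_w$: this is immediate because $H_w$ is a non-negative polynomial in the commuting, essentially self-adjoint number operators on the Schwartz domain, so its spectrum is exactly $\{\sum_j Q_j(w,\mathbf n)^2:\mathbf n\in\mathbb N^M\}$ and its infimum is the quantity analysed above.
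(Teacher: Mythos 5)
Your proposal is correct and rests on the same core reduction as the paper: following Kieu, you encode Diophantine solvability over $\mathbb N$ into a Fock-diagonal Hamiltonian built from number operators, whose spectrum is exactly the image of the polynomial on $\mathbb N^M$ and therefore has a unit gap between the solvable case (ground energy $0$) and the unsolvable case (ground energy $\ge 1$). Where you genuinely diverge is in the decomposition used to control the degree. The paper squares the degree-$4$ universal polynomial directly and forms $P^2(N_1,\dots,N_n)$; since each $N_i$ has degree $2$ in $X,P$, that operator has degree $16$ under the paper's own convention (degree counted in the canonical operators, as in \cref{def:cvlh}), and the stated bound $d=8$ only matches if one counts degree in the number operators. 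Your quadratization step --- introducing auxiliary unknowns for products so that the Diophantine condition becomes a system of degree-$\le 2$ equations $Q_j=0$, and then taking $\sum_j Q_j(\mathbf N)^2$ --- makes each summand a square of a degree-$4$ operator in $X,P$, so you obtain $d=8$ honestly against the paper's definition; as a bonus, the quadratization works for any starting degree, so you do not really need the degree-$4$ form of Jones's theorem. Your second refinement, the binary doubling chain pinning the parameter $w$ so that all coefficients stay $O(1)$, addresses a point the paper's proof passes over silently: substituting $w$ literally produces coefficients exponential in the input length, which violates the normalization in \cref{def:cvlh}. The cost of your route is a slightly larger construction (polynomially many auxiliary modes and register modes) where the paper's is a one-line substitution; what it buys is a reduction that actually lands inside the problem class as formally defined.
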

This result implies the uncomputability of the ground energy of a generic polynomial bosonic Hamiltonian of constant degree. The proof of this result relies on a reduction from Hilbert's Nullstellensatz problem over integers, which we will denote as $\mathsf{HN}(\mathbb Z)$. An instance of $\mathsf{HN}(\mathbb Z)$ is to decide whether or not a polynomial with integer coefficients has a solution over integers. We refer to \cref{sec:constraint-satisfaction} for a more detailed discussion on this problem.

\begin{proof}[Proof of \cref{thm:undecidability}]
Let $P\in\mathbb Z[x_1,\cdots,x_n]$ be a polynomial with integer coefficients. Then, $P$ has a solution over integers, if and only if $\min_{\mathbf{x}\in\mathbb Z^n}P^2(\mathbf{x})=0$. The latter problem is also the same as deciding $\min_{\mathbf{s}\in\{\pm1\}^n,\mathbf{x}\in\mathbb Z_+^n } P^2(s_1 x_1,\cdots, s_n x_n)=0$. Therefore, if one can decide `satisfiability' over non-negative integers, then, they can apply that algorithm for all possible $\mathbf s\in\{-1,+1\}^n$ and decide satisfiability over all integers. 

It now remains to reduce the problem of minimization of a polynomial over $\mathbb Z_+^n$ to $\mathsf{CVLH}$. Following \cite{kieu2003quantum}, for any polynomial $Q\in\mathbb Z_+[x_1,\cdots,x_n]$, the spectrum of the operator
\begin{align}
H_Q:= Q(N_1,\cdots,N_n),
\end{align}
with $N_i$ being number operators, is the same as the image of $Q$. Hence, $\lambda_{\min}(H_Q)=\min_{\mathbf{x}\in\mathbb Z_+^n} Q(\mathbf{x})$. Regarding the spectral gap of $H_Q$, one should note that the image of $Q$ in contained within $\mathbb Z$. Hence, the spectral gap of $H_Q$ is bounded below by $1$. Setting $Q=P^2$ concludes the proof for constant degree.

As it turns out, there exist undecidable polynomial equations over non-negative integers with degree $4$ and constant number of unknowns \cite{jones58three}, so the above construction also implies that $\mathsf{CVLH}^8_{\mathcal H}$ is $\mathbf{RE}$-hard.
\end{proof}

Note that this proof is still valid under constant energy bound, and also implies that $\mathsf{CVLH}^8_{\mathcal F}$ is $\mathbf{RE}$-hard, where $\mathcal F$ is the set of Fock states (indeed, we can enumerate over natural numbers to put the problem inside $\mathbf{RE}$, and hence, $\mathsf{CVLH}^8_{\mathcal F}$ is $\mathbf{RE}$-complete).

Now we extend the undecidability result to the boundedness problem:
\begin{corollary}\label{corol:boundedness}
The boundedness problem over the entire CV Hilbert space, denoted by $\mathsf{HBound}_{\mathcal H}^d$ for $d\geq10$ is undecidable. It is in fact $\mathbf{co}$-$\mathbf{RE}$-hard.
\end{corollary}

\begin{proof}
We reduce the ground energy problem to the boundedness problem, and apply \cref{thm:undecidability}. To do so, let $H$ be a Hamiltonian whose ground energy is either less than $a$ or larger than $b$. Let $c = (a+b)/2$, and consider the following Hamiltonian
\begin{align}
H' = N \otimes (H-c).
\end{align}
Note that:
\begin{align}
\begin{cases}
H' \,\text{is unbounded from below if} &\quad \lambda_{\min}(H) \leq a,\\
H' \,\text{is bounded from below if} &\quad \lambda_{\min}(H) \geq b,
\end{cases}
\end{align}
which simply reduces the ground energy problem to boundedness.
\end{proof}

\subsection{Continuous-variable history state construction}
\label{sec:history}

This section discusses prospects and challenges in defining a continuous-variable history state construction, with the aim of proving \textbf{CVQMA}-completeness of variants of the continuous-variable local Hamiltonian problem \textsf{CVLH}.

The celebrated \textbf{QMA}-completeness result for the local Hamiltonian problem (see, e.g., \cite{kempe20033,kempe2006complexity}) is based on the so-called history state construction: given a \textbf{BQP} verifier circuit, we design a Hamiltonian such that its ground state contains the history of the steps of the computation. We have an extra register to encode time, and the ground state is $\frac{1}{\sqrt{T+1}} \sum_{t = 0}^T \ket{t}\ket{\psi_t}$, where $\ket{\psi_t}$ is the quantum state at time step $t$ of the \textbf{QMA} protocol. The Hamiltonian $H = H_{\mathrm{in}} + H_{\mathrm{evol}} + H_{\mathrm{final}}$ consists of three main ingredients: $H_{\mathrm{in}}$ initializes the verification protocol at time $t = 0$ (e.g., setting up ancillas), $H_{\mathrm{evol}}$ simulates the verification circuit for time $0\le t \le T$, and $H_{\mathrm{final}}$ simulates the final measurement. We can view this construction as the quantum generalization of the Cook--Levin reduction for \textbf{NP}-completeness of 3-\textsf{SAT}. 

Let us discuss an attempt in constructing a history of computation for $\mathbf{CVBQP}$. Recall from \cref{def:CVBQPfixednG} that such a computation consists of a sequence of unitary gates generated by constant-degree polynomial Hamiltonians $H_1,\dots,H_T$ applied to the vacuum, followed by a measurement of a single-mode polynomial Hamiltonian $H_{\mathrm{meas}}$ of constant degree. Similar to \cite{kempe2006complexity} we define
\begin{equation} 
H = H_{\mathrm{in}} + H_{\mathrm{evol}} + H_{\mathrm{final}}.
\end{equation} 
We also define a clock register $c$ to store time steps in the Fock basis. For instance, the time step $t = 1$ is captured by a single boson in this mode and so on. We now reflect on ways to construct $H_{\mathrm{in}}$ and $H_{\mathrm{final}}$ and point out the first fundamental difference with discrete variables. In the discrete-variable construction we can simply project onto $\ket{t = 0}_c$ and set the initial ancillae to zeros by defining $H_{\mathrm{in}} := \ket{0}_c\!\bra{0} \otimes H_0$, where $H_0 = \ket{0\cdots 0}_{\mathrm{ancillae}}\bra{0\cdots 0}$. In discrete variables, we can simply use linear operators such as $(I + Z)/2$ to define these projectors, but in the continuous-variable case, the projector $\ket{0}\!\bra{0}$ is not a polynomial in the position and momentum operators. For the final Hamiltonian we have the same problem: it is tempting to define $H_{\mathrm{final}} := \ket{T}_c\!\bra{T} \otimes H_{\mathrm{meas}}$, but $\ket{T}_c\!\bra{T}$ is not a polynomial in the position and momentum operators.

Instead, we may set $H_0=\sum_{i \in \text{ancillae}} N_i$ in the continuous-variable case to set the ancillae to the vacuum, but it is challenging to derive an unambiguous counter using polynomial Hamiltonians, since no polynomial Hamiltonian apart from constant ones is a projector.

Assuming this hurdle can be overcome, for instance based on graph gadgets as in \cite{childs2014bose}, the next difficulty lies in defining $H_{\mathrm{evol}}$. Up to a rescaling, the construction in \cite{kempe2006complexity} uses $H_{\mathrm{evol}} = \sum_{t = 0}^{T-1} H_{\mathrm{evol}}^{(t)}$, where $H^{(t)}_{\mathrm{evol}} = \ket{t+1}_c\! \bra{t+1} + \ket{t}_c \!\bra{t} - \ket{t+1}_c\! \bra{t} \otimes u_t - \ket{t}_c \!\bra{t+1} \otimes u^\dagger_t$, and $u_t$ is the $t^{th}$ gate in the circuit. In the continuous-variable case, the terms $u_t$ would not be polynomial Hamiltonians, even for gates as simple as displacement. We leave a detailed construction to future work.

\bibliography{ref}

\clearpage

\appendix

\begin{center}
{\Large Appendix}
\end{center}

In this appendix, we provide proofs of technical results that are used throughout the manuscript.

\section{Moments of Gaussian states}
\label{app:innerstellar}

Below we provide analytic expressions for inner products involving Gaussian states based on their stellar function.

\begin{lemma}\label{lem:gaussian-generator}
Let $F^\star_\psi(z) = \mathcal N^{-1} \exp(-\frac{a}2z^2 + bz)$. It is the case that
\begin{align}
\langle F^\star_\psi, e^{ta} e^{s a^\dagger}  F^\star_\psi \rangle = \exp\left( \frac{-\frac12 a t^2 + st - \frac12 \conj a s^2 + (b-a\conj b) t + (\conj b - \conj a b)s}{1-|a|^2} \right).
\end{align}
\end{lemma}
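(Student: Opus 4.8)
The plan is to carry out the computation directly in the Segal--Bargmann picture, where the creation and annihilation operators act as $a^\dagger\leftrightarrow\times z$ and $a\leftrightarrow\partial_z$ (as recalled in \cref{sec:stellar_rank}). The first step is to determine the action of $e^{ta}e^{sa^\dagger}$ on the stellar function. Since $e^{sa^\dagger}$ is multiplication by $e^{sz}$ and $e^{ta}=e^{t\partial_z}$ is the shift operator $g(z)\mapsto g(z+t)$ (by Taylor's theorem, valid on entire functions), applying the rightmost operator first gives
\begin{equation}
\left(e^{ta}e^{sa^\dagger}F^\star_\psi\right)(z)=e^{s(z+t)}\,F^\star_\psi(z+t)=\mathcal N^{-1}\exp\!\left(s(z+t)-\tfrac a2(z+t)^2+b(z+t)\right).
\end{equation}
The non-commutativity $[a,a^\dagger]=I$ is accounted for automatically by respecting this ordering (multiply, then shift), so no auxiliary Zassenhaus factor is needed.

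Next I would write the inner product as a Gaussian integral against $d\mu(z)=\tfrac1\pi e^{-|z|^2}d^2z$, using $\conj{F^\star_\psi(z)}=\mathcal N^{-1}\exp(-\tfrac{\conj a}2\conj z^2+\conj b\,\conj z)$. Expanding $(z+t)^2$ and collecting powers of $z$ and $\conj z$, the integrand becomes $\exp(Q)$ with
\begin{equation}
Q=-\tfrac{\conj a}2\conj z^2-\tfrac a2 z^2-z\conj z+\conj b\,\conj z+(s+b-at)z+\big(st+bt-\tfrac a2t^2\big),
\end{equation}
i.e.\ a quadratic form $-\tfrac12(z,\conj z)M(z,\conj z)^T$ with $M=\begin{pmatrix}a&1\\1&\conj a\end{pmatrix}$, a linear term with coefficients $u:=s+b-at$ on $z$ and $v:=\conj b$ on $\conj z$, and an $s,t$-dependent constant that I pull out of the integral.

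The key analytic ingredient is the master complex Gaussian integral: for $|\mu|<1$,
\begin{equation}
\int_{\mathbb C}\frac{d^2z}{\pi}\exp\!\left(-|z|^2+\tfrac\mu2 z^2+\tfrac{\conj\mu}2\conj z^2+uz+v\conj z\right)=\frac{1}{\sqrt{1-|\mu|^2}}\exp\!\left(\frac{\tfrac{\conj\mu}2u^2+\tfrac\mu2 v^2+uv}{1-|\mu|^2}\right),
\end{equation}
applied with $\mu=-a$. I would justify this by passing to real coordinates $z=x+iy$ and evaluating the resulting real $2\times2$ Gaussian integral. This is the main obstacle: getting the prefactor and sign conventions right, and recognizing that the hypothesis $|\mu|<1$ (here $1-|a|^2>0$) is not an extra assumption but precisely the condition guaranteeing both convergence/positive-definiteness and that $F^\star_\psi$ is a legitimate normalizable Gaussian state.

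Finally, the normalization $\mathcal N$ is fixed by $\langle F^\star_\psi,F^\star_\psi\rangle=1$, equivalently by specializing the computation to $s=t=0$ (where $u=b$, $v=\conj b$). Dividing by this normalizing value cancels both $\mathcal N$ and the $(1-|a|^2)^{-1/2}$ prefactor, leaving a ratio of two exponentials. The remaining task is algebraic: put the pulled-out constant $st+bt-\tfrac a2 t^2$ over the common denominator $1-|a|^2$, expand $u^2-b^2=(s-at)(s+2b-at)$, and collect by degree in $s$ and $t$. A direct check shows the $v^2$ terms cancel against $-\tfrac a2\conj b^2$ and all $|a|^2$-weighted cross terms cancel, yielding
\begin{equation}
\langle F^\star_\psi,e^{ta}e^{sa^\dagger}F^\star_\psi\rangle=\exp\!\left(\frac{-\tfrac12 a t^2+st-\tfrac12\conj a s^2+(b-a\conj b)t+(\conj b-\conj a b)s}{1-|a|^2}\right),
\end{equation}
as claimed. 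I expect this final simplification to be routine but bookkeeping-heavy, and would organize it strictly by monomial in $s,t$ to keep it transparent.
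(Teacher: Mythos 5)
Your proof is correct and follows essentially the same route as the paper's: both reduce the claim to a single complex Gaussian integral over $d\mu$, evaluate it by passing to real coordinates (your ``master formula'' with $\mu=-a$ is exactly what the paper obtains after its phase rotation $\wt z=e^{i\theta_a/2}z$ and real/imaginary split), and fix $\mathcal N$ by specializing to $s=t=0$. The only cosmetic difference is that the paper moves $e^{ta}$ onto the bra as $e^{ta^\dagger}$, producing a symmetric integrand with linear terms $(s+b)z+(t+\conj b)\conj z$, whereas you realize $e^{ta}=e^{t\partial_z}$ as a shift on the ket, producing the asymmetric linear terms $u=s+b-at$, $v=\conj b$ plus a pulled-out constant; your final bookkeeping (which I have checked monomial by monomial in $s,t$) lands on the stated formula.
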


\begin{proof}
Let $G^{(s)}_\psi := e^{sa^\dagger} F^\star_\psi$, which implies $G^{(s)}_\psi(z) = e^{sz} F^\star_\psi(z)$. Note that we have
\begin{align}
\langle F^\star_\psi, e^{ta} e^{sa^\dagger} F^\star_\psi\rangle = \langle G^{(t)}_\psi, G^{(s)}_\psi\rangle.
\end{align}
Using this identity, we can compute the quantity of our interest as follows
\begin{align}
\langle F^\star_\psi, e^{ta} e^{sa^\dagger} F^\star_\psi\rangle = \mathcal N^{-2}\int \exp\left( (t+\conj b)\conj z + (s+b) z -\frac{a}{2} z^2 - \frac{\conj a}2 \conj z^2 \right) \, \mathrm d\mu
\end{align}
Now, for the integration, let $\wt{z}:=e^{i\theta_a/2} z$ where $\theta_a$ is the complex argument satisfying $a = e^{i\theta_a}|a|$. This means that $\frac{a}2 z^2 = \frac{|a|}2 \wt{z}^2$, while keeping the measure $\mathrm d\mu$ unchanged i.e., $\mathrm d\mu = \frac{1}{\pi}e^{-|\wt{z}|^2} \mathrm d^2\wt{z}$. Consequently, we get
\begin{align}\label{eq:gaussian-generator}
\begin{split}
\langle F^\star_\psi, e^{ta} e^{sa^\dagger} F^\star_\psi\rangle &= \mathcal N^{-2} \int \exp\left( -\frac{|a|}2 (\wt z^2 + \conj{\wt z}^2) + (s+b)e^{-i\theta_a/2}\wt z + (t+\conj b)e^{i\theta_a/2} \conj{\wt z} \right) \mathrm d\mu\\
&= \mathcal N^{-2} \left(\frac{1}{\sqrt{\pi}} \int \exp\left( -(1+|a|)x^2 + \alpha_+ x \right) \, \mathrm dx\right) \\
& \qquad\quad\times\left(\frac{1}{\sqrt{\pi}} \int \exp\left( -(1-|a|)y^2+\alpha_- y \right) \, \mathrm dy\right)\\
&= \mathcal N^{-2}  \frac{1}{\sqrt{1-|a|^2}} \exp\left( \frac{\alpha_+^2}{4(1+|a|)} + \frac{\alpha_-^2}{4(1-|a|)}\right),
\end{split}
\end{align}
where $\alpha_+ = (b+s) e^{-i\theta_a/2} + (\conj{b} + t) e^{i\theta_a/2}$, and $\alpha_- = i(b+s) e^{- i\theta_a/2} - i (\conj b +t)e^{i \theta_a/2}$. Note that the expression above for $s=t=0$ must be zero, as $F^\star_\psi$ corresponds to a valid state (of unit norm). This gives
\begin{align}
1 = \mathcal N^{-2} \frac{1}{\sqrt{1-|a|^2}} \exp(\frac{|b|^2 - \frac12 (\conj a b^2 + a \conj b^2)}{1-|a|^2}).
\end{align}
Plugging the normalization constant $\mathcal N$ into Eq.~\eqref{eq:gaussian-generator} gives
\begin{align}
\langle F^\star_\psi, e^{ta} e^{sa^\dagger} F^\star_\psi\rangle = \exp\left( \frac{st + bt+\conj b s -\frac12(at^2 + \conj a s^2 + 2\conj a b s+ 2a\conj b t)}{1-|a|^2} \right).
\end{align}
Rearranging the terms gives the desired formula.
\end{proof}

\begin{remark}
Note that as a byproduct, the proof above gives
\begin{align}
\mathcal N = \frac{1}{(1-|a|^2)^{1/4}} \exp\left(  
\frac{|b|^2 - \frac12 (\conj a b^2 + a \conj b^2)}{2(1-|a|^2)}
\right)
\end{align}
for the normalized state $F^\star_\psi(z) = \mathcal N^{-1} \exp(-az^2+bz)$.
\end{remark}

Note that the above lemma allows us to compute expectation values of polynomials as 
\begin{align}
\langle F^\star_\psi, (a^\dagger)^k a^j F^\star_\psi \rangle &= \partial_s^k \partial_t^l \left(\langle F^\star_\psi, e^{ia^\dagger s} e^{ita} F^\star_\psi \rangle \right)|_{s,t=0}.
\end{align}

The following corollary summarizes a few of the results that can be obtained by this approach.

\begin{corollary}\label{corol:gaussian-expectations}
Let $F^\star_\psi(z) = \mathcal N^{-1} \exp(-\frac12 az^2+bz)$. We have the following
\begin{enumerate}
\item $\langle F^\star_\psi, a F^\star_\psi\rangle = \frac{b-a\conj b}{1-|a|^2}$,
\item $\langle F^\star_\psi, a^2 F^\star_\psi\rangle = \frac{-a}{1-|a|^2} + \frac{(b-a\conj b)^2}{(1-|a|^2)^2}$,
\item $\langle F^\star_\psi, aa^\dagger F^\star_\psi\rangle = \frac{1}{1-|a|^2} + \frac{|b-a\conj b|^2}{(1-|a|^2)^2}$.
\end{enumerate}
\end{corollary}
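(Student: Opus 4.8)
The final statement to prove is \cref{corol:gaussian-expectations}, which computes three specific low-order moments of a single-mode Gaussian state given its stellar function $F^\star_\psi(z) = \mathcal N^{-1}\exp(-\tfrac12 a z^2 + bz)$. The plan is to obtain all three expectation values as partial derivatives of the moment-generating function established in \cref{lem:gaussian-generator}, evaluated at the origin. This is the natural route since the lemma already packages every normally-ordered moment $\langle F^\star_\psi, (a^\dagger)^k a^j F^\star_\psi\rangle$ into derivatives of the closed-form generator.

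First I would record the generator in a convenient form. Writing $g(s,t) := \langle F^\star_\psi, e^{ta}e^{sa^\dagger}F^\star_\psi\rangle$, \cref{lem:gaussian-generator} gives
\begin{equation}
g(s,t) = \exp\!\left(\frac{-\tfrac12 a t^2 + st - \tfrac12\conj a s^2 + (b-a\conj b)t + (\conj b - \conj a b)s}{1-|a|^2}\right).
\end{equation}
The key observation is that $e^{sa^\dagger}$ sits to the \emph{left} of $e^{ta}$ in $g$, so differentiating once in $s$ at $s=t=0$ pulls down one creation-operator insertion acting from the left, and differentiating in $t$ inserts an annihilation operator. For item (1), since $\langle F^\star_\psi, a F^\star_\psi\rangle = \partial_t g|_{0}$, I would compute $\partial_t g|_0 = \frac{b-a\conj b}{1-|a|^2}$, because only the linear-in-$t$ coefficient survives at the origin and $g(0,0)=1$. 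Item (2) requires $\langle F^\star_\psi, a^2 F^\star_\psi\rangle = \partial_t^2 g|_0$; by the chain rule for the exponential, this equals the second derivative of the exponent plus the square of its first $t$-derivative, yielding $\frac{-a}{1-|a|^2} + \frac{(b-a\conj b)^2}{(1-|a|^2)^2}$.

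Item (3) is the only one requiring care about operator ordering. The quantity $\langle F^\star_\psi, a a^\dagger F^\star_\psi\rangle$ has $a$ on the left and $a^\dagger$ on the right, i.e. \emph{anti}-normal order, whereas the generator $g(s,t)$ produces $e^{ta}$ on the left and $e^{sa^\dagger}$ in between, so the mixed derivative $\partial_s\partial_t g|_0$ computes $\langle F^\star_\psi, a\, a^\dagger F^\star_\psi\rangle$ exactly as written (the left factor $a$ from $\partial_t$, the right factor $a^\dagger$ from $\partial_s$). Carrying out $\partial_s\partial_t g|_0$ gives the cross term $\frac{1}{1-|a|^2}$ from the $st$ coefficient in the exponent, plus the product of the two single derivatives $\frac{(b-a\conj b)(\conj b - \conj a b)}{(1-|a|^2)^2} = \frac{|b-a\conj b|^2}{(1-|a|^2)^2}$, where I would verify that $\conj b - \conj a b = \conj{(b - a\conj b)}$ to identify the product with the squared modulus. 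The main obstacle, modest as it is, lies in tracking this ordering convention correctly so that the $+1$ in the commutator $[a,a^\dagger]=I$ appears with the right sign; the algebra itself is routine differentiation of an exponential-quadratic, so beyond the bookkeeping there is no genuine difficulty.
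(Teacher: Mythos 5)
Your proposal is correct and follows exactly the paper's route: the corollary is obtained by differentiating the moment generator of \cref{lem:gaussian-generator} at $s=t=0$, with the mixed derivative $\partial_s\partial_t g|_0$ giving the anti-normally ordered moment $\langle F^\star_\psi, a a^\dagger F^\star_\psi\rangle$ and the identity $\conj b-\conj a b=\conj{(b-a\conj b)}$ closing item (3). One small wording slip: early on you say $e^{sa^\dagger}$ sits to the left of $e^{ta}$, whereas in $\langle F^\star_\psi, e^{ta}e^{sa^\dagger}F^\star_\psi\rangle$ it is $e^{ta}$ that acts last (i.e.\ on the left); your later discussion of item (3) gets this right, and the computations are unaffected.
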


\begin{lemma}\label{lem:generatingFunction-multimode}
Let $F^\star_\psi(\Bf z) = \mathcal N^{-1} \exp(-\frac12 \Bf z^T \Bf A \Bf z + \Bf b^T \Bf z)$. We have
\begin{align}
\langle F^\star_\psi, e^{\sum_i t_i a_i} e^{\sum_i s_i a_i^\dagger} F^\star_\psi\rangle =  \exp(-\frac12
\begin{pmatrix}
\Bf s^T & \Bf t^T
\end{pmatrix}
\Sigma
\begin{pmatrix}
\Bf s\\
\Bf t
\end{pmatrix}
+ \Bf c^T \Bf t + \conj{\Bf c}^t \Bf s)
\end{align}
where
\begin{align}
\Sigma = \begin{pmatrix}
\Bf U^\dagger & 0\\
0 & \Bf U^T
\end{pmatrix}
\begin{pmatrix}
\frac{\Bf D}{1-\Bf D^2} & -\frac{1}{1-\Bf D^2}\\
-\frac{1}{1-\Bf D^2} & \frac{\Bf D}{1-\Bf D^2}
\end{pmatrix}
\begin{pmatrix}
\conj{\Bf U} & 0\\
0 & \Bf U
\end{pmatrix}, \Bf c = \Bf U^T \frac{1}{1-\Bf D^2} (\conj{\Bf U} \Bf b - \Bf D \Bf U \conj{\Bf b}),
\end{align}
and $\Bf A = \Bf U^T \Bf D \Bf U$ is the Takagi factorization of $\Bf A$ \cite{horn_johnson_1985}.
\end{lemma}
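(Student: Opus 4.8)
The plan is to reduce the multimode generating function to the single-mode case already handled in Lemma~\ref{lem:gaussian-generator} by diagonalizing the complex symmetric matrix $\Bf A$ via its Takagi factorization $\Bf A = \Bf U^T \Bf D \Bf U$. The crucial observation is that the passive linear operator associated to $\Bf U$ does not change the Gaussian measure $\mathrm d\mu$ and acts in a controlled way on the annihilation and creation operators. First I would perform a change of variables $\Bf w = \Bf U \Bf z$, under which the quadratic form becomes diagonal, $\Bf z^T \Bf A \Bf z = \Bf w^T \Bf D \Bf w = \sum_i D_i w_i^2$, while the linear term $\Bf b^T \Bf z = (\conj{\Bf U}^{-1}\Bf b)^T \cdots$ must be tracked carefully. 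Since $\Bf U$ is unitary and the Gaussian measure is rotationally invariant, the integral factorizes into a product of $n$ single-mode integrals, each of the form already evaluated in the proof of Lemma~\ref{lem:gaussian-generator}.

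The key steps, in order, are: (i) write the multimode integral explicitly as $\langle F^\star_\psi, e^{\sum_i t_i a_i} e^{\sum_i s_i a_i^\dagger} F^\star_\psi\rangle = \mathcal N^{-2}\int \exp(\cdots)\, \mathrm d\mu(\Bf z)$, where the exponent collects the quadratic terms in $\Bf z$ and $\conj{\Bf z}$ and the linear terms involving $\Bf b$, $\Bf s$, $\Bf t$; (ii) diagonalize using the Takagi substitution, being careful that conjugation of $\Bf U \Bf z$ introduces $\conj{\Bf U}\conj{\Bf z}$, which is why the final covariance $\Sigma$ has the block structure with $\conj{\Bf U}$ and $\Bf U^T$ on the two sides; (iii) carry out the $n$ decoupled single-mode Gaussian integrals, each producing a factor $\frac{1}{\sqrt{1-D_i^2}}$ times an exponential quadratic in the shifted linear parameters; (iv) reassemble the product into matrix form and identify the covariance matrix $\Sigma$ and the shift vector $\Bf c$; and (v) fix the normalization $\mathcal N$ by imposing that the expression equals $1$ at $\Bf s = \Bf t = 0$, exactly as in the single-mode remark. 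Matching the scalar blocks $\frac{D_i}{1-D_i^2}$ and $-\frac{1}{1-D_i^2}$ from the single-mode computation then yields the stated $2n\times2n$ block matrix after conjugating by $\mathrm{diag}(\Bf U^\dagger, \Bf U^T)$ and $\mathrm{diag}(\conj{\Bf U}, \Bf U)$.

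The main obstacle I expect is bookkeeping the interplay between holomorphic and antiholomorphic variables under the unitary change of basis: the operator $a_i^\dagger$ acts as multiplication by $z_i$ while $a_i$ acts as $\partial_{z_i}$, so the substitution $\Bf z \mapsto \Bf U^{-1}\Bf w$ transforms the two families of parameters $\Bf s$ and $\Bf t$ asymmetrically, and the complex conjugate $\conj{\Bf z}$ picks up $\conj{\Bf U}$ rather than $\Bf U$. Getting the placement of $\Bf U$, $\Bf U^T$, $\conj{\Bf U}$, and $\Bf U^\dagger$ correct in both $\Sigma$ and $\Bf c$ is where errors are most likely to creep in, and verifying the precise form of $\Bf c = \Bf U^T \frac{1}{1-\Bf D^2}(\conj{\Bf U}\Bf b - \Bf D \Bf U \conj{\Bf b})$ requires consistently applying the identity $\Bf A = \Bf U^T \Bf D \Bf U$ together with $\Bf U \Bf U^\dagger = \mathbb I$. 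Once this is pinned down by checking agreement with Lemma~\ref{lem:gaussian-generator} in the $n=1$ case (where $\Bf U$ is a phase and $\Bf D = |a|$), the remaining manipulations are routine Gaussian integration and the result follows.
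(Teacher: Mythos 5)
Your proposal follows essentially the same route as the paper's proof: write the inner product as an explicit Gaussian integral, diagonalize the quadratic form with the Takagi substitution $\wt{\Bf z}=\Bf U\Bf z$ (which preserves the measure), factorize into single-mode integrals handled by \cref{lem:gaussian-generator}, and fix $\mathcal N$ by evaluating at $\Bf s=\Bf t=0$. The steps and the anticipated bookkeeping of $\Bf U$, $\conj{\Bf U}$, $\Bf U^T$ match the paper's argument, so the plan is sound.
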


\begin{proof}
Similar to the proof of \cref{lem:gaussian-generator}, we define $G^{(\Bf s)}_\psi := e^{\sum_i s_i a_i^\dagger} F^\star_\psi$, and write
\begin{align}
\langle F^\star_\psi, e^{\sum_i t_i a_i} e^{\sum_i s_i a_i^\dagger} F^\star_\psi\rangle &= \langle G^{(\Bf t)}_\psi, G^{(\Bf s)}_\psi\rangle\\
&= \mathcal N^{-2} \int \exp(-\frac12\Bf z^T \Bf A \Bf z - \frac12\conj{\Bf z}^T \conj{\Bf A} \conj{\Bf z}+ (\Bf b + \Bf s)^T \Bf z + (\conj{\Bf b} + \Bf t)^T\conj{\Bf z} ) \mathrm d\mu.
\end{align}
Using Takagi's factorization theorem \cite{horn_johnson_1985}, we have $\Bf A = \Bf U^T \Bf D \Bf U$, for a unitary matrix $\Bf U$, and a diagonal non-negative real matrix $\Bf D$. Note that the fact that $F^\star_\psi\in L^2(\mu)$ implies that $\Bf D \leq \mathbb I$. Now, let $\wt{\Bf z} := \Bf U \Bf z$. This transformation preserves the measure. We get
\begin{align}
\langle F^\star_\psi, e^{\sum_i t_i a_i} e^{\sum_i s_i a_i^\dagger} F^\star_\psi\rangle = \mathcal N^{-2} \prod_{i=1}^n \int \exp(-\frac12 d_i (z_i^2 + \conj{z}_i^2) + (b'_i + s'_i) z + (\conj{b'}_i+ t'_i)\conj z_i) \mathrm d\mu,
\end{align}
where $\Bf b' = \conj{\Bf U} \Bf b$, $\Bf s' = \conj{\Bf U} \Bf s$, and $\Bf t' = \Bf U \Bf t$, and $d_i$ are the diagonal elements of $D$. Now, using the result of \cref{lem:gaussian-generator}, we get
\begin{align}
\langle F^\star_\psi, e^{\sum_i t_i a_i} e^{\sum_i s_i a_i^\dagger} F^\star_\psi\rangle = \mathcal{N}^{-2} \prod_{i=1}^n \frac{1}{\sqrt{1-d_i^2}} \exp( \frac{(\alpha^{(i)}_+)^2}{4(1+d_i)} + \frac{(\alpha_-^{(i)})^2}{4(1-d_i)} ),
\end{align}
where $\alpha^{(i)}_+ = 2\Re(b_i') + s'_i + t'_i$, and $\alpha^{(i)}_- = -2\Im(b_i') + is'_i - i t'_i$. Using the fact that the value of the expression above should be $1$ when $s=t=0$, we get
\begin{align}
1 =  \frac{\mathcal N^{-2}}{\sqrt{\det(1-\Bf{D}^2)}}  \exp(\sum_{i=1}^n \frac{|b'_i|^2 - \frac{d_i}2(b'_i{}^2 + \conj{b'}_i^2) }{1-d_i^2}),
\end{align}
which gives
\begin{align}
\mathcal N = \frac{\exp(\frac12 \conj{\Bf b}^T \Bf U^T \frac{1}{1-\Bf D^2} \conj{\Bf U} \Bf b - \frac14 \Bf b^T \Bf U^\dagger \frac{\Bf D}{1-\Bf D^2} \conj{\Bf U} \Bf b - \frac14 \conj{\Bf b}^T \Bf U^T \frac{\Bf D}{1-\Bf D^2} \Bf U \Bf b)} {(\det(1-\Bf D^2))^{1/4}}
\end{align}
Plugging this into the above formula results in
\begin{align}
\begin{split}
\langle F^\star_\psi, e^{\sum_i t_i a_i} e^{\sum_i s_i a_i^\dagger} F^\star_\psi\rangle = \exp\Bigg(&\Bf s^T \Bf U^\dagger \frac{1}{1-\Bf D^2} \Bf U \Bf t + \Bf b^T \Bf U^\dagger \frac{1}{1-\Bf D^2} \Bf U \Bf t + \conj{\Bf b}^T \Bf U^T \frac{1}{1-\Bf D^2} \conj{\Bf U} \Bf s  \\
&\,-\frac12\big(  \Bf t^T \Bf U^T \frac{\Bf D}{1-\Bf D^2} \Bf U \Bf t + \Bf s^T \Bf U^\dagger \frac{\Bf D}{1-\Bf D^2} \conj{\Bf U}\Bf s\\
&+ 2 \Bf b^T \Bf U^\dagger \frac{\Bf D}{1-\Bf D^2} \conj{\Bf U} \Bf s + 2 \conj{\Bf b}^T \Bf U^T \frac{\Bf D}{1-\Bf D^2} \Bf U \Bf t\big) \Bigg)
\end{split}
\end{align}
\end{proof}

A direct consequence of this formula is the following corollary, which implies \cref{corol:gaussian-deg2-exp} in the main text.

\begin{corollary}\label{corol:gaussian-deg2-exp2}
Let $F^\star_\psi(z) = \exp(-\frac12\Bf z^T\Bf A \Bf z + \Bf b^T \Bf z)$, where $\Bf A = \Bf U^T \Bf D \Bf U$. It is the case that
\begin{enumerate}
\item $(\langle F^\star_\psi, a_i F^\star_\psi\rangle)_i =  \Bf c$,
\item $(\langle F^\star_\psi, a_i a_j F^\star_\psi\rangle)_{ij} = \Bf U^T \frac{-\Bf D}{1- \Bf D^2} \Bf U + \Bf c \Bf c^T$,
\item $(\langle F^\star_\psi, a_ja_i^\dagger F^\star_\psi\rangle)_{ij} = \Bf U^T \frac{1}{1-\Bf D^2} \conj{\Bf U} + \Bf c^T \conj{\Bf c}$,
\item For degree-4 terms that conserve the number of particles we have
\begin{align}
\begin{split}
\langle F^\star_\psi, a_ja_ia_j^\dagger a_i^\dagger F^\star_\psi\rangle =& (\Bf U^\dagger \frac{1}{1-\Bf D^2} \Bf U)_{jj} (\Bf U^\dagger \frac{1}{1-\Bf D^2} \Bf U)_{ii} + \left| \left(  \Bf U^\dagger \frac{1}{1-\Bf D^2} \Bf U\right)_{ij} \right|^2 + \left| \left(  \Bf U^\dagger \frac{D}{1-\Bf D^2} \conj{\Bf U}\right)_{ij} \right|^2\\
& -2\Re\left( \Bf c_i \Bf c_j\left( \Bf U^\dagger \frac{\Bf D}{1-\Bf D^2}\conj{\Bf U} \right)_{ij} \right)
 + 2\Re\left( \Bf c_i  \conj{\Bf c_j} \left( \Bf U^T \frac{1}{1-\Bf D^2} \conj{\Bf U}\right)_{ji}\right)\\
 &+ |\Bf c_i|^2 \left( \Bf U^\dagger \frac{1}{1-\Bf D^2} \Bf U\right)_{jj}
 + |\Bf c_j|^2 \left( \Bf U^\dagger \frac{1}{1-\Bf D^2} \Bf U\right)_{ii} + |\Bf c_i|^2 \cdot |\Bf c_j|^2
\end{split}
\end{align}
\end{enumerate}
where $\Bf c = \Bf U^T \frac{1}{1-\Bf D^2} (\conj{\Bf U} \Bf b - \Bf D \Bf U \conj{\Bf b})$.
\end{corollary}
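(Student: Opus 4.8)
The plan is to obtain every moment in \cref{corol:gaussian-deg2-exp2} by differentiating the Gaussian generating function established in \cref{lem:generatingFunction-multimode}. Write
\begin{align}
G(\Bf s,\Bf t) := \langle F^\star_\psi, e^{\sum_i t_i a_i}\, e^{\sum_i s_i a_i^\dagger} F^\star_\psi\rangle = \exp\!\left(-\tfrac12\begin{pmatrix}\Bf s^T & \Bf t^T\end{pmatrix}\Sigma\begin{pmatrix}\Bf s\\\Bf t\end{pmatrix} + \Bf c^T\Bf t + \conj{\Bf c}^T\Bf s\right).
\end{align}
Because all annihilation operators commute among themselves and likewise for the creation operators, expanding the two exponentials and matching orders shows that for any multi-indices one has
\begin{align}
\langle F^\star_\psi,\, a_{i_1}\cdots a_{i_k}\, a_{j_1}^\dagger\cdots a_{j_l}^\dagger\, F^\star_\psi\rangle = \left.\partial_{t_{i_1}}\!\cdots\partial_{t_{i_k}}\,\partial_{s_{j_1}}\!\cdots\partial_{s_{j_l}}\, G(\Bf s,\Bf t)\right|_{\Bf s=\Bf t=0}.
\end{align}
This is precisely the operator ordering (all $a$'s to the left of all $a^\dagger$'s) appearing in the corollary, including the number-conserving degree-$4$ term $a_j a_i a_j^\dagger a_i^\dagger$. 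So the entire statement is a differentiation exercise applied to $G$, where the matrix $\Sigma$ is read off in its four blocks $\Sigma^{(ss)}=\Bf U^\dagger\frac{\Bf D}{1-\Bf D^2}\conj{\Bf U}$, $\Sigma^{(st)}=-\Bf U^\dagger\frac{1}{1-\Bf D^2}\Bf U$, $\Sigma^{(ts)}=(\Sigma^{(st)})^T$, and $\Sigma^{(tt)}=\Bf U^T\frac{\Bf D}{1-\Bf D^2}\Bf U$.

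For items 1--3 I would simply take first and second derivatives. The first derivative gives $\partial_{t_i}G|_0=c_i$, yielding item 1. For a second mixed derivative, since $G=e^Q$ with $Q$ of degree $\le 2$, one has $\partial_x\partial_y G|_0=(\partial_x\partial_y Q + \partial_x Q\,\partial_y Q)|_0$: the first term contributes the relevant block of $-\Sigma$ and the second contributes the outer product of the corresponding pieces of $\Bf c$. Thus $\langle a_i a_j\rangle=-\Sigma^{(tt)}_{ij}+c_i c_j$ gives item 2, and the $t$--$s$ cross derivative gives item 3 (up to the evident transcription of $\Bf c$ into the correct rank-one matrix; I would double-check the index convention there, as the scalar-looking term $\Bf c^T\conj{\Bf c}$ in the statement should be read as the appropriate outer product).

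The only substantive computation is item 4, the degree-$4$ expectation. Here I would invoke the multivariate Faà di Bruno / Wick--Isserlis structure for the fourth derivative of the exponential of a quadratic-plus-linear form: $\partial_{t_i}\partial_{t_j}\partial_{s_i}\partial_{s_j}e^{Q}|_0$ is a sum over all pairings of the four differentiated variables, where each pair contributes a second-derivative of $Q$ (an entry of $-\Sigma$) and each unpaired variable contributes a first-derivative of $Q$ (an entry of $\Bf c$ or $\conj{\Bf c}$). Enumerating the pairings of $\{t_i,t_j,s_i,s_j\}$ and substituting the four blocks of $\Sigma$ together with $\Bf c=\Bf U^T\frac{1}{1-\Bf D^2}(\conj{\Bf U}\Bf b-\Bf D\Bf U\conj{\Bf b})$ reproduces each of the six terms in the stated formula: the two fully contracted terms give the products of $\Bf U^\dagger\frac{1}{1-\Bf D^2}\Bf U$ entries and the squared-modulus terms, the mixed terms (two contracted, two means) give the $\Re$-expressions, and the fully disconnected term gives $|\Bf c_i|^2|\Bf c_j|^2$. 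The main obstacle is purely organizational: keeping track of complex conjugation (which block pairs with which, i.e.\ $s$ versus $t$ derivatives) and of which indices are transposed, so that the six Wick terms land exactly on the six terms written in the corollary. I would carry this out by first writing the generic Wick expansion abstractly and only then substituting the explicit blocks, to minimize index errors.
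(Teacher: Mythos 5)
Your proposal is correct and follows exactly the paper's route: the paper derives \cref{corol:gaussian-deg2-exp2} as a direct consequence of the generating function in \cref{lem:generatingFunction-multimode}, with moments obtained by taking $\partial_{\Bf t}$ and $\partial_{\Bf s}$ derivatives at zero, precisely the Wick-type differentiation of $e^{Q}$ you describe. Your side remark about the index/conjugation convention in item 3 (the term written $\Bf c^T\conj{\Bf c}$ should indeed be read as the outer product $\conj{\Bf c_i}\Bf c_j$ matching the $(i,j)$ entry) is a fair catch of a notational slip in the statement rather than a gap in your argument.
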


\section{Commutation relations for powers of position and momentum operators}
\label{app:commutation}

Below, we provide the general commutation relation between arbitrary powers of position and momentum operators $X$ and $P$, which leads to a proof of \cref{prop:SOS}.

\begin{lemma}\label{lem:app-commutation}
We have the following
\begin{align}
[X^\mu,P^\nu] = -\sum_{\lambda=1}^{\min(\mu,\nu)} (-i)^{\lambda}\frac{\mu! \nu!}{(\mu-\lambda)!(\nu-\lambda)!\lambda!}  \, X^{\mu-\lambda} P^{\nu-\lambda}.
\end{align}
\label{lem:recurs}
\end{lemma}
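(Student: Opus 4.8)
The plan is to prove the commutation formula
\begin{equation}
[X^\mu,P^\nu] = -\sum_{\lambda=1}^{\min(\mu,\nu)} (-i)^{\lambda}\frac{\mu! \nu!}{(\mu-\lambda)!(\nu-\lambda)!\lambda!}  \, X^{\mu-\lambda} P^{\nu-\lambda}
\end{equation}
by induction, using the basic derivation-type identities already established earlier in the excerpt, namely $[P,X^k]=-ikX^{k-1}$ and $[X,P^k]=ikP^{k-1}$, together with the Leibniz rule for commutators. I find it cleanest to fix $\nu$ and induct on $\mu$ (by symmetry one could do either). The base case $\mu=1$ reads $[X,P^\nu]=i\nu P^{\nu-1}$, which matches the claimed right-hand side since only the $\lambda=1$ term survives ($\min(1,\nu)=1$) and it equals $-(-i)\frac{1!\,\nu!}{0!(\nu-1)!1!}P^{\nu-1}=i\nu P^{\nu-1}$.

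For the inductive step I would write $[X^{\mu+1},P^\nu]=X[X^\mu,P^\nu]+[X,P^\nu]X^\mu$, expand the first commutator using the inductive hypothesis, expand $[X,P^\nu]=i\nu P^{\nu-1}$ in the second, and then reorder everything into normal form $X^{a}P^{b}$. The reordering is where the real work lies: in the term $[X,P^\nu]X^\mu = i\nu P^{\nu-1}X^\mu$ the powers of $P$ sit to the left of the powers of $X$, so I must commute them past each other using $P^{\nu-1}X^\mu = X^\mu P^{\nu-1} + [P^{\nu-1},X^\mu]$, and the correction commutator itself must be expanded — this is exactly an instance of the formula I am proving, at strictly smaller indices, so the induction is self-consistent. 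After collecting the coefficient of each monomial $X^{\mu+1-\lambda}P^{\nu-\lambda}$ from all three sources (the $X\cdot(\text{hypothesis})$ contribution, the leading $i\nu$ term, and the reordering corrections), I would verify that the binomial-type coefficients combine via the Pascal-like identity
\begin{equation}
\frac{\mu!}{(\mu-\lambda)!(\nu-\lambda)!\lambda!}\nu! + \frac{\mu!\,\nu!}{(\mu-\lambda+1)!(\nu-\lambda+1)!(\lambda-1)!}\cdot(\text{shift factor}) = \frac{(\mu+1)!\,\nu!}{(\mu+1-\lambda)!(\nu-\lambda)!\lambda!}
\end{equation}
to yield precisely the claimed coefficient at $\mu+1$.

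The main obstacle I anticipate is bookkeeping rather than conceptual difficulty: making the reordering step airtight requires carefully tracking which factor of $(-i)$ comes from the hypothesis versus from the elementary $[X,P^\nu]$ relation, and ensuring the index shifts in $\lambda$ align so that the coefficients telescope correctly. A cleaner alternative that sidesteps some of this is to prove the identity through the adjoint/derivation picture: represent $\mathrm{ad}_X = [X,\cdot]$ acting on the polynomial algebra, note that on monomials in $P$ it acts as $i\,\partial_P$ (the formal derivative, since $[X,P^k]=ikP^{k-1}$), and then $[X^\mu,P^\nu]$ can be computed from the operator expansion of $(\text{multiplication by }X^\mu)$ interacting with $i\partial_P$; the Leibniz structure then produces the sum over $\lambda$ directly as a finite Taylor-like expansion with the $\frac{1}{\lambda!}$ weights appearing naturally. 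I would likely present the induction as the primary proof for rigor and mention the derivation viewpoint as the conceptual explanation. Finally, once the commutation formula holds, deducing \cref{prop:SOS} — that any multimode Hamiltonian can be written in the symmetrized form $\sum_{\boldsymbol\mu,\boldsymbol\nu}h_{\boldsymbol\mu,\boldsymbol\nu}\{X^{\boldsymbol\mu},P^{\boldsymbol\nu}\}$ with real coefficients — follows by using the formula to rewrite any monomial $X^{\boldsymbol\mu}P^{\boldsymbol\nu}$ as an anticommutator plus strictly lower-degree terms, then inducting on total degree to absorb the lower-order remainders, with reality of the coefficients coming from the Hermiticity of $H$ and the fact that $\{X^{\boldsymbol\mu},P^{\boldsymbol\nu}\}$ is self-adjoint.
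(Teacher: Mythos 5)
Your overall strategy — peel one factor of $X$ off, apply the inductive hypothesis, normal-order, and check a Pascal-type identity on the coefficients — is the same as the paper's, and the coefficient recursion you would obtain does telescope to the claimed closed form. However, there is one structural flaw in the induction as you have set it up. You split $[X^{\mu+1},P^\nu]=X[X^\mu,P^\nu]+[X,P^\nu]X^\mu$ and then reorder the second term via $P^{\nu-1}X^\mu = X^\mu P^{\nu-1}-[X^\mu,P^{\nu-1}]$. That correction is the full formula at $(\mu,\nu-1)$, i.e.\ at a \emph{smaller value of $\nu$}, which is not covered by an induction hypothesis in which $\nu$ is held fixed and only $\mu$ decreases. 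As literally stated, your induction is circular; you would need to reorganize it as an induction on $\mu+\nu$ (or a double induction), which is what your phrase ``strictly smaller indices'' implicitly appeals to but your setup does not deliver.

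The paper avoids this entirely by peeling the $X$ off on the other side: it writes $[X^\mu,P^\nu]=[X^{\mu-1},P^\nu]X+X^{\mu-1}[X,P^\nu]$, so the only reordering needed is pushing a single $X$ leftward through $P^{\nu-\lambda}$, and the correction is the \emph{elementary} commutator $[P^{\nu-\lambda},X]=-i(\nu-\lambda)P^{\nu-\lambda-1}$ rather than another instance of the full formula. This keeps the argument a clean single induction on $\mu$ with $\nu$ fixed, with the recursion $\alpha_\lambda^{\mu,\nu}=\alpha_\lambda^{\mu-1,\nu}-i(\nu-\lambda+1)\alpha_{\lambda-1}^{\mu-1,\nu}$ verified directly against the closed form. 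I would recommend either adopting that splitting or explicitly changing your induction variable to the total degree $\mu+\nu$; with either repair your argument goes through, and your remarks on the derivation/adjoint viewpoint and on deducing the anticommutator normal form are consistent with what the paper does in its subsequent lemmas.
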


\begin{proof}
We prove this inductively.
For $\mu=1$ we have
\begin{align}
[X,P^\nu] = i\nu P^{\nu-1},
\end{align}
which is what we get from \cref{lem:recurs} as well (same for $\nu=1$ and arbitrary $\mu$). Now, assume the relation works for $\mu-1,\nu$ (where $m=\min(\mu,\nu)>2$), we have
\begin{align}
\begin{split}
[X^{\mu}, P^\nu] &= [X^{\mu-1}X, P^{\nu}]\\
&= [X^{\mu-1}, P^{\nu}] X + X^{\mu-1} [X,P^{\nu}]\\
&= [X^{\mu-1},P^\nu]X + i\nu X^{\mu-1} P^{\nu-1}
\end{split}
\end{align}
where $m:=\min(\mu,\nu)$. Now let $[X^{\mu},P^\nu] = \sum_{\lambda} \alpha_{\lambda}^{\mu,\nu} X^{\mu-\lambda} P^{\nu-\lambda}$. We have
\begin{align}
\begin{split}
[X^\mu,P^\nu] &= \sum_{\lambda=1}^{m-1}  \alpha_{\lambda}^{\mu-1,\nu} X^{\mu-\lambda-1} P^{\nu-\lambda} X + i\nu X^{\mu-1} P^{\nu-1}\\
&= \sum_{\lambda=1}^{m-1} \alpha_{\lambda}^{\mu-1,\nu} X^{\mu-\lambda} P^{\nu-\lambda} + \sum_{\lambda=1}^{m-1} \alpha_{\lambda}^{\mu-1,\nu} X^{\mu-\lambda-1} [P^{\nu-\lambda}, X] + i\nu X^{\mu-1} P^{\nu-1}\\
&= \sum_{\lambda=1}^{m-1} \alpha^{\mu-1,\nu}_\lambda X^{\mu-\lambda} P^{\nu-\lambda} - i \sum_{\lambda=2}^{m} (\nu-\lambda+1)\alpha_{\lambda-1}^{\mu-1,\nu} X^{\mu-\lambda} P^{\nu-\lambda-1} + i\mu X^{\mu-1} P^{\nu-1},
\end{split}
\end{align}
which gives us the following recursion formula
\begin{align}
\alpha_{\lambda}^{\mu,\nu} = \begin{cases}
\alpha_{\lambda}^{\mu-1,\nu} + i\nu & \text{if } \lambda=1\\
\alpha_{\lambda}^{\mu-1,\nu} -i (\nu-\lambda+1) \alpha_{\lambda-1}^{\mu-1,\nu} & \text{if }\lambda\geq 2.
\end{cases}
\end{align}
Using our hypothesis, we have $\alpha_{1}^{\mu,\nu} = i\mu\nu$, which clearly satisfies the recursion above. For $\lambda\geq 2$, we have
\begin{align}
\begin{split}
\alpha_{\lambda}^{\mu-1,\nu} -i(\nu-\lambda+1) \alpha_\lambda^{\mu-1,\nu} &= -(-i)^{\lambda}\frac{(\mu-1)!\nu!}{(\mu-\lambda-1)!(\nu-\lambda)!\lambda!}\\
&\qquad\qquad- (-i)^{\lambda} (\nu-\lambda+1)\frac{(\mu-1)!\nu!}{(\mu-\lambda)! (\nu-\lambda+1)!(\lambda-1)!} \\
&= -(-i)^{\lambda} \frac{(\mu-1)!\nu!}{(\mu-\lambda)!(\nu-\lambda)!\lambda!}\left( \mu-\lambda+\lambda \right)\\
&= -(-i)^{\lambda} \frac{\mu!\nu!}{(\mu-\lambda)!(\nu-\lambda)!\lambda!}\\
&= \alpha^{\mu,\nu}_\lambda,
\end{split}
\end{align}
which shows that our formula works.
\end{proof}

We now provide the proof of \cref{prop:SOS}. We first show the result for a single mode in the next lemma. Indeed, \cref{lem:coef-cannonical} shows that we can express commutations in terms of anti-commutations. Given that normal forms are sufficient for expressing polynomial Hamiltonians, we can conclude that anti-commutators also form a basis for polynomial Hamiltonians.

\begin{lemma}\label{lem:coef-cannonical}
It is the case that
\begin{align}
[X^s,P^r] = i\sum_{m<s, n<r} \alpha_{m,n}^{(r,s)} \{X^m,P^n\},
\end{align}
where $\alpha_{m,n}^{(r,s)}$ are all positive and can be computed in time $O(r^3\cdot s^3)$.
\end{lemma}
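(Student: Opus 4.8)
The plan is to prove \cref{lem:coef-cannonical} by induction, reducing the commutator $[X^s,P^r]$ to a combination of anti-commutators. First I would recall from \cref{lem:app-commutation} that the commutator already has a clean normal-form expansion:
\begin{align}
[X^s,P^r] = -\sum_{\lambda=1}^{\min(s,r)} (-i)^\lambda \frac{s!\,r!}{(s-\lambda)!(r-\lambda)!\lambda!}\, X^{s-\lambda}P^{r-\lambda}.
\end{align}
So the content of the lemma is really that each normal-form monomial $X^mP^n$ appearing here can be re-expressed, up to lower-order corrections, in terms of anti-commutators $\{X^{m'},P^{n'}\}$ with the claimed positivity of coefficients. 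The key structural fact is that $\{X^m,P^n\} = 2X^mP^n + (\text{lower-degree terms})$, where the correction comes from commuting half of $P^n$ past $X^m$; this is itself governed by \cref{lem:app-commutation}. Thus the map sending the anti-commutator basis $\{X^m,P^n\}$ to the normal-form monomial basis $X^mP^n$ is \emph{triangular} with respect to total degree (and in fact with respect to the partial order on $(m,n)$), with diagonal entries equal to $2$.

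The heart of the proof is to invert this triangular change of basis. Concretely, I would set up a double induction on the degree $m+n$ (or lexicographically on $(s,r)$): assuming every commutator $[X^{s'},P^{r'}]$ with $(s',r')$ strictly smaller has already been written in the anti-commutator form with positive coefficients, I would take the leading monomial $X^{s-\lambda}P^{r-\lambda}$ from the \cref{lem:app-commutation} expansion, replace it by $\tfrac12\{X^{s-\lambda},P^{r-\lambda}\}$ minus its lower-order commutator correction, and fold those corrections back into the inductive hypothesis. Tracking the factor $i$ carefully is what produces the stated ``$[X^s,P^r] = i\sum \alpha_{m,n}^{(r,s)}\{X^m,P^n\}$'' with purely \emph{real} (indeed positive) $\alpha$: the $(-i)^\lambda$ prefactors and the $i$ arising in each anti-commutator correction combine so that every surviving coefficient is a positive rational. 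I would verify positivity by checking that at each inductive step the sign contributions never cancel against one another — this follows because the coefficients in \cref{lem:app-commutation} all carry the same sign pattern $-(-i)^\lambda$, and the triangular inversion of a matrix whose off-diagonal entries have a consistent sign preserves positivity.

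The complexity bound $O(r^3 s^3)$ is then a bookkeeping claim: there are $O(rs)$ monomials of the form $X^mP^n$ with $m\le s$, $n\le r$, and computing each coefficient $\alpha_{m,n}^{(r,s)}$ via the recursion touches $O(rs)$ previously computed coefficients, each update being an $O(1)$ arithmetic operation on rationals of polynomially bounded size, giving $O(r^2s^2)$; the extra factor reflects a crude bound on the cost of the rational arithmetic (or on re-deriving the \cref{lem:app-commutation} coefficients), so $O(r^3s^3)$ is comfortably sufficient. I expect the main obstacle to be the positivity assertion rather than the existence of the decomposition: establishing that the inductive corrections never flip a sign requires identifying the correct invariant (the uniform sign of the \cref{lem:app-commutation} coefficients together with the fact that each anti-commutator replacement introduces exactly one more factor of $i$), and I would want to state and prove a small sign-lemma isolating this before running the main induction. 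Once positivity is pinned down, the extension to the multimode statement of \cref{prop:SOS} follows by applying the single-mode identity mode-by-mode, since the canonical operators on distinct modes commute and anti-commutators factorize across the tensor structure.
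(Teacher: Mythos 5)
Your route --- expanding $[X^s,P^r]$ in normal form via \cref{lem:app-commutation} and then inverting the triangular change of basis $X^mP^n\mapsto\tfrac12\{X^m,P^n\}+\tfrac12[X^m,P^n]$ --- is genuinely different from the paper's proof, which instead peels off one factor of $X$ at a time and, crucially, symmetrizes at each step using anti-Hermiticity of the commutator (replacing the expression by half of itself minus its adjoint, which turns the leftover asymmetric terms back into anti-commutators and lower-order commutators handled by the inductive hypothesis). Along your route the existence of the decomposition and the realness of the coefficients are fine, but the positivity argument has a genuine gap: the sign-lemma you propose (``the sign contributions never cancel against one another'' because the \cref{lem:app-commutation} coefficients carry a consistent sign) is false. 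Those coefficients $-(-i)^\lambda\frac{s!\,r!}{(s-\lambda)!(r-\lambda)!\lambda!}$ alternate between purely imaginary ($\lambda$ odd) and purely real ($\lambda$ even), while each replacement of $X^mP^n$ injects one extra factor of $i$ through the inductive expansion of the inner commutator; as a result, each target anti-commutator receives real contributions of \emph{opposite} signs, which must cancel exactly (and do, by anti-Hermiticity) rather than ``never cancel.''

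Concretely, for $s=r=3$, \cref{lem:app-commutation} gives
\begin{align}
[X^3,P^3]=9i\,X^2P^2+18\,XP-6i\,I.
\end{align}
Substituting $X^2P^2=\tfrac12\{X^2,P^2\}+2i\,XP+I$ turns the $\lambda=1$ term into $\tfrac{9i}{2}\{X^2,P^2\}-18\,XP+9i\,I$, so the monomial $XP$ receives $-18$ from this correction and $+18$ from the direct $\lambda=2$ term: two opposite-signed real contributions that cancel exactly, leaving $[X^3,P^3]=i\bigl(\tfrac92\{X^2,P^2\}+\tfrac32\{I,I\}\bigr)$. Positivity of the surviving coefficients is therefore not a consequence of sign-consistency plus triangularity; along your route it is a more delicate combinatorial fact requiring a new argument (e.g.\ a closed form for the surviving imaginary parts). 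The paper's recursion avoids the issue entirely: after the adjoint-symmetrization step, every contribution to a new coefficient is either $s/2$, a previously computed $\alpha$, or a sum of products of two previously computed $\alpha$'s, so non-negativity propagates for free. (As a side remark, the same example shows $\alpha_{1,1}^{(3,3)}=0$, so ``all positive'' in the statement should really be ``all non-negative''; this affects both proofs equally.) Your complexity bookkeeping is fine and would carry over to either recursion.
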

\begin{proof}
The proof is, of course, by induction. Let us assume it holds for $(r,s)$, we show that it holds for $(r+1,s)$. A similar argument can be used to raise $s$.
\begin{align}
\begin{split}
[X^{r+1},P^s] &= [X\cdot X^{r}, P^s] = isP^{s-1} X^r + X\cdot [X^r,P^s]\\
&= isP^{s-1} X^r + \sum_{m<r,n<s} i\alpha^{(r,s)}_{m,n} (X^{m+1} P^n + X P^n X^m)\\
&= isP^{s-1} X^r - \sum_{m,n} \alpha^{(r,s)}_{m,n} n P^{n-1} X^m + i\sum_{m<r,n<s} \alpha^{(r,s)}_{m,n} (X^{m+1} P^n + P^n X^{m+1}).
\end{split}
\end{align}
As the left hand side in the equation above is anti-Hermitian, we can subtract its adjoint from it and devide by two, and still have the same operator. Thus
\begin{align}
\begin{split}
[X^{r+1},P^s] &= \frac{i}2s \{P^{s-1}, X^r\} + \frac12\sum_{m,n} \alpha^{(r,s)}_{m,n} [X^m,P^{n-1}] + i\sum_{m,n} \alpha^{(r,s)}_{m,n} \{X^{m+1},P^n\}\\
&= \frac{i}2s \{P^{s-1}, X^r\} + \frac{i}2\sum_{u<r-1,v<s-1}\{X^u,P^v\} \, \left(\sum_{u<m,v<n} \alpha^{(r,s)}_{m,n} \alpha^{(m,n)}_{u,v}\right)\\
&\quad + i \sum_{m<r,n<s}\alpha^{(r,s)}_{m,n} \{X^{m+1},P^n\}.
\end{split}
\end{align}
Note that this is a proof based on induction on $r+s$. The statement is trivial for $(r,s)=(0,1), (r,s)=(1,0)$ that are our base cases.
\end{proof}

We need one more step, provided in the following lemma.

\begin{lemma}\label{lem:basic-matrix}
Let $\{A_i:i\in[n]\}$, $\{B_i: i\in[n]\}$ be two sets of operators, and $\alpha,\beta\in\mathbb C$. We have that any operator of the form $\sum_{i} \sum_{S_i\in\{A_i,B_i\}} \alpha_{S_i,i} \bigotimes S_i $ can be expressed as a linear combination of tensor products of commutators and anti-commutators of $A_i,B_i$'s.
\end{lemma}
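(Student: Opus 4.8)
The plan is to prove the lemma by a mode-by-mode change of basis followed by multilinear expansion of the tensor product. First I would fix the meaning of the objects involved: the operators in question are linear combinations $\sum_{\vec S}\alpha_{\vec S}\bigotimes_{j=1}^n S_j$ of the $2^n$ elementary tensor products obtained by choosing, on each mode $j$, a factor $S_j\in\{A_j,B_j\}$. The relevant commutator and anti-commutator on each mode $i$ are the antisymmetric and symmetric combinations $D_i:=A_i-B_i$ and $C_i:=A_i+B_i$; in the setting where this lemma is applied one has $A_i=X_i^{\mu_i}P_i^{\nu_i}$ and $B_i=P_i^{\nu_i}X_i^{\mu_i}$, so that $C_i=\{X_i^{\mu_i},P_i^{\nu_i}\}$ is literally an anti-commutator and $D_i=[X_i^{\mu_i},P_i^{\nu_i}]$ a commutator. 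This is exactly the pairing in which a commutator reduces to anti-commutators via \cref{lem:coef-cannonical}.

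The key observation is that, on each single mode, $\mathrm{span}\{A_i,B_i\}=\mathrm{span}\{C_i,D_i\}$, the change of basis being invertible: $A_i=\tfrac12(C_i+D_i)$ and $B_i=\tfrac12(C_i-D_i)$. Fixing a choice vector $\vec S$ and writing $S_i=\tfrac12(C_i+\epsilon_i D_i)$ with $\epsilon_i=+1$ if $S_i=A_i$ and $\epsilon_i=-1$ if $S_i=B_i$, I would expand using multilinearity of the tensor product:
\begin{equation}
\bigotimes_{i=1}^n S_i=\frac1{2^n}\bigotimes_{i=1}^n(C_i+\epsilon_i D_i)=\frac1{2^n}\sum_{T\subseteq[n]}\Big(\prod_{i\in T}\epsilon_i\Big)\bigotimes_{i=1}^n E_i^{(T)},
\end{equation}
where $E_i^{(T)}=D_i$ for $i\in T$ and $E_i^{(T)}=C_i$ otherwise. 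Every summand $\bigotimes_i E_i^{(T)}$ is a tensor product of commutators and anti-commutators, so each elementary $\bigotimes_i S_i$ — and hence, by linearity in the coefficients $\alpha_{\vec S}$, the whole operator — is a linear combination of such tensor products. This already establishes the lemma, resting only on the invertible per-mode recombination and the multilinearity of $\otimes$.

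Finally, to see how this step feeds \cref{prop:SOS}, I would combine it with \cref{lem:coef-cannonical}: each commutator factor $D_i=[X_i^{\mu_i},P_i^{\nu_i}]$ equals $i$ times a real linear combination of strictly lower single-mode anti-commutators $\{X_i^{m},P_i^{n}\}$, so the tensor products of commutators and anti-commutators produced above collapse, after pulling out the factors of $i$, into real linear combinations of tensor products of single-mode anti-commutators. The part I expect to require the most care is purely bookkeeping: tracking that the accumulated powers of $i$ cancel against those from \cref{lem:coef-cannonical} to leave real coefficients (using Hermiticity of $H$), and — for \cref{prop:SOS} specifically, rather than for the present lemma — regrouping tensor products of single-mode anti-commutators into the genuine multimode anti-commutators $\{X^{\boldsymbol\mu},P^{\boldsymbol\nu}\}$. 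The lemma itself, however, is self-contained and needs nothing beyond the two ingredients above.
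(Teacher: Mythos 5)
Your proposal is correct and takes essentially the same route as the paper: the paper's (very terse) proof likewise sets $A_i+B_i$ and $A_i-B_i$ as a per-mode basis and invokes the fact that tensor products of these basis elements span $(\mathbb C^2)^{\otimes n}$, which is exactly your invertible change of basis $A_i=\tfrac12(C_i+D_i)$, $B_i=\tfrac12(C_i-D_i)$ followed by multilinear expansion. Your write-up simply makes the expansion over subsets $T\subseteq[n]$ explicit, and the additional remarks about feeding \cref{lem:coef-cannonical} into \cref{prop:SOS} match the paper's intended use of the lemma.
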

\begin{proof}
Let $\ket{+}:=A+B$ and $\ket- := A-B$. With a abuse of notation, we know that $\ket-,\ket+$ and their tensor products form a basis for the space $(\mathbb C^2)^{\otimes n}$. Using this simple fact, we conclude the result.
\end{proof}

\noindent Note that given both results above, \cref{prop:SOS} follows immediately.

\section{Position and momentum moments for the vacuum}
\label{app:XP-vaccuum-exp}

In this section, we compute an expression for $\langle0|X^a P^b|0\rangle$. To do this we evaluate the generating function corresponding to these moments
\begin{align}
    \langle0|e^{iX s} e^{i P t}|0\rangle = e^{-\frac{s^2 + t^2 - 2 ist}{4}}.
    \label{eq:gen-func}
\end{align}
To see this we evaluate $e^{i s P} \ket {0} = \ket {-i \frac{s}{\sqrt{2}}}$ and $e^{-i X t} \ket{0} = \ket{-\frac{t}{\sqrt{2}}}$ (coherent states). Therefore $\langle0|e^{iX t} e^{i P t}|0\rangle = \braket{-t/\sqrt{2}}{is/\sqrt{2}}$. We then use the formula for the inner product between coherent states
$$
\braket{\alpha}{\beta} = e^{\frac{|\alpha - \beta|^2 + (\conj \alpha \beta - \conj \beta\alpha)}{2}}.
$$
We note that Eq.~\eqref{eq:gen-func} corresponds to the characteristic function of coupled Gaussian variables with zero mean and covariance matrix $\Sigma = \begin{pmatrix}
    1/2 & - i/2\\
    -i/2 & 1/2
\end{pmatrix}$.
As a result, the moment $\langle0|X^a P^b|0\rangle$ is the $a , b$ moment of coupled Gaussian variables with covariance matrix $\Sigma$. However, $\Sigma$ is not PSD. 

For simplicity we first evaluate the case where $l=0$. Then
$$
\bra{0}X^k\ket{0} = i^{-k} \frac{d^k}{ds^k} (e^{-\frac{s^2}{4}})|_{s = 0}.
$$
In order to evaluate the derivatives in this expression, we will use Hermite's polynomials. The $n$'th Hermite polynomial is defined as 
$$
He_n (x) = (-1)^n e^{\frac{x^2}{2}} \frac{d^n}{dx^n} (e^{-\frac{x^2}{2}} ).
$$
We obtain
$$
\bra{0}X^k\ket{0} = 2^{-k/2} (-1)^{k/2}He_k (0).
$$
We use the following expression for the Hermite polynomials at point $0$:
\begin{align}
    He_n (0) = 
\begin{cases}
    0 & n \text{ is odd},\\
    (-1)^{n/2} (n-1)!! & n \text{ is even}.
\end{cases}
\label{eq:Hermite-0}
\end{align}
We observe that when $k$ is odd, the expectation value is $0$. For $k$ even we obtain
\begin{align}
\bra{0} X^k \ket{0} = 2^{-k/2} (k-1)!! = 2^{-(k-1)} \frac{(k-1)!}{(k/2-1)!}.
\label{eq:X-vacuum-expectation}
\end{align}
Using Stirling estimation we obtain a $(\frac{k}{2e})^{k/2}$ growth in the leading term for this quantity up to polynomial factors. 

Next we get back to the computation of the general moment $\bra{0}X^kP^l\ket{0}$. In order to compute this quantity we decouple:
$$
\frac{s^2 + t^2 - 2 ist}{4} = \frac 18 (1-i) (s+t)^2 + \frac 18 (1+i) (s-t)^2 ,
$$
and define $X = s + t$ and $Y = s - t$. Therefore $\partial_s = \partial_X + \partial_Y$ and $\partial_t = \partial_X - \partial_Y$. Using these changes of variables 
\begin{align}
    M_{k,l} (s,t) := i^{- (k+l)}\partial^k_s \partial^l_t (e^{-\frac{s^2 + t^2 - 2 ist}{4}}) = (\partial_X + \partial_Y)^k (\partial_X - \partial_Y)^l e^{\frac 18 (1-i) X^2 + \frac 18 (1+i) Y^2 }
\end{align}
The $\bra{0}S^k P^l\ket{0}$ moment is equal to $M_{k,l} (0,0)$.
In order to evaluate this sum we use the Kravchuk polynomials:
\begin{align}
(\partial_X + \partial_Y)^{k} (\partial_X - \partial_Y)^l =  \sum_{m =0}^{k+l} K^{(k+l)}_m (l) \partial_X^{k+l-m}\partial_Y^m,
\end{align}
where the $m^{th}$ Kravchuk polynomial is defined as 
$$
K^{(n)}_m (l) = \sum_{j = 0}^{m} (-1)^j \binom lj\binom{n - l}{m - j},
$$
whenever $n\geq l$ and $n\geq m$.
For a review of the properties of these polynomials, see \cite{stanton1980some}. As a result,
$$
M_{k,l} (s,t) = i^{- (k+l)} \sum_{m =0}^{k + l} K^{(k+l)}_m (l) \partial_X^{k+l-m}[e^{-\frac 18 (1-i) X^2}] \partial_Y^m[e^{-\frac 18 (1+i) Y^2 }].
$$
Using Hermite polynomials we obtain
$$
\frac{\partial^n}{\partial X^n} (e^{-\frac{1}{8} (1-i) X^2}) = (-1)^n \epsilon^n_1 He_n(\epsilon_1 X) (e^{-\frac{1}{8} (1-i) X^2}),
$$
and
$$
\frac{\partial^n}{\partial Y^n} (e^{-\frac{1}{8} (1+i) Y^2}) = (-1)^n \epsilon^n_2 He_n(\epsilon_2 Y) (e^{-\frac{1}{8} (1+i) Y^2}),
$$
where $\epsilon_1 = \frac{1}{2^{3/4}}e^{-i 3/8 \pi}$ and $\epsilon_2 = \frac{1}{2^{3/4}}e^{i 3/8 \pi}$
therefore
\begin{align}
\begin{split}
M_{k,l} (s,t) &= i^{-(k+l)}\sum_{m =0}^n K^{(k+l)}_m (l) (-1)^{k + l - m} \epsilon^{k+l-m}_1 He_{k+ l - m}(\epsilon_1 (s+t))\\
&\qquad\qquad\qquad \times(-1)^m \epsilon^m_2 He_m(\epsilon_2 (s-t)) (e^{\frac{1}{8} (1+i) (s-t)^2}) e^{-\frac{s^2 + t^2 - 2 ist}{4}}.
\end{split}
\end{align}
Evaluating this expression at $s = t = 0$ we obtain
\begin{align}
\bra{0}X^k P^l \ket{0} = M_{k,l} (0,0) &=i^{-(k+l)} \sum_{m =0}^{k + l} K^{(k+l)}_m (l)  \frac{e^{\frac{3}{8}\pi i (2m- (k+l))}}{2^{\frac{3}{4} (k+l)}} He_{k+ l - m}(0)  He_m(0).
\end{align}
Using Eq.~\ref{eq:Hermite-0}, we immediately observe that $\bra{0}X^k P^l \ket{0} = 0$ if $k + l$ is odd, which confirms our intuition. Now, suppose $k + l = n$ is even. Then
\begin{align}\label{eq:genexpvacmon}
\bra{0}X^k P^l \ket{0} &= \frac{e^{\frac{1}{8}\pi i n}}{2^{\frac{3}{4}n}}  \sum_{m =0}^{n/2} (-i)^m K^{(n)}_{2m} (l)  (n - 2m -1) !! (2m - 1) !!.
\end{align}

\section{Bounds on Gaussian parameters from energy bounds}
\label{app:boundG}

We first introduce a lemma that allows us to put bounds on the amount of squeezing and displacement for a bounded-energy Gaussian state.

\begin{lemma}
Let $\ket\psi$ be a Gaussian state over $n$ modes with energy at most $E$, i.e., $\bra\psi \sum_i N_i\ket\psi\leq E$. It is the case that
\begin{align}
\ket\psi = U S(\xi) D(\alpha) \ket{0^n},
\end{align}
where $\alpha,\xi\in\mathbb C^n$, $U$ is a passive linear transform, $D$ is an $n$-mode displacement operator and $S$ an $n$-mode squeezing operator. Then for all $j=1,\dots,n$,
\begin{align}
|\xi_j|\le \frac12\log(4E+2),\qquad|\alpha_j|^2\le 2E(2E+1).
\end{align}
\end{lemma}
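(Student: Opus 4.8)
The plan is to work with the Euler--Bloch--Messiah decomposition $\ket\psi = U S(\xi) D(\alpha)\ket{0^n}$ and exploit the fact that the energy $\bra\psi\sum_i N_i\ket\psi$ is invariant under the passive linear transform $U$, since passive operations preserve the total particle number. Thus I can assume without loss of generality that $U = \mathbb I$ and bound the energy of $S(\xi)D(\alpha)\ket{0^n}$ directly. Because $S(\xi) = \bigotimes_j S(\xi_j)$ and $D(\alpha) = \bigotimes_j D(\alpha_j)$ act modewise, the total energy splits as a sum over modes, and each single-mode contribution $\bra{0}D(\alpha_j)^\dagger S(\xi_j)^\dagger N S(\xi_j) D(\alpha_j)\ket{0}$ is manifestly nonnegative. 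Hence a bound $E$ on the total energy gives the same bound $E$ on each individual modewise term, reducing everything to a single-mode estimate.

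The core single-mode step is to compute $\langle N\rangle$ for the state $S(\xi)D(\alpha)\ket0$. First I would displace: $D(\alpha)\ket0 = \ket\alpha$ is a coherent state, and then squeeze. Using the Heisenberg action of squeezing and displacement on $a$ and $a^\dagger$ recorded in \cref{sec:CVprocesses} (Eq.~\eqref{eq:simple-gaussian} and the subsequent conjugation rules), I would write $S(\xi)^\dagger a\, S(\xi) = (\cosh r)\,a - e^{i\phi}(\sinh r)\,a^\dagger$ with $\xi = re^{i\phi}$, and $D(\alpha)^\dagger a\, D(\alpha) = a + \alpha$. Composing these, the expectation $\langle 0| (S D)^\dagger a^\dagger a\, (SD)|0\rangle$ evaluates to the standard closed form
\begin{equation}
\langle N\rangle = \sinh^2 r + |\alpha|^2(\cosh 2r - \cos(\phi-2\theta_\alpha)\sinh 2r),
\end{equation}
or, for the purpose of lower bounds, I only need the two inequalities $\langle N\rangle \ge \sinh^2 r$ and $\langle N\rangle \ge |\alpha|^2 e^{-2r}$ (the latter coming from bounding the displacement contribution below by $|\alpha|^2(\cosh 2r - \sinh 2r) = |\alpha|^2 e^{-2r}$). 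The displacement/squeezing contributions are each nonnegative, so these follow by dropping terms.

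From $\sinh^2 r \le E$ I would extract the squeezing bound. Writing $|\xi_j| = r$, we have $\sinh^2 r \le E \Rightarrow e^{2r} \le 4E + 2$ after using $\sinh^2 r = \tfrac14(e^{r}-e^{-r})^2 \ge \tfrac14(e^{2r} - 2)$, which rearranges to $r \le \tfrac12\log(4E+2)$, giving the claimed $|\xi_j|\le\tfrac12\log(4E+2)$. For the displacement bound, from $|\alpha_j|^2 e^{-2r}\le E$ I get $|\alpha_j|^2 \le E\,e^{2r} \le E(4E+2) = 2E(2E+1)$, using the squeezing bound $e^{2r}\le 4E+2$ just established. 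This chaining — first the squeezing bound, then feeding it into the displacement estimate — is the one place where care is needed, since the two parameters are coupled through the energy. I expect the main (though modest) obstacle to be bookkeeping the single-mode expectation value precisely enough to justify the two one-sided inequalities without needing the full expression, and confirming that the worst-case phase alignment in the displacement term is correctly handled by the $e^{-2r}$ lower bound; once that is in place, the arithmetic is routine.
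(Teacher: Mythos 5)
Your proof is correct and follows essentially the same route as the paper's: exploit invariance of the energy under the passive transform $U$, reduce to a single-mode computation of $\langle N\rangle$ for $S(\xi)D(\alpha)\ket0$, drop nonnegative terms to isolate the squeezing and displacement contributions, and chain the squeezing bound $e^{2|\xi_j|}\le 4E+2$ into the displacement estimate to get $|\alpha_j|^2\le 2E(2E+1)$. The only cosmetic difference is that the paper works in the quadratures $X,P$ (writing $N+\tfrac12=\tfrac12 e^{2\xi}X^2+\tfrac12 e^{-2\xi}P^2$ and bounding $x_j,p_j$ separately) while you use the closed-form $\langle N\rangle$ in terms of $a,a^\dagger$ with a worst-case phase bound; both yield the same constants.
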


\noindent In particular, setting $E=\mathsf{poly}(n)$ implies a logarithmic bound on the squeezing and a polynomial bound on the displacement, while setting $E=\mathsf{exp}(n)$ implies a polynomial bound on the squeezing and an exponential bound on the displacement.

\begin{proof}
Note that $U$ does not change the expectation of the energy operator $N=\sum_jN_j$ (as passive linear operators elements preserve the total particle number). Hence, we have
\begin{align}
\bra\psi N\ket\psi = \bra{0^n} D(\alpha)^\dagger S(\xi)^\dagger N S(\xi) D(\alpha)\ket{0^n},
\end{align}
where the vector of squeezing parameters $\xi\in\mathbb R^n$ is taken to be real without loss of generality (since the phase can always be factored out and braided with the displacement operators and the vacuum is phase-invariant). Note that
\begin{align}
S(\xi_j)^\dagger (N_j+\frac12) S(\xi_j) = \frac12e^{2\xi_j} X_j^2 + \frac12e^{-2\xi_j} P_j^2.
\end{align}
Setting $\alpha_j=\frac{x_j+ip_j}{\sqrt2}$, we obtain that displacements act as $D(\alpha_j)^\dagger X D(\alpha_j) = X + x_j$ and $D(\alpha_j)^\dagger P_j D(\alpha_j) = P_j + p_j$. This results in
\begin{align}
D(\alpha_j)^\dagger S(\xi_j) (N + n/2) S(\xi_j) D(\alpha_j) = \frac12\left(\sum_j e^{2\xi_j} (X_j^2 + 2x_j X_j + x_j^2) + e^{-2\xi_j} (P_j^2 + 2p_j P_j + p_j^2)\right)
\end{align}
and hence
\begin{align}\label{eq:number-operator-evolution}
\bra\psi N\ket\psi &= \sum_j \frac12(\cosh(2\xi_j)-1) + \frac12e^{2\xi_j} x_j^2 + \frac12e^{-2\xi_j} p_j^2 
\end{align}
so $\bra\psi N\ket\psi\le E$ implies $\frac14e^{2|\xi_j|}-\frac12\le\frac12\cosh(2\xi_j)-\frac12\le E$, which gives
\begin{align}
|\xi_j| \leq\frac12\log(4E+2).
\end{align}
Moreover, we get
\begin{align}
\frac12e^{2\xi_j} x_j^2 \leq E, \quad \frac12e^{-2\xi_j} p_j^2 \leq E,
\end{align}
which together with $e^{2|\xi_j|}\leq 4E+2$ gives
\begin{align}
\frac12x_j^2+\frac12p_j^2\le2E\cosh(2\xi_j)\le 2E(2E+1).
\end{align}
\end{proof}

We may now generalize the bound on Gaussian states to finite stellar-rank states, of stellar rank $r$.

\begin{lemma}\label{conj:stellar-bound}
Let $\Pi_r = \sum_{r=0}^{r} \ket{n} \bra{n}$ where $\ket{n}$ are Fock basis elements. Let $x\in\mathbb R$ be any real number. It is the case that
\begin{align}
\Pi_r (X-x)^2\Pi_r \geq \Omega(r^{-1})>0.
\end{align}
\end{lemma}
\begin{proof}
Let $C_r$ denote the set of core states with stellar rank at most $r$. We have
\begin{align}
\lambda_{\min}(\Pi_r (X-x)^2\Pi_r) = \min_{\psi\in C_r} \bra{\psi} (X-x)^2 \ket{\psi} = \bra{\psi^\ast} (X-x)^2 \ket{\psi^\ast},
\end{align}
for some $\ket{\psi^\ast}\in C_r$. It is a classic result in statistics that $\langle (X-x)^2\rangle \geq \langle (X -\langle X\rangle)^2\rangle = \mathrm{Var}(X)$. Let us use the notation $\Delta x(\psi) := \sqrt{\mathrm{Var}(X)}$. Therefore, we have 
\begin{align}
\lambda_{\min}(\Pi_r (X-x)^2\Pi_r) = \Delta x(\psi^\ast)^2.
\end{align}
Now, we use the Heisenberg uncertainty principle, implying $\Delta x\geq \frac{1}{2\Delta p}$, to write
\begin{align}\label{eq:lambda-bound}
\lambda_{\min}(\Pi_r (X-x)^2\Pi_r) \geq \frac{1}{4\Delta p(\psi^\ast)^2}.
\end{align}
Finally, we note that
\begin{align}\label{eq:delta-p}
\Delta p(\psi^\ast)^2 = \langle P^2 - \langle P\rangle_{\psi^\ast}^2\rangle_{\psi^\ast} \leq \langle P^2\rangle_{\psi^\ast}\leq \norm{\Pi_r P^2 \Pi_r}.
\end{align}
Note that the $n$-th column of $P^2$ is
\begin{align}
(0,\cdots, -\frac12\sqrt{n(n-1)}, 0, n+\frac12, 0, -\frac12\sqrt{(n+1)(n+2)}, 0,\cdots,0)^T,
\end{align}
and hence, by Greshgorin's lemma (see e.g., \cite{horn_johnson_1985}), we obtain
\begin{align}
\norm{\Pi_r P^2 \Pi_r} \leq 2r( 1 + O(\frac1r)).
\end{align}
Combining this with \eqref{eq:lambda-bound} and \eqref{eq:delta-p} we get
\begin{align}
\lambda_{\min}(\Pi_r (X-x)^2\Pi_r) \geq \frac{1}{16r} (1 - O(\frac1r)).
\end{align}
\end{proof}

We present numerical computations provided in \cref{fig:plot1} and \cref{fig:plot2}. These plots indicate that the $1/r$ scaling is optimal.

\begin{figure}
    \centering
    \begin{subfigure}[b]{0.45\textwidth}
        \centering
        \includegraphics[width=\linewidth]{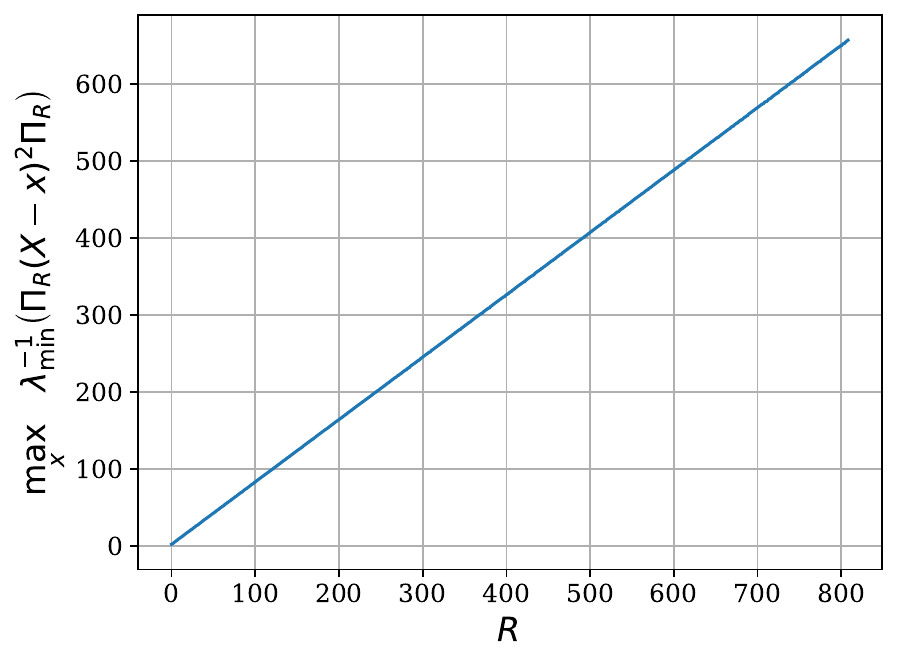}
        \caption{The numerical computation showing that  \cref{conj:stellar-bound} is tight, as the growth appears to be linear. The optimization is done using the \texttt{scipy} package.}
        \label{fig:plot1}
    \end{subfigure}
    \hfill
    \begin{subfigure}[b]{0.5\textwidth}
        \centering
        \includegraphics[width=\linewidth]{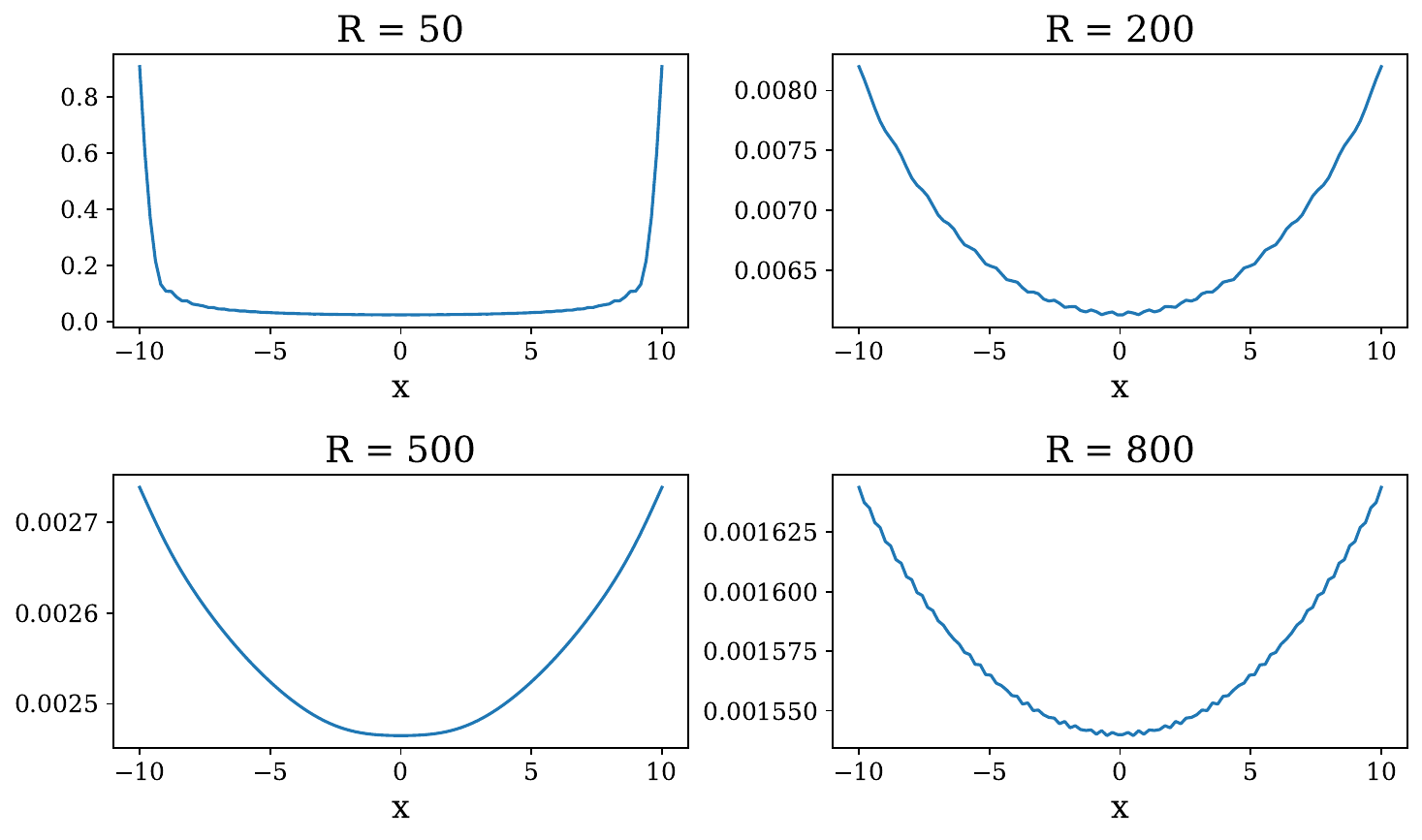}
        \caption{Illustration of a few examples of $r$. We can view that the value $\min_x \lambda_{\min}(\Pi_r (X-x)^2\Pi_r)$ is close to the value at $x=0$.}
        \label{fig:plot2}
    \end{subfigure}
    
    \caption{Numerical computations showing that \cref{conj:stellar-bound} is tight.}
    \label{fig:combined}
\end{figure}

We are now ready to present a bound on the Gaussian parameters of finite-stellar rank states. Note that the following lemma immediately implies \cref{prop:boundG}.

\begin{lemma}\label{lem:last-lem}
Let $\ket{\psi}$ be a state of stellar rank $r$ over $n$ modes with energy at most $E$, i.e., $\bra\psi \sum_i N_i\ket\psi\leq E$. It is the case that
\begin{align}
\ket\psi = U S(\xi) D(\alpha) \ket{\mathrm{core}_r},
\end{align}
where $\alpha,\xi\in\mathbb C^n$, $U$ is a passive linear transform, $D$ is an $n$-mode displacement operator and $S$ is an $n$-mode squeezing operator. Then for all $j=1,\dots,n$,
\begin{align}
|\xi_j|\le O(\log(rE)),\qquad|\alpha_j|\le O(\sqrt r E).
\end{align}
\end{lemma}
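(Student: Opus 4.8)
The plan is to follow the blueprint of the bounded-energy Gaussian lemma (the $r=0$ case proved earlier in \cref{app:boundG}), replacing the vacuum by a core state and replacing \cref{lem:X^2-sandwiched} by its uniform-in-$x$ strengthening \cref{conj:stellar-bound}. First I would invoke \cref{thm:core} together with the Euler--Bloch--Messiah decomposition (\cref{thm:EBMdecomp}) to write $\ket\psi = U S(\xi) D(\alpha)\ket{\mathrm{core}_r}$, where $\ket{\mathrm{core}_r}$ has polynomial stellar function of degree $r$ and hence total particle number at most $r$; in particular every single-mode marginal of the core state is supported on Fock states $\ket0,\dots,\ket r$. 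As in the $r=0$ case, phases can be braided through $U$ and $D(\alpha)$ so that each $\xi_j$ may be taken real. Since $U$ is passive and preserves $N=\sum_j N_j$, the energy constraint reads $\bra{\mathrm{core}_r} D(\alpha)^\dagger S(\xi)^\dagger N S(\xi) D(\alpha)\ket{\mathrm{core}_r}\le E$.

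The second step is to reduce this to single-mode moment inequalities. Using $S(\xi_j)^\dagger(N_j+\tfrac12)S(\xi_j)=\tfrac12 e^{2\xi_j}X_j^2+\tfrac12 e^{-2\xi_j}P_j^2$ and the displacement action $D(\alpha_j)^\dagger X_j D(\alpha_j)=X_j+x_j$, $D(\alpha_j)^\dagger P_j D(\alpha_j)=P_j+p_j$ (with $\alpha_j=(x_j+ip_j)/\sqrt2$), the energy becomes a sum over modes of $\tfrac12 e^{2\xi_j}\langle(X_j+x_j)^2\rangle_c+\tfrac12 e^{-2\xi_j}\langle(P_j+p_j)^2\rangle_c$, where $\langle\cdot\rangle_c$ denotes the core-state expectation. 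Two elementary facts about the core state will be used repeatedly: since it has at most $r$ particles, $\langle X_j^2\rangle_c,\langle P_j^2\rangle_c\le 2r+1$ and $|\langle X_j\rangle_c|,|\langle P_j\rangle_c|\le\sqrt{2r+1}$.

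For the squeezing bound I would apply \cref{conj:stellar-bound} to the mode-$j$ marginal $\rho_j$ of the core state with the shift $x=-x_j$ (resp.\ $-p_j$), which gives $\langle(X_j+x_j)^2\rangle_c\ge\Omega(r^{-1})$ and $\langle(P_j+p_j)^2\rangle_c\ge\Omega(r^{-1})$ uniformly in the displacement. Substituting into the per-mode energy yields $\sum_j \Omega(r^{-1})\cosh(2\xi_j)\le E+\tfrac n2$, so $\cosh(2\xi_j)\le O(r(E+n))$, i.e.\ $e^{2|\xi_j|}\le O(rE)$ in the relevant regime $n=O(E)$, giving $|\xi_j|\le O(\log(rE))$. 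For the displacement bound I would keep only the first energy term, $\tfrac12 e^{2\xi_j}\langle(X_j+x_j)^2\rangle_c\le E+\tfrac n2$, and combine it with the just-proved squeezing bound $e^{-2\xi_j}\le e^{2|\xi_j|}\le O(rE)$ to obtain $\langle(X_j+x_j)^2\rangle_c\le O(rE^2)$. Since $\langle(X_j+x_j)^2\rangle_c\ge(\langle X_j\rangle_c+x_j)^2$ and $|\langle X_j\rangle_c|\le\sqrt{2r+1}$, this gives $|x_j|\le O(\sqrt r\,E)$, and symmetrically $|p_j|\le O(\sqrt r\,E)$, hence the stated polynomial bound on $|\alpha_j|$.

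The main obstacle is \cref{conj:stellar-bound} itself: the entire squeezing bound rests on the \emph{uniform-in-$x$} lower bound $\Pi_r(X-x)^2\Pi_r\ge\Omega(r^{-1})$, whereas only the special case $x=0$ is proved (\cref{lem:X^2-sandwiched}). Without it, a displaced core state could in principle concentrate its $X$-quadrature arbitrarily well, decoupling the squeezing from the energy and breaking the argument; proving the conjecture (presumably via a Gershgorin- or moment-based estimate on $\Pi_r(X-x)^2\Pi_r$ that is robust to translation) is the one genuinely new analytic ingredient. A secondary, purely bookkeeping, point is the $n$ versus $E$ normalisation: the clean logarithmic and polynomial bounds require $n=O(E)$, which holds in the intended application of \cref{prop:boundG}; otherwise one should carry the harmless $O(\log(r(E+n)))$ and $O(\sqrt r\,(E+n))$ forms.
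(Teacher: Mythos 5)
Your proof follows essentially the same route as the paper's: decompose via \cref{thm:core} and \cref{thm:EBMdecomp}, use conservation of $N$ under the passive part, apply \cref{conj:stellar-bound} to the per-mode energy expression $e^{2\xi_j}\langle(X+x_j)^2\rangle_c+e^{-2\xi_j}\langle(P+p_j)^2\rangle_c$ to bound $\cosh(2\xi_j)$, and then use $\norm{\Pi_r X\Pi_r}\le\sqrt r$ together with the squeezing bound to control the displacement. The only remark worth making is that your argument (like the paper's own) actually yields $|\alpha_j|\le O(\sqrt r\,E)$ rather than $|\alpha_j|^2\le O(\sqrt r\,E)$ as written in the statement, which appears to be a typo in the lemma rather than a gap in your reasoning.
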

\begin{proof}
Using Eq.~\eqref{eq:number-operator-evolution}, we have
\begin{align}
2\bra{\psi} N \ket\psi =  \sum_j e^{2\xi_j} \langle (X-x)^2\rangle_{\mathrm{core}_r} + e^{-2\xi_j} \langle (P-p)^2\rangle_{\mathrm{core}_r} \geq \cosh(2\xi_j) \cdot \Omega(\frac{1}{r})
\end{align}
which concludes
\begin{align}
\xi_j \leq O(\log(rE)).
\end{align}
Moreover, since we have $\norm{\Pi_r X\Pi_r}\leq \sqrt r$, we conclude that
\begin{align}
E \geq e^{-O(\log rE)}(\sqrt r - x)^2 = \Omega(\frac{1}{E r}) (\sqrt r - x)^2
\end{align}
which gives $x\leq O(E\sqrt r)$ for a maximum possible displacement $x$.
\end{proof}

\end{document}